\def\dOi{10(4:2)2014}
\renewcommand{\with}[1]{\langle #1 \rangle}
\renewcommand{\QFI}{\maji{IQ}}
\renewcommand{\LLL}{\QFI}
\renewcommand{\SSSL}{\SSS^{\LLL}}
\renewcommand{\TTTL}{\TTT^{\LLL}}
\theoremstyle{plain}
\renewcommand{\stratglobale}{behaviour\xspace} 
\renewcommand{\stratglobales}{behaviours\xspace}
\renewcommand{\stratlocale}{strategy\xspace}  
\renewcommand{\astratlocale}{a strategy\xspace}
\renewcommand{\stratlocales}{strategies\xspace}
\renewcommand{\plays}{\E}
\renewcommand{\views}{\EVi}
\renewcommand{\Plays}{\E}
\begin{document}

\author[T.~Hirschowitz]{Tom Hirschowitz}
\address{CNRS, Universit\'e de Savoie}
\email{tom.hirschowitz@univ-savoie.fr}
\thanks{Partially
    funded by the French ANR projets blancs PiCoq ANR-10-BLAN-0305 and
    R\'ecr\'e ANR-11-BS02-0010}%

\title[Full abstraction for fair testing in CCS]{Full abstraction for fair testing in CCS \\ {\tiny (expanded version)}\rsuper*}


\keywords{Programming languages; categorical semantics; presheaf
    semantics; game semantics; concurrency; process algebra}

\titlecomment{{\lsuper*}An extended abstract of this paper has appeared in CALCO '13.} 


\begin{abstract}
  \noindent In previous work with Pous, we defined a semantics for CCS
  which may both be viewed as an innocent form of presheaf semantics
  and as a concurrent form of game semantics.  We define in this
  setting an analogue of fair testing equivalence, which we prove
  fully abstract w.r.t.\ standard fair testing equivalence.

  The proof relies on a new algebraic notion called \emph{playground},
  which represents the `rule of the game'. From any playground, we
  derive two languages equipped with labelled transition systems, as
  well as a strong, functional bisimulation between them.
\end{abstract}

\maketitle

\clearpage
\tableofcontents
\clearpage

\section{Introduction}
\begin{wrapfigure}[6]{r}{0pt}
  \begin{minipage}[c]{0.28\linewidth}
    \vspace*{-1em}
    \begin{tabular}[t]{c|c}
      Games & Concurrency \\
      \hline
      position & \configuration \\
      player & \agent \\
      move & \action \\
      play & \trace 
    \end{tabular}
  \end{minipage}
\end{wrapfigure}
This paper is about \emph{game semantics} for CCS~\cite{Milner80}.
Game semantics is originally a very successful approach to
\emph{sequential} denotational
semantics~\cite{DBLP:conf/lfcs/Nickau94,DBLP:journals/iandc/HylandO00,ajm}.
Its basic idea is to interpret programs as strategies for a player in
a game, and the computational environment as an opponent. Composition
of programs is handled by letting the corresponding strategies
interact.  We mostly use game semantical terminology in this paper,
but the above dictionary may help the intuition of concurrency
theorists.

Denotational models of CCS are extremely diverse, and treat various
behavioural equivalences, as surveyed by Winskel and
Nielsen~\cite{WN}. The closest game semantical work seems to be
Laird's model~\cite{DBLP:conf/fsttcs/Laird06}, which achieves full
abstraction w.r.t.\ \emph{trace} (a.k.a.\ \emph{may testing})
equivalence for a fragment of $\pi$.  The goal of the present paper is
to design the first game semantics for a finer equivalence than trace
equivalence, in the simpler setting of CCS (we plan to address the
full $\pi$-calculus in future work).  The reason Laird is limited to
trace equivalence is that the standard notion of strategy is a set of
plays (with well-formedness conditions). Hence, e.g., the famous
coffee machines, $a.b + a.c$ and $a.(b+c)$, are identified.  Following
two recent, yet independent lines of work~\cite{RideauW,HP11}, we
generalise strategies by allowing them to accept plays \emph{in
  several ways}, thus reconciling game semantics with presheaf
models~\cite{DBLP:conf/lics/JoyalNW93}. Winskel et al.'s approach is
only starting to be applied to concrete languages, see for example the
work in progress on an affine, concurrent variant of Idealised
Algol~\cite{CCWGalop14}.  The approach
of~\cite{HP11,2011arXiv1109.4356H} (\citetalias{2011arXiv1109.4356H})
was used to give a game semantics for CCS, and define a semantic
analogue of fair testing equivalence, but no adequacy result was
proved. We here prove full abstraction of semantic fair testing
equivalence w.r.t.\ standard fair testing equivalence.  Our model is
compositional, since (1) all syntactic constructs of CCS have natural
interpretations, and (2) global dynamics may be inferred from local
dynamics, as in any game semantics (see the paragraph on innocence
below and Sections~\ref{subsubsec:strategies}
and~\ref{subsubsec:syntax}).

\subsection{Overview of the approach}
\subsubsection*{Truly concurrent plays}
First of all, as in~\cite{RideauW}, our notion of play is truly
concurrent.  Indeed, it does not keep track of the order in which
(atomic) moves occur.  Instead, it only retains causal dependencies
between them (see Section~\ref{subsec:plays}).  Furthermore, our plays
form a proper category, which enables in particular a smooth treatment
of bound variables. Briefly, plays that differ only up to a
permutation of channels are isomorphic, and by construction strategies
handle them correctly.

\subsubsection*{Branching behaviour}
Second, we deal with 
branching behaviour. Standardly, and ignoring momentarily the previous
paragraph, a strategy is essentially a prefix-closed set of `accepted'
plays.  This is equivalent to functors $\op{\plays} \to 2$, where
$\plays$ is the poset of plays ordered by prefix inclusion, and $2$ is
the poset $0 \leq 1$ ($\plays$ stands for `extension').  A play
$\trasse$ is `accepted' by such a functor $F$ when $F(\trasse) = 1$,
and if $\trasse' \leq \trasse$, then functoriality imposes that
$F(\trasse) \leq F(\trasse')$, hence $F(\trasse') = 1$: this is
prefix-closedness.  In order to allow plays to be accepted in several
ways, we follow \emph{presheaf models}~\cite{DBLP:conf/lics/JoyalNW93}
and move to functors $\op{\plays} \to \set$, where $\set$ is the
category of finite ordinals and all functions between
them\footnote{The author learnt this point of view from a talk by Sam
  Staton.}. Thus, to each play $\trasse \in \plays$, a strategy
associates a \emph{set} of ways to accept it, empty if $\trasse$ is
rejected.  E.g., in the simplistic setting where $\Plays$ denotes the
poset of words over actions, ordered by prefix inclusion, the coffee
machine $a.b + a.c$ is encoded as the presheaf $S$ defined on the left
and pictured on the right:
\begin{center}
    \begin{minipage}[c]{0.26\linewidth}
      \begin{itemize}
      \item $S (\epsilon) = \ens{\star}$,
      \item $S (a) = \ens{x,x'}$,
      \item $S (ab) = \ens{y}$,
      \item $S (ac) = \ens{y'}$,
      \end{itemize}
    \end{minipage}
      \hfil
    \begin{minipage}[c]{0.44\linewidth}
      \begin{itemize}
      \item $S$ empty otherwise,
      \item $S (\epsilon \into a) = \ens{x \mapsto \star, x' \mapsto \star}$,
      \item $S (a \into ab) = \ens{y \mapsto x}$,
      \item $S (a \into ac) = \ens{y' \mapsto x'}$,
      \end{itemize}
    \end{minipage}
      \hfil
      \begin{minipage}[c]{0.28\linewidth}
        \vspace*{-.8em}
        \diag(.15,.8){%
          \& |(root)| \star \& \\
          |(x)| x \& \& |(x')| x' \\
          \\
          |(y)| y \& \& |(y')| y'. %
        }{%
          (root) edge[labelal={a}] (x) %
          edge[labelar={a}] (x') %
          (x) edge[labell={b}] (y) %
          (x') edge[labelr={c}] (y') %
        }
      \end{minipage}
\end{center}
This illustrates what is meant by `accepting a play in several ways':
the play $a$ is here accepted in two ways, $x$ and $x'$.  The other
coffee machine is of course obtained by identifying $x$ and $x'$. In
our setting, plays are considered relative to their initial position
$X$, hence strategies are presheaves $\op{\plays_X} \to \set$ on the
category of plays over $X$.

\subsubsection*{Innocence}
Finally, defining strategies as presheaves on plays is too naive,
which leads us to reincorporate the game semantical idea of
\emph{innocence}.  Example~\ref{ex:noninnocent} below exhibits such a
presheaf in which two players synchronise on a public channel $a$,
without letting others interfere. In CCS, this would amount to a
process like $\abar.P \para a.Q \para a.R$ in which, say, the first
two processes could arrange for ruling out the third.  Considering
such presheaves as valid strategies would break our main result.

In the Hyland-Ong approach, innocent strategies may be defined as
prefix-closed sets of \emph{views}, where views are special
plays representing the information that a player may `access' during
a global play. The global strategy $\exta{X}{S}$ associated to an
innocent strategy $S$ is then recovered by decreeing that
$\exta{X}{S}$ accepts all plays whose views are accepted by $S$.
This leads us to consider a subcategory $\views$ of the category
$\plays$ of plays, whose objects are called \emph{views}. We thus
have for each position $X$ two categories of strategies: the
naive one, the category $[\op{\plays_X},\set]$ of
\emph{\stratglobales} on $X$, consists of presheaves on plays; the
more relevant one, the category $[\op{(\views_X)},\set]$ of
\emph{\stratlocales} on $X$, consists of presheaves on views. 

How, then, do we recover the global \stratglobale associated to
\astratlocale, which is crucial for defining our semantic fair testing
equivalence?  The right answer is given by a standard categorical
construction called right Kan extension (see
Section~\ref{subsubsec:strategies}).  Roughly, for the \stratglobale
$B_S$ associated to a \stratlocale $S$, a way to accept some play
$\trasse \in \plays_X$ is a compatible family of ways for $S$ to
accept all views of $\trasse$.  In the boolean, setting (considering
functors $\op{\plays_X} \to 2$), this reduces to $B_S$ accepting
$\trasse$ iff all its views are accepted by $S$. Our definition thus
generalises Hyland and Ong's.

Finally, game semantical \emph{parallel composition} (different from
CCS parallel composition, though inspired from it) intuitively
lets strategies interact together.  We account for it as follows.  If
we partition the players of a play $X$ into two teams, 
we obtain  two subpositions 
\begin{wrapfigure}[4]{r}{0pt}
  \begin{minipage}[c][3em]{0.4\linewidth}
  \diag(.6,.6){%
    \op{(\views_{X_1})} \&     \op{(\views_{X})} \&     \op{(\views_{X_2})} \\
    \& \set %
  }{%
    (m-1-1) edge[into] (m-1-2) %
    edge[labelbl={S_1}] (m-2-2) %
    (m-1-3) edge[linto] (m-1-2) %
    edge[labelbr={S_2}] (m-2-2) %
    (m-1-2) edge[labelon={[S_1,S_2]},dashed] (m-2-2) %
  }
\end{minipage}\end{wrapfigure}
\noindent $X_1 \into X
\otni X_2$,  each  player of $X$ belonging to $X_1$ or $X_2$ according
to its team. We have that the category $\views_X$ of views on $X$
is isomorphic to the coproduct category $\views_{X_1} +
\views_{X_2}$. The parallel composition of any two \stratlocales $S_1$
and $S_2$ on $X_1$ resp.\ $X_2$ is simply obtained by universal
property of coproduct, as above right.

\subsection{Main result: which behavioural equivalence?}
With our game in place, we easily define a translation of CCS
processes into \stratlocales. It then remains to demonstrate the
adequacy of this translation.  Our \stratlocales are actually rather
intensional, so we cannot hope for adequacy w.r.t.\ equality of
\stratlocales.  Instead, we exploit the rich structure of our model to
define both \anlts{} and an analogue of fair testing equivalence
\emph{on the semantic side}, i.e., for \stratlocales.  We then provide
two results. The most important, in the author's view, is full
abstraction w.r.t.\ standard \emph{fair testing semantics}
(Corollary~\ref{cor:final}). But the second result might be considered
more convincing by many: it establishes that our semantics is fully
abstract w.r.t.\ weak bisimilarity (Corollary~\ref{cor:wbisim}).
A reason why the latter result is here considered less important
originates in the tension between \lts{} semantics and reduction
semantics~\cite{modularLTS}. Briefly, reduction semantics is simple
and intuitive, but it operates on equivalence classes of terms (under
so-called \emph{structural} congruence). On the other hand, designing
\ltss{} is a subtle task, rewarded by easier, more structural
reasoning over reductions. We perceive \lts{} semantics as less
intrinsic than reduction semantics. E.g., for more sophisticated
calculi than CCS, several \ltss{} exist, which yield significantly
different notions of bisimilarity.

\begin{wrapfigure}[3]{r}{0pt}
  \begin{minipage}[c]{0.23\linewidth}
    \vspace*{-1.5em}
    \diag(.3,.3){%
      |(a)| \bullet \& |(b)| \bullet \& |(d)| \bullet
      \& |(f)| \bullet \\
      \& |(c)| \bullet \& |(e)| \bullet \& |(g)| \bullet %
    }{%
      (a) edge[labela={\tau}] (b) %
      edge[labelbl={\tau}] (c) %
      (b) edge[labela={\tau}] (d) %
      edge[labelbl={a}] (e) %
      (d) edge[labela={\tau}] (f) %
      edge[labelbl={b}] (g) %
    }
  \end{minipage}
\end{wrapfigure}
Beyond \lts{}-based
equivalences, we see essentially two options: \emph{barbed
  congruence}~\cite{DBLP:books/daglib/0004377} or some \emph{testing
  equivalence}~\cite{DBLP:journals/tcs/NicolaH84}.
Barbed congruence equates processes $P$ and $Q$, roughly, when for all
contexts $C$, $C[P]$ and $C[Q]$ are weakly bisimilar w.r.t.\
\emph{reduction} (i.e., only $\tau$-actions are allowed), and
furthermore they have the same interaction capabilities at all
stages. Barbed congruence is sometimes perceived as too discriminating
w.r.t.\ guarded choice. Consider, e.g., the CCS process $P_1$ pictured
above, and let $P_2$ be the same with $a$ and $b$ swapped.  Both
processes may disable both actions $a$ and $b$, the only difference
being that $P_1$ disables $a$ \emph{before} disabling $b$.  Barbed
congruence distinguishes $P_1$ from $P_2$ (take $C = \square \para
\abar$), which some view as a deficiency.

Another possibility would be \emph{must testing}
equivalence~\cite{DBLP:journals/tcs/NicolaH84}.  Recall that $P$
\emph{must pass} a test process $R$ iff all maximal executions of
$P \para R$ perform, at some point, a fixed `tick'
action~\cite{DBLP:journals/iandc/Gorla10}, here denoted by $\tick$.
Then, $P$ and $Q$ are must testing equivalent iff they must
pass the same tests.  Must testing equivalence is sometimes perceived
as too discriminating w.r.t.\ divergence.  E.g., consider $Q_1 =
{!\tau} \para a$ and $Q_2 = a$. Perhaps surprisingly, $Q_1$ and $Q_2$
are \emph{not} must testing equivalent. Indeed, $Q_2$ must pass the
test $\abar.\tick$, but $Q_1$ does not, due to an infinite, silent
reduction sequence.

We eventually go for fair testing equivalence, which was originally
introduced (for CCS-like calculi) to rectify both the deficiency of
barbed congruence w.r.t.\ choice and that of must testing equivalence
w.r.t.\ divergence.  The idea is that two processes are equivalent
when they \emph{should} pass the same tests. A process $P$
should pass the test $T$ iff their parallel composition $P \para T$
never loses the ability of performing the special `tick' action, after
any tick-free reduction sequence. Fair testing equivalence thus
equates $P_1$ and $P_2$ above, as well as $Q_1$ and $Q_2$.
Cacciagrano et al.~\cite{DBLP:journals/corr/abs-0904-2340} provide an
excellent survey.

\subsection{Plan and overview}\label{subsec:overview}
We now give a bit more detail on the contents.  In
Section~\ref{sec:prelim}, we introduce our notations and some
preliminaries.  Section~\ref{sec:HP} summarises from \citetalias{2011arXiv1109.4356H} the game for
CCS, the notions of strategy and behaviour, the translation
$\Translfun$ of CCS processes into strategies, and semantic fair
testing equivalence.  The rest is devoted to proving that
$\Translfun$, here decomposed as $\translfun \rond \theta$ (see
below), is such that $P \faireqs Q$ iff $\Transl{P} \faireq
\Transl{Q}$, where $\faireqs$ is standard fair testing
equivalence (Corollary~\ref{cor:final}).

\subsubsection{Playgrounds}
Our proof of this result takes a long detour to introduce a new
algebraic gadget called \emph{playground}, which we now motivate.  Our
first attempts at proving the full abstraction result were obscured by
a tight interleaving of
\begin{itemize}
\item results stating common properties of moves in the game, or of
  plays, and
\item results and constructions on strategies derived from those
  (e.g., the \lts{} for strategies).
\end{itemize}

On the other hand, the reasons why our constructions work are
intuitively simple.  Namely, innocent strategies essentially amount to
describing syntax trees by selecting their branches amongst a set of
all possible branches. This enlarges the universe of terms slightly,
but in game semantics, one studies properties of terms which also make
sense for such generalised terms.  Compositionality and the definition
of our semantic fair testing equivalence are examples where using
strategies instead of terms tends to simplify the constructions. E.g.,
associated behaviours are recovered from innocent strategies through
Kan extension, thanks to an expressive notion of morphism between
plays.  Our results essentially follow from this correspondence
between terms and strategies.

\begin{exa}
  To illustrate what we mean by generalised terms, consider standard,
  unlabelled binary trees as a stripped down example of a term
  language.  Such trees admit a description as prefix-closed sets of
  words over $\ens{0,1}$ (their sets of \emph{occurrences}). In order
  to get exactly trees, such sets should be constrained a bit. E.g.,
  the empty set of words, or the set $\ens{(),(0)}$ do not describe
  any tree.
\end{exa}

Playgrounds are a first attempt at a general framework describing
this correspondence between terms and strategies. We develop their theory in
Sections~\ref{sec:playgrounds} and~\ref{sec:strats}, whose main result
is a strong bisimulation between both presentations (i.e., terms vs.\
strategies).  This is then expoited in the next sections to derive the
main results.

The basis for playgrounds are \emph{pseudo double
  categories}~\cite{GrandisPare,GrandisPareAdjoints,LeinsterHC,GarnerPhD},
a weakening of Ehresmann's double
categories~\citep{Ehresmann:double,Ehresmann:double2}.  Playgrounds
are thus pseudo double categories with additional structure.  The
objects of a playground represent positions in the game.  There are
two kinds of morphisms: \emph{vertical} morphisms represent plays,
while \emph{horizontal} ones represent embeddings of positions. E.g.,
there are special objects representing `typical' players; and a player
of a position $X$ is a horizontal morphism $d \to X$ from such a
typical player, in a Yoneda-like way. There are then axioms to model
atomicity (plays may be decomposed into atomic moves) and locality
(plays over a large position may be restricted to any subposition;
each player only sees part of the play). There are finally a few more
technical axioms.

In Section~\ref{sec:playgrounds}, we give the definition and derive a
few basic results and constructions. In particular, we define a naive
notion of strategy, \emph{behaviours}, and a less naive notion,
\emph{strategies}.  Finally, we relate the two by exhibiting a functor
from strategies to behaviours.  In Section~\ref{sec:strats}, we prove
that strategies are in bijective correspondence with infinite terms in
a certain language. We then derive from this \anlts{} $\SSS_\D$ for
strategies.  Furthermore, we define a second language, which is closer
to usual process calculi. And indeed, instantiating this general
language to our game for CCS yields essentially CCS, the only
difference being that channel creation is treated on an equal footing
with input and output. We further equip this language of \emph{process
  terms} with \anlts{} $\TTT_\D$.  Finally, we define a translation
from process terms to strategies $\translfun \colon \TTT_\D \to
\SSS_\D$, which is proved to be a strong bisimulation
(Theorem~\ref{thm:bisim}).

At this point, it remains 
\begin{enumerate}\enlargethispage{\baselineskip}
\item to show that the pseudo double category $\Dccs$ formed by our
  game does satisfy the axioms for playgrounds, and
\item to use the strong bisimulation $\translfun$ to derive our main
  results.
\end{enumerate}

\subsubsection{Graphs with complementarity}
We start with (2), because we feel doing otherwise would disrupt the
flow of the paper. Indeed, it should not be surprising at all that
$\Dccs$ forms a playground; and furthermore the methods employed to
show this are in sharp contrast with the rest of the paper.  The plan
for (2), carried out in Section~\ref{sec:graphs}, is as follows.

First, we reduce semantic fair testing equivalence to fair testing
equivalence in the \lts{} $\SSS_{\Dccs}$, thus bridging the gap
between the game semantical world and \ltss{}. But this is not as
simple as it looks.  Indeed, Hennessy and De Nicola's original setting
for testing equivalences~\citep{DBLP:journals/tcs/NicolaH84} is not
quite expressive enough for our purposes, which leads us to define a
slightly more general one, called \emph{modular graph with
  complementarity}.  First, our setting is `typed', in the sense that
not all tests may be applied to a process $P$, only tests of a type
`compatible' with $P$. Furthermore, in modular graphs with
complementarity, fair testing equivalence relies on a notion of
\emph{complementarity} saying when two transitions may be glued
together to form a \emph{closed-world} transition.  Thus, fair testing
equivalence is `intrinsic', i.e., does not depend on any alphabet.  So
we have a mere \lts{} $\SSS_{\Dccs}$ over an \emph{ad hoc} alphabet
$\QF$ derived from $\Dccs$, and we need promote it into a modular
graph with complementarity.  This goes by refining the original
alphabet $\QF$ with `interfaces', yielding a new alphabet $\QFI$. We
then define a morphism $\chi \colon \QFI \to \QF$, and pull 
$\SSS_{\Dccs}$ back along $\chi$, thus obtaining our modular graph with
complementarity $\SSSL_{\Dccs}$ (which is thus also \anlts{} over
$\QFI$). In passing, we do the same for $\TTT_{\Dccs}$, which yields
$\TTTL_{\Dccs}$: this will be useful later.  We finally prove that
fair testing equivalence in $\SSSL_{\Dccs}$ coincides with semantic
fair testing equivalence (Lemma~\ref{cor:SSSLI}).  Similarly, we
construct a modular graph with complementarity $\ccs$ for CCS, and
show that fair testing equivalence therein coincides with standard
fair testing equivalence (Proposition~\ref{prop:fairccs}).  We are
thus reduced to proving that some composite $\ccs \xto{\theta}
\TTTL_{\Dccs} \xto{\translfun} \SSSL_{\Dccs}$ is \emph{fair}, i.e.,
preserves and reflects fair testing equivalence.

Our second step is to establish a sufficient condition for a relation
$R \colon G \modto H$ to be fair and to apply this to the graph of our
translation $\ccs \to \SSSL_{\Dccs}$.  The idea is to define what an
adequate alphabet $A$ should be in our setting, and to prove that,
essentially, if we can find an adequate alphabet $A$ for $G$ and $H$,
such that $R$ is a relation over $A$, then $R$ is fair as soon as
\begin{itemize}
\item $R$ is included in weak bisimilarity over $A$, and
\item both graphs have enough $A$-\emph{trees}, in a sense
  inspired by the notion of
  \emph{failure}~\cite{DBLP:journals/iandc/RensinkV07}.
\end{itemize}
In order to apply this, we transform $\SSSL_{\Dccs}$ and
$\TTTL_{\Dccs}$ into modular graphs with complementarity over the same
alphabet $\A$ (i.e., set of labels) as $\ccs$.  We proceed by
`relabeling' along some morphism of graphs $\LLL \xto{\xi} \A$.  We
still have our translation $\TTTL_{\Dccs} \xto{\translfun}
\SSSL_{\Dccs}$, which is a strong, functional bisimulation over $\A$.
It thus remains to check that (a) the map $\ccs \xto{\theta}
\TTTL_{\Dccs}$ is included in weak bisimilarity, and (b) both $\ccs$
and $\SSSL_{\Dccs}$ have enough $\A$-trees.  Roughly, $G$ has enough
$A$-trees when, for any $t$ in a certain class of tree-like \ltss{}
over $A$ called $A$-trees, there exists $x_t \in G$ weakly bisimilar
to $t$.  For (b), all three \ltss{} under consideration clearly have
enough $\A$-trees. For (a), our proof is brute force.

\subsubsection{CCS as a playground}
We finally deal in Section~\ref{sec:ccs} with the last missing bit of
our proof: we show that $\Dccs$ forms a playground. This rests upon
the following two main ingredients.

First, we design a correctess criterion for plays, in a sense close to
correctness criteria in linear logic.  Namely, plays from some
position $X$ to position $Y$ are represented as particular cospans $Y
\xto{s} U \xot{t} X$ in some category.  Specifically, they are
obtained by closing a given set of cospans named \emph{moves} under
identities and composition.  We design a combinatorial criterion
for deciding when an arbitrary cospan is indeed a play.

The second main ingredient is a construction of the \emph{restriction}
of a play $U$ from some position $X$ to a subposition $X' \into X$.
Briefly, this means computing the part of $U$ which is relevant to
players in $X'$. This construction is almost easy: most of $U$ may be
`projected' back onto the initial position $X$, and then a mere
pullback
\begin{center}
  \Diag{%
    \pbk{m-2-1}{m-1-1}{m-1-2} %
  }{%
    \restr{U}{X'} \& U \\
    X' \& X %
  }{%
    (m-1-1) edge[into,labelu={}] (m-1-2) %
    edge[labell={}] (m-2-1) %
    (m-2-1) edge[into,labeld={}] (m-2-2) %
    (m-1-2) edge[labelr={}] (m-2-2) %
  }
\end{center}
of sets gives the needed restriction. The glitch is that in general
some parts of $U$ may not canonically be projected back onto $X$. The
principle for this projection is as simple as: project, e.g., input
moves to the inputting player. The problem arises for
synchronisations. Projecting them to the channel over which the
synchronisation occurs does not yield the desired result, and
similarly projecting to either of the involved players fails.  Our
solution is to ignore synchronisations at first, and later reintroduce
them automatically using a technique from algebraic
topology: factorisation systems~\cite{Joyal:ncatlab:facto}.

With both of these ingredients in place, the proof is relatively
straightforward.

Section~\ref{sec:conc} concludes and provides some perspectives for
future work.

\subsection{Related work}
Our bisimulation result relating terms to strategies for any
playground draws inspiration from \emph{Kleene
  coalgebra}~\cite{DBLP:conf/fossacs/BonsangueRS09,DBLP:conf/concur/BonchiBRS09}. There,
the main idea is that both the syntax and the semantics of various
kinds of automata should be \emph{derived} from more basic data
describing, roughly, the `rule of the game'. Formally, starting from a
well-behaved (\emph{polynomial}) endofunctor on sets, one constructs
both (1) an equational theory and (2) a sound and complete coalgebraic
semantics. This framework has been applied in standard automata
theory, as well as in quantitative settings.  Nevertheless, its
applicability to programming language theory is yet to be
established. E.g., the derived languages do not feature parallel
composition.  Our playgrounds may be seen as a first attempt to convey
such ideas to the area of programming language theory.  Technically,
our framework is rather different though, in that we replace the
equational theory by a transition system, and the coalgebraic
semantics by a game semantics.  To summarise, our approach is close in
spirit to Kleene coalgebra, albeit without quantitative
aspects. Conversely, Kleene coalgebra resembles our approach without
innocence.

Building upon previous
work~\cite{DBLP:conf/lics/AbramskyM99,Mellies04,DBLP:conf/concur/MelliesM07}
on \emph{asynchronous} games, a series of papers by Winskel and
collaborators (see, e.g., \citet{RideauW,DBLP:conf/fossacs/Winskel13})
attempt to define a notion of concurrent strategy encompassing both
innocent game semantics and presheaf models. Ongoing work evoked
above~\cite{CCWGalop14} shows that the model does contain innocent
game semantics, but presheaf models are yet to be investigated.
(Their notion of innocence, borrowed from Faggian and
Piccolo~\cite{DBLP:conf/tlca/FaggianP09}, is not intended to be
related to that of Hyland and Ong.)  In their framework, a game is an
event structure, whose events are thought of as moves, equipped with a
notion of polarity.  In one of the most recent papers in the
series~\cite{DBLP:conf/fossacs/Winskel13}, Winskel establishes a
strong relationship between his concurrent strategies and presheaves.
For a given event structure with polarity $A$, he considers the
so-called \emph{Scott order} on the set $\CCC(A)$ configurations of
$A$.  For two configurations $c$ and $d$, we have $c \sqsubseteq_A d$
iff $d$ may be obtained from $c$ by removing some negative moves and
then adding some positive ones, in a valid way.  Strategies are then
shown to coincide with presheaves on $(\CCC(A), \sqsubseteq_A)$.  This
is close in spirit to our use of presheaves, but let us mention a few
differences. First, our games do not directly deal with
polarity. Furthermore, in our setting, for any morphism $p \to q$ of
plays, $q$ is intuitively bigger than $p$ in some way, unlike what
happens with the Scott ordering. Finally, an important point in our
use of (pre)sheaves is that, unlike configuration posets, our plays
form proper \emph{categories}, i.e., homsets may contain more than one
element (intuitively, the same view may have several occurrences in a
given play). Thus, potential links between both approaches remain to
be further investigated.

To conclude this paragraph, let us mention a few, more remotely related
lines of work.  Melli\`es~\cite{DBLP:conf/lics/Mellies12}, although in
a deterministic and linear setting, incorporates some `concurrency'
into plays by presenting them as string diagrams. Our notion of
innocent strategy shares with Harmer et
al.'s~\cite{DBLP:conf/lics/HarmerHM07} presentation of innocence based
on a distributive law the goal of better understanding the original
notion of innocence. Finally, others have studied game semantics in
non-deterministic~\cite{DBLP:conf/lics/HarmerM99} or
concurrent~\cite{DBLP:conf/fossacs/GhicaM04,DBLP:conf/fsttcs/Laird06}
settings, using coarser, trace-based behavioural equivalences.


\section{Prerequisites and preliminaries}\label{sec:prelim}
In this section, we recall some needed material and introduce our
notations.  We attempt to provide intuitive, yet concise explanations,
but these may not suffice to get the non-specialist reader up to
speed, so we also provide references when possible.

For the reader's convenience, we finally provide in
Figure~\ref{fig:cheat} (end of paper) a summary of notations, beyond
those introduced here.

\subsection{Sets, categories, presheaves}\label{subsec:prelim:cats}
We make intensive use of category theory, of which we assume prior
knowledge of categories, functors, natural transformations, limits and
colimits, adjoint functors, presheaves, bicategories, Kan extensions,
and pseudo double categories. All of this except pseudo double
categories is entirely covered in Mac Lane's standard
textbook~\cite{MacLane:cwm} and the beginning of Mac Lane and
Moerdijk~\cite{MM}. For a more leisurely introduction, one may consult
Lawvere and Schanuel~\cite{DBLP:books/daglib/0095291}, or
Leinster~\cite{LeinsterCats}.  The needed material on Kan extensions
roughly amounts to their expression as ends, which is recalled when
used (Section~\ref{subsubsec:strategies}).  The last bit, namely the
notion of \emph{pseudo double category} is briefly recalled below,
after fixing some notation.  Finally, there are very local uses of
locally presentable categories~\cite{Adamek} in the present section,
and of adhesive category theory~\cite{DBLP:conf/fossacs/LackS04} in
the proof of Lemma~\ref{lem:decompleft}.

Throughout the paper, any finite ordinal $n$ is seen as $\ens{1,
  \ldots, n}$ (rather than $\ens{0, \ldots, n-1}$).
In any category, for any object $C$ and set $X$, let $X \cdot C$
denote the $\card{X}$-fold coproduct of $C$ with itself, i.e., $C +
\cdots + C$, $\card{X}$ times.

$\Set$ is the category of sets; $\set$ is a skeleton of the category
of finite sets, e.g., the category of finite ordinals and arbitrary
maps between them; $\ford$ is the category of finite ordinals and
monotone maps between them.  For any category $\C$, $\Psh{\C} =
[\op\C,\Set]$ denotes the category of presheaves on $\C$, while
$\FPsh{\C} = [\op\C,\set]$ and $\OPsh{\C} = [\op\C,\ford]$
respectively denote the categories of presheaves of finite sets and of
finite ordinals.  One should distinguish, e.g., `presheaf of finite
sets' $\op\C \to \set$ from `finite presheaf of sets' $F \colon \op\C
\to \Set$. The category $\Pshf{\C}$ of \emph{finite} presheaves is the
full subcategory of $\Psh{\C}$ spanning presheaves $F$ which are
finitely presentable~\cite{Adamek}. In presheaf categories, finitely
presentable objects are the same as finite colimits of
representables. In the only case we will use ($\C$ below), because
representables have finite categories of elements, the latter in turn
coincide with presheaves $F$ such that the disjoint union $\sum_{c \in
  \ob (\C)} F (c)$ is finite.  For all presheaves $F$ of any such
kind, $x \in F(d)$, and $f \colon c \to d$, let $x \cdot f$ denote
$F(f)(x)$.
\begin{rem}
  This conflicts with the notation $X \cdot C$ above, but context
  should disambiguate, as in $X \cdot C$ a set $X$ acts on an object
  $C$, whereas in $x \cdot f$, a morphism $f$ acts on an object $x$.
\end{rem}
We denote the Yoneda embedding by
$\yoneda \colon \C \to \Psh{\C}$, and often abbreviate $\yoneda(c)$ to
just $c$.

For any functor $F \colon \C \to \D$ and object $D \in \D$, let 
$F_D$ denote the comma category on the left below, and 
$F(D)$ denote the pullback category on the right:
  \begin{equation}
  \Diag{%
    \twocellright{E}{EX}{un} %
  }{%
    |(EX)| F_D \& |(un)| 1 \\
    |(E)| \C \& |(Dh)| \D %
  }{%
    (EX) edge (un) edge (E) %
    (E) edge[labeld={F}] (Dh) %
    (un) edge[labelr={\name{D}}] (Dh) %
  }
\hspace*{.25\textwidth}
      \Diag{%
        \pbk{E}{EX}{un} %
      }{%
        |(EX)| F(D) \& |(un)| 1 \\
        |(E)| \C \& |(Dh)| \D. %
      }{%
        (EX) edge (un) edge (E) %
        (E) edge[labeld={F}] (Dh) %
        (un) edge[labelr={\name{D}}] (Dh) %
      }\label{eq:pbkcat}
    \end{equation}
When $F$ is clear from context, we simply write $\C_D$, resp.\
$\C(D)$. Also, as usual, when $F$ is the identity,
we use the standard slice notation $\D / D$.

Finally, we briefly recall pseudo double categories. They are a
weakening of Ehresmann's double
categories~\citep{Ehresmann:double,Ehresmann:double2}, notably studied
by \citet{GrandisPare,GrandisPareAdjoints}, \citet{LeinsterHC}, and
\citet{GarnerPhD}.  The weakening lies in the fact that one dimension
is strict and the other weak (i.e., bicategory-like).  We need to
consider proper pseudo double categories, notably we use cospans in
examples, but we often handle pseudoness a bit sloppily.  Indeed, the
proofs of Section~\ref{sec:playgrounds} quickly become unreadable when
accounting for pseudoness.

A pseudo double category $\D$ consists of a set $\ob (\D)$ of
\emph{objects}, shared by a `horizontal' category $\Dh$ and a
`vertical' bicategory $\Dv$. Following Paré~\cite{PareYoneda}, $\Dh$,
being a mere category, has standard notation (normal arrows, $\rond$
for composition, $\id$ for identities), while the bicategory $\Dv$
earns fancier notation ($\proto$ arrows, $\vrond$ for composition,
$\idv$ for identities). $\D$ is furthermore equipped with a set of
\emph{double cells} $\alpha$, which have vertical, resp.\ horizontal,
domain and codomain, denoted by $\domv (\alpha)$, $\codv (\alpha)$,
$\domh (\alpha)$, and $\codh (\alpha)$. 

\begin{wrapfigure}[6]{r}{0pt}
  \begin{minipage}[t][3em]{0.28\linewidth}
    \vspace*{-1.6em}
  \Diag(.6,.6){%
  }{%
    X \& X' \& X'' \\
    Y \& Y' \& Y'' \\
    Z \& Z' \& Z''%
  }{%
    (m-1-1) edge[labelu={h}] (m-1-2) %
    edge[pro,labell={u},twoleft={ur}{}] (m-2-1) %
    (m-2-1) edge[labelu={h'}] (m-2-2) %
    (m-1-2) edge[pro,labell={u'},twoleft={u'r}{},tworight={u'l}{}] (m-2-2) %
    (m-1-2) edge[labelu={k}] (m-1-3) %
    (m-2-2) edge[labelu={k'}] (m-2-3) %
    (m-1-3) edge[pro,labelr={u''},tworight={u''l}{}] (m-2-3) %
    (m-2-1) 
    edge[pro,labell={v},twoleft={vr}{}] (m-3-1) %
    (m-3-1) edge[labela={h''}] (m-3-2) %
    (m-2-2) edge[pro,labell={v'},twoleft={v'r}{},tworight={v'l}{}] (m-3-2) %
    (m-3-2) edge[labela={k''}] (m-3-3) %
    (m-2-3) edge[pro,labelr={v''},tworight={v''l}{}] (m-3-3) %
    (ur) edge[cell={.2},labela={\alpha}] (u'l) %
    (u'r) edge[cell={.2},labela={\alpha'}] (u''l) %
    (vr) edge[cell={.2},labela={\beta}] (v'l) %
    (v'r) edge[cell={.2},labela={\beta'}] (v''l) %
  }
  \end{minipage}
\end{wrapfigure}
We picture this as, e.g., $\alpha$ on the right, where $u =
\domh (\alpha)$, $u' = \codh (\alpha)$, $h = \domv (\alpha)$, and $h'
= \codv (\alpha)$. Finally, there are operations for composing double
cells: \emph{horizontal} composition $\rond$ composes them along a
common vertical morphism, \emph{vertical} composition $\vrond$
composes along horizontal morphisms. Both vertical compositions (of
morphisms and of double cells) may be associative only up to coherent
isomorphism. The full axiomatisation is given by
Garner~\cite{GarnerPhD}, and we here only mention the
\emph{interchange law}, which says that the two ways of parsing the
above diagram coincide: $(\beta' \rond \beta) \vrond (\alpha' \rond
\alpha) = (\beta' \vrond \alpha') \rond (\beta \vrond \alpha)$.

For any (pseudo) double category $\D$, we denote by $\DH$ the category
with vertical morphisms as objects and double cells as morphisms, and
by $\DV$ the bicategory with horizontal morphisms as objects and
double cells as morphisms.  Domain and codomain maps arrange into
functors $\dom_v,\cod_v \colon \DH \to \D_h$ and $\dom_h,\cod_h \colon
\DV \to \D_v$. We will refer to $\domv$ and $\codv$ simply as $\dom$
and $\cod$, reserving subscripts for $\domh$ and $\codh$.

We introduce a bit more notation.
\begin{defi}
  A double cell 
  is \emph{special} when its vertical domain and codomain are (horizontal) identities.
\end{defi}
For any object $X \in \ob (\D)$, $\DH (X)$ denotes the category with
\begin{itemize}
\item objects all vertical morphisms to $X$, and
\item morphisms $u \to v$ all double cells 
  \doublecellpro{Y}{Y'}{X}{X}{h}{u}{v}{k}{\alpha} with $\codv (\alpha) = k = \id_X$.
\end{itemize}
This complies with noting $\C(D)$ for the pullback category~\eqref{eq:pbkcat}, taking 
$\cod_v$ for $F$ and $X$ for $D$.

\subsection{Transition systems}\label{subsec:prelim:lts}
Beyond category theory, this paper also makes heavy use of the theory
of \ltss{} and associated techniques, especially bisimulation and
other behavioural equivalences.  The notion of \lts{} that we'll use
here is a little more general than usual. Indeed, usually, the
transitions of \anlts{} are labelled with letters in a given set
called the \emph{alphabet}, or the set of \emph{actions}. Here, we
consider the case where the vertices of \anlts{} may be typed, and
actions may change the type.  Extending the usual theory to this
setting is straightforward, so we only provide a brief overview. For
more on the usual theory, modern references are~\citet{Sangio}
and~\citet{SangioRutten}. Our setting is essentially a baby version of
Fiore's~\cite{DBLP:conf/ifipTCS/Fiore00} (see the references therein
for precursors).

Let $\Gph$ be the category of reflexive graphs, which has as objects
diagrams $s,t \colon E \rightrightarrows V$ in $\Set$, equipped with a
further arrow $e \colon V \to E$ such that $s \rond e = t \rond e =
\id_V$. We will as usual denote $e(v)$ by $\id_v$. Morphisms are those
morphisms between underlying graphs which preserve identity arrows.
$\Gph$ is thus the category of presheaves over the category \diaginline{\star
  \& {[1]}}{(m-1-2) edge[labelo={e}] (m-1-1) %
  (m-1-1) edge[bend left,labela={s}] (m-1-2) %
  edge[bend right,labelb={t}] (m-1-2) %
} with $e \rond s = \id_\star$ and $e \rond t = \id_\star$.
\begin{defi}
  For any $A \in \Gph$, let the \emph{category of \ltss{} over $A$} be
  just the slice category $\Gph / A$.
\end{defi}

\subsubsection{Basic notation}\label{subsubsec:notation:lts} $A$ is called the \emph{alphabet}, which goes
slightly beyond the usual notion of an alphabet. The latter would here
come in the form of the graph with one vertex, an identity edge, plus
an edge for each letter.  By convention, and mainly to ease graphical
intuitions in Sections~\ref{sec:playgrounds} and~\ref{sec:strats}, for
any \lts{} $p \colon G \to A$, we understand an edge $e \colon x' \to
x$ in $G$ as a transition from $x$ to $x'$. Of course, to recover a
more standard notation, one may replace all graphs with their
opposites.  When $e$ does not matter, but $p(e)$ does, we denote such
a transition by $x \mathrel{{}_{A}\!\xot{p(e)}} x'$, omitting the
subscript $A$ when clear from context.

For any reflexive graph $A$, we denote by $A^\star$ the graph with the
same vertices and arbitrary paths as edges. $A^\star$ is reflexive,
with identity edges given by empty paths. Similarly, $f^\star \colon
A^\star \to B^\star$ is the morphism induced by $f \colon A \to B$.
This defines a functor $\Gph \to \Cat$, which is \emph{not} left
adjoint to the forgetful functor $U \colon \Cat \to \Gph$.  There is a
left adjoint, though, which we denote by $\freecatfun$. It is given by
a quotient of $A^\star$, essentially equating $(\id)$ and $()$, i.e.,
the singleton, identity path and the empty one.
\begin{defi}
  Let $\freecat{A}$ denote the graph with the same vertices as $A$,
  whose edges $x \to x'$ are paths $x \tostar x'$ in $A$, considered
  equivalent modulo removal of identity edges.
\end{defi}
Any path $\rho$ has a normal form, obtained by removing all identity
edges and denoted by $\idfree{\rho}$.  We will deem such normal forms
\emph{identity-free}.  We denote by $x \mathrel{{}_{A}\!\xOt{a}} x'$
any path $\rho \colon x' \to^\star x$ in $G$, such that
$\idfree{p^\star(\rho)} = \idfree{(a)}$.  Concretely, if $a$ is an
identity, then $p^\star(\rho)$ only consists of identity edges;
otherwise, $p^\star(\rho)$ consists of $a$, possibly surrounded by
identity edges. In the former case, we further abbreviate the notation
to $x \xOt{} x'$ (observe that $\rho$ may well be empty).  Similarly,
for any path $r$ in $A^\star$, $x \xxOt{A}{r} x'$ denotes any path
$\rho \colon x' \to^\star x$ in $G$ such that $\idfree{p^\star (\rho)}
= \idfree{r}$.

\subsubsection{Bisimulation and change of base} 
In this section, we revisit the usual notion of (strong and weak) bisimulation in our graph-based setting,
and provide a few stability results under base change and cobase change.
Let us start with strong bisimulations.

\begin{defi} For any $G, G' \in \Gph$, a morphism $f \colon G \to G'$
  is a \emph{graph fibration} iff for all $x \in G$, $y \in G'$, and
  $e' \in G' (y,f (x))$, there exist $x' \in G$ and $e \in G (x',x)$
  such that $f (e) = e'$.
\end{defi}

Consider morphisms $p \colon G \to A$ and $p' \colon G' \to A$.  
A \emph{relation over $A$} is a subgraph of the pullback
\begin{center}
  \Diag{%
    \pbk{m-2-1}{m-1-1}{m-1-2} %
    }{%
    G \times_A G' \& G' \\
    G \& A. %
  }{%
(m-1-1) edge[labelu={}] (m-1-2) %
edge[labell={}] (m-2-1) %
(m-2-1) edge[labeld={p}] (m-2-2) %
(m-1-2) edge[labelr={p'}] (m-2-2) %
  }
\end{center}
In particular, if two edges $(e,e')$ are related by some $R \subseteq
G \times_A G'$, then so are their sources, resp.\ targets.
We denote such relations by $R \colon G \modto G'$.

We will most often deal with \emph{full} relations, i.e., such that
$R(e,e')$ iff both sources and targets are related. Of course, such
relations need only to be defined on vertices.
\begin{defi}
  A \emph{simulation} $G \modto G'$ is a relation $R$ over $A$ such that for all $e
  \in G(x',x)$, if $R(x,y)$ then there exist $y'$ and $e' \in
  G'(y',y)$ such that $R(e,e')$.  A \emph{bisimulation} is a
  simulation whose converse also is a simulation.
\end{defi}
When $R$ is full, $R$ is a simulation iff for all $e \in G(x',x)$, if
$R(x,y)$ then there exists $y'$ and $e' \in G'(y',y)$ such that
$R(x',y')$ and $e$ and $e'$ are mapped to the same edge in $A$.

\begin{prop}
  $R$ is a simulation iff its first projection $R \into {G \times_A
    G'} \to G$ is a  graph fibration.  Accordingly, $R$ is a
  bisimulation iff both projections are  graph fibrations.
\end{prop}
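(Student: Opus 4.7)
The plan is to unfold both definitions and observe that they coincide, then leverage a swap-of-coordinates symmetry for the bisimulation half; I do not anticipate any serious obstacle.

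I would start by unpacking the graph fibration condition for $\pi_G \colon R \to G$. Fix a vertex $(x,y) \in R$ and an edge $e \in G(x',x)$ whose target is $\pi_G(x,y) = x$. A lift consists of a vertex $(x'', y') \in R$ together with an edge $\rho$ of $R$ from $(x'', y')$ to $(x,y)$ satisfying $\pi_G(\rho) = e$. Because $R \subseteq G \times_A G'$, any such $\rho$ is a pair $(e_1, e_2)$ of edges with $p(e_1) = p'(e_2)$ in $A$; the projection condition forces $e_1 = e$, hence $x'' = x'$, and what remains is exactly an edge $e' := e_2 \in G'(y', y)$ such that $(e, e') \in R$. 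This matches the simulation clause verbatim, giving the first equivalence.

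For the bisimulation statement, I would observe that swapping coordinates in the pullback yields an isomorphism $G \times_A G' \cong G' \times_A G$, which sends any relation $R \colon G \modto G'$ to its converse $R^{\mathrm{op}} \colon G' \modto G$ and identifies the first projection of $R^{\mathrm{op}}$ with the second projection $\pi_{G'}$ of $R$. Applying the first part to $R^{\mathrm{op}}$ translates ``$R^{\mathrm{op}}$ is a simulation'' into ``$\pi_{G'}$ is a graph fibration''; combining this with the first part applied to $\pi_G$ itself yields the stated characterisation.

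The only minor subtlety worth spelling out is that the paper reads an edge $e \colon x' \to x$ as a transition \emph{from} $x$ to $x'$: both the simulation and the graph fibration conditions quantify over edges whose $G$-target is the chosen vertex, so the conventions line up automatically and no extra bookkeeping is required.
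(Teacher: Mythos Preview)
Your proposal is correct and follows exactly the approach the paper intends: the paper's own proof is the single word ``Straightforward,'' and your unfolding of the graph fibration condition for $\pi_G$ and the symmetry argument for $\pi_{G'}$ are precisely what that word abbreviates.
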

\begin{proof}
  Straightforward.
\end{proof}\enlargethispage{3\baselineskip}
\begin{rem}\label{rem:joyal}
  The characterisation of simulations in terms of  graph fibrations may
  be attributed to Joyal et al.~\cite{DBLP:conf/lics/JoyalNW93}, who first observed
   that a morphism $f \colon G \to G'$ in $\Gph/A$ is a functional bisimulation iff
   for any commuting square as the exterior of
   \begin{center}
    \diag{%
      \yoneda(\star) \& G \\
      \yoneda[1] \& G', %
    }{%
      (m-1-1) edge[labelu={}] (m-1-2) %
      edge[labell={\yoneda(t)}] (m-2-1) %
      (m-2-1) edge[labeld={}] (m-2-2) %
      (m-1-2) edge[labelr={f}] (m-2-2) %
      (m-2-1) edge[dashed] (m-1-2) %
    }
  \end{center}
  there exists a dashed arrow making both triangles commute.  Here,
  $\yoneda(t) \colon \yoneda(\star) \to \yoneda[1]$ maps the reflexive
  graph with a single vertex (and its identity edge) to the one with
  two vertices and just one non-identity edge $e$ between them, by
  picking out the target of $e$.
  This precisely says that $f$ is a  graph fibration.

  A peculiar aspect of this characterisation is that it may seem
  independent from $A$. Actually, $R$ is a relation over $G \times_A
  G'$, and $f$ is a morphism over $A$.
\end{rem}

As usual, fixing $G$ and $G'$ over $A$, we have:
\begin{prop}
  Bisimulations are closed under union, and the union of all
  bisimulations, called \emph{bisimilarity}, is again a bisimulation,
  the maximum one. 
\end{prop}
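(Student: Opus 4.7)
The plan is to prove both claims in one go by showing that any (set-indexed) union of bisimulations is again a bisimulation; the maximality of bisimilarity will then be automatic, since it is defined as the union of all bisimulations.

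Given a family $(R_i)_{i \in I}$ of bisimulations $G \modto G'$, I would form their union $R = \bigcup_{i \in I} R_i$ inside the pullback $G \times_A G'$ vertexwise and edgewise (unions of subgraphs of a fixed graph are again subgraphs, so this is well-defined). To verify the simulation property of $R$, suppose we are given $(x,y) \in R$ and an edge $e \in G(x',x)$. By definition of the union, there exists some index $i$ with $(x,y) \in R_i$; applying the simulation property of $R_i$ yields $y' \in G'$ and $e' \in G'(y',y)$ with $(e,e') \in R_i$, hence $(e,e') \in R$. The converse simulation property is symmetric, using that each $R_i$ is a bisimulation. Thus $R$ is a bisimulation.

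Applying this to the family of all bisimulations $G \modto G'$, their union is again a bisimulation, and it contains every bisimulation by construction; this is therefore the maximum bisimulation, namely bisimilarity. There is no real obstacle here: the only mild point of care is to ensure that the union is computed correctly as a subgraph of $G \times_A G'$ (so that sources and targets of related edges remain related), which is immediate from the fact that $\Gph$-subobjects of a fixed graph are closed under arbitrary unions. Alternatively, one could repackage the argument through the fibration characterisation established just above, observing that graph fibrations are closed under vertexwise/edgewise unions when all the fibrations in question share codomain $G$ (resp.\ $G'$).
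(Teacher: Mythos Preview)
Your argument is correct and is precisely the standard one. The paper does not actually supply a proof for this proposition (it is stated without proof, as a routine fact), so there is nothing to compare against; your write-up would serve perfectly well as the missing justification.
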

Considering \emph{endo}relations $G \modto G$, we talk about
bisimilarity \emph{in} $G$.
\begin{notation}
  Bisimilarity in $G$ over $A$ is denoted by $\bisim_A$. It may, upon
  a slight abuse of notation, be understood as an equivalence relation
  over all vertices of any two graphs over $A$. Namely, if $G$ and
  $G'$ are graphs over $A$, we may write $x \bisim_A y$ when $x \in G$
  and $y \in G'$ to mean bisimilarity in $G + G'$.
\end{notation}

Before treating weak bisimulations, we consider a first stability result,
which is all we need about strong bisimulations.

Any morphism $f \colon A \to B$ induces by pullback a change-of-base
functor $\cob{f} \colon \Gph / B \to \Gph / A$, which has a left
adjoint $\cocob{f}$ given by composition with $f$.

\begin{prop}\label{prop:change of base}
  For any morphism of graphs $f \colon A \to B$, both functors $\cob{f}
  \colon \Gph / B \to \Gph / A$ and $\cocob{f} \colon \Gph / A \to \Gph /
  B$, i.e., pullback along and post-composition with $f$,
  preserve functional bisimulations.
\end{prop}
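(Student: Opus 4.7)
The plan is to unfold the definitions so that ``functional bisimulation'' becomes ``graph fibration of underlying graphs'' (by Remark~\ref{rem:joyal} and the proposition preceding it), and then to treat the two functors separately. The case of $\cocob{f}$ is essentially a tautology, while the case of $\cob{f}$ is a routine diagram chase. I do not anticipate any serious obstacle; the main point to get right is tracking which data lives over $A$ and which over $B$.

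For $\cocob{f}$, note that if $g \colon G \to G'$ is a morphism in $\Gph/A$, then $\cocob{f}(g)$ is the very same morphism of graphs $g \colon G \to G'$, only reindexed over $B$ via $f \rond p$ and $f \rond p'$. Since being a graph fibration is a property of the morphism of underlying reflexive graphs, not of the base, $\cocob{f}(g)$ is a graph fibration iff $g$ is. Hence $\cocob{f}$ trivially preserves functional bisimulations.

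For $\cob{f}$, suppose $g \colon G \to G'$ is a graph fibration in $\Gph/B$, with structure maps $p \colon G \to B$ and $p' \colon G' \to B$. Writing $\cob{f}(G) = G \times_B A$ and similarly for $G'$, the functor $\cob{f}(g)$ acts as $(x,a) \mapsto (g(x),a)$ on vertices and componentwise on edges. Given a vertex $(x,a)$ of $\cob{f}(G)$ and an edge $e' \colon (y',a') \to (g(x),a)$ in $\cob{f}(G')$, the edge $e'$ unpacks as a pair $(e'_{G'},e_A)$ with $e'_{G'} \colon y' \to g(x)$ in $G'$ and $e_A \colon a' \to a$ in $A$ satisfying $f(e_A) = p'(e'_{G'})$. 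Since $g$ is a graph fibration, lift $e'_{G'}$ to some $e_G \colon z \to x$ in $G$ with $g(e_G) = e'_{G'}$; then $p(e_G) = p'(g(e_G)) = f(e_A)$, so $(e_G,e_A)$ is a well-defined edge of $\cob{f}(G)$ mapping to $e'$ under $\cob{f}(g)$. This exhibits $\cob{f}(g)$ as a graph fibration, concluding the proof.
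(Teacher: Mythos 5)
Your proof is correct and takes essentially the same route as the paper: the $\cocob{f}$ case is dismissed as trivial (a graph fibration is a property of the underlying morphism, independent of the base), and the $\cob{f}$ case is exactly the stability of graph fibrations under pullback. The only difference is presentational: you verify the lifting element-wise in $G \times_B A$, whereas the paper runs the same argument abstractly, via the lifting-square characterisation of Remark~\ref{rem:joyal} and the universal property of the pullback (noting it is an instance of right maps being stable under pullback).
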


\begin{proof}
  The case of $\cocob{f}$ is actually trivial.
  For $\cob{f}$, we use Remark~\ref{rem:joyal}. By the pullback lemma,
  the square on the right below
  is a pullback. We check that $\cob{f} (G) \to \cob{f} (G')$ is again a
  bisimulation. Indeed, consider any square as
  on the left below:
  \begin{center}
    \Diag{%
      \pbk[10]{m-2-2}{m-1-2}{m-1-3} %
      \path[->,draw] (m-2-1) edge[fore,dashed,bend right=3] (m-1-3) %
      ; %
    }{%
      \yoneda (\star) \& \cob{f}(G) \& G \\
      \yoneda[1] \& \cob{f}(G') \& G'. %
    }{%
      (m-1-1) edge[labelu={}] (m-1-2) %
      edge[labell={\yoneda(t)}] (m-2-1) %
      (m-2-1) edge[labeld={}] (m-2-2) %
      (m-1-2) edge[labelr={}] (m-2-2) %
      (m-2-1) edge[dotted] (m-1-2) %
      (m-1-2) edge (m-1-3) %
      (m-2-2) edge (m-2-3) %
      (m-1-3) edge (m-2-3) %
    }
  \end{center}
  Because $G \to G'$ is a bisimulation, we obtain the dashed arrow
  making both triangles commute. But then by universal property of
  pullback, we obtain the dotted arrow, making the corresponding
  bottom triangle commute.  Finally, the top triangle commutes upon
  postcomposition with $\cob{f} (G) \to G$, and after composition with
  $\cob{f} (G) \to \cob{f} (G')$, hence commutes by uniqueness in the
  universal property of pullback.
\end{proof}
\begin{rem}
  This is an instance of the fact that \emph{right maps} are stable
  under pullback in any weak factorisation
  system~\cite{Joyal:ncatlab:facto}, here with the factorisation
  system cofibrantly generated by the sole map $\yoneda (t)$.
\end{rem}

Let us now treat weak bisimulations. We start with the functional case.
\begin{defi}
  A morphism $f \colon G \to G'$ in $\Gph/A$ is a \emph{functional,
    weak bisimulation} iff $\freecat{f} \colon \freecat{G} \to
  \freecat{G'}$ is a graph fibration.
\end{defi}
\begin{prop}
  This equivalent to the fact that, for any edge $e \colon y' \to
  f(x)$ in $G'$, there exists $x'$ in $G$ and a path $r \colon x'
  \tostar x$ such that $\idfree{f^\star(r)} = \idfree{(e)}$.
\end{prop}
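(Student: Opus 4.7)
The plan is to unpack both directions via the observation that every edge of $\freecat{G}$ (resp.\ $\freecat{G'}$) is represented by a path in $G$ (resp.\ $G'$), with two paths representing the same edge precisely when their identity-free normal forms coincide. Under this representation, $\freecat{f}$ sends the class $[r]$ of a path $r$ to $[f^\star(r)]$, so the equation $\freecat{f}([r]) = [r']$ in $\freecat{G'}$ amounts to $\idfree{f^\star(r)} = \idfree{r'}$.

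For the forward direction, given that $\freecat{f}$ is a graph fibration and an edge $e \colon y' \to f(x)$ in $G'$, I would feed the edge $[(e)]$ of $\freecat{G'}$ into the fibration property at $x$. This yields a representative path $r \colon x' \tostar x$ in $G$ satisfying $\idfree{f^\star(r)} = \idfree{(e)}$, which is exactly the desired condition. The degenerate case in which $e$ is an identity is handled by taking $r$ empty.

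For the backward direction, take an arbitrary edge of $\freecat{G'}$ from some $y'$ to $f(x)$, pick a representative path $r' = (e_1, \ldots, e_n)$ in $G'$ with intermediate vertices $y' = z_0, z_1, \ldots, z_n = f(x)$, and construct a lift by reverse induction on its length. Setting $x_n := x$, at each step from $n$ down to $1$ I apply the hypothesis to $e_i \colon z_{i-1} \to z_i = f(x_i)$ to produce $x_{i-1} \in G$ and a path $r_i \colon x_{i-1} \tostar x_i$ with $\idfree{f^\star(r_i)} = \idfree{(e_i)}$. The concatenation $r = r_1 \ldots r_n \colon x_0 \tostar x$ is then the required lift, since $\idfree{\cdot}$ distributes over path concatenation and hence $\idfree{f^\star(r)} = \idfree{r'}$.

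The main (mildly annoying) obstacle in the second direction is bookkeeping: at each inductive step one must verify that $f(x_i) = z_i$, so that the hypothesis is applicable to $e_i$. This reduces to the observation that taking the identity-free normal form preserves the source of a path, itself immediate because identity edges are endo-loops on vertices. Once this point is nailed down, the rest of the argument is routine.
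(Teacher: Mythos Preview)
Your argument is correct and follows the natural unpacking of what it means for $\freecat{f}$ to be a graph fibration; the paper's own ``proof'' is a one-line remark observing that the identity case is handled by the empty path and leaves the rest implicit, so you have simply spelled out what the paper takes as evident.
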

\begin{proof}
  If $e$ is an identity, then taking the empty path for $r$ will do,
  so the condition really says something about non-identity edges $e$.
\end{proof}
\begin{rem}
  Remark~\ref{rem:joyal} adapts to weak, functional bisimulations,
  using $\freecat{f}$ instead of $f$.
\end{rem}

Let us now handle the relational case.  In the strong case, a relation
between graphs $G$ and $G'$ over $A$ was defined to be a subobject of
the pullback $G \times_A G'$, and simulation properties were related
to the projections being graph fibrations. In order to follow this
pattern here, we need to consider $\freecat{A}$ instead of
$A$. However, in general, $\freecat{G} \times_{\freecat{A}}
\freecat{G'}$ differs from $\freecat{G \times_A G'}$. We consider the
former:
\begin{defi}
  A \emph{weak simulation} $G \modto G'$ is a relation $R \subseteq
  \freecat{G} \times_{\freecat{A}} \freecat{G'}$ whose first projection $R
  \into \freecat{G} \times_{\freecat{A}} \freecat{G'} \to \freecat{G}$ is a
  graph fibration.
  
  $R$ is a \emph{weak bisimulation} iff both projections are graph
  fibrations.
\end{defi}
Explicitly, consider $p \colon G \to A$ and $p' \colon G' \to A$, and
$R$ as above a weak simulation. For any edge $r \colon x \ot x'$ in
$\freecat{G}$, i.e., identity-free path $r \colon x \xotstar{} x'$,
and $y \in G'$ such that $R(x,y)$, there should be an identity-free
path $r' \colon y \xotstar{} y'$ in $G'$ such that $(r,r') \in R$.  If
$R$ is full, this is equivalent to the existence, for each edge $e
\colon x \ot x'$ in $G$ and $y \in G'$ such that $R(x,y)$, of an
identity-free path $r' \colon y \xotstar{} y'$ such that $R(x',y')$
and $\idfree{(p(e))} = \idfree{(p')^\star (r')}$. We will only consider
full relations in this paper, hence only the last characterisation
will matter to us.

As in the strong case, we have for any fixed $G$ and $G'$ over $A$:
\begin{prop}
  Weak bisimulations are closed under union, and the union of all weak
  bisimulations, called \emph{weak bisimilarity}, is again a weak
  bisimulation, the maximum one.
\end{prop}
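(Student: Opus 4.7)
The plan is to follow the pattern of the analogous result for strong bisimulations stated earlier, adapted so that everything takes place inside $\freecat{G} \times_{\freecat{A}} \freecat{G'}$ rather than inside $G \times_A G'$. The statement has two parts: closure of weak bisimulations under union, and existence of the maximum one.

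For closure under union, I fix graphs $G, G'$ over $A$ and a family $(R_i)_{i \in I}$ of weak bisimulations $G \modto G'$, viewed as subgraphs of the common pullback $\freecat{G} \times_{\freecat{A}} \freecat{G'}$. I would form their pointwise union $R = \bigcup_i R_i$, which is again a subgraph of this pullback. To verify that the first projection $R \to \freecat{G}$ is a graph fibration, I take any vertex $(x,y) \in R$ and any edge $e$ of $\freecat{G}$ ending at $x$. By definition of union, $(x,y) \in R_i$ for some $i$; applying the fibration property of the first projection of $R_i$ yields a lift of $e$ to an edge of $R_i$ at $(x,y)$, which is \emph{a fortiori} an edge of $R$. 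The argument for the second projection is symmetric, so $R$ is a weak bisimulation.

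For maximality, I set $\mathcal{B}$ to be the union of all weak bisimulations $G \modto G'$, which makes sense because each such bisimulation is a subgraph of the single fixed pullback $\freecat{G} \times_{\freecat{A}} \freecat{G'}$, bounding the collection by a set. By the first part, $\mathcal{B}$ is itself a weak bisimulation, and by construction it contains every weak bisimulation $G \modto G'$, so it is the maximum.

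I do not anticipate any real obstacle: the argument is entirely formal and simply lifts the proof for strong bisimulations through the free-category functor $\freecatfun$. The only mild point that warrants explicit mention, in keeping with the style of the surrounding remarks, is that weak bisimulations for fixed $G$ and $G'$ all sit inside the same pullback, which both makes the union of relations unambiguous and ensures that the collection of weak bisimulations forms a set.
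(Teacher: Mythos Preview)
Your argument is correct and is exactly the standard one the paper has in mind: the paper does not spell out a proof at all, merely prefacing the proposition with ``As in the strong case'' (and the strong case is likewise left unproved as ``usual''). Your explicit verification that the union of subgraphs of $\freecat{G}\times_{\freecat{A}}\freecat{G'}$ inherits the graph-fibration property of each projection, together with the observation that all weak bisimulations sit inside this fixed pullback, is precisely the routine unfolding the paper elides.
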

\begin{notation}
  Weak bisimilarity over $A$ is denoted by $\wbisim_A$. As for strong
  bisimilarity, we will abuse notation and consider $\wbisim_A$ as a
  relation between the vertices of any two graphs over $A$.
\end{notation}

\subsection{CCS}
The main subject of this paper is CCS~\cite{Milner89}, and fair
testing equivalence over it.  We work with a standard version, except
in two respects. First, we work with infinite terms, which spares us
the need for replication, recursion, or other possible mechanisms for
describing infinite processes in a finite way.  Second, we work with a
de Bruijn-like presentation: terms carry their (finite) sets of known
channels, in the form of a finite number. I.e., the number $n$
indicates that the considered process knows channels $1, \ldots, n$
(which complies with our notation for finite ordinals, introduced in
Section~\ref{subsec:prelim:cats}).

\begin{rem}
  While the de Bruijn-like presentation clearly is a matter of
  convenience, working with infinite terms does have an impact on our
  results.  Restricting ourselves to recursive processes (e.g., by
  introducing some recursion construct), we would still have that
  $\Transl{P} \faireq \Transl{Q}$ implies $P \faireqs Q$.  The
  converse is less obvious and may be stated in very simple terms:
  suppose you have two recursive CCS processes $P$ and $Q$ and a test
  process $T$, possibly non-recursive, distinguishing $P$ from $Q$; is
  there any recursive $T'$ also distinguishing $P$ from $Q$?  We leave
  this question open.
\end{rem}

Our (infinite) CCS terms are coinductively generated by the typed grammar
\begin{mathpar}
  \inferrule{\Gam \vdash P \\ \Gam \vdash Q}{\Gam \vdash P|Q} \and %
  \inferrule{\Gam,a \vdash P}{\Gam \vdash \nu a. P} \and %
  \inferrule{\ldots \\ \Gam \vdash P_i \\ \ldots}{\Gam \vdash \sum_{i \in n} \alpha_i.P_i}~(n \in \Nat) \,.%
\end{mathpar}
Here, as announced, $\Gam$ ranges over $\Nat$, i.e., the free names of
a process always are $1 \ldots n$ for some $n$. Accordingly, $\Gam,a$
denotes just $n+1$ (and then $a = n+1$).  Furthermore, $\alpha_i$ is
either $a$, $\abar$, or $\tick$ (for $a \in \Gam$).  The latter is a
`tick' move used in the definition of fair testing equivalence.

\begin{defi}\label{def:A}
  Let $\A$ be the reflexive graph with vertices given by finite
  ordinals, edges $\Gam \to \Gam'$ given by $\emptyset$ if $\Gam \neq
  \Gam'$, and by $\Gam + \Gam + \ens{\id,\tick}$ otherwise,  $\id
  \colon \Gam \to \Gam$ being the identity edge on $\Gam$.
  Elements of the first summand are denoted by $a
  \in \Gam$, while elements of the second summand are denoted by
  $\abar$.
\end{defi}

\begin{figure}[t]
  \begin{mathpar}
    \inferrule{ }{(\Gam \vdash P) \xot{\id} (\Gam \vdash P)} %
    \and %
    \inferrule{ }{(\Gam \vdash \sum_{i \in n} \alpha_i.P_i) \xot{\alpha_i} %
      (\Gam \vdash P_i)} %
    \\ %
    \inferrule{(\Gam \vdash P_1) \xot{\alpha} (\Gam \vdash P'_1)}{(\Gam \vdash P_1 \para P_2) %
      \xot{\alpha} (\Gam \vdash P'_1 \para P_2)} %
    \and %
    \inferrule{(\Gam \vdash P_2) \xot{\alpha} (\Gam \vdash P'_2)}{(\Gam \vdash P_1 \para P_2) %
      \xot{\alpha} (\Gam \vdash P_1 \para P'_2)} %
    \and %
    \inferrule{(\Gam,a \vdash P) \xot{\alpha} (\Gam, a \vdash P')}{ %
      (\Gam \vdash \nu a.P) \xot{\alpha} (\Gam \vdash \nu a.P')}~{(\alpha \notin \ens{a, \abar})} %
    \and %
    \inferrule{ %
      (\Gam \vdash P_1) \xot{\alpha} (\Gam \vdash P'_1) \\
      (\Gam \vdash P_2) \xot{\overline{\alpha}} (\Gam \vdash P'_2) 
    }{
      (\Gam \vdash P_1 \para P_2) \xot{\id} (\Gam \vdash P_1 \para P'_2)
    } %
  \end{mathpar}
  \caption{CCS transitions}
  \label{fig:ccs}
\end{figure}
We view terms as a graph $\ccs{}$ over $\A$ with the usual transition
rules, as recalled in Figure~\ref{fig:ccs} (which is an inductive
definition).  There, we let $\overline{\alpha}$ denote $\abar$ when
$\alpha = a$, or $a$ when $\alpha = \abar$.
\begin{rem}
  The graph $\A$ only has `endo'-edges, hence only relates terms with
  the same set of free channels. Some \ltss{} below do use more
  general graphs.
\end{rem}

Let us finally recall the definition of fair testing equivalence.
Let $\bot$ denote the set of processes $P$ such that 
for all paths $P \xxOt{\A}{} P'$, there exists a path 
$P' \xxOt{\A}{\tick} P''$.
\begin{defi}\label{def:ccsfair}
  A \emph{test} for $\Gam \vdash P$ is any process $\Gam \vdash Q$.  A
  test $Q$ is \emph{passed} by $P$ when $(\Gam \vdash P \para Q) \in
  \bot$.  Two processes $\Gam \vdash P$ and $\Gam' \vdash P'$ are
  \emph{fair testing equivalent}, notation $(\Gam \vdash P) \faireqs
  (\Gam' \vdash P')$, iff $\Gam = \Gam'$ and $P$ and $P'$ pass exactly
  the same tests.
\end{defi}

\section{Summary of previous work}\label{sec:HP}
In this section, we recall some material from \citetalias{2011arXiv1109.4356H}. Apart from the
admittedly numerous prerequisites mentioned in the previous section,
the paper should be self-contained, although the material in this
section would usefully be complemented by reading \citetalias{2011arXiv1109.4356H}. 

As sketched in the introduction, we construct a
multi-player game, consisting of positions and plays between
them. Positions are certain graph-like objects, where vertices
represent players and channels.  But what might be surprising is that
moves are not just a binary relation between positions, because we not
only want to say \emph{when} there is a move from one position to
another, but also \emph{how} one moves from one to the other. This
will be implemented by viewing moves from $X$ to $Y$ as \emph{cospans}
$Y \xto{s} M \xot{t} X$ in a certain category $\Chatf$ of
higher-dimensional graph-like objects, or `string diagrams', where $X$
and $Y$ respectively are the initial and final positions, and $M$
describes how one goes from $X$ to $Y$.  By composing such moves (by
pushout), we get a bicategory $\Dccsv$ of positions and plays. This is
described in Sections~\ref{subsec:diagrams}--\ref{subsec:plays}.  In
Section~\ref{sec:playgrounds}, we will equip this bicategory with more
structure, namely that of a pseudo double category, where one
direction models dynamics, and the other models space, e.g., the
inclusion of a position into another.  Section~\ref{subsec:strats:old}
further recalls our two notions of strategies derived from the game
(behaviours and innocent strategies, respectively), and
Section~\ref{subsec:fair} recalls our semantic variant of fair testing
equivalence.

\subsection{Diagrams}\label{subsec:diagrams}
In preparation for the definition of our base category $\C$, recall
that (directed, multi) graphs may be seen as presheaves over the
category freely generated by the graph with two objects $\star$ and
$[1]$, and two edges $s,t \colon \star \to [1]$. Any presheaf $G$
represents the graph with vertices in $G(\star)$ and edges in $G[1]$,
the source and target of any $e \in G[1]$ being respectively $e \cdot
s$ and $e \cdot t$. A way to visualise how such presheaves represent
graphs is to compute their \emph{categories of
  elements}~\cite{MM}. Recall that the category of elements $\elements G$
for a presheaf $G$ over $\C$ has as objects pairs $(c,x)$ with $c \in
\C$ and $x \in G(c)$, and as morphisms $(c,x) \to (d,y)$ all morphisms
$f \colon c \to d$ in $\C$ such that $y \cdot f = x$. This category
admits a canonical projection functor $\pi_G$ to $\C$, and $G$ is the colimit of
the composite $\elements G \xto{\pi_G} \C \xto{\yoneda} \Chat$ with the
Yoneda embedding. E.g., the category of elements for $\yoneda[1]$ is
the poset $(\star, s) \xto{s} ([1],\id_{[1]}) \xot{t} (\star, t)$,
which could be pictured as
\diagramme[stringdiag={0.1}{0.6}]{baseline=(A.south)}{%
  \path[-,draw] %
  (A) edge (E) %
  (B) edge (E) %
  ; %
  \node at ($(B.south east) + (.1,0)$) {,} ;%
}{%
  \joueur{A} \& \node[regular polygon,anchor=center,regular polygon
  sides=3,fill,minimum size=3pt,draw,rotate=-90] (E) {}; \&
  \joueur{B} %
}{%
} \hspace*{-.7em} where dots represent vertices, the triangle
represents the edge, and links materialise the graph of $G(s)$ and
$G(t)$, the convention being that $t$ goes from the apex of the
triangle.  We thus recover some graphical intuition.

Our string diagrams will also be defined as (finite) presheaves over
some base category $\C$. Let us give the formal definition of $\C$ for
reference.  We advise to skip it on first reading, as we then attempt
to provide some graphical intuition.
\begin{figure}[t]
    \begin{mathpar}
      {\begin{minipage}[t]{0.33\textwidth}
        \centering
        {\diag(.4,.4){%
            \&|(v)| v \\
            |(n)| [n] \& \& |(n')| [n'] \\
            \& |(star)| \star %
          }{%
            (star) edge[labelbl={s_i}] (n) %
            edge[labelbr={s_i}] (n') %
            (n) edge[labelal={t}] (v) %
            (n') edge[labelar={s}] (v) %
          }} \\
        ($\forall n \in \Nat, i \in n,  v \in {\cup_{a \in n}} \{\forkln,\linebreak \forkrn,
        \tickn, \inna, o_{n,a}, \nun\}$)
      \end{minipage}}
  \and
  {\begin{minipage}[t]{0.2\textwidth}
        \centering
        {\diag(.4,.4){%
      \&|(v)| \forkn \\
     |(n)| \forkln \& \& |(n')| \forkrn \\
      \& |(star)| [n] %
    }{%
      (star) edge[labelbl={t}] (n) %
       edge[labelbr={t}] (n') %
       (n) edge[labelal={l}] (v) %
       (n') edge[labelar={r}] (v) %
    }} \\
  ($\forall$ $n$)
  \end{minipage}}
  \and 
  {  \begin{minipage}[t]{0.33\textwidth}
    \centering
    {\diag(.4,.3){%
        |(n)| [m]   \&  |(sender)| o_{m,c} \\
        |(star)| \star \& |(v)| \tau_{n,a,m,c} \\
        |(n')| [n] \& |(receiver)| \inna %
        \& %
      }{%
        (star) edge[labell={s_c}] (n) %
        edge[labell={s_a}] (n') %
        (n) edge[labela={t}] (sender) %
        (n') edge[labelb={t}] (receiver) %
        (sender) edge[labelr={\epsilon}] (v) %
        (receiver) edge[labelr={\rho}] (v) %
      }} \\
    ($\forall$ $n \in \Nat, a \in n$, and $c \in m$)
    \end{minipage}}%
  \end{mathpar}%
  \caption{Equations for $\C$}
  \label{fig:equationsC}
\end{figure}

\begin{defi}
  Let $G_{\C}$ be the graph with, for all $n, m \in \Nat$, $a \in n$, and $c \in m$:
  \begin{itemize}
  \item vertices $\star$, $[n]$, $\forkln$, $\forkrn$, $\forkn$,
    $\nun$, $\tickn$, $\inna$, $o_{n,a}$, and $\tau_{n,a,m,c}$;
  \item edges $s_1,...,s_n : \star \to [n]$;
  \item for all $v \in \ens{\forkln,\forkrn,\tickn,\inna,o_{n,a }}$, edges
    $s,t : [n] \to v$;
  \item edges $[n] \xto{t} \nun \xot{s} [n+1]$;
  \item edges $\forkln \xto{l} \forkn \xot{r} \forkrn$;
  \item edges $\inna \xto{\rho} \tau_{n,a,m,c} \xot{\epsilon} o_{m,c}$.
  \end{itemize}

  Let $\C$ be the free category on $G_{\C}$, modulo the equations in
  Figure~\ref{fig:equationsC}, where, in the left-hand one, $n'$ is $n+1$
  when $v = \nun$, and $n$ otherwise.
\end{defi}
Our category of string diagrams will be the category $\Chatf$ of finite
presheaves on $\C$.

\begin{wrapfigure}{r}{0pt}
  \begin{minipage}[t]{0.3\linewidth}
    \centering
    \diagramme[stringdiag={.8}{1.3}]{}{%
    }{%
      \node (s_1) {$(\star, s_1)$}; \& \node (s_2) {$(\star, s_2)$}; \& \node (s_3) {$(\star, s_3)$}; \\
      \& \node (id) {$([3], \id_{[3]})$}; 
    }{%
      (s_1) edge (id) %
      (s_2) edge (id) %
      (s_3) edge (id) %
    }
    \\
    \diagramme[stringdiag={.8}{1.3}]{}{%
      \path[-,draw] %
      (a) edge (j1) %
      (c) edge (j1) %
      (b) edge (j1) %
      ; %
    }{%
      \canal{a}     \& \canal{b} \&  \canal{c} \\
      \& \joueur{j1} }{%
    }
  \end{minipage}
\end{wrapfigure}
To explain this seemingly arbitrary definition, let us compute a few
categories of elements. Let us start with an easy one, that of $[3]
\in \C$ (we implicitly identify any $c \in \C$ with $\yoneda c$). An
easy computation shows that it is the poset pictured in the top part
on the right. We will think of it as a position with one player
$([3],\id_{[3]})$ connected to three channels, and draw it as in the
bottom part on the right, where the bullet represents the player, and
circles represent channels.  The \emph{positions} of our game are
finite presheaves empty except perhaps on $\star$ and $[n]$'s. Other
objects will represent moves.  The graphical representation is
slightly ambiguous, because the ordering of channels known to players
is implicit.  We will disambiguate in the text when necessary.  A
\emph{morphism of positions} is an injective morphism of presheaves.
The intuition for a morphism $X \to Y$ between positions is thus that
$X$ embeds into $Y$.
\begin{defi}\label{def:Dh}
  Positions and morphisms between them form a category $\Dccsh$.
\end{defi}

A more difficult category of elements is that of $\paraof{2}$. It is
the poset generated by the graph on the left (omitting base
objects for conciseness):
  \begin{mathpar}
    \diag (.4,.3) {%
      \& \& |(lt)| l s \& \& |(rt)| r s \& \& \\ 
      |(lt1)| l s s_1 \& \& |(l)| l \& |(para)| \id_{\paraof{2}} \& |(r)| r \& \& |(lt2)| l s s_2 \\ 
      \& \& \& |(ls)| l t = r t \&  \& \& 
    }{%
      (lt1) 
      edge (ls) %
      (lt2) 
      edge (ls) %
      (ls) edge (l) edge (r) %
      (lt) edge (l) %
      (rt) edge (r) %
      (l) edge (para) %
      (r) edge (para) %
      (lt1) edge[identity] (lt1) 
      edge (lt) %
      edge[fore,bend left=10] (rt) %
      (lt2) edge[identity] (lt2) 
      edge (rt) %
      edge[bend right=10,fore] (lt) %
    }
    \and
          \diagramme[stringdiag={.3}{.6}]{}{
    \node[diagnode,at= (t1.south east)] {\ \ \ .} ; %
  }{%
     \& \& \joueur{t_1} \&  \& \joueur{t_2} \\
    \& \&   \&  \\
    \& \ \& \\
    \canal{t0} \& \& \& \couppara{para} \& \& \& \canal{t1} \\ 
    \& \ \& \\
    \& \&  \\
    \& \& \& \joueur{s} \& \& \& \& 
  }{%
    (para) edge[-] (t_2) %
    (t1) edge[-,bend right=10] (t_2) %
    (t0) 
    edge[-] (s) %
    (t1) 
    edge[-] (s) %
    (s) edge[-] (para) %
    (para) edge[fore={.3}{.3},-] (t_1)
    (t0) edge[fore={.5}{.5},-,bend left=10] (t_2) %
    (t0) edge[-,bend left=15] (t_1) %
    (t1) edge[-,fore={1}{.5},bend right=10] (t_1) %
  }  
\end{mathpar}
We think of it as a binary player ($l t$) forking into two players
($l s$ and $r s$), and draw it as on the right.
\newcommand{\longueurfigun}{.6}\newcommand{\separation}{} The
graphical convention is that a black triangle stands for the
presence of $\id_{\forkof{2}}$, $l$, and $r$. Below, we represent
just $l$ as a white triangle with only a left-hand branch, and
symmetrically for $r$.  Furthermore, in all our pictures, time flows
`upwards'.

  Another category of elements, characteristic of CCS,
  is the one for synchronisation $\tau_{n,a,m,c}$. The case $(n,a,m,c) =
  (2,1,3,2)$ is the poset generated by the graph on the left of
  Figure~\ref{fig:tau}, which we will draw as on the right. %
  \begin{figure*}[t]
    \begin{mathpar}
      \diagramme[diag={.4}{.4}]{}{%
        \path[->] %
        (t1) edge (s) %
        (t2) edge (s') %
        (t0) edge (t) %
        (t0) edge[bend right=20] (s) %
        (t3) edge (t) %
        (t3) edge (s) %
        (t2) edge (t') %
        (t1) edge (s') %
        ; %
        \path[->,draw] %
        (t1) edge[fore={.3}{.5}] (iota) %
        (t1) edge[fore={0.3}{.5}] (iota') %
        ; %
        \path[draw,->] 
        (iota) edge[fore={.4}{0}] (tau) %
        (iota') edge[fore={.4}{0}] (tau) %
        ; %
        \path[->] (t1) edge[fore={.6}{.3},bend right=20] (t') %
        ; %
        \path[->] (t1) edge[fore={.6}{.5},bend left=20] (t) ; %
        \foreach \x/\y in {s/iota,t/iota,s'/iota',t'/iota'} \path[->]
        (\x) edge (\y) ; %
      }{%
        \& \&  |(t)| \epsilon s \& \&   \& \&  |(t')| \rho s   \\
        \ \& \\
        |(t0)[anchor=base west]| \epsilon t s_1 \& \&  |(iota)| \epsilon \& \& |(tau)| {\scriptscriptstyle \id_{\tau_{n,a,m,c}}} \& \& |(iota')| \rho  \& |(t2)| \rho t s_2 \\
        \& |(t3)| \epsilon t s_3 \&  \& \& |(t1)| \epsilon t s_2 \\
        \& \& |(s)| \epsilon t \& \& \& \& |(s')| \rho t 
      }{%
      }%
      \hfil
      \diagramme[stringdiag={.8}{.8}]{}{%
        \path[-] %
        (s) edge (t1) %
        (t2) edge (s') %
        (t0) edge (t) %
        (t0) edge[bend right=20] (s) %
        (t3) edge (s) %
        edge (t) %
        (t2) edge (t') %
        (s') edge (t1) %
        ; %
        \moveccsin{t1}{iota'}{t2}{1} %
        \moveccsout{t0}{iota}{t1}{1} %
        \twocell[.4][.4]{iota}{t1}{iota'}{}{
          decorate,decoration={snake,amplitude=.3mm,segment length=1mm},bend left=40}
        \path[-] %
        ; %
        \path[-] (t1) edge[fore={.3}{.3}] (t') %
        ; %
        \path[-] (t) edge[fore={.1}{.5}] (t1) ; %
        \foreach \x/\y in {s/t,s'/t'} \path[-] (\x) edge (\y) ; %
        \node[anchor=north] at (t1.south) {$\scriptstyle \alpha$} ; %
        \node[anchor=south] at (t.north) {$\scriptstyle x'$} ; %
        \node[anchor=north] at (s.south) {$\scriptstyle x$} ; %
        \node[anchor=south] at (t'.north) {$\scriptstyle y'$} ; %
        \node[anchor=north] at (s'.south) {$\scriptstyle y$} ; %
      }{%
        \& \& \joueur{t} \& \& \& \& %
        \joueur{t'}   \\
        \ \\
        \canal{t0}\& \&  \coupout{iota}{0} \& \& \canal{t1}  \& \&  \coupin{iota'}{0} \& \canal{t2} \\
        \& \canal{t3} \\
        \& \& \joueur{s} \& \& \& \& \joueur{s'} 
      }{%
      }%
    \end{mathpar}
    \caption{Category of elements for $\tau_{2,1,3,2}$ and graphical representation}
\label{fig:tau}
\end{figure*} %
The left-hand ternary player $x$ outputs on its $2$nd channel, here
$\alpha$. The right-hand unary player $y$ receives on its $1$st
channel, again $\alpha$. Both players have two occurrences, one before
and one after the move, respectively marked as $x / x'$ and $y / y'$.
Both $x$ and $x'$ have arity $3$ here, and both $y$ and $y'$ have
arity $1$. There are actually three moves, in the sense that
there are three higher-dimensional objects in the corresponding
category of elements.  The first is the output move from $x$ to $x'$,
graphically represented as the left-hand %
\raisebox{.25em}{
\diagramme[ampersand replacement=\&,column sep=.5cm,inner sep=0.1pt]{inner sep=0pt}{
          \path[-] %
          (a) edge[-latex] (b) %
          ; %
        }{%
          \node[coordinate] (a){}; \& \node[coordinate] (b){}; %
        }{%
        }%
}
(intended to evoke the `ping' sent by $x$ entering channel $\alpha$).
The second move is the input move from $y$ to $y'$, graphically represented
as the right-hand
\raisebox{.25em}{
\diagramme[ampersand replacement=\&,column sep=.5cm,inner sep=1pt]{inner sep=0pt}{
          \path[-] %
          (b) edge[-latex] (c) %
          ; %
        }{%
          \node[coordinate] (b){}; \& \node[coordinate] (c){}; %
        }{%
        }%
      } %
      (intended to evoke a `ping' exiting channel $\alpha$).  The
      third and final move is the synchronisation itself, which
      `glues' the other two together, as represented by the squiggly
      line.

We leave the computation of other categories of elements as an
exercise to the reader. 
The remaining diagrams 
are depicted 
in the top row of Figure~\ref{fig:stringmoves}, for
$(n,a,m,c) = (2,1,3,2)$. 
\begin{figure*}[t]
  \centering
  \begin{tabular}{*{8}{c}}
  \diagramme[stringdiag={.2}{.33}]{}{ }{%
    \& \joueur{t_1} \& \& \& \\ 
    \& \&   \&  \\
    \& \ \& \\
    \canal{t0} \& \& \coupparacreux{para} \& \& \canal{t1}
    \\ 
    \& \ \& \\
    \& \&  \\
    \& \& \joueur{s} \& \&
  }{%
    (t0) edge[-] (t_1) %
    (t1) edge[-,bend right=20] (t_1) %
    (t0) edge[-] (s) %
    (t1) edge[-] (s) %
    (s) edge[-] (para) %
    (para) edge[-] (t_1) %
  }
  &
%
  \diagramme[stringdiag={.2}{.33}]{}{ }{%
    \& \& \& \joueur{t_2} \& \\ 
    \& \&   \\
    \& \ \& \\
    \canal{t0} \& \& \coupparacreux{para} \& \& \canal{t1}
    \\ 
    \& \ \& \\
    \& \&  \\
    \& \& \joueur{s} \& \&
  }{%
    (t0) edge[-,bend left=20] (t_2) %
    (t1) edge[-] (t_2) %
    (t0) edge[-] (s) %
    (t1) edge[-] (s) %
    (s) edge[-] (para) %
    (para) edge[-] (t_2) %
  }
  &
%
  \diagramme[stringdiag={.3}{.5}]{baseline=($(iota.south)$)}{%
    \path[-] %
    (t2) edge (s) %
    (t1) edge (t) %
    (t0) edge (t) %
    (t2) edge (t) %
    (t) edge (iota.west) %
    (s) edge (iota.west) %
    (t0) edge[bend right=20] (s) %
    (t1) edge (s) %
    ; %
    \moveccsout[1]{t1}{iota}{t2}{1} %
    \foreach \x/\y in {s/t} \path[-] (\x) edge (\y) ; %
  }{%
    \& \& \joueur{t}  \\
    \&  \\
    \canal{t0} \& \& \coupout{iota}{0}  \& \canal{t2} \\
    \& \canal{t1} \\
    \& \& \joueur{s} }{%
  }%
  &
  \diagramme[stringdiag={.6}{\longueurfigun}]{baseline=($(in.south)$)}{
    \path[-] (a) edge (p) %
    (in) edge (p) edge (p') %
    (p') edge (a) edge (b) %
    ; %
    \moveccsin[.5]{a}{in}{b}{1} %
    \foreach \x/\y in {p/p',a/a,p/b} \path[-] (\x) edge (\y) ; %
  }{ %
    \& \joueur{p'} \& \\ %
    \canal{a} \& \coupout{in}{0} \& \canal{b} \\ %
    \& \joueur{p} 
  }{%
  } %
  &
%
  \diagramme[stringdiag={.6}{\longueurfigun}]{}{ \path[-] (a) edge
    (a) %
    edge (p) %
    (tick) edge[shorten <=-1pt] (p) edge[shorten <=-1pt] (p') %
    (p') edge (a) edge (b) %
    (b) edge (p) edge (b) %
    ; %
  }{ %
    \& \joueur{p'} \& \\ %
    \canal{a} \& \couptick{tick} \& \canal{b} \\ %
    \& \joueur{p} \& %
  }{%
  } &
    %
  \diagramme[stringdiag={.3}{.5}]{baseline=($(nu.center)$)}{%
    \path[-,draw] %
    (t1) edge (s) %
    (t1) edge (t) %
    (t0) edge (t) %
    (t2) edge (t) %
    (t) edge (nu) %
    (s) edge (nu) %
    (nu) edge[gray,very thin] (t2) %
    (t0) edge[bend right=20] (s) %
    ; %
  }{%
      
    \& \& \joueur{t} \& \&  \& \\
    \&  \&    \\
    \canal{t0} \& \& \coupnu{nu} \& \& \canal{t2}  \\
    \& \canal{t1} \\
    \& \& \joueur{s} \& }{%
  }%
  \\
  \diag(.6,.2){%
    {[n]} \\ {\forkln} \\ {[n]} %
  }{%
    (m-1-1) edge (m-2-1) %
    (m-3-1) edge (m-2-1) %
  }%
  &
  \diag(.6,.2){%
    {[n]} \\ {\forkrn} \\ {[n]} %
  }{%
    (m-1-1) edge (m-2-1) %
    (m-3-1) edge (m-2-1) %
  }%
  &
  \diag(.6,.2){%
    {[m]} \\ {o_{m,c}} \\ {[m]} %
  }{%
    (m-1-1) edge (m-2-1) %
    (m-3-1) edge (m-2-1) %
  }%
  &
  \diag(.6,.2){%
    {[n]} \\ {\inna} \\ {[n]} %
  }{%
    (m-1-1) edge (m-2-1) %
    (m-3-1) edge (m-2-1) %
  }%
  &
  \diag(.6,.2){%
    {[n]} \\ {\tickn} \\ {[n]} %
  }{%
    (m-1-1) edge (m-2-1) %
    (m-3-1) edge (m-2-1) %
  }%
  &
  \diag(.6,.2){%
    {[n+1]} \\ {\nun} \\ {[n]} %
  }{%
    (m-1-1) edge (m-2-1) %
    (m-3-1) edge (m-2-1) %
  }%
  &
\end{tabular}
  \caption{String diagrams and corresponding cospans 
    for $\paraln$, $\pararn$, $o_{m,c}$, $\inna$, $\tickn$, and $\nun$}
\label{fig:stringmoves}
\end{figure*}
%
The first two are \emph{views}, in the game semantical sense, of the
fork move $\forkof{2}$ explained above. The next two, $o_{m,c}$ (for
`output') and $\inna$ (for `input'), respectively represent what the
sender and receiver can see of the above synchronisation move.  
The last two diagrams are a `tick' move, used for
defining fair testing equivalence, and a channel creation move.

\subsection{From diagrams to moves}\label{subsec:moves}
In the previous section, we have defined our category of diagrams as
$\Chatf$, and provided some graphical intuition on its objects.  The
next goal is to construct a bicategory whose objects are positions
(recall: presheaves empty except perhaps on $\star$ and $[n]$'s), and
whose morphisms represent plays in our game. We start in this section
by defining moves as cospans in $\Chatf$, and continue in the next one
by explaining how to compose moves to form plays. Moves are defined in
two stages: \emph{seeds}, first, give the local form for moves, which
are then defined by embedding seeds into bigger positions.

To start with, until now, our diagrams contain no information about
the `flow of time' (although it was mentioned informally for
pedagogical purposes). To add this information, for each diagram $M$
representing a move, we define its initial and final positions, say
$X$ and $Y$, and view the whole move as a cospan $Y \xto{s} M \xot{t}
X$. We have taken care, in drawing our diagrams before, of placing
initial positions at the bottom, and final positions at the top.  We
leave it to the reader to define, based on the above pictures, the
cospans
\begin{mathpar}
    \diag(.6,.2){%
      {[n] \para [n]} \\ {\forkn} \\ {[n]} %
    }{%
      (m-1-1) edge (m-2-1) %
      (m-3-1) edge (m-2-1) %
    }%
    \and
    \diag(.6,.2){%
      {[m] \paraofij{c}{a} [n] } \\ {\tau_{n,a,m,c}} \\ {
        [m] \paraofij{c}{a} [n] } %
    }{%
      (m-1-1) edge (m-2-1) %
      (m-3-1) edge (m-2-1) %
    }%
\end{mathpar}
for forking and synchronisation, plus the ones specified in the bottom
row of \figurename~\ref{fig:stringmoves}.  In these cospans, initial
positions are on the bottom row, and we denote by
$[m] \paraofij{c_1,\ldots,c_p}{a_1,\ldots,a_p} [n]$ the position
consisting of an $m$-ary player $x$ and an $n$-ary player $y$,
quotiented by the equations $x \cdot s_{c_k} = y \cdot s_{a_k}$ for
all $k \in p$. When both lists are empty, by convention, $m=n$ and the
players share all channels in order.
\begin{defi}
  These cospans are called \emph{seeds}. 
\end{defi} 
\begin{rem}
  Such cospans will be used below as the morphisms of a bicategory
  $\Dccsv$, using their lower object as their \emph{target}. Thus, we
  often denote the corresponding leg by $t$ and the other by $s$.  The
  reason for this convention is that it emphasises below that the
  fibration axiom~\axref{fibration} is very close to a universal
  property of pullback~\cite{Jacobs}.
\end{rem}
\begin{rem}
  Both legs of each seed are monic, as will be below both legs of each
  move, and then of each play (because monics are stable under pushout
  in presheaf categories).
\end{rem}

As announced, the moves of our game are obtained by embedding seeds
into bigger positions. This means, e.g., allowing a fork move to occur
in a position with more than one player. We proceed as follows.
\begin{defi}\label{def:interface}
  Let the \emph{interface} of a seed $Y \xto{s} M \xot{t} X$ be $I_X =
  X(\star) \cdot \star$, i.e., the position consisting only of the
  channels of the initial position of the seed.  More generally, an
  \emph{interface} is a position consisting only of channels.
\end{defi}

\begin{wrapfigure}[3]{r}{0pt}
  \begin{minipage}[c][3em]{0.22\linewidth}
  \vspace*{-2em}
        \diag(.3,.6){%
    \&|(I)| I_X \\
   |(X)| Y \&|(M)| M \&|(Y)| X %
  }{%
    (I) edge (X) edge (M) edge (Y) %
    (X) edge (M) %
    (Y) edge (M) %
  }
  \end{minipage}
\end{wrapfigure}
Since channels present in the initial position remain in the final
one, we have for each seed a commuting diagram as on the right.  By gluing any
position $Z$ to the seed along its interface, we obtain a new cospan,
say $Y' \to M' \ot X'$.  I.e., for any injective morphism $I_X \to Z$, we push
$I_X \to X$, $I_X \to M$, and $I_X \to Y$ along $I_X \to Z$ and use
the universal property of pushout, as in:
  \begin{equation}
    \Diag(.02,.6){%
      \pbk[1.1em]{X}{X'}{Z} %
      \pullback[1.1em]{Z}{M'}{M}{draw,-} %
      \pullback[1.1em]{Z}{Y'}{Y}{draw,-} %
    }{%
      \& |(Y)| Y \& \& |(Y')| {Y'} \\
      \& \ \& \\
      \& |(M)| M \& \& |(M')| M'  \\
      |(I)| I_X \&\& |(Z)| Z \\
      \& |(X)| X \& \& |(X')| X'. %
    }{%
      (Z) edge[] (X') %
      edge (M') %
      edge (Y') %
      (I) edge[] (X) %
      edge (Z) %
      edge (M) %
      edge (Y) %
      (Y) edge[fore] (Y') %
      (M) edge[fore] (M') %
      (X) edge (X') %
      (X') edge[dashed] (M') %
      (Y') edge[dashed] (M') %
      (X) edge[fore] (M) %
      (Y) edge[fore] (M) %
    }
    \label{eq:extend}
  \end{equation}
  \begin{defi}
    Let \emph{moves} be all cospans obtained in this way.
  \end{defi}
  Recall that colimits in presheaf categories are pointwise. So, e.g.,
  taking pushouts along injective maps graphically corresponds to
  gluing diagrams together. 
  \begin{exa}\label{ex:forkmove}
    The cospan $[2]\para[2] \xto{[ls,rs]} \forkof{2} \xot{lt} [2]$ has
    as canonical interface the presheaf $I_{[2]} = 2 \cdot \star$,
    consisting of two channels, say $a$ and $b$.  Consider the
    position $[2] + \star$ consisting of a player $y$ with two
    channels $b'$ and $c$, plus an additional channel $a'$. Further
    consider the map $h \colon I_{[2]} \to [2]+\star$ defined by $a
    \mapsto a'$ and $b \mapsto b'$. The pushout
    \begin{mathpar}
      \Diag{%
        \pbk{pi}{M'}{star} %
        }{%
         |(I2)| {I_{[2]}} \& |(star)| {[2]+\star} \\
         |(pi)| {\forkof{2}} \&|(M')| {M'} %
        }{%
          (I2) edge (star) edge (pi) %
          (pi) edge (M') %
          (star) edge (M') %
        } %
        \and \mbox{is} \and
                \diagramme[stringdiag={.3}{.6}]{}{
    \node[diagnode,at= (c.south east)] {\ \ \ .} ; %
        \node[anchor=south] at (t_1.north) {$\scriptstyle x_1$} ; %
        \node[anchor=south] at (t_2.north) {$\scriptstyle x_2$} ; %
        \node[anchor=north] at (s.south) {$\scriptstyle x$} ; %
        \node[anchor=north] at (y.south) {$\scriptstyle y$} ; %
        \node[anchor=north] at (c.south) {$\scriptstyle c$} ; %
        \node[anchor=north] at (t0.south) {$\scriptstyle a=a'$} ; %
        \node[anchor=north] at (t1.south) {$\scriptstyle b=b'$} ; %
  }{%
     \& \& \joueur{t_1} \&  \& \joueur{t_2} \\
    \& \&   \&  \\
    \& \ \& \\
    \canal{t0} \& \& \& \couppara{para} \& \& \& \canal{t1} \& \& \joueur{y} \& \& \canal{c} \\ 
    \& \ \& \\
    \& \&  \\
    \& \& \& \joueur{s} \& \& \& \& 
  }{%
    (para) edge[-] (t_2) %
    (t1) edge[-,bend right=10] (t_2) %
    (t0) 
    edge[-] (s) %
    (t1) 
    edge[-] (s) %
    (s) edge[-] (para) %
    (y) edge[-] (t1) %
     edge[-] (c) %
    (para) edge[fore={.3}{.3},-] (t_1)
    (t0) edge[fore={.5}{.5},-,bend left=10] (t_2) %
    (t0) edge[-,bend left=15] (t_1) %
    (t1) edge[-,fore={1}{.5},bend right=10] (t_1) %
  }  
    \end{mathpar}
  \end{exa}
We conclude with a useful classification of moves.
\begin{defi}
  A move is \emph{full} iff it is neither a left nor a right fork.  A
  seed is \emph{basic} iff it is neither a full fork nor a
  synchronisation.  We call $\F$ the identity-on-objects subgraph of
  $\Cospan{\Chatf}$ spanning full moves.
\end{defi}
Intuitively, a move is full when its final position contains all
possible avatars of involved players.

\subsection{From moves to plays}\label{subsec:plays}
\begin{wrapfigure}[5]{r}{0pt}
  \begin{minipage}[c]{0.25\linewidth}
    \vspace*{-1.5em}
    \diag(.3,1){%
      \&|(U)| U \\
      |(X)| X \& \&|(Y)| Y \\
      \&|(V)| V %
    }{%
      (X) edge (U) edge (V) %
      (Y) edge (U) edge (V) %
      (U) edge (V) %
    }
  \end{minipage}
\end{wrapfigure}
Having defined moves, we now define their composition to construct our
bicategory $\Dccsv$ of positions and plays.  $\Dccsv$ will be a
sub-bicategory of $\Cospan{\Chatf}$, the bicategory which has as
objects all finite presheaves on $\C$, as morphisms $X \to Y$ all
cospans $X \to U \ot Y$, and as 2-cells $U \to V$ all commuting
diagrams as on the right.  Composition is given by pushout, and hence
not strictly associative. 


\begin{defi}
  Let $\Dccsv$ denote the locally full subbicategory of
  $\Cospan{\Chatf}$ with positions as objects, whose morphisms,
  \emph{plays}, are either equivalences or isomorphic to some
  composite of moves. 
\end{defi}
We denote morphisms in $\Cospan{\Chatf}$ with special arrows $X \proto
Y$; composition and identities are denoted with $\vrond$ and $\idv$
(recalling the notation for vertical morphisms in a pseudo double
category in Section~\ref{subsec:prelim:cats}).

Again, composition by pushout glues diagrams on top of each other.
\begin{exa}\label{ex:forkplay}
Composition features some concurrency.
  Composing the move of Example~\ref{ex:forkmove} with a forking
  move by $y$ yields
    \begin{center}
                \diagramme[stringdiag={.3}{.6}]{}{
    \node[diagnode,at= (c.south east)] {\ \ \ .} ; %
        \node[anchor=south] at (t_1.north) {$\scriptstyle x_1$} ; %
        \node[anchor=south] at (t_2.north) {$\scriptstyle x_2$} ; %
        \node[anchor=south] at (y_1.north) {$\scriptstyle y_1$} ; %
        \node[anchor=south] at (y_2.north) {$\scriptstyle y_2$} ; %
        \node[anchor=north] at (s.south) {$\scriptstyle x$} ; %
        \node[anchor=north] at (y.south) {$\scriptstyle y$} ; %
        \node[anchor=north] at (c.south) {$\scriptstyle c$} ; %
        \node[anchor=north] at (t0.south) {$\scriptstyle a=a'$} ; %
        \node[anchor=north] at (t1.south) {$\scriptstyle b=b'$} ; %
  }{%
     \& \& \joueur{t_1} \&  \& \joueur{t_2} \& \& \& \& \joueur{y_1} \& \& \joueur{y_2} \\
    \& \&   \&  \\
    \& \ \& \\
    \canal{t0} \& \& \& \couppara{para} \& \& \& \canal{t1} \& \&  \& \couppara{para'} \&  \& \& \canal{c} \\
    \& \ \& \\
    \& \&  \\
    \& \& \& \joueur{s} \& \& \& \& \& \& \joueur{y} 
  }{%
    (t1) edge[-,bend right=10] (t_2) %
    (t0) 
    edge[-] (s) %
    (t1) 
    edge[-] (s) %
    (para) edge[-] (t_2) %
    (s) edge[-] (para) %
    (y) edge[-] (t1) %
     edge[-] (c) %
    (para) edge[fore={.3}{.3},-] (t_1)
    (t0) edge[fore={.5}{.5},-,bend left=10] (t_2) %
    (t0) edge[-,bend left=15] (t_1) %
    (t1) edge[-,fore={1}{.5},bend right=10] (t_1) %
    (c) edge[-,bend right=10] (y_2) %
    (para') edge[-] (y_2) %
    (y) edge[-] (para') %
    (y) edge[-] (t1) %
     edge[-] (c) %
    (para') edge[fore={.3}{.3},-] (y_1) %
    (t1) edge[fore={.5}{.5},-,bend left=10] (y_2) %
    (t1) edge[-,bend left=15] (y_1) %
    (c) edge[-,fore={1}{.5},bend right=10] (y_1) %
  }  
    \end{center}
\end{exa}
\begin{exa}
  Composition retains causal dependencies between moves. To see this,
  consider the following diagram.  In the initial position, there are
  channels $a$ and $b$, plus three players $x(b), y(a,b)$, and
  $z(a)$ (we indicate the channels known to each player in
  parentheses). In a first move, $x$ outputs on $b$, while $y$ inputs. In
  a second move, $z$ outputs on $a$, while (the avatar $y'$ of) $y$
  inputs. The fact that $y$ first inputs on $b$ then on $a$ is encoded in 
  the corresponding diagram, which looks like the
  following:
\begin{center}
      \diagramme[stringdiag={.8}{1.3}]{}{%
        \moveccsin{b}{i}{a'}{1} %
        \moveccsout{a}{o}{b}{1} %
        \moveccsout{b}{o'}{a'}{1} %
        \moveccsin[fore={.1}{.1}]{a'}{i'}{y''}{1} %
       \node[diagnode,at= (z.south east)] {\ \ \ \ .} ; %
        \node[below=1pt] at (b.south) {$\scriptstyle b$} ; %
        \node[below=1pt] at (a'.south) {$\scriptstyle a$} ; %
        \node[below=1pt] at (x.south) {$\scriptstyle x$} ; %
        \node[below=1pt] at (y.south) {$\scriptstyle y$} ; %
        \node[below=1pt] at (z.south) {$\scriptstyle z$} ; %
        \node[above left] at (y') {$\scriptstyle y'$} ; %
        \twocell{o}{b}{i}{}{
          decorate,decoration={snake,amplitude=.3mm,segment length=1mm},bend left=50}
        \twocell[.4][.25]{o'}{a'}{i'}{}{
          decorate,decoration={snake,amplitude=.3mm,segment length=1mm},bend right=30}
      }{%
         \& \& \joueur{y''} \\
         \& \& \coupin{i'}{0} \& \\ 
         \joueur{x'} \& \& \joueur{y'} \& \& \joueur{z'} \\
         \coupout{o}{0} \& \canal{b} \& \coupin{i}{0} \& \canal{a'} \& \coupout{o'}{0} \\
         \joueur{x} \& \& \joueur{y} \& \& \joueur{z} %
  }{%
    (b) edge[-] (x) %
    edge[-] (x') %
    edge[-] (y) %
    edge[-] (y') %
    edge[-] (y'') %
    (a') edge[-] (y') %
    edge[-] (y) %
    edge[-,bend right=10] (y'') %
    edge[-] (z) %
    edge[-] (z') %
    (i) edge[-] (y) %
    edge[-] (y') %
    (i') edge[-] (y') %
    edge[-] (y'') %
    (o) edge[-] (x) %
    edge[-] (x') %
    (o') edge[-] (z) %
    edge[-] (z') %
  }
\end{center}  
\end{exa}

\subsection{Behaviours and strategies}\label{subsec:strats:old}
\subsubsection{Behaviours}\label{subsubsec:behaviours}
Recall from \citetalias{2011arXiv1109.4356H} the category $\E$ 
\begin{center}
  \begin{tabular}[t]{p{.77\textwidth}p{.23\textwidth}}
    \vspace*{-2.3em}\begin{itemize}
  \item whose objects are maps $U \ot X$ in $\Chatf$, such that there
    exists a play $Y \to U \ot X$, i.e., objects are plays, where we
    forget the final position;
  \item and whose morphisms $(U \ot X) \to (U' \ot X')$ are commuting
    diagrams as on the right with all arrows monic.
  \end{itemize}
  &
    \diag|baseline= (m-1-1.base) |{%
      U \& U' \\
      X \& X' %
    }{%
      (m-1-1) edge[labelu={}] (m-1-2) %
      (m-2-1) edge[labell={}] (m-1-1) %
      (m-2-1) edge[labeld={}] (m-2-2) %
      (m-2-2) edge[labelr={}] (m-1-2) %
    }
  \end{tabular}
\end{center}
Morphisms $(U \ot X) \to (U' \ot X')$ in $\E$ represent extensions of $U$, both
spatially (i.e., embedding into a larger position) and dynamically
(i.e., adding more moves).

We may relativise this category $\E$ to a particular position $X$,
yielding a category $\E(X)$ of plays on $X$ as follows. Consider the
functor $\cod \colon \E \to \Dccsh$ mapping any play $U \ot X$ to its
initial position $X$, and consider the pullback category $\E(X)$ as
defined in Section~\ref{subsec:prelim:cats}. The objects of $\E(X)$
are just plays $(U \ot X)$ on $X$, and morphisms are morphisms of
plays whose lower border is $\id_X$.  This yields the definition of a
category of `naive' strategies, called behaviours.
\begin{defi}\label{def:behccs}
  The category $\Beh{X}$ of \emph{behaviours} on $X$ is the category
  $\FPsh{\E(X)}$ of presheaves of finite sets on $\E (X)$.
\end{defi}\enlargethispage{2\baselineskip}
Behaviours suffer from the deficiency of allowing unwanted cooperation
between players. 
\begin{exa}\label{ex:noninnocent}
  Consider a position $X$ with three players $x,y,z$ sharing a
  channel $a$, and the following plays on it: in $\trasse_{x,y}$, $x$
  outputs on $a$, and $y$ inputs; in $\trasse_{x,z}$, $x$ outputs
  on $a$, and $z$ inputs; in $i_z$, $z$ inputs on $a$. One may
  define a \stratglobale $S$ mapping $\trasse_{x,y}$ and $i_z$ to a
  singleton, and $\trasse_{x,z}$ to $\emptyset$. Because $\trasse_{x,y}$ is
  accepted, $x$ accepts to output on $a$; and because $i_z$ is
  accepted, $z$ accepts to input on $a$. The problem is that $S$
  rejecting $\trasse_{x,z}$ roughly amounts to $x$ refusing to synchronise
  with $z$, or conversely.
\end{exa}

\subsubsection{Strategies}\label{subsubsec:strategies}
To rectify this, we consider the following notion of view:
\begin{defi}
  Let $\EVi$ denote the full subcategory of $\E$ consisting of \emph{views}, i.e.,
  composites of basic seeds.
\end{defi}
We relativise views to a position $X$ by considering the comma
category $\EVi_X$ as defined in Section~\ref{subsec:prelim:cats}.  Its
objects are pairs of a view $V \ot [n]$ on a single $n$-ary player,
and an embedding $[n] \into X$, i.e., a player of $X$.

    \begin{defi}
     The category $\SS_X$ of \emph{strategies} on $X$ is the category
     $\OPsh{\EVi_X}$ of presheaves of finite ordinals on $\EVi_X$.
    \end{defi}
   \begin{rem}
     We could here replace finite ordinals with a wider category and
     still get a valid semantics. But then to show the correspondence
     with the syntax below we would work with the subcategory of
     presheaves of finite ordinals.
   \end{rem}
    
   This definition of strategies rules out undesired behaviours. We
   now sketch how to map strategies to behaviours (this is done in
   more detail for arbitrary playgrounds below): let first $\E_X$ be
   the category obtained by taking a comma category instead of a
   pullback in the definition of $\E (X)$.  Then, embedding
   $\OPsh{\EVi_X}$ into $\FPsh{\EVi_X}$ via $\ford \into \set$,
   followed by right Kan extension to $\op{\E_X}$ followed by
   restriction to $\op{\E(X)}$ yields a functor $\extafun{X} \colon
   \SS_X \to \Beh{X}$. The image of a strategy $S$ may be computed as in
    \begin{center}
      \diag{%
        \op{(\EVi_X)} \& \op{\E_X} \& \op{\E(X)} \\
        \ford \& \set, %
      }{%
        (m-1-1) edge[labell={S}] (m-2-1) %
        edge[into] (m-1-2) %
        (m-2-1) edge[into] (m-2-2) %
        (m-1-2) edge[labell={S'}] (m-2-2) %
        (m-1-3) edge[linto] (m-1-2) %
        edge[labelbr={\exta{X}{S}}] (m-2-2)
      }
    \end{center}
    where $S'$ is here obtained by right Kan extension (the embedding
    $\op{(\EVi_X)} \into \op{\E_X}$ being full and faithful, we may
    choose the diagram to strictly commute).
    By the standard formula for right Kan extensions as
    ends~\citep{MacLane:cwm} we have, for any $S \colon \op{(\EVi_X)} \to \ford$: 
    $$\exta{X}{S} (U) =
    \int_{v \in \EVi_X} S (v)^{\E_X (v,U)}.$$ If $S$ is boolean, i.e.,
    takes values in $\ens{\emptyset,1}$, then the involved end may be
    viewed as a conjunction, saying that $U$ is accepted by $\exta{X}{S}$
    whenever all its views are accepted by $S$.  Equivalently,
    $\exta{X}{S} (U)$ is a limit of $\op{(\EVi_X / U)} \xto{\dom}
    \op{(\EVi_X)} \xto{S} \ford \into \set.$


\subsubsection{Decomposition: a syntax for strategies}\label{subsubsec:syntax}
Our definition of strategies is rather semantic in flavour. Indeed,
presheaves are akin to domain theory. However, they also lend
themselves well to a syntactic description (unlike behaviours). Again,
this is treated at length in the abstract setting below, so we here
only sketch the construction.

First, it is shown in \citetalias{2011arXiv1109.4356H} that strategies on an arbitrary position $X$
are in 1-1 correspondence with families of strategies indexed by the
players of $X$. Recall that $[n]$ is the position consisting of one
$n$-ary player. A player of $X$ is the same as a morphism $[n] \to X$
(for some $n$) in $\Dccsh$. Thus, we define the set $\Pl (X) = \sum_{n
  \in \Nat} \Dccsh ([n],X)$ of players of $X$.
\begin{prop}
We have  $\SSX \iso \prod_{(n,x) \in \Pl (X)} \SSn$. 
For any $S \in \SSX$, we denote by $S \cdot x$ the component corresponding
to $x \in \Pl (X)$ under this isomorphism.
\end{prop}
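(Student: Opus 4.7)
The plan is to reduce the statement to a decomposition of the indexing category $\EVi_X$ as a coproduct indexed by the players of $X$, and then invoke the fact that presheaves transform coproducts of categories into products.

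First I would unfold the comma category $\EVi_X$. An object is a pair $(V, f)$ where $V$ is a view, i.e., a composite of basic seeds in $\EVi$, and $f \colon \cod(V) \to X$ lives in $\Dccsh$. Since every basic seed has codomain some $[n]$ and composition by pushout preserves this, the codomain of any view is $[n]$ for a unique $n$, so $f$ is exactly a player $(n, x) \in \Pl(X)$. Thus objects of $\EVi_X$ are naturally indexed as triples $(n, x, V)$ with $V$ a view on $[n]$.

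Second I would show that the indexing by $(n, x)$ is preserved by morphisms. A morphism $(n, x, V) \to (m, y, W)$ is a morphism of cospans in $\EVi$, which in particular induces a morphism $\phi_\star \colon [n] \to [m]$ in $\Chatf$ such that $y \rond \phi_\star = x$. By Yoneda, $\Chatf([n],[m]) \iso \C([n], [m])$, and inspection of the generators of $\C$ in Figure~\ref{fig:equationsC} shows that the only morphisms touching the objects $[k]$ are either the $s_i \colon \star \to [k]$ or morphisms going \emph{out} of $[k]$ to a move vertex; no composite path connects distinct $[n]$ and $[m]$. Hence $\C([n],[m])$ is empty when $n \neq m$ and reduces to $\{\id_{[n]}\}$ otherwise. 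This forces $n = m$, $\phi_\star = \id_{[n]}$, and $y = x$. Therefore $\EVi_X$ has no morphisms between objects attached to different players, so it decomposes as
\[
\EVi_X \;\iso\; \sum_{(n,x) \in \Pl(X)} \EVi_n,
\]
where $\EVi_n$ is the fiber of $\cod$ over $[n]$, which is precisely $\EVi_{[n]}$ because the unique player of $[n]$ corresponds to $\id_{[n]}$.

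Finally, since coproducts of small categories are sent by the presheaf construction to products (a mapping into $\ford$ out of a disjoint union of categories is the same as a family of such mappings), we obtain
\[
\SS_X \;=\; \OPsh{\EVi_X} \;\iso\; \OPsh{\textstyle\sum_{(n,x) \in \Pl(X)} \EVi_n} \;\iso\; \prod_{(n,x) \in \Pl(X)} \OPsh{\EVi_n} \;=\; \prod_{(n,x) \in \Pl(X)} \SSn,
\]
and under this chain of isomorphisms the component at $x$ of a strategy $S$ is exactly its restriction to views anchored at $x$, justifying the notation $S \cdot x$. The only slightly delicate point, which I expect to be the main obstacle, is the verification that $\C$ admits no non-identity morphisms between distinct $[n]$'s; this is where the rigidity of our base category is crucial and must be read off the generators and relations in Figure~\ref{fig:equationsC}. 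Everything else is formal bookkeeping on comma categories and the product--coproduct adjunction for presheaves.
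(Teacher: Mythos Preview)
Your proof is correct and follows essentially the same route as the paper: decompose $\EVi_X$ as the coproduct $\sum_{(n,x)\in\Pl(X)}\EVi_{[n]}$ and then use that $\OPsh{-}$ sends coproducts of categories to products. The only cosmetic difference is that the paper packages the key rigidity fact (no non-identity morphisms between distinct $[n]$'s) as the abstract playground axiom that $\DI$ is discrete (Proposition~\ref{prop:decompV}), whereas you verify it directly by inspecting the generators of $\C$; both amount to the same observation.
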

So, strategies on arbitrary positions may be entirely described by
strategies on `typical' players $[n]$. As an important particular
case, we may let two strategies interact along an interface (recall
from Definition~\ref{def:interface} that this means a position
consisting only of channels), which will be the basis of our semantic
definition of fair testing equivalence. We proceed as follows.
Consider any pushout $Z$ of $X \ot I \to Y$ where $I$ is an
interface. We have
\begin{cor}
  $\SS_Z \iso \SSX \times \SSY$.
\end{cor}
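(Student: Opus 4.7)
The plan is to deduce the corollary from the preceding proposition by analysing how players behave under the pushout. By the proposition,
$$\SS_Z \;\iso\; \prod_{(n,z) \in \Pl(Z)} \SS_{[n]}, \qquad \SS_X \;\iso\; \prod_{(n,x) \in \Pl(X)} \SS_{[n]}, \qquad \SS_Y \;\iso\; \prod_{(n,y) \in \Pl(Y)} \SS_{[n]},$$
so it suffices to exhibit a bijection $\Pl(Z) \iso \Pl(X) + \Pl(Y)$ compatible with the arities.

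The main step, and the only genuine content, is to show this bijection. Recall that a player of a position $W$ is a morphism $[n] \to W$ in $\Dccsh$, i.e., an injection from the representable $[n]$. The key observation is that $I$ is an \emph{interface}, so it consists only of channels; in particular $I([n]) = \emptyset$ for all $n$. Since pushouts in $\Chatf$ are computed pointwise and monomorphisms are stable under pushout along monomorphisms in a presheaf topos, we get, for each $n$, a pushout of sets
$$\emptyset = I([n]) \to X([n]), \quad \emptyset \to Y([n]), \quad \text{yielding} \quad Z([n]) \iso X([n]) + Y([n]).$$
By the Yoneda lemma, $\Dccsh([n], W)$ (restricted to injections) is in natural bijection with the injections into $W([n])$, so the above coproduct decomposition of $Z([n])$ translates into $\Pl(Z) \iso \Pl(X) + \Pl(Y)$, with arities preserved on each side.

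Combining, we obtain
$$\SS_Z \;\iso\; \prod_{(n,z) \in \Pl(Z)} \SS_{[n]} \;\iso\; \prod_{(n,x) \in \Pl(X)} \SS_{[n]} \;\times\; \prod_{(n,y) \in \Pl(Y)} \SS_{[n]} \;\iso\; \SS_X \times \SS_Y,$$
as required. The only subtle point is the computation of the pushout on representables; once one notes that $I$ has no non-channel components, everything else is formal.
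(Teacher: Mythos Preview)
Your argument is correct and is essentially the same as the paper's, just unpacked one level further. The paper proves the corollary in one line by observing $\EVi_Z \iso \EVi_X + \EVi_Y$ and invoking the universal property of coproduct; your route via the preceding proposition and $\Pl(Z) \iso \Pl(X) + \Pl(Y)$ is exactly what underlies that isomorphism of view categories (since $\EVi_W \iso \sum_{(n,x)\in\Pl(W)} \EVi([n])$). One small wording issue: your sentence ``$\Dccsh([n],W)$ (restricted to injections) is in natural bijection with the injections into $W([n])$'' is garbled---morphisms in $\Dccsh$ are already monic, and what you need is that Yoneda identifies $\Chatf([n],W)$ with $W([n])$, together with the fact that $X\into Z$ and $Y\into Z$ are monic so the monicity condition on $[n]\to W$ is preserved and reflected along them.
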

\begin{proof}
  We have $\EVi_Z \iso \EVi_X + \EVi_Y$, and conclude by universal
  property of coproduct.
\end{proof}
We denote by $[S,T]$ the image of $(S,T) \in \SSX \times \SSY$ under
this isomorphism.

Having shown how strategies may be decomposed into strategies on
`typical' players $[n]$, we now explain that strategies on such
players may be further decomposed. First, we observe that $\EVi_{[n]}$
is isomorphic to the full subcategory $\EVi([n])$ of $\E([n])$
spanning views.  For any strategy $S$ on $[n]$ and seed $b \colon [n']
\proto [n]$, let the \emph{residual} $S \cdot b$ of $S$ after $b$ be
the strategy playing like $S$ after $b$, i.e., for all $v \in
\EVi_{[n']}$, $(S \cdot b)(v) = S (b \vrond v)$. $S$ is almost
determined by its residuals. The only information missing from the
$S\cdot b$'s to reconstruct $S$ is the set of initial states and how
they relate to the initial states of each $(S \cdot b)$. This may be
taken into account as follows.

\begin{defi}\label{def:restriction}
  For any $S \in \SS_{[n]}$ and initial state $\state \in S(\idv)$, let $\restr{S}{\state}$,
   the \emph{restriction} of $S$ to $\state$, be determined by
$$\restr{S}{\state} (v) = \ens{\state' \in S(v) \aalt S(!_v)(\state') = \state},$$
where $!_v$ denotes the unique morphism $!  \colon \idv \to v$.
\end{defi}
$S$ is determined by its set $S(\idv)$ of initial
states, plus the map $(\state, b) \mapsto (\restr{S}{\state}
\cdot b)$ sending any $\state \in S(\idv)$ and isomorphism class $b$
of seeds to $\restr{S}{\state} \cdot b$. In other words, we have
for all $n$:
\begin{thm}
   $\SS_{[n]} \iso (\prod_{n' \in \Nat, b \colon [n'] \proto [n]} \SS_{[n']})^\star$.
\end{thm}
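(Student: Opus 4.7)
The plan is to establish the isomorphism by exhibiting mutually inverse maps, with the construction hinging on a unique factorization property for views. Write $X = \prod_{n' \in \Nat,\, b \colon [n'] \proto [n]} \SS_{[n']}$ for the indexed product, with $b$ ranging over isomorphism classes of basic seeds. The forward map $\Phi \colon \SS_{[n]} \to X^\star$ sends a strategy $S$ to the pair $(S(\idv),\, \state \mapsto (b \mapsto \restr{S}{\state} \cdot b))$, using restriction and residual as defined just above the theorem; sums of finite ordinals in $\ford$ are taken in the obvious ordered way.

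The first, and main, step is a unique factorization lemma: every non-identity view $v$ in $\EVi_{[n]}$ admits an essentially unique decomposition $v \iso b \vrond v'$ with $b \colon [n'] \proto [n]$ a basic seed and $v' \colon [n''] \proto [n']$ a view. Existence is immediate from the definition of views as composites of basic seeds. Uniqueness of the first seed $b$ rests on a combinatorial analysis inside $\Chatf$: $b$ is identified as the basic move directly attached to the sole player of $[n]$, and this assignment is respected by all morphisms of $\EVi_{[n]}$ (since such morphisms extend views rather than modifying moves already present). This shows moreover that any morphism $v \to w$ in $\EVi_{[n]}$ lies above a common first seed $b$ and induces a morphism $v' \to w'$ in $\EVi_{[n']}$.

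Using this lemma, the inverse $\Psi \colon X^\star \to \SS_{[n]}$ is defined on objects by $\Psi(k, f)(\idv) = k$ and, for a non-identity view with factorization $v \iso b \vrond v'$, $\Psi(k, f)(v) = \sum_{\state \in k} f(\state)(b)(v')$; on morphisms $v \to w$ it acts via the induced morphism $v' \to w'$ componentwise inside each summand. Functoriality of each $f(\state)(b) \in \SS_{[n']}$ together with compatibility of the factorization with vertical composition yields functoriality of $\Psi(k, f)$.

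Finally, that $\Phi$ and $\Psi$ are mutually inverse follows from two observations beyond the factorization lemma: first, that $S(v) = \sum_{\state \in S(\idv)} \restr{S}{\state}(v)$ since the fibers of $S(!_v)$ partition $S(v)$, which is exactly what Definition~\ref{def:restriction} records; and second, that $\restr{S}{\state}(b \vrond v') = (\restr{S}{\state} \cdot b)(v')$ by the defining formula of the residual. Plugging these into the composites gives $\Phi \rond \Psi = \id$ and $\Psi \rond \Phi = \id$ by direct calculation. The hard part throughout is the unique factorization lemma, which depends on the concrete combinatorial structure of string diagrams in $\Chatf$; once it is in hand, everything else is pleasant bookkeeping.
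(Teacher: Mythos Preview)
Your proposal is correct and follows essentially the same approach as the paper. The paper states this theorem in Section~\ref{subsubsec:syntax} with only a one-sentence description of the inverse map (your $\Psi$), and gives the detailed argument later in the abstract playground setting (Theorem~\ref{thm:stratcoalg} and the surrounding material), where the inverse $\deriv'$ is constructed exactly as your $\Psi$ and shown to invert $\deriv$ (your $\Phi$) by the same direct calculation.

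The one point worth noting is your treatment of the factorization lemma. You argue for unique decomposition $v \iso b \vrond v'$ via combinatorial analysis in $\Chatf$, which is fine for the CCS-specific statement here. The paper instead isolates this as a consequence of the playground axioms: Lemma~\ref{lem:preorder} and Corollary~\ref{cor:Vpreordergroupoid} establish that $\DVi$ is an equivalence relation compatible with length, from which the essentially unique first-seed decomposition and the induced morphism $v' \to w'$ follow without any appeal to the concrete presheaf structure. Your direct combinatorial route is shorter for CCS alone; the paper's axiomatic route is what makes the same argument go through uniformly for any playground.
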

Given an element $(D_1,\ldots,D_m)$ of the right-hand side, the
corresponding strategy maps the identity view $\idv$ to $m$, and any
non-identity view $b \vrond v$ on $[n]$ to the sum $\sum_{i \in m}
D_i(b)(v)$.

A closely related result is that strategies on a player $[n]$ are in
bijection with infinite terms in the following typed grammar, with
judgements $n \vdashdefinite D$ and $n \vdash S$, where $D$ is called a
\emph{definite strategy} and $S$ is a \emph{strategy}:
\begin{mathpar}
\inferrule{\ldots \ n_b \vdash S_b \ \ldots \ {(\forall b \colon [n_b] \proto [n] \in \MMMB_n)}
}{
n \vdashdefinite \langle (S_b)_{b \in \MMMB_n} \rangle
}
\and
\inferrule{\ldots \  n \vdashdefinite D_i \  \ldots \ (\forall i \in m)}{
n \vdash \oplus_{i \in m} D_i}~(m \in \Nat).
\end{mathpar}
Here, $\MMMB_n$ denotes the set of all isomorphism classes of seeds
from $[n]$.  This achieves the promised syntactic description of
strategies. We may readily define the translation of CCS processes,
coinductively, as follows. For processes with channels in $\Gam$, we
define
\begin{equation}
\begin{array}[t]{rcl}
  \Transl{\sum_{i \in n} \alpha_i.P_i} & = & \langle b \mapsto
      \oplus_{\ens{i \in n \aalt b = \Transl{\alpha_i}}} \Transl{P_i}
      \rangle \\
      \Transl{\nu a.P} & = & 
      \langle
            \nugam  \mapsto  \Transl{P},
            \_  \mapsto  \emptyset 
            \rangle \\
      \Transl{P\para Q} & = & 
 \langle
            \paralgam  \mapsto  \Transl{P}  ,
            \parargam  \mapsto  \Transl{Q}  ,
            \_  \mapsto  \emptyset 

 \rangle 
\end{array}
\hspace*{2cm} \begin{array}[t]{rcl}
    \Transl{a} & = & \iotaneg{\Gam,a} \\
    \Transl{\abar} & = & \iotapos{\Gam,a} \\
    \Transl{\tick} & = & \tickgam.
\end{array}
\label{eq:traduc}
\end{equation}
For example, $a.P + a.Q + \bar{b}.R$ is mapped to
$$\langle \iotaneg{\Gam,a} \mapsto (\Transl{P} \oplus \Transl{Q}),
\iotapos{\Gam,b} \mapsto \Transl{R}, \_ \mapsto \emptyset \rangle.$$

\subsection{Semantic fair testing}\label{subsec:fair}
The tools developed in the previous section yield the following
semantic analogue of fair testing equivalence.
\begin{defi}\label{def:cw:successful}
  \emph{Closed-world} moves are those generated by some seed among
  $\nun$,$\tickn$,$\paran$, and $\taunimj$. A play is
  \emph{closed-world} when it is a composite of closed-world moves.
  Let a closed-world play be \emph{successful} when it contains a
  $\tick$ move, and \emph{unsuccessful} otherwise. A state $\state \in
  B(U)$ of a behaviour $B \in \Beh{Z}$ over a closed-world play
  $U \ot Z$ is successful when the play $U$ is, and unsuccessful
  otherwise.
\end{defi}

Let then $\bbot_Z$ denote the set of behaviours $B \in
\Beh{Z}$ such that any unsuccessful, closed-world state admits a
successful extension.  Formally:
\begin{defi}\label{def:bbotccs}
  Let $B \in \bbot_Z$ iff, for any unsuccessful, closed-world play $U
  \ot Z$ and $\state \in B (U)$, there exists a successful,
  closed-world $U'$, a morphism $f \colon U \to U'$ in $\E(Z)$, and a state
  $\state' \in B (U')$ such that $\state' \cdot f = \state$.
\end{defi}
Finally, let us say
that a triple $(I,h,S)$, for any $h \colon I \to X$ (where $I$ is an
interface) and $S \in \SSX$, \emph{passes} the test
consisting of a morphism $k \colon I \to Y$ of positions and a
strategy $T \in \SSY$ iff $\exta{Z}{[S,T]} \in \bbot{}_Z$, where $Z$
is the pushout of $h$ and $k$.  Let $(I,h,S)^{\bbot{}}$ denote the set of
all such $(k,T)$.
\begin{defi}
  For any $h \colon I \to X$, $h' \colon I' \to X'$, $S \in \SSX$, and
  $S' \in \SS_{X'}$, $(I,h,S) \faireq (I',h',S')$ iff $I = I'$ and
  $(I,h,S)^{\bbot{}} = (I,h',S')^{\bbot{}}$.
\end{defi}
Obviously, $\faireq$ is an equivalence relation, analogous to standard fair
testing equivalence, which we hence also call (semantic) fair testing
equivalence.

This raises the question of whether the translation $\Translfun$
preserves or reflects fair testing equivalence. The rest of the paper
is devoted to proving that it does both. As announced in the
introduction, this is done by organising the game into a
\emph{playground}, as defined in the next section.

\section{Playgrounds: from behaviours to strategies}\label{sec:playgrounds}

\subsection{Motivation: a pseudo double category}\label{subsec:pseudodouble}
\begin{wrapfigure}[6]{r}{0pt}
  \begin{minipage}[c]{0.25\linewidth}
    \vspace*{-1.5em}
    \diag{%
      Y' \& Y \\
      U' \& U \\
      X' \& X %
    }{%
      (m-1-1) edge[labelu={h}] (m-1-2) %
      (m-2-1) edge[labelo={k}] (m-2-2) %
      (m-3-1) edge[labeld={l}] (m-3-2) %
      (m-1-1) edge[labell={s'}] (m-2-1) %
      (m-1-2) edge[labelr={s}] (m-2-2) %
      (m-3-1) edge[labell={t'}] (m-2-1) 
      (m-3-2) edge[labelr={t}] (m-2-2) %
    }
  \end{minipage}
\end{wrapfigure}
We start by organising the game described above into a (pseudo) double
category.  We have seen that positions are the objects of the category
$\Dccsh$, whose morphisms are embeddings of positions.  We have also
seen that positions are the objects of the bicategory $\Dccsv$, whose
morphisms are plays.  It should seem natural to define a pseudo double
category structure with
\begin{itemize}
\item $\Dccsh$ as horizontal category,
\item $\Dccsv$ as vertical bicategory,
\item commuting diagrams as on the right as double cells.
\end{itemize}
Here, $X$ is the initial position and $Y$ is the final one;
all arrows are mono.
  This forms a pseudo double category $\Dccs$, and we have:
\begin{prop}\label{prop:pseudodouble}
  The functor $\codv \colon \DccsH \to \Dccsh$ is a Grothendieck
  fibration~\citep{Jacobs}.
\end{prop}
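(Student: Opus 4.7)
The plan is to exhibit cartesian lifts explicitly. Given a vertical morphism $u \in \DccsH$, presented as a play $Y \xto{s} U \xot{t} X$, and a horizontal morphism $l \colon X' \to X$ in $\Dccsh$, I need to construct a double cell $\alpha$ whose right column is $u$, whose bottom edge is $l$, whose left column is some play $u' = (Y' \to U' \ot X')$, and which is cartesian for $\codv$.

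The candidate lift is the \emph{restriction} of $u$ to $X'$: intuitively, the sub-play of $u$ involving only the players and channels in $X'$. Since $\Chatf$ is a presheaf category and hence has all pullbacks, and since $U$ can be projected back onto its initial position $X$ (modulo a subtlety for synchronisations discussed below), the object $U'$ arises as a pullback in $\Chatf$ along $l$, with an analogous construction for $Y'$ using the left leg. The commuting cell $\alpha$ is then assembled from these projections and the induced maps, using the universal properties of the pullbacks involved.

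Two properties need verification. First, the cospan $u' = (Y' \to U' \ot X')$ is indeed a play, i.e., a composite of moves in $\Dccsv$. I would prove this by induction on a decomposition of $u$ into moves: restricting an individual move to a subposition yields either the same move, a smaller move, or an identity, depending on whether the subposition meets the move's locus; these restrictions compose into the desired play $u'$. Since monos are stable under pullback in a presheaf category, the legs $s'$ and $t'$ remain monic as required. Second, the cartesian universal property: given any cell $\beta \colon v \to u$ in $\DccsH$ with $\codv(\beta) = l \rond g$ for some $g$, I produce a unique $\gamma \colon v \to u'$ with $\codv(\gamma) = g$ such that the horizontal composite $\alpha \rond \gamma$ in $\DccsH$ equals $\beta$; existence and uniqueness both follow directly from the pullback universal properties defining $U'$ and $Y'$. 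This is the sense in which the fibration axiom is ``close to a universal property of pullback''.

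The main obstacle is the first verification, specifically handling synchronisation moves $\tau_{n,a,m,c}$: the projection back to the initial position is not manifest, since the synchronising element could be assigned either to one of the two participating players or to the shared channel, with no a priori preferred choice. This issue is exactly the one flagged in the introduction and resolved in Section~\ref{sec:ccs}, via a factorisation-system technique from algebraic topology. Until then, the proof relies on this to-be-constructed projection machinery; once it is in place, the induction on move-decompositions goes through routinely.
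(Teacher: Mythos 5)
Your proposal reproduces the strategy the paper itself sketches in the introduction, but it defers exactly the step that constitutes the proof. The candidate lift cannot be obtained as ``a pullback in $\Chatf$'': the projection of a play onto its initial position sends, e.g., an input-move element (living over $\inna$) to the inputting player (living over $[n]$), so it changes the object of $\C$ over which elements sit and is therefore not a morphism of presheaves at all --- and this is so already for ordinary moves, not only for synchronisations. What the paper actually does is define a \emph{history} $p_U \colon \El{U} \histto Y$, a mere function on element-sets after stripping the dimension-4 (synchronisation) elements, prove that it does not depend on the chosen decomposition of $U$ into moves, take the pullback in $\Set$, check by hand that the result is stable under the action of $\C_3$ (so that it carries a presheaf structure in dimensions $\leq 3$), and then reinstate synchronisations by a horn-filling pushout, adding each $\taunimj$ of the original play whose horn $\hornnimj$ survives the restriction; the final position of the lift is recovered afterwards by a genuine pullback in $\Chatf$, and a case analysis (cf.\ \axref{fibration:continued}) shows that the restriction of a move is again a move or has length $0$. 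None of this machinery appears in your proposal: you explicitly lean on ``to-be-constructed projection machinery'', which is precisely what the proposition asks you to construct.

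The cartesianness argument has the same problem: it is not a formal consequence of ``the pullback universal properties defining $U'$ and $Y'$'', because $U'$ is not defined by a pullback in $\Chatf$, and for a general play it is defined by composing move-level lifts rather than by any single universal construction. To build the mediating cell the paper needs (i) Proposition~\ref{prop:compathist}, that every double cell commutes with histories --- which itself rests on the canonicity of histories across decompositions --- in order to obtain the commuting square to which the $\Set$-pullback property is applied; (ii) the pushout property of the horn-filling to extend the induced map to synchronisation elements; and (iii) a further pullback argument to produce the component on final positions. Your induction ``restricting a move yields the same move, a smaller move, or an identity'' is the right intuition, but without the history construction, its independence of the decomposition, and the compatibility lemma, neither the existence of the restriction nor its universal property is actually established.
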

Intuitively, $\codv$ being a fibration demands some canonical way of
\emph{restricting} a given play on some position $X$ to some
`subposition' $X' \to X$.  More technically, it amounts to the
existence, for all plays $Y \xproto{u} X$ and horizontal morphisms $X'
\xto{l} X$, of a universal ($\approx$ maximal) way of restricting $u$
to $X'$, as on the left below:
\begin{mathpar}
    \diag(1,1){%
      |(X)| {Y'} \& |(Y)| {Y} \\ %
      |(U)| {X'} \& |(V)| {X} %
    }{%
      (X) edge[dashed,labela={h}] (Y) %
      edge[pro,dashed,twol={u'}] (U) %
      (Y) edge[pro,twor={u}] (V) %
      (U) edge[labela={l}] (V) %
      (l) edge[dashed,cell=.3,labela={\alpha}] (r) %
    }
\and     \Diag(.25,.8){%
    }{%
      |(U'')| E'' \\ \\ 
      \& |(U')| E' \& \& |(U)| E \\ 
      |(Y'')| p(E'') \\ \\ 
      \& |(Y')| p(E') \& \& |(Y)| p(E).  %
    }{%
      (U') edge[labelb={r}] (U) %
      (Y') edge[labelb={p(r)}] (Y) %
      (U) edge[serif cm-to,fore,shorten <=.3cm,shorten >=.3cm] (Y) %
      (U'') edge[bend left=10,labelar={t}] (U) %
      (Y'') edge[bend left=10,labelar={p(t)}] (Y) %
      (U'') edge[serif cm-to,fore,shorten <=.3cm,shorten >=.3cm] (Y'') 
      (U'') edge[dashed,labelbl={s}] (U') %
      (Y'') edge[labelbl={k}] (Y') %
      (U') edge[serif cm-to,fore,shorten <=.3cm,shorten >=.3cm] (Y') 
      (U') edge[serif cm-to,shorten <=.3cm,shorten >=.3cm] (Y') 
      (U'') edge[serif cm-to,shorten <=.3cm,shorten >=.3cm] (Y'') 
      (U) edge[serif cm-to,shorten <=.3cm,shorten >=.3cm] (Y) %
    } %
\end{mathpar}
Formally, consider any functor $p \colon \E \to \B$. A morphism $r
\colon E' \to E$ in $\E$ is \emph{cartesian} when, as on the right
above, for all $t \colon E'' \to E$ and $k \colon p(E'') \to p(E')$,
if $p(r) \rond k = p(t)$ then there exists a unique $s \colon E'' \to
E'$ such that $p(s) = k$ and $r \rond s = t$.
\begin{defi} 
  A functor $p \colon \E \to \B$ is a \emph{fibration} iff for all $E 
  \in \E$, any $h \colon B' \to p(E)$ has a cartesian lifting, i.e., a 
  cartesian antecedent by $p$. 
\end{defi} 

Proposition~\ref{prop:pseudodouble} is proved among other facts in
Section~\ref{sec:ccs}.  This was the starting point of the
notion of playground: which axioms should we demand of a pseudo double
category in order to enable the constructions of
\citetalias{2011arXiv1109.4356H}?  We follow the constructions in this
section, considering an arbitrary pseudo double category $\D$, on
which we impose axioms along the way. Objects and vertical
morphisms will respectively be called \emph{positions} and
\emph{plays}. The pseudo double category $\Dccs$ does satisfy the
axioms, albeit in a non-trivial way. This is stated and proved in
Section~\ref{sec:ccs}, but we use the result in advance in examples to
illustrate our constructions. 

Let us record the axioms imposed on $\D$ in the next sections to
obtain our bisimulation result (Theorem~\ref{thm:bisim}):
\begin{center}
  \begin{minipage}[t]{0.48\linewidth}
    \begin{itemize}
    \item \axref{fibration}, page \pageref{fibration},
    \item \axref{discreteness}---\axref{fibration:continued}, page
      \pageref{discreteness},
    \item \axref{ax:views}, page \pageref{ax:views},
    \item \axref{leftdecomposition}, page \pageref{leftdecomposition},
    \end{itemize}
  \end{minipage}
  \hfil
  \begin{minipage}[t]{0.48\linewidth}
    \begin{itemize}
    \item \axref{views:decomp}, page \pageref{views:decomp},
    \item \axref{finiteness}, page \pageref{finiteness},
    \item \axref{basic:full}, page \pageref{basic:full}.
    \end{itemize}
  \end{minipage}
\end{center}
\subsection{Behaviours}\label{subsec:beh}
The easiest construction of \citetalias{2011arXiv1109.4356H} to carry over to the abstract setting
of playgrounds is that of behaviours.  First, let us stress that, in
the case of $\Dccs$, $\DccsH$ is very different from the category of
plays called $\E$ recalled in Section~\ref{subsubsec:behaviours}.
Indeed, any morphism $\alpha \colon u \to u'$ in $\DccsH$ in
particular induces an embedding of the final position $\dom (u)$ of
$u$ into that of $u'$. In $\E$, instead, a morphism $U \to U'$ may
involve extending $U$ with more moves.
\begin{exa}
  The move of Example~\ref{ex:forkmove} embeds into the
  play of Example~\ref{ex:forkplay} in the sense of $\E$, but not in
  the sense of $\DccsH$.  Indeed, the passive player $y$ of
  Example~\ref{ex:forkmove} does belong to the final position, but its
  image in Example~\ref{ex:forkplay} does not.
\end{exa}
\begin{wrapfigure}[6]{r}{0pt}
  \begin{minipage}{0.3\linewidth}
    \centering
    \vspace*{-2em}
    \begin{equation}
      \diag{%
        |(Z)| Z \& |(Y')| Y' \\
        |(Y)| Y \& \\
        |(X)| X \& |(X')| X' %
      }{%
        (Z) edge[labelu={h}] (Y') %
        edge[pro,labell={w}] (Y) %
        (Y) edge[pro,labell={u}] (X) %
        (Y')  edge[pro,twor={u'}] (X') %
        (X) edge[labela={k}] (X') %
        (Y) edge[cell=.3,labelu={\alpha}] (r) %
      }\label{eq:alpha}
    \end{equation}
  \end{minipage}
\end{wrapfigure}
So our first step is to construct an analogue of $\E$ from any
playground $\D$. Intuitively, it should have as objects all plays, and
as morphisms $u \to u'$ all pairs $(w,\alpha)$ as on the right.
However, this definition is slightly wrong on morphisms, in that
$\alpha$ carries some information about how $w$ embeds into $u'$,
while we are only interested in how $u$ does.  Thus, we instead define
morphisms $u \to u'$ to be pairs $(w,\alpha)$ as in~\eqref{eq:alpha},
quotiented by the equivalence relation generated by pairs
$((w,\alpha),(w',\beta))$ such that there exists morphisms $i$ and
$\gamma$ satisfying $\alpha = \beta \circ (u \vrond \gamma)$, as in
   \begin{equation}
     \Diag(.15,.6){%
       \path (Y') -- node[pos=.22] (right){} (X') ; %
       \path (Z') edge[tworight={Z'Y}{}] (Y2) %
       ; %
       \path[->] (ZY) edge[cell={.8},fore,labelu={\gamma}] (Z'Y) ; %
       \path (Y) edge[cell={1}] node[pos=.8,above=6pt] {$\scriptstyle \alpha$} (r) ; %
       \path (Y2) edge[cell={.4},labelbr={\beta}] (r) ; %
       \path[->] (Z') edge[twobr={}] (Y')  %
       edge[pro,fore,tworight={Z'Y}{}] (Y2) %
       ; %
       }{%
       \& \& \& \& |(Y')| Y' \\
       |(Z)| Z \\
       \& \& |(Z')| Z' \\ \\
        |(Y)| Y  \\
        \& \& |(Y2)| Y \& \\ 
       \& \& \& \& |(X')| X'. \\
       \& |(X)| X  
       }{%
         (Z) edge[pro,twoleft={ZY}{}] (Y) %
         edge[twoal={}] (Y') %
         (Y) edge[pro,twoleft={y}{u}] (X) %
         (Y2) edge[pro,tworight={y2}{u}] (X) %
         (X) edge (X') %
         (Y') edge[pro,twor={u'}] (X') %
         (Z) edge[labelar={i}] (Z') %
         (Y) edge[identity] (Y2) %
         (y) edge[cell={.1},labelu={\id}] (y2) %
       }
       \label{eq:quotient:E}
   \end{equation}

   In order to define composition in this category, we state the
   following axiom (cf.\ Proposition~\ref{prop:pseudodouble}).
\begin{ax}
  \begin{myaxioms}[series=myaxiomsseries]
  \item (Fibration) The vertical codomain functor $\cod : \DH \to \Dh$ is a fibration.\label{fibration}
  \end{myaxioms}
\end{ax}

Composition may now be defined by pullback (i.e., cartesian lifting in
the fibration $\cod \colon \DH \to \Dh$) and pasting:
   \begin{center}
     \Diag{%
       \path (Z') -- node[pos=.5] (mid) {} (X'); %
       \path (Y) edge[cell={.5},labelu={\alpha}] (r) ; %
       \path (mid) edge[cell={.5},labelu={\beta}] (right) ; %
       \pbkk{Z}{Z''}{Z'} %
     }{%
       |(Z'')| Z'' \& |(Z')| Z' \& |(V)| V \\
       |(Z)| Z \& |(Y')| Y' \&  \\
       |(Y)| Y \& \& \\
       |(X)| X  \& |(X')| X' \& |(U)| U.
       }{%
         (Z) edge[pro,labell={w}] (Y) %
         edge (Y') %
         (Y) edge[pro,labell={u}] (X) %
         (X) edge (X') %
         (Y') edge[pro,twor={u'}] (X') %
         (Z') edge[pro,labell={w'}] (Y') %
         edge (V) %
         (X') edge (U) %
         (V) edge[pro,tworight={right}{u''}] (U) %
         (Z'') edge[pro,labell={w''}] (Z) %
         edge (Z') %
       }
   \end{center}
   (We use `double pullback' marks to denote cartesian double cells.)
   Quotienting makes composition functional and associative, and
   furthermore it is compatible with the above equivalence. Identities are obvious.

   \begin{prop}
     This forms a category $\E$.
   \end{prop}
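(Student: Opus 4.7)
The plan is to verify the three categorical laws (identity, associativity, and well-definedness on equivalence classes), essentially reading off the fact that $\cod : \DH \to \Dh$ is a fibration and that pasting of double cells is functorial.

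First I would pin down identities and composition concretely. For a play $u : Y \proto X$, the identity $\id_u$ is represented by the pair $(\idv_Y, \id_u^\flat)$, where $\id_u^\flat$ is the special double cell whose vertical domain and codomain are both $u$ (obtained from the identity on $u$ in $\DH$ after inserting the left unitor of the pseudo double category). For composition, given $(w, \alpha) : u \to u'$ and $(w', \beta) : u' \to u''$ with $\alpha$ having top horizontal $h : Z \to Y'$, I would use \axref{fibration} to choose a cartesian lifting $w'' : Z'' \proto Z$ of $w'$ along $h$, producing a cartesian cell from $w''$ to $w'$ over $h$. The composite is then $[(w'' \vrond w,\, \gamma)]$, where $\gamma$ is obtained by pasting $\alpha$ and $\beta$ together with the cartesian cell, as in the displayed diagram.

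Next I would check the three laws, in the natural order: well-definedness, then unitality, then associativity. For well-definedness, two independent choices of cartesian lifting of $w'$ along $h$ differ by a unique vertical iso, and this iso supplies exactly the data $(i, \gamma)$ required in~\eqref{eq:quotient:E} to identify the two composites. Replacing $(w,\alpha)$ by an equivalent representative $(w_1,\alpha_1)$ (via data $(i_0, \gamma_0)$) yields, after lifting $w'$ along the two top horizontals, a compatibility map between the two cartesian lifts, again by the universal property of cartesian cells; this produces the equivalence datum for the composites. The same argument on the right-hand representative is symmetric. Unitality is immediate on the right (the cartesian lifting of $w'$ along $\id$ can be taken to be $w'$ itself, with cartesian cell the identity, so pasting with $\id_u^\flat$ gives back $(w', \beta)$), and on the left it follows from the unitor coherences of the pseudo double category together with the quotient, since $w \vrond \idv_Y \iso w$.

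For associativity, given three composable $(w,\alpha), (w',\beta), (w'',\delta)$, both bracketings produce, via two successive cartesian liftings, vertical morphisms whose composite is cartesian over the appropriate pasted horizontal; uniqueness of cartesian lifts up to unique vertical iso gives a canonical comparison cell, which again provides the equivalence datum in~\eqref{eq:quotient:E}. The main obstacle will be bookkeeping the pseudoness: the vertical associator and unitors of $\D$ intervene both when forming $w'' \vrond w$ and when pasting $\alpha$, $\beta$, and the cartesian cells, and one has to check that the quotienting relation absorbs these coherence isomorphisms. This is routine but tedious; the cleanest way is to observe that the equivalence in~\eqref{eq:quotient:E} is generated by exactly the special cells of the form $\idv_u \vrond \gamma$, and all the associator/unitor cells are of this kind, so the quotient collapses every such coherence discrepancy. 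Once this is in hand, the category axioms hold strictly.
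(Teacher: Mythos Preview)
Your proposal is correct and follows the same approach as the paper, which in fact offers no proof beyond the single sentence preceding the proposition: ``Quotienting makes composition functional and associative, and furthermore it is compatible with the above equivalence. Identities are obvious.'' Your account simply unpacks this assertion, using the universal property of cartesian liftings from \axref{fibration} to supply the canonical isomorphisms absorbed by the quotient~\eqref{eq:quotient:E}, which is exactly the intended mechanism.
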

   \begin{exa}
     Consider the move $M'$ from Example~\ref{ex:forkmove}, and let us
     name its initial and final positions as in $M' \colon Y' \proto
     X'$.  Let us further call $U \colon Y'' \proto X'$ the play from
     Example~\ref{ex:forkplay}, obtained by composing $M'$ with a
     forking move by $y \in Y'[2]$.  In order to obtain a double cell
     $M' \to U$, we need to provide an extension of $M'$ with some
     move by $y$, and there are actually three ways of doing this.
     One is with a left forking move, another is with a right forking
     move, and the last is with a full forking move. In this example,
     the last possibility actually yields an identity double cell $U
     \to U$, and may be obtained using~\axref{fibration} in the
     \begin{minipage}[t]{\linewidth}
       \vspace*{-.7em}
       \begin{wrapfigure}{r}{0pt}
       \begin{minipage}[t]{.18\linewidth}
       \vspace*{-1.5em}
         \Diag(.6,1){%
           \pbkk{Y}{Z}{Z'} %
           \node[at=(c.center),anchor=base west] {$\scriptstyle
             \beta$} ; %
         }{%
           |(Z)| Z \&|(Z')| Z' \\
           |(Y)| Y \&|(Y')| Y' \\
           |(X)| X \&|(X')| X' %
         }{%
           (Z) edge[pro,twoleft={left}{w}] (Y) %
           edge (Z') (Y) edge[pro,twol={u}] (X) %
           edge (Y') %
           (X) edge (X') %
           (Z') edge[pro,tworight={right}{w'}] (Y') %
           (Y') edge[pro,twor={u'}] (X') %
           (l) edge[cell={.3},labelu={\alpha}] (r) %
         }
       \end{minipage}
       \end{wrapfigure}
       \noindent following general way.  Consider any double cell
       $\alpha \colon u \to u'$ in $\DH$, and play $w'$ such that $u'
       \vrond w'$ is well-defined. Then, letting $\beta \colon w \to
       w'$ be the cartesian lifting of $w'$ along $\dom(\alpha)$, we
       obtain a morphism $u \to u' \vrond w'$ in $\E$, as in on the
       right.  The universal property of $\beta$ here amounts to the
       fact that left and right forking moves both embed uniquely into
       full forking, which makes our three candidate morphisms $u \to
       u' \vrond w'$ equal in~$\E$.
     \end{minipage}
   \end{exa}

   Recalling notation from Section~\ref{subsec:prelim:cats}, consider
   now the pullback category $\E(X)$, where $X$ is any position.
   Following Definition~\ref{def:behccs}, we state: 
\begin{defi}\label{def:beh}
  The category $\Beh{X}$ of \emph{behaviours} on $X$ is
  $\FPsh{\E(X)}$, i.e., the category of presheaves of finite sets on
  $\E (X)$.
\end{defi}
This construction has a bit of structure. Indeed, the
map $X \mapsto \E (X)$ extends to a pseudo functor $\E (-) \colon \Dv \to
\Cat$ by vertical post-composition.  Post-composing the opposite of
this pseudo functor by $\FPsh{(-)} \colon \op\Cat \to \Cat$, we obtain a
pseudo functor $\Beh{-} \colon \op\Dv \to \Cat$, satisfying $\Beh{u} (B) (u')
= B (u \vrond u')$.

\subsection{More axioms}
We now turn to generalising further constructions
of \citetalias{2011arXiv1109.4356H} to the general setting of
playgrounds.  We mentioned in Section~\ref{sec:HP} that strategies on
a position $X$ should be defined as presheaves on the category of
views on $X$.  We will further want to generalise the decomposition
theorems for strategies of \citetalias{2011arXiv1109.4356H}, which
crucially rely on a property of views stated (in
Section~\ref{subsec:views} below) as Proposition~\ref{prop:decompV}.

In order for this to work, we need to state more axioms for $\D$.  In
particular, the axioms equip $\D$ with a notion of \emph{player} for a
position $X$.  Each position has a set of players, each player having
a certain `type'.  Furthermore, in Section~\ref{subsec:views}, $\D$ is
equipped with a notion of view; and views have a type, too.
Proposition~\ref{prop:decompV}, e.g., states that views on a position
$X$ form a coproduct, over all players $x$ in $X$, of views over the
type of $x$.

We first state a series of simple axioms, and then, building on these,
two more complicated axioms.
\newcounter{axiomcounter}
\begin{ax}\label{ax:indivmoves}
  $\D$ is equipped with
  \begin{itemize}
  \item a full subcategory $\DI \into \Dh$ of objects called
    \emph{individuals},
  \item a replete class $\DM$ of vertical morphisms called \emph{moves},
    with replete subclasses $\DB$ and $\DF$, respectively called
    \emph{basic} and \emph{full} moves,
  \item a map $\length{-} \colon \ob(\DH) \to \Nat$ called the \emph{length},
  \end{itemize}
  satisfying the following conditions:
  \begin{axioms}
  \item $\DI$ is discrete. Basic moves have no non-trivial
    automorphisms in $\DH$.  Vertical identities on individuals have
    no non-trivial endomorphisms.
    \label{discreteness}
  \item (Individuality) Basic moves have individuals as both domain
    and codomain. \label{individuality}
  \item
    \begin{minipage}[t]{.97\linewidth}
      \vspace*{-.82em}
      \begin{wrapfigure}{r}{0pt}
        \begin{minipage}[t]{.18\linewidth}
          \vspace*{-1.5em} \diag{%
            |(X0)| X \&|(X)| X \\
            |(Xi)| X \&|(Y)| Y %
          }{%
            (X0) edge[identity,pro,twol={}] (Xi) %
            edge[identity] (X) %
            (X) edge[pro,twor={u}] (Y) %
            (Xi) edge[labelb={\bar{u}}] (Y) %
            (l) edge[cell=0.3,labelb={\alpha^u}] (r) %
          }
        \end{minipage}
      \end{wrapfigure}
      (Atomicity) For any cell $\alpha \colon v \to u$, if $\length{u}
      = 0$ then also $\length{v} = 0$.  Up to a special isomorphism in
      $\DH$, all plays $u$ of length $n > 0$ admit decompositions into
      $n$ moves.  For any $u \colon X \proto Y$ of length 0, there is
      an isomorphism $\idv_X \to u$ as on the right in $\DH$.
    \end{minipage}
\label{atomicity}
\item (Fibration, continued) Restrictions of moves (resp.\ full moves)
  to individuals either are moves (resp.\ full moves), or have length
  0. \label{fibration:continued}
  \end{axioms}
\end{ax}
Replete means stable under isomorphism (here in $\DH$).  In
\axref{fibration:continued}, \emph{restriction} is w.r.t.\ the
fibration $\cod \colon \DH \to \Dh$, as explained below
Proposition~\ref{prop:pseudodouble}. 

\begin{defi}
  A \emph{player} in a position (i.e., object) $X$, is a pair $(d,x)$,
  where $d \in \DI$ and $x : d \to X$. Let $\Pl (X) = \sum_{d \in \DI}
  \Dh(d,X)$ be the set of players of $X$.
\end{defi}
\begin{exa}
  In $\Dccs$, individuals are representable positions $[n]$, which
  consist for some $n$ of a single $n$-ary player, connected to $n$
  distinct channels. Importantly, for each isomorphism class of such
  positions we pick one representative: this makes $\DI$ discrete by
  Yoneda. Furthermore, basic moves are basic seeds.
\end{exa}

Here is a further, crucial axiom.
\begin{defi}\label{def:DB0}
  Let $\DB_{0}$ be the full subcategory of $\DH$ having as objects
  basic moves and morphisms of length 0 between individuals.
\end{defi}

\begin{ax}
  \begin{axioms}
  \item (Views) For any move $M \colon Y \proto X$ in $\Dv$, the domain
    functor $\dom \colon \DB_{0} / M \to \DI / Y$ is an equivalence of categories. \label{ax:views}
  \end{axioms}
\end{ax}
In elementary terms, for any $y \colon d \to Y$ in $\Dh$ with $d\in
\DI$, there exists a cell
\begin{center}
  \diag{%
    |(d)| d \& |(Y)| Y \\
    |(dMy)| d^{y,M} \& |(X)| X, %
  }{%
    (d) edge[labelu={y}] (Y) %
    edge[pro,dashed,twol={v^{y,M}}] (dMy) %
    (Y) edge[pro,twor={M}] (X) %
    (dMy) edge[dashed,labeld={y^M}] (X) %
    (l) edge[cell=.3,dashed,labelu={\scriptstyle \alpha^{y,M}}] (r) %
  }
\end{center}
with $v^{y,M} \in \DB_0$, which is unique up to canonical isomorphism of
such. An isomorphism between two such tuples, say $(d',v',y',\alpha')$ and $(d'',v'',y'',\alpha'')$
is a diagram
\begin{center}
  \Diag (.3,.5) {%
    \twocellr{dMy}{d}{Y}{\alpha'} %
    \twocellr{d''}{d}{Y}{\alpha''} %
    \twocellro[.5]{dMy}{d}{d''}{\beta} %
     \path[->,draw] %
    (dMy) edge node[pos=.7,above] {$\scriptstyle y'$} (X) %
    edge[labelbl={h}] (d'')
    (d'') edge[labelbr={y''}] (X) %
    (d) edge[pro,fore,labelr={v''}] (d'') %
    ; %
  }{%
    \& |(d)| d \& \& \& |(Y)| Y \\
    \& {} \\
    |(dMy)| d' \& \& \& \& |(X)| X \\
    \& \& |(d'')| d'' \& %
  }{%
    (d) edge[labelu={y}] (Y) %
    edge[pro,twol={v'}] (dMy) %
    (Y) edge[pro,twor={M}] (X) %
  }    
\end{center}
such that $\alpha'' \rond \beta = \alpha'$ (where necessarily
$d'=d''$, $h = id$, and $y' = y''$).



\begin{exa}
  This axiom is obviously satisfied by $\Dccs$.
\end{exa}

We then have two decomposition axioms.
\begin{ax}
  \begin{axioms}
  \item (Left decomposition) 
    Any double cell
    \begin{center}
      \Diag{%
        \twocellbr{B}{A}{X}{\alpha} %
      }{%
        |(A)| A \& |(X)| X \\
        \& |(Y)| Y \\
        |(B)| B \& |(Z)| Z %
      }{%
        (A) edge[labelu={h}] (X) %
        edge[pro,labell={u}] (B) %
        (X) edge[pro,labelr={w_1}] (Y) %
        (Y) edge[pro,labelr={w_2}] (Z) %
        (B) edge[labeld={k}] (Z) %
      }
      \hfil decomposes as \hfil
      \Diag(1,2){%
        \twocellbr[.3]{C}{A}{X}{\alpha_1} %
        \twocellbr[.3]{B}{C}{Y}{\alpha_2} %
        \twocell[.3]{L}{A}{C}{}{celllr={0}{0},bend
          right,fore,labelbl={\scriptscriptstyle \alpha_3}} %
      }{%
        \& |(A)| A \& |(X)| X \\
        |(L)| \& |(C)| C \& |(Y)| Y \\
        \& |(B)| B \& |(Z)| Z %
      }{%
        (A) edge[labelu={h}] (X) %
        edge[pro] node[pos=.5,anchor=north west] {$\scriptscriptstyle {u_1}$} (C) %
        edge[pro,bend right=70,labell={u}] (B) %
        (C) edge[pro] node[pos=.5,anchor=north west] {$\scriptscriptstyle {u_2}$} (B) %
        edge[labelu={l}] (Y) %
        (X) edge[pro,labelr={w_1}] (Y) %
        (Y) edge[pro,labelr={w_2}] (Z) %
        (B) edge[labeld={k}] (Z) %
      }
    \end{center}
    with $\alpha_3$ an isomorphism, in an essentially unique way.
    \label{leftdecomposition}
  \end{axioms}
\end{ax}

Here is our second decomposition axiom. 

\begin{ax}
  \begin{axioms}
  \item (Right decomposition) Any double cell as in the center below,
    where $b$ is a basic move and $M$ is a move, decomposes in exactly
    one of the forms on the left and right:
  \begin{mathpar}
    \begin{minipage}[t]{0.18\linewidth}
      \centering \Diag {%
        \twocell[.4][.3]{B}{A}{X}{}{celllr={0.0}{0.0},bend
          right=30,labelbr={\alpha_1}} %
        \twocell[.4][.3]{C}{B}{Y}{}{celllr={0.0}{0.0},bend
          right=20,labelbr={\alpha_2}} %
      }{%
    |(A)| A \& |(X)| X \\
     |(B)| B \& |(Y)| Y \\
    |(C)| C \& |(Z)| Z %
      }{%
        (A) edge[pro] (B) %
        edge (X) %
        (B) edge (Y) %
        edge[pro] (C) %
        (C) edge (Z) %
        (X) edge[pro] (Y) %
        (Y) edge[pro] 
        (Z) %
      } 
    \end{minipage}
    \and \leftlsquigarrow \and
    \begin{minipage}[t]{0.18\linewidth}
      \centering      \Diag{%
    \twocellbr{B}{A}{X}{\alpha} %
  }{%
    |(A)| A \& |(X)| X \\
     |(B)| B \& |(Y)| Y \\
    |(C)| C \& |(Z)| Z %
  }{%
    (A) edge[labelu={h}] (X) %
    edge[pro,labell={w}] (B) %
    (B) edge[pro,labell={b}] (C) %
    (X) edge[pro,labelr={u}] (Y) %
    (Y) edge[pro,labelr={M}] (Z) %
    (C) edge[labeld={k}] (Z) %
  }
    \end{minipage}
\and \rightrsquigarrow \and
     \begin{minipage}[t]{0.18\linewidth}
      \centering \Diag {%
        \twocell[.4][.3]{B}{A}{X}{}{celllr={0.0}{0.0},bend
          right=30,labelbr={\alpha_1}} %
        \twocell[.3][.4]{C}{Y}{Z}{}{celllr={0.0}{0.0},bend
          right=20,labeld={\alpha_2}} %
      }{%
    |(A)| A \& |(X)| X \\
     |(B)| B \& |(Y)| Y \\
    |(C)| C \& |(Z)| Z. %
      }{%
        (A) edge[pro] (B) %
        edge (X) %
        (B) 
        edge[pro] (C) %
        (C) edge (Z) %
        edge (Y) %
        (X) edge[pro] (Y) %
        (Y) 
        edge[pro] (Z) %
      } 
    \end{minipage}
  \end{mathpar}
    \label{views:decomp}
  \end{axioms}
\end{ax}

\begin{rem}
  This axiom takes pseudoness rather sloppily. Indeed, the domain of
  the right-hand composite is not really $b \vrond w$, but rather
  $\idv_C \vrond (b \vrond w)$. So we actually mean $\alpha =
  (\alpha_2 \vrond \alpha_1) \rond \inv{\lambda_{b \vrond w}}$, where
  $\lambda$ cancels identities on the left.
\end{rem}

\begin{exa} \hfill \  \linebreak
\begin{minipage}{\textwidth}
\begin{wrapfigure}[6]{r}{0pt}
  \begin{minipage}[c]{0.18\linewidth}
    \vspace*{-1.5em}
    \Diag{%
      \twocellbr{B}{A}{X}{\alpha} %
    }{%
      |(A)| X \& |(X)| X \\
      |(B)| X \&  |(Y)| X \\
      |(C)| X \& |(Z)| X %
    }{%
      (A) edge[identity] (X) %
      edge[pro,labell={i_y}] (B) %
      (X) edge[identity] (Y) %
      (Y) edge[pro,labelr={S}] (Z) %
      (B) edge[pro,labell={o_x}] (C) %
      (C) edge[identity] (Z) %
    }
  \end{minipage}
\end{wrapfigure}
That this axiom is satisfied by $\Dccs$ is not obvious and is proved
in Section~\ref{sec:ccs}. However, let us disprove the more general
version where $b$ is not required to be basic. Let $X$ consist of two
players $x$ and $y$ sharing a channel $a$. Let $i_y \colon X \proto X$
be the play where $y$ inputs on $a$, $o_x \colon X \proto X$ be the
play where $x$ outputs on $a$, and let $S \colon X \proto X$ be the
play where both players synchronise on $a$.  We obtain a double cell
as on the right, which does not decompose as
in~\axref{views:decomp}. The problem here is that, on the left-hand
side, the upper input by $y$ has to be mapped to $S$, which prevents any
suitable decomposition.
\end{minipage}
\end{exa}

We now define and study views.

\subsection{Views}\label{subsec:views}
\begin{defi}
  A \emph{view} in $\D$ is a play which is specially isomorphic in
  $\DH$ to a possibly empty (vertical) composite of basic moves. I.e.,
  if
$d_n \xproto{b_{n}} d_{n-1} \ldots d_1 \xproto{b_1} d_0$
are all basic moves, then the composite is a view. Let $\DVi$ be the
full subcategory of $\DH$ consisting of views.
\end{defi}
The definition includes the `identity' view $\idv_{d}$. In $\Dccs$,
this of course coincides with views as defined in
\citetalias{2011arXiv1109.4356H}.

Here is an important consequence of our axioms. It is a bit
complicated to state, but very useful in the (more intelligible)
developments on views below.

\begin{lem}\label{lem:big}
    For all plays $w \colon Y \proto d_n$ and $u \colon X_p \proto X_0$, 
    views $v \colon d_n \proto d_0$, and double cells $\alpha \colon v \vrond w \to u$,
    for all special isomorphisms 
    $\gamma \colon (b_1 \vrond (\ldots (b_{n-1} \vrond b_n)\ldots)) \to v$
    and 
    $\gamma' \colon u \to (M_1 \vrond (\ldots (M_{p-1} \vrond M_p)\ldots))$
    decomposing $v$ and $u$ into moves, there exists a unique, strictly monotone map
    $f \colon n \cup \ens{0} \to p \cup \ens{0}$ with $f (0) = 0$
    and double cells $\beta \colon w \to (M_{f(n)+1} \vrond (\ldots \vrond M_p))$ and  
    $\alpha_k \colon \bar{b}_k \to M_k$ for $1 \leq k \leq f(n)$, where
    $$\bar{b}_k = \left \{
      \begin{array}{ll}
        b_i & \mbox{(if $k \in \im(f)$, with $f(i) = k$)} \\
        \idv_{\cod(b_{\min \ens{i \in n \aalt f(i) > k}})} & \mbox{(otherwise),}
      \end{array}
\right .$$
such that $\alpha_1 \vrond (\ldots \vrond (\alpha_{f(n)} \vrond \beta))
= \gamma' \rond \alpha \rond (\gamma \vrond w)$,
as in 
    \begin{mathpar}
  \Diag(.6,4){%
    \path (vdotsi) -- node (hurm) {$\vdots$} (d'n1) ; %
    \deuxcellule{1.5}{1}{dn1}{dn}{l}{}{cell=.3,labelb={\gamma}} %
    \deuxcellule{1.1}{.9}{l}{dn}{r}{}{celllr={0}{.4},labela={\alpha\ \ \ \ }} %
    \path[->] (r) edge[cell={.2},labela={\gamma'}] (hurm) ; %
  }{%
   |(A)| Y \&|(X)| X_p \\
   |(dn)| d_n \&|(d'n)| X_{p-1} \\
   |(dn1)| d_{n-1} \&|(d'n1)|  \\
   |(vdots)| \vdots \&|(vdotsi)| \\
   |(d1)| d_1 \&|(d'1)| X_1 \\
   |(d0)| d_0 \&|(d'0)| X_0 %
  }{%
    (A) edge[labell={w}] (dn) %
    edge (X) %
    (dn) edge[labell={b_n}] (dn1) %
    edge[bend left=60] node[pos=.3,right] (l) {$\scriptstyle v$} (d0) %
    (X) edge[labelr={M_{p}}] (d'n) %
    edge[bend right=50,twor={}] node[pos=.3,left] {$\scriptstyle u$} (d'0) %
    (d0) edge (d'0) %
    (d1) edge[labell={b_1}] (d0) %
    (d'1) edge[labelr={M_1}] (d'0) %
  }
\and  = \and    
    \diagramme[diag={0.6}{1.5}]{}{%
        \twocellbr{B0}{A0}{Bf1minus1}{} %
        \twocellbr{A0}{A1}{Bf1}{\alpha_{f(1)}} %
        \twocellr{Bf1}{A1}{Bf2minus1}{} %
        \twocellbr{Amminus1}{Am}{Bfm}{\alpha_{f(n)}} %
        \twocellbr{Am}{A}{Bn}{\beta} %
      }{%
       |(A)| Y \&|(Bn)| X_p \\
       |(Am)| d_n \&|(Bfm)| X_{f (n)} \\
       |(Amminus1)| d_{n-1} \&|(Bfmminus1)| X_{f(n)-1} \\
       \&|(Bf2minus1)| X_{f(2)-1} \\
       |(A1)| d_1 \&|(Bf1)| X_{f (1)} \\
       |(A0)| d = d_0 \&|(Bf1minus1)| X_{f (1) - 1} \\
         \&|(B0)| X_{0}. %
      }{%
        (A) edge (Bn) %
        edge[pro,labell={w}] (Am) %
        (Am) edge (Bfm) %
        edge[pro,labell={b_n}] (Amminus1) %
        (Amminus1) edge (Bfmminus1) %
        (A1) edge (Bf2minus1) %
        edge (Bf1) %
        edge[pro,labell={b_1}] (A0) %
        (A0) edge (Bf1minus1) %
        edge (B0) %
        (Bn) edge[pro,labelr={M_{> f (n)}}] (Bfm) %
        (Bfm) edge[pro,labelr={M_{f (n)}}] (Bfmminus1) %
        (Bf2minus1) edge[pro,labelr={M_{] f (1), f (2)[}}] (Bf1) %
        (Bf1) edge[pro,labelr={M_{f (1)}}] (Bf1minus1) %
        (Bf1minus1) edge[pro,labelr={M_{< f (1)}}] (B0) %
        (Bfmminus1) edge[dotted] (Bf2minus1) %
        (Amminus1) edge[dotted] (A1) %
      }
  \end{mathpar}
\end{lem}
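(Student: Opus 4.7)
The plan is to prove the lemma by induction on $n+p$, using the right decomposition axiom~\axref{views:decomp} at each step to peel off the bottommost basic move $b_1$ of $v$ against the bottommost move $M_1$ of $u$.

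For the base cases, if $n=0$ then $v$ is an identity view (up to the special iso $\gamma$), so after absorbing $\gamma$ and $\gamma'$ we obtain $\alpha$ directly as a cell $w \to (M_1 \vrond \ldots \vrond M_p)$; we take $f(0)=0$ as the unique monotone map and $\beta$ to be this cell. If $p=0$ but $n > 0$, atomicity~\axref{atomicity} forces $v \vrond w$ to also have length $0$, contradicting the hypothesis that $v$ decomposes into $n \geq 1$ basic moves; so this case is vacuous.

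For the inductive step, write $v = b_1 \vrond v'$ with $v' = b_2 \vrond (\ldots \vrond b_n)$ and $u = M_1 \vrond u'$ with $u' = M_2 \vrond (\ldots \vrond M_p)$, and view $\alpha$ (after absorbing the special isomorphisms) as a cell from $b_1 \vrond (v' \vrond w)$ to $M_1 \vrond u'$. Apply~\axref{views:decomp} with $b = b_1$ and $M = M_1$: this yields exactly one of two decompositions. In the \emph{left} case, we get $\alpha_1^{\mathrm{top}} \colon v' \vrond w \to u'$ and $\alpha_1 \colon b_1 \to M_1$, and we invoke the induction hypothesis on $\alpha_1^{\mathrm{top}}$ (with parameters $n-1$ and $p-1$) to obtain $f' \colon (n-1) \cup \ens{0} \to (p-1) \cup \ens{0}$ and the rest of the data; we then set $f(0)=0$, $f(1)=1$, and $f(i)=f'(i-1)+1$ for $i \geq 2$, so that $1 \in \im(f)$ with preimage $1$ and $\alpha_1$ is the pre-existing cell $b_1 \to M_1$. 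In the \emph{right} case, we get a cell $b_1 \vrond (v' \vrond w) \to u'$ and a cell $\idv \to M_1$; we apply the IH to the first (parameters $n$ and $p-1$), obtaining $f''$, and set $f(0)=0$ and $f(i) = f''(i)+1$ for $i \in n$, so $1 \notin \im(f)$, the forced choice $\bar b_1 = \idv_{\cod(b_1)} = \idv_{d_0}$ matches the cell produced by~\axref{views:decomp}, and $\alpha_1$ is taken to be that cell. In both cases $n+p$ strictly decreases, and $\beta$ and the commutation identity $\alpha_1 \vrond (\ldots \vrond \beta) = \gamma' \rond \alpha \rond (\gamma \vrond w)$ are obtained by pasting the IH output with the cell produced at the current step, using the interchange law to reassociate.

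Uniqueness of $(f,\beta,\alpha_k)$ follows by the same induction: at each step~\axref{views:decomp} dictates which of the two cases occurs (the choice is forced, not a genuine disjunction, since the axiom asserts that \emph{exactly one} form is realised), and within that case the decomposition is essentially unique, so uniqueness propagates through the induction. The main obstacle I anticipate is the bookkeeping around pseudoness: the associators and the action of the special isomorphisms $\gamma,\gamma'$ need to be tracked carefully when re-gluing the cell produced at the current step with the IH output, and when confirming that the indexing by $f$ matches the specification of the $\bar b_k$'s (in particular, that $\min \ens{i \in n \aalt f(i) > k}$ is well-defined and picks out the correct identity in the "skipped $M_k$" case). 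Modulo this coherence-juggling, the recursion is routine.
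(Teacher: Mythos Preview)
Your existence argument is essentially the paper's: same application of~\axref{views:decomp} to peel off $b_1$ against $M_1$, same left/right case split, same recursive definition of $f$. The paper inducts on $(n,p)$ lexicographically rather than on $n+p$, but since both coordinates weakly decrease and one strictly decreases in each case, either ordering works.

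The gap is in uniqueness. You write that ``within that case the decomposition is essentially unique'', but Axiom~\axref{views:decomp} does not say that: it only asserts that \emph{exactly one of the two forms is realised}, not that the decomposition of that form is unique. So suppose $f(1)=g(1)=1$. You then have two left-form decompositions of $\alpha$, yielding two (potentially different) top cells $\alpha^{f,\mathrm{top}}$ and $\alpha^{g,\mathrm{top}}$ from $(b_2 \vrond \ldots \vrond b_n) \vrond w$ to $M_2 \vrond \ldots \vrond M_p$. The induction hypothesis tells you each of these cells has a unique associated monotone map, but since they need not be the same cell, you cannot conclude those maps coincide. Your recursion for uniqueness therefore does not close.

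The paper avoids this by invoking~\axref{leftdecomposition} instead, which \emph{does} carry an essential-uniqueness clause. The argument is global rather than step-by-step: for each level $k$, splitting $u$ at $X_k$ induces (via the $f$- and $g$-decompositions) two factorisations of $b_1 \vrond \ldots \vrond (b_n \vrond w)$; by the essential uniqueness in~\axref{leftdecomposition} these must have matching lengths, so $\lvert f^{-1}\{1,\ldots,k\}\rvert = \lvert g^{-1}\{1,\ldots,k\}\rvert$ for all $k$, whence $f=g$ by strict monotonicity. Note also that the lemma (see the remark following it) only claims uniqueness of $f$, not of $\beta$ and the $\alpha_k$; your stronger claim is not needed and, for the reason above, not established by your argument.
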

In the case $p=0$, also $n = \length{w} = 0$, and the decomposition of $u$
  should be understood as $M_1 \vrond \ldots \vrond M_{f(n)}$ being an
  \emph{identity}, with $M_{f(n)+1} \vrond \ldots \vrond M_p$ being
  $u$. 

\begin{rem}
  Only $f$ is claimed to be unique here.  Furthermore, as
  in~\axref{views:decomp}, we are a bit sloppy regarding pseudoness.
  Also, in the following, we consider only the underlying map $f
  \colon n \to p$, implicitly extended with $f(0) = 0$.  Finally, for
  all $\alpha \colon v \vrond w \to u$, there exist $\gamma$ and
  $\gamma'$ as in the lemma. This is obvious when $n$ and $p \neq 0$;
  we just explained it for the case $p=0$; and when $n=0$ it follows
  from Lemma~\ref{lem:hv} below.
\end{rem}
\begin{proof}
  We proceed by lexicographic induction on the pair $(n,p)$.  

  If $n = 0$ then our map $f \colon n \to p$ is the unique map $0 \to
  p$, $f(n) = 0$, and we take $\beta = \gamma' \rond \alpha \rond
  \gamma$.  Otherwise, we apply~\axref{views:decomp} with $b = b_1$,
  $w = (b_{2} \vrond \ldots \vrond b_{n-1} \vrond w)$, $M = M_1$ and
  $u = (M_2 \vrond \ldots \vrond M_{p-1})$.
    \begin{itemize}
    \item If we are in the left-hand case, $\alpha$ decomposes as
      $\alpha_1 \vrond \alpha_2$, with $\alpha_1 \colon b_1 \to M_1$
      and $\alpha_2 \colon (b_2 \vrond \ldots \vrond b_{n} \vrond w)
      \to (M_2 \vrond \ldots \vrond M_{p})$.  By induction hypothesis,
      we obtain a map $f' \colon n-1 \to p-1$ and a
      corresponding decomposition of $\alpha_2$.  We then let
      $f \colon n \to p$ map $1$ to $1$, and $k + 1$ to
      $f'(k) + 1$ for any $k \in (n-1)$.
    \item If we are in the right-hand case, we obtain a map
      $f' \colon n \to p-1$, and return the map $k
      \mapsto f'(k) + 1$.
    \end{itemize}
    This shows existence of the desired decomposition. For uniqueness,
    consider any map $g \colon n \to p$ and corresponding
    decomposition.  Axiom~\axref{leftdecomposition} entails that at
    each stage, $\inv{f}\ens{1,\ldots,k}$ and
    $\inv{g}\ens{1,\ldots,k}$ have the same cardinality. Indeed,
    otherwise, we would find isomorphic decompositions of $b_1 \vrond
    \ldots \vrond (b_n \vrond w)$ with incompatible lengths.  Thus, $f
    = g$.
\end{proof}

We continue with a few easy results.  Recall the family of isomorphisms
$\alpha^u$ from Axiom~\axref{atomicity}, indexed by vertical morphisms
of length $0$.  Furthermore, let us denote by $\rho_u \colon u \vrond
\idv_X \to u$ and $\lambda_u \colon \idv_Y \vrond u \to u$ the
coherence isomorphisms from $\Dv$ for cancelling vertical identities.
\begin{lem}\label{lem:alphau}
  For any $u \colon X \proto Y$ of length $0$, there is an isomorphism
  \begin{mathpar}
          \diag{%
         |(X)| X \&|(Y)| Y \\
         |(Y0)| Y \&|(Yi)| Y %
       }{%
         (Y) edge[identity,pro,twor={}] (Yi) %
         (X) edge[labela={\bar{u}}] (Y) %
         edge[pro,twol={u\ \ }] (Y0) %
         (Y0) edge[identity] (Yi) %
         (l) edge[cell=0.3,labela={\alpha_u}] (r) %
       }         
  \end{mathpar}
  in $\DH$, such that $\alpha_u \vrond \alpha^u = \inv{\lambda_u}
  \rond \rho_u$ and $\alpha_u \rond \alpha^u = \idv_{\bar{u}}$.
\end{lem}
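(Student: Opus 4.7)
The plan is to define $\alpha_u$ explicitly as a horizontal composite involving $(\alpha^u)^{-1}$, and then verify the two equations. By~\axref{atomicity}, we have an isomorphism $\alpha^u \colon \idv_X \to u$ in $\DH$, with $\dom(\alpha^u) = \id_X$ and $\cod(\alpha^u) = \bar u$. Inspecting top and bottom horizontal sides in the identity-cell equations $(\alpha^u)^{-1} \rond \alpha^u = \id_{\idv_X}$ and $\alpha^u \rond (\alpha^u)^{-1} = \id_u$ forces $\bar u$ to be an isomorphism in $\Dh$ with some two-sided inverse $\bar u^{-1}$, and shows that $(\alpha^u)^{-1} \colon u \to \idv_X$ has top $\id_X$ and bottom $\bar u^{-1}$.

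Now set $\alpha_u := \idv_{\bar u} \rond (\alpha^u)^{-1}$, where $\idv_{\bar u}$ denotes the identity cell on the horizontal morphism $\bar u$ (with horizontal domain $\idv_X$, horizontal codomain $\idv_Y$, and top and bottom both $\bar u$). Reading off horizontal sides, $\alpha_u$ has top $\bar u \circ \id_X = \bar u$ and bottom $\bar u \circ \bar u^{-1} = \id_Y$, with horizontal domain $u$ and horizontal codomain $\idv_Y$; this is the required shape. It is an isomorphism in $\DH$ as a composite of $(\alpha^u)^{-1}$ and $\idv_{\bar u}$, the latter being invertible with inverse $\idv_{\bar u^{-1}}$. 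The second equation $\alpha_u \rond \alpha^u = \idv_{\bar u}$ is then immediate from associativity of $\rond$ together with $(\alpha^u)^{-1} \rond \alpha^u = \id_{\idv_X}$.

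Verifying the first equation $\alpha_u \vrond \alpha^u = \inv{\lambda_u} \rond \rho_u$ is the main step. Apply the interchange law after trivially decomposing $\alpha^u = \alpha^u \rond \id_{\idv_X}$, obtaining $\alpha_u \vrond \alpha^u = (\idv_{\bar u} \vrond \alpha^u) \rond ((\alpha^u)^{-1} \vrond \id_{\idv_X})$. Each of the two factors differs from $\alpha^u$ (resp.\ $(\alpha^u)^{-1}$) only by vertical unitors, so using $(\alpha^u)^{-1} \rond \alpha^u = \id_{\idv_X}$ together with the coherence laws of $\Dv$ for the unitors $\rho$ and $\lambda$ rewrites the composite in the canonical form $\inv{\lambda_u} \rond \rho_u$. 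The main obstacle is the careful bookkeeping for pseudoness, which is exactly why the coherence isomorphisms $\rho$ and $\lambda$ enter the stated equation explicitly rather than being absorbed into equality.
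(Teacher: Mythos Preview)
Your proof is correct and takes the same approach as the paper: the paper's entire proof is the single line ``Pose $\alpha_u = \idv_{\bar{u}} \rond \inv{(\alpha^u)}$,'' which is exactly your definition, with both equations left to the reader. Your verification of the first equation via interchange and unitor naturality (yielding $\lambda_u^{-1} \rond \alpha^u \rond (\lambda_{\idv_X} \rond \rho_{\idv_X}^{-1}) \rond (\alpha^u)^{-1} \rond \rho_u$, then using $\lambda_{\idv_X} = \rho_{\idv_X}$) is the natural way to fill this in and is correct.
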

\begin{proof}
  Pose $\alpha_u = \idv_{\bar{u}} \rond \inv{(\alpha^u)}$.
\end{proof}
\begin{lem}\label{lem:hv}
  If $b \colon d \proto d'$ has length $0$, then $d = d'$, $\bar{b} =
  \id_d$, and $\alpha_b$ and $\alpha^b$ are horizontal inverses.
\end{lem}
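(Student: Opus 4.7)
The plan is to apply Axiom~\axref{atomicity} to obtain an isomorphism $\alpha^b \colon \idv_d \to b$ in $\DH$, and then use the discreteness part of Axiom~\axref{discreteness} to conclude that its horizontal codomain $\bar b \colon d \to d'$ is an identity, which immediately yields $d = d'$ and $\bar b = \id_d$. Concretely, since $\DI \hookrightarrow \Dh$ is full and $\DI$ is discrete, the only morphisms in $\Dh$ between individuals are identities on a single object, so $\bar b$ is forced to collapse.

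Once the first two claims are established, the second half of Lemma~\ref{lem:alphau} gives $\alpha_b \rond \alpha^b = \idv_{\bar b}$, which under $\bar b = \id_d$ is precisely the identity morphism of $\idv_d$ in $\DH$. So $\alpha_b$ is a right inverse of $\alpha^b$ in $\DH$. But Axiom~\axref{atomicity} already asserts that $\alpha^b$ is an isomorphism in $\DH$, and in any category a right inverse of an isomorphism coincides with its (two-sided) inverse. Hence $\alpha^b \rond \alpha_b = \idv_b$ as well, so that $\alpha_b$ and $\alpha^b$ are horizontal inverses in $\DH$.

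The only slightly delicate step is bookkeeping: one must verify that $\idv_{\bar b}$, when $\bar b = \id_d$, really is the identity morphism of $\idv_d$ in $\DH$. This is a standard consequence of the pseudo double category conventions (the horizontal identity double cell on a horizontal identity morphism is the identity in $\DH$ of the corresponding vertical identity), and requires no further axiom. The main obstacle is simply keeping track of the various senses of ``identity'' involved and resisting the temptation to reprove Lemma~\ref{lem:alphau} along the way.
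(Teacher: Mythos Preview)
Your argument is correct and is essentially an unpacking of the paper's one-line proof (``By~\axref{discreteness}''): discreteness of $\DI$ forces $\bar b = \id_d$, and then the definition $\alpha_b = \idv_{\bar b} \rond (\alpha^b)^{-1}$ from Lemma~\ref{lem:alphau} collapses to $\alpha_b = (\alpha^b)^{-1}$. One small slip: from $\alpha_b \rond \alpha^b = \id_{\idv_d}$ you should call $\alpha_b$ a \emph{left} inverse of $\alpha^b$, not a right inverse---but since $\alpha^b$ is already an isomorphism this does not affect the conclusion.
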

\begin{proof}
  By~\axref{discreteness}.
\end{proof}

\begin{lem}\label{lem:B0isos}
  $\DB_0$ (Definition~\ref{def:DB0}) is a groupoid.
\end{lem}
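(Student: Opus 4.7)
The plan is to show that every morphism $\alpha \colon u \to v$ in $\DB_0$ is invertible, by case analysis on the target $v$, which is either a basic move or a length-$0$ vertical morphism between individuals.

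First I would treat the case where $v$ is a basic move. By~\axref{individuality}, its vertical domain $Y = \dom(v)$ lies in $\DI$; by~\axref{discreteness}, $\DI$ is discrete, so the slice $\DI/Y$ is the terminal category. By~\axref{ax:views}, the domain functor $\DB_0/v \to \DI/Y$ is an equivalence, so $\DB_0/v$ is itself equivalent to the terminal category, in which any two objects are uniquely isomorphic. Applied to $\alpha$ and $\id_v$ (both viewed as objects of $\DB_0/v$), this yields an isomorphism in $\DB_0/v$; since an isomorphism in a slice category is automatically an isomorphism in the base, this gives an inverse to $\alpha$ in $\DB_0$.

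Next I would treat the case where $v$ has length $0$ between individuals. By~\axref{atomicity}, $u$ must also have length $0$ and hence is itself a length-$0$ vertical morphism between individuals. By Lemma~\ref{lem:hv}, $u \colon d \proto d$ and $v \colon d' \proto d'$ for some individuals $d, d'$, with horizontal inverses $\alpha^u \colon \idv_d \to u$, $\alpha_u \colon u \to \idv_d$, and symmetrically for $v$. Since $\dom(\alpha) \colon d \to d'$ lies in $\Dh$ and $\DI$ is discrete, $d = d'$. Then the composite $\alpha_v \rond \alpha \rond \alpha^u \colon \idv_d \to \idv_d$ is an endomorphism of $\idv_d$ in $\DH$, hence equals $\id_{\idv_d}$ by~\axref{discreteness}. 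Composing horizontally on the left by $\alpha^v$ and on the right by $\alpha_u$ yields $\alpha = \alpha^v \rond \alpha_u$, whose inverse is $\alpha^u \rond \alpha_v$.

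The main point of delicacy is that the first case a priori also allows a cell from a length-$0$ source $u$ to a basic target $v$, but the argument still produces an isomorphism $\alpha$ in $\DB_0$, and hence in $\DH$; applying~\axref{atomicity} to its inverse would then force $\length{v} = 0$, contradicting that $v$ is basic. So this sub-case is in fact vacuous and does not affect the overall conclusion that every morphism of $\DB_0$ is invertible.
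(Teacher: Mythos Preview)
Your proof is correct and follows essentially the same approach as the paper: case analysis on the target, using~\axref{ax:views} with discreteness of $\DI$ for the basic-move case, and sandwiching $\alpha$ between the $\alpha^{-}$ and $\alpha_{-}$ isomorphisms to reduce to~\axref{discreteness} in the length-$0$ case. Your final paragraph about the mixed sub-case is unnecessary: once $v$ is a basic move, the argument via the equivalence $\DB_0/v \simeq \DI/\dom(v)$ already shows $\alpha$ is an isomorphism regardless of what $u$ is, so no further analysis is needed.
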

\begin{proof}
  This means that any $\alpha \colon b \to b'$ in $\DB_0$ is an
  isomorphism.  Let $b \colon d_1 \proto d_2$ and $b' \colon d'_1
  \proto d'_2$. Existence of $\alpha$ entails $d_1 = d'_1$ and $d_2 =
  d'_2$, by~\axref{discreteness}.

  If $b' \in \DB$, then $\alpha$ and $\id_{b'}$ are both
  mapped by $\dom \colon \DB_0 / b' \to \DI / d'_1$ to $\dom(\alpha) =
  \id_{d'_1}$. By~\axref{ax:views}, there is thus a unique
  isomorphism $\gamma \colon b \to b'$ in $\DH$ such that $\id_{b'}
  \rond \gamma = \alpha$, i.e., $\gamma = \alpha$. This shows that
  $\alpha$ is an iso.
  
  If $b'$ has length $0$, then by~\axref{atomicity} we furthermore
  have $\length{b} = 0$ and $d_1 = d_2 = d'_1 = d'_2$. Moreover, the
  composite $\alpha_{b'} \rond \alpha \rond \alpha^b$ (with
  $\alpha_{b'}$ and $\alpha^{b}$ as in Lemma~\ref{lem:alphau}
  and~\axref{atomicity}) is an endomorphism of $\idv_{d_1}$, hence
  $\id_{\idv_{d_1}}$ by~\axref{discreteness}.  It is thus an
  isomorphism, hence so is $\alpha^{b'} \rond \alpha_{b'} \rond \alpha
  \rond \alpha^b \rond \alpha_b$, which is equal to $\alpha$
  by two applications of Lemma~\ref{lem:hv}.
\end{proof}

\begin{lem}\label{lem:auto}
  In any category $\C$, for any object $c$ isomorphic to an object $d$ such
  that $d$ has no non-trivial endomorphisms, $c$ does not have any
  non-trivial endomorphisms either.
\end{lem}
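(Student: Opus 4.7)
The plan is to transport endomorphisms of $c$ along the given isomorphism to endomorphisms of $d$, use the hypothesis on $d$, and transport back. Concretely, fix an isomorphism $\phi \colon c \to d$ with inverse $\psi \colon d \to c$, so that $\psi \rond \phi = \id_c$ and $\phi \rond \psi = \id_d$. Given any endomorphism $f \colon c \to c$, form the composite $g = \phi \rond f \rond \psi \colon d \to d$.

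By hypothesis $d$ has no non-trivial endomorphisms, so $g = \id_d$. Pre-composing with $\psi$ and post-composing with $\phi$, we get
\[
\psi \rond g \rond \phi = \psi \rond \phi \rond f \rond \psi \rond \phi = \id_c \rond f \rond \id_c = f,
\]
while on the other hand $\psi \rond \id_d \rond \phi = \psi \rond \phi = \id_c$. Hence $f = \id_c$, showing that $c$ has no non-trivial endomorphisms either.

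There is no real obstacle here; the argument is routine conjugation by an isomorphism, and I would simply write it out as above. The only minor point worth noting is that we use both sides of the isomorphism relation ($\psi \rond \phi = \id_c$ and $\phi \rond \psi = \id_d$), which is why the hypothesis asks for an honest isomorphism rather than merely a retract.
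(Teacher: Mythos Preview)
Your proof is correct and is essentially the same argument as the paper's, just spelled out explicitly: the paper compresses the conjugation-by-isomorphism step into the one-liner $\C(c,c) \iso \C(c,d) \iso \C(d,d) \iso 1$ (attributed to Yoneda), which is precisely the bijection $f \mapsto \phi \rond f \rond \psi$ you write down.
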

\begin{proof}
  By the Yoneda lemma, we have $\C(c,c) \iso \C(c,d) \iso \C(d,d) \iso
  1$.
\end{proof}

\begin{lem}
   Any groupoid $\C$ whose objects have no non-trivial endomorphisms
   is an equivalence relation.
\end{lem}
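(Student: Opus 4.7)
The plan is to show that $\mathcal{C}$ is \emph{thin}, i.e., that between any two objects there is at most one morphism. Combined with the groupoid hypothesis, this is exactly what it means for $\mathcal{C}$ to be (the category associated to) an equivalence relation on $\mathrm{ob}(\mathcal{C})$: reflexivity is witnessed by identities, symmetry by the existence of inverses, transitivity by composition, and the absence of parallel pairs ensures we are looking at a relation rather than a more general category.

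The verification of thinness is immediate. Given any two parallel morphisms $f, g \colon c \to d$ in $\mathcal{C}$, since $\mathcal{C}$ is a groupoid, $g$ admits an inverse $g^{-1} \colon d \to c$. Then $g^{-1} \circ f$ is an endomorphism of $c$, hence equal to $\mathrm{id}_c$ by hypothesis. Composing on the left with $g$ yields $f = g$, as required.

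There is no real obstacle here; the statement essentially unpacks the definitions. The only thing to be a little careful about is making explicit, in the write-up, what is meant by ``equivalence relation'' for a category — namely a groupoid that is also a preorder (thin) — so that the reader sees immediately that the content of the lemma is precisely thinness, and that thinness follows in one line from the endomorphism hypothesis applied to $g^{-1} \circ f$.
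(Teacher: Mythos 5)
Your proof is correct and is essentially the paper's argument: both reduce the lemma to thinness and deduce it from the endomorphism hypothesis, the paper phrasing the same cancellation via a Yoneda-style bijection $\C(c,d) \iso \C(c,c) \iso 1$ while you spell it out directly as $g^{-1} \circ f = \id_c$.
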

\begin{proof}
  For any objects $c$ and $d$, we have that if $\C(c,d)$ is non-empty
  then $c$ and $d$ are isomorphic, so by Yoneda $\C(c,d) \iso \C(c,c)
  \iso 1$.
\end{proof}

\begin{cor}\label{cor:B0eqrel}
$\DB_0$ is an equivalence relation.  
\end{cor}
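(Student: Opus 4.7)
The plan is to combine the immediately preceding two lemmas. By Lemma~\ref{lem:B0isos}, $\DB_0$ is already known to be a groupoid; and the lemma just before the corollary says that any groupoid whose objects have no non-trivial endomorphisms is an equivalence relation. So the only thing left to verify is that every object of $\DB_0$ has no non-trivial endomorphism in $\DB_0$.

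Since $\DB_0$ is a full subcategory of $\DH$, it suffices to rule out non-trivial endomorphisms in $\DH$. The objects of $\DB_0$ come in two flavours, and I would handle them separately. First, for a basic move $b$, any endomorphism $\alpha \colon b \to b$ in $\DH$ is an isomorphism by Lemma~\ref{lem:B0isos}, hence an automorphism; and by the first clause of \axref{discreteness}, basic moves have no non-trivial automorphisms in $\DH$. Second, for a length-$0$ morphism $u \colon X \proto Y$ between individuals, the last clause of \axref{atomicity} provides an isomorphism $\idv_X \to u$ in $\DH$; the second clause of \axref{discreteness} says that $\idv_X$ has no non-trivial endomorphisms; and hence, by Lemma~\ref{lem:auto}, neither does $u$.

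Putting the two cases together, every object of $\DB_0$ has no non-trivial endomorphism, so the preceding lemma applies and gives the conclusion. The proof is entirely routine, and no step stands out as a serious obstacle: the only mild subtlety is noting that fullness of $\DB_0 \hookrightarrow \DH$ lets one invoke the axioms, which are stated in $\DH$, to control endomorphisms in $\DB_0$.
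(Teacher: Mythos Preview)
Your proof is correct and follows essentially the same route as the paper's: show that every object of $\DB_0$ has no non-trivial automorphism (handling basic moves via~\axref{discreteness}, and length-$0$ morphisms via~\axref{atomicity} plus Lemma~\ref{lem:auto}), then invoke the preceding lemma on groupoids. You are simply more explicit than the paper, which compresses the two cases into a single sentence and does not name~\axref{discreteness} even though it is needed.
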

This adds to Lemma~\ref{lem:B0isos} that there is at most one morphism
between any two objects.
  \begin{proof}
    By Lemma~\ref{lem:auto} and~\axref{atomicity}, its objects have no
    non-trivial automorphisms, which in a groupoid is the same as
    having no non-trivial endomorphisms. By the last result, $\DB_0$
    is an equivalence relation.
  \end{proof}

This leads to a better understanding of $\DVi$.
\begin{lem}\label{lem:preorder}
  Consider any morphism of views $\alpha \colon v \to v'$, with
  isomorphisms $\gamma \colon (b_1 \vrond \ldots \vrond b_n) \to v$
  and $\gamma' \colon v' \to (b'_1 \vrond \ldots \vrond b'_{n'})$, for
  basic moves $b_i \colon d_i \proto d_{i-1}$ and $b'_j \colon d'_j
  \proto d'_{j-1}$ for all $i \in n$ and $j \in n'$. We have $n = n'$,
  $d_{i-1} = d'_{i-1}$ for all $i \in n+1$, and there exist unique
  isomorphisms $\alpha_i \colon b_i \to b'_i$ such that $\gamma' \rond
  \alpha \rond \gamma = (\alpha_n \vrond \ldots \vrond
  \alpha_1)$, as in
\begin{mathpar}
  \diag(.6,4){%
   |(dn)| d_n \&|(d'n)| d'_n \\
   |(dn1)| d_{n-1} \&|(d'n1)| d'_{n-1} \\
   |(vdots)| \vdots \&|(vdotsi)| \vdots \\
   |(d1)| d_1 \&|(d'1)| d'_1 \\
   |(d0)| d_0 \&|(d'0)| d'_0 %
  }{%
    (dn) edge[labell={b_n}] (dn1) %
    edge (d'n) %
    edge[bend left=50,twol={}] node[pos=.3,right] {$\scriptstyle v$} (d0) %
    (d'n) edge[labelr={b'_n}] (d'n1) %
    edge[bend right=50,twor={}] node[pos=.3,left] {$\scriptstyle v'$} (d'0) %
    (d0) edge (d'0) %
    (d1) edge[labell={b_1}] (d0) %
    (d'1) edge[labelr={b'_1}] (d'0) %
    (l) edge[cell={.4},labela={\alpha}] (r) %
    (vdots) edge[cell={.2},labela={\gamma}] (l) %
    (r) edge[cell={.2},labela={\gamma'}] (vdotsi) %
  }
\and  = \and
  \diag(.6,2){%
   |(dn)| d_n \&|(d'n)| d'_n \\
   |(dn1)| d_{n-1} \&|(d'n1)| d'_{n-1} \\
   |(vdots)| \vdots \&|(vdotsi)| \vdots \\
   |(d1)| d_1 \&|(d'1)| d'_1 \\
   |(d0)| d_0 \&|(d'0)| d'_0. %
  }{%
    (dn) edge[twol={b_n}] (dn1) %
    edge[identity] (d'n) %
    (d'n) edge[twor={b'_n}] (d'n1) %
    (d0) edge[identity] (d'0) %
    (d1) edge[twoleft={b1}{b_1}] (d0) %
    (d'1) edge[tworight={b'1}{b'_1}] (d'0) %
    (dn1) edge[identity] (d'n1) %
    (l) edge[cell={1},labela={\alpha_n}] (r) %
    (b1) edge[cell={1},labela={\alpha_1}] (b'1) %
    (d1) edge[identity] (d'1) %
  }
\end{mathpar}
\end{lem}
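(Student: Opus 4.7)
My plan is to deduce this from Lemma~\ref{lem:big} combined with a rigidity principle extracted from~\axref{ax:views}, \axref{discreteness}, and~\axref{atomicity}.  Applying Lemma~\ref{lem:big} to $\alpha$ viewed as a cell $v \vrond \idv_{d_n} \to v'$ (treating each $b'_k$ as a general move) yields a strictly monotone map $f \colon \ens{0, \ldots, n} \to \ens{0, \ldots, n'}$ with $f(0) = 0$, cells $\alpha'_k \colon \bar{b}_k \to b'_k$ for $1 \le k \le f(n)$ (with $\bar{b}_k = b_i$ whenever $f(i) = k$, and otherwise $\bar{b}_k = \idv_{d_{i-1}}$ for the least $i$ with $f(i) > k$, in either case an object of $\DB_0$), and a leftover cell $\beta \colon \idv_{d_n} \to (b'_{f(n)+1} \vrond \cdots \vrond b'_{n'})$, whose vertical composite reconstructs $\gamma' \rond \alpha \rond \gamma$ up to coherence.

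The main obstacle is to pin $f$ down as the identity map.  Both equalities $n = n'$ and $f = \id$ boil down to one rigidity fact: no cell $\phi \colon \idv_d \to b$ can exist with $d \in \DI$ and $b$ a basic move of length one.  Indeed the horizontal sides of such a $\phi$ are $\Dh$-morphisms between individuals, hence identities by~\axref{discreteness}; therefore $\phi$ lies in $\DB_0 / b$ above $\id \in \DI / \dom(b)$, and the equivalence supplied by~\axref{ax:views} identifies $\phi$ with $\id_b$ in the slice, unfolding to make $\phi$ itself an isomorphism $\idv_d \xto{\simeq} b$ in $\DH$; its inverse contradicts~\axref{atomicity} since $\length{\idv_d} = 0$ while $\length{b} = 1$.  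This directly rules out every `skipped' $k \in \ens{1, \ldots, f(n)} \setminus \im f$, for which the big lemma already supplies $\alpha'_k$ of the forbidden shape (note $d_{i-1}$ is individual by~\axref{individuality}).  To eliminate the possibility $f(n) < n'$, apply~\axref{leftdecomposition} to $\beta$ with split $w_1 = b'_{n'}$ and $w_2 = b'_{f(n)+1} \vrond \cdots \vrond b'_{n'-1}$ to extract a sub-cell $\beta^\ast \colon u^\ast \to b'_{n'}$ with $u^\ast$ of length zero and vertical domain $d_n$; horizontally composing $\beta^\ast$ with the special isomorphism $\alpha^{u^\ast} \colon \idv_{d_n} \xto{\simeq} u^\ast$ furnished by~\axref{atomicity} produces a cell $\idv_{d_n} \to b'_{n'}$, manifestly an object of $\DB_0 / b'_{n'}$, to which the rigidity applies and yields the same contradiction.

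With $f = \id$ established, $\beta$'s codomain is an empty vertical composite so $\beta$ collapses to a coherence isomorphism via~\axref{discreteness}, and the big-lemma equation becomes $\gamma' \rond \alpha \rond \gamma = \alpha'_n \vrond \cdots \vrond \alpha'_1$ with each $\alpha'_k$ a morphism $b_k \to b'_k$ in $\DB_0$.  Lemma~\ref{lem:B0isos} then makes each $\alpha'_k$ an isomorphism, and Corollary~\ref{cor:B0eqrel} makes it the unique such morphism, so setting $\alpha_k := \alpha'_k$ supplies both existence and uniqueness of the $\alpha_k$'s in the statement.  Finally the horizontal sides of each $\alpha'_k$ are $\Dh$-morphisms between the individuals $d_i, d'_i$ and $d_{i-1}, d'_{i-1}$, forced to be identities by discreteness of $\DI$, which yields $d_{i-1} = d'_{i-1}$ for all $i \in \ens{1, \ldots, n+1}$.
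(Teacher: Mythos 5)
Your proof is correct and takes essentially the same route as the paper's: apply Lemma~\ref{lem:big} with $w = \idv_{d_n}$, force $f$ to be a bijection via the rigidity of $\DB_0$ (no morphisms between vertical identities and basic moves, by Corollary~\ref{cor:B0eqrel} together with~\axref{atomicity}), and then read off existence and uniqueness of the $\alpha_i$ from $\DB_0$ being an equivalence relation. The only differences are ones of detail: you re-derive the rigidity fact directly from~\axref{ax:views} and~\axref{discreteness} instead of simply citing Lemma~\ref{lem:B0isos}, and you make explicit, using~\axref{leftdecomposition} plus the length-zero isomorphism of~\axref{atomicity}, the step $f(n) = n'$ that the paper's one-line proof leaves implicit.
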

\begin{proof}
Applying Lemma~\ref{lem:big} with $w = \idv_{d_n}$ yields $f \colon n \to n'$ which by
Corollary~\ref{cor:B0eqrel} and~\axref{atomicity} has to be a bijection. This yields
the desired $\alpha_i$'s, which are unique by Corollary~\ref{cor:B0eqrel} again.
\end{proof}

This entails:
\begin{cor}\label{cor:Vpreordergroupoid}
  $\DVi$ is an equivalence relation, compatible with length.
\end{cor}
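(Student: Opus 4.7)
The plan is to derive this corollary as an essentially immediate consequence of Lemma~\ref{lem:preorder}, combined with the previously established fact that $\DB_0$ is an equivalence relation (Corollary~\ref{cor:B0eqrel}). Given any morphism $\alpha \colon v \to v'$ in $\DVi$, I would start by fixing special isomorphisms $\gamma$ and $\gamma'$ witnessing decompositions of $v$ and $v'$ into basic moves $b_1,\ldots,b_n$ and $b'_1,\ldots,b'_{n'}$, as in the hypothesis of Lemma~\ref{lem:preorder}. The lemma then delivers three pieces of information: the equality $n = n'$, the pointwise equality of the intermediate objects $d_{i-1} = d'_{i-1}$, and unique isomorphisms $\alpha_i \colon b_i \to b'_i$ in $\DB_0$ whose vertical composite, conjugated by $\gamma$ and $\gamma'$, equals $\alpha$.

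From this I would conclude that $\alpha$ is an isomorphism: each $\alpha_i$ lives in $\DB_0$, which is a groupoid by Lemma~\ref{lem:B0isos}, so each $\alpha_i$ has an inverse; vertically composing these inverses (and conjugating by $\inv{\gamma'}$ and $\inv\gamma$) exhibits an inverse for $\alpha$ in $\DVi$. Hence $\DVi$ is a groupoid.

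To upgrade this from a groupoid to an equivalence relation, I would use Corollary~\ref{cor:B0eqrel}, which asserts that $\DB_0$ has at most one morphism between any two objects. The uniqueness clause of Lemma~\ref{lem:preorder} then implies that the family $(\alpha_i)_i$ reconstructing $\alpha$ is entirely determined by $v$ and $v'$ (via the chosen decompositions), so $\alpha$ itself is the unique morphism $v \to v'$. Thus $\DVi(v,v')$ has at most one element, which together with the groupoid structure makes $\DVi$ an equivalence relation.

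Compatibility with length is then immediate: if $\DVi(v,v')$ is nonempty, the equality $n = n'$ produced by Lemma~\ref{lem:preorder} says exactly that $\length{v} = \length{v'}$. I do not foresee any real obstacle: the genuine content has already been absorbed into Lemma~\ref{lem:preorder} and Corollary~\ref{cor:B0eqrel}, and the corollary is essentially a bookkeeping consequence.
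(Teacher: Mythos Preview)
Your proposal is correct and matches the paper's approach: the corollary is stated immediately after Lemma~\ref{lem:preorder} with only ``This entails:'' as justification, so the paper treats it as an immediate consequence of that lemma together with Corollary~\ref{cor:B0eqrel}, exactly as you unpack it. Your argument spells out the details the paper leaves implicit (invertibility from the $\alpha_i$ being isos in $\DB_0$, uniqueness from $\DB_0$ being an equivalence relation, length compatibility from $n=n'$), and there is nothing to add.
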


Here is an analogue of~\axref{ax:views} for general plays and views
instead of just moves and basic moves.
\begin{prop}\label{prop:views}
For any $y \colon d \to Y$ in $\Dh$ with $d\in \DI$,
    and any $u \colon Y \proto X$ in $\Dv$, there exists a cell
\begin{center}
  \diag{%
    |(d)| d \& |(Y)| Y \\
    |(dMy)| d^{y,u} \& |(X)| X, %
  }{%
    (d) edge[labelu={y}] (Y) %
    edge[pro,dashed,twol={v^{y,u}}] (dMy) %
    (Y) edge[pro,twor={u}] (X) %
    (dMy) edge[dashed,labeld={y^u}] (X) %
    (l) edge[cell=.3,dashed,labelu={\scriptstyle \alpha^{y,u}}] (r) %
  }
\end{center}
with $v^{y,u}$ a view, which is unique up to canonical isomorphism of
such.   
\end{prop}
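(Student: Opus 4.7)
I would prove this by induction on $p := \length{u}$, using \axref{atomicity} to fix, at each stage, a special isomorphism decomposing $u$ into $p$ moves. Existence will be obtained by applying \axref{ax:views} iteratively, and uniqueness by appealing to the uniqueness clause of \axref{ax:views} together with \axref{views:decomp}.

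For $p = 0$: \axref{atomicity} supplies a horizontal $\bar{u} \colon Y \to X$ and a special iso $\alpha^u \colon \idv_Y \to u$ over $\id_Y$ and $\bar{u}$. I take $v^{y,u} := \idv_d$ and $y^u := \bar{u} \rond y$, and let $\alpha^{y,u}$ be the horizontal pasting of the identity cell on $y$ with $\alpha^u$ along $\idv_Y$. For uniqueness in this case, \axref{atomicity} forces any competing view $v$ to have length $0$, and Corollary~\ref{cor:Vpreordergroupoid} then identifies $v$ canonically with $\idv_d$ in $\DVi$, from which the remaining data are determined.

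For $p > 0$: fix a special iso $\gamma \colon u \to M \vrond u'$ with $M \colon Y' \proto X$ a move and $u' \colon Y \proto Y'$ of length $p - 1$. Apply the induction hypothesis to $(y, u')$ to obtain $(v^{y,u'}, y^{u'}, \alpha^{y,u'})$, then apply \axref{ax:views} to $(y^{u'}, M)$ to obtain $(v^{y^{u'},M}, (y^{u'})^M, \alpha^{y^{u'},M})$ with $v^{y^{u'},M} \in \DB_0$. Set $v^{y,u} := v^{y^{u'},M} \vrond v^{y,u'}$, which is again a view since $\DVi$ is closed under vertical composition with members of $\DB_0$, then $y^u := (y^{u'})^M$, and obtain $\alpha^{y,u}$ by vertical pasting followed by composition with $\gamma^{-1}$ in $\DH$. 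For uniqueness in the inductive step, given any alternative tuple $(v, y', \alpha)$, decompose $v \cong b \vrond v''$ with $b \in \DB_0$ (the case $\length{v} = 0$ being ruled out by \axref{atomicity} since $p > 0$), transport $\alpha$ through $\gamma$ and the decomposition to a cell $b \vrond v'' \to M \vrond u'$, and apply \axref{views:decomp}: in either branch, the uniqueness clauses of \axref{ax:views} (combined with Corollary~\ref{cor:B0eqrel} to canonically identify $\DB_0$-objects) and of the induction hypothesis yield componentwise canonical isomorphisms, which paste, via the interchange law, into the required canonical isomorphism of tuples.

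The main obstacle is the right-hand branch of \axref{views:decomp}, where the top basic move of $v$ is absorbed into the $u'$-portion rather than matched with $M$: one must then reconcile this alternative decomposition with the canonical one produced by the existence argument, and managing the associators and coherence cells in $\Dv$ throughout the pasting arguments requires careful bookkeeping.
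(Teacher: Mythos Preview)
Your existence argument is essentially the paper's: decompose $u$ into moves via \axref{atomicity} and iterate \axref{ax:views}. For uniqueness, however, the paper takes a much lighter route than you do. Having already established Corollary~\ref{cor:Vpreordergroupoid} (that $\DVi$ is an equivalence relation, so there is at most one morphism between any two views), it suffices to exhibit \emph{some} isomorphism between two candidate tuples; this is done by repeating the essential-uniqueness clause of \axref{ax:views} along the decomposition of $u$. There is no appeal to \axref{views:decomp}, and hence no branching to manage.

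Your uniqueness argument, by contrast, contains a genuine error. You write that ``the case $\length{v} = 0$ [is] ruled out by \axref{atomicity} since $p > 0$'', but \axref{atomicity} only asserts the implication $\length{u} = 0 \Rightarrow \length{v} = 0$, not its converse. A candidate view $v$ of length $0$ with $\length{u} > 0$ is perfectly possible: it occurs exactly when the player $y$ remains inert throughout $u$, so that each $v^{y^{u'},M}$ produced in your existence construction is a length-$0$ object of $\DB_0$. In that situation you cannot write $v \cong b \vrond v''$ with $b$ a basic move, and \axref{views:decomp}, which requires $b$ to be basic, simply does not apply. The case must be treated, not dismissed. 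A correct fix would apply \axref{leftdecomposition} to $\idv_d \to M \vrond u'$ and then use the uniqueness clause of \axref{ax:views} to force the constructed $v^{y^{u'},M}$ to have length $0$ as well, whence the induction hypothesis finishes the job; but this is extra work your sketch does not account for, and it is precisely the sort of bookkeeping that the paper's appeal to Corollary~\ref{cor:Vpreordergroupoid} sidesteps.
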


\begin{proof}
  We find $v^{y,u}$ by repeated application of~\axref{ax:views}.  For
  essential uniqueness, by repeated application of~\axref{ax:views}, 
  we find an isomorphism between any two such views, which 
  by Corollary~\ref{cor:Vpreordergroupoid} is unique.
\end{proof}

We continue with an analogue of~\axref{views:decomp}:
\begin{prop}\label{prop:views:decomp}
  Any double cell
  \begin{center}
      \Diag{%
    \twocellbr{B}{A}{X}{\alpha} %
  }{%
    |(A)| A \& |(X)| X \\
     |(B)| B \& |(Y)| Y \\
    |(C)| C \& |(Z)| Z, %
  }{%
    (A) edge[labelu={h}] (X) %
    edge[labell={w}] (B) %
    (B) edge[labell={v}] (C) %
    (X) edge[labelr={u}] (Y) %
    (Y) edge[labelr={u'}] (Z) %
    (C) edge[labeld={k}] (Z) %
  }
  \end{center}
where $v$ is a view, decomposes in exactly one of the following forms:
  \begin{center}
    \begin{minipage}[t]{0.35\linewidth}
      \centering \Diag (.5,.25) {%
        \twocell[.4][.2]{A'}{A}{X}{}{celllr={0.0}{0.0},bend
          right=10,labelbr={\alpha_1}} %
        \twocell[.4][.2]{B}{A'}{Y}{}{celllr={0.0}{0.0},bend
          right=20,labelbr={\alpha_2}} %
        \twocell[.2][.2]{C}{B}{Y'}{}{celllr={0.0}{0.0},bend
          right=20,labelbr={\alpha_3}} %
        \twocell[.2][.4]{B}{A}{A'}{}{celllr={0.0}{0.0},bend
          right=10,labeld={\alpha_4}} %
        \twocell[.4][.2]{Y'}{Y}{Z}{}{celllr={0.0}{0.0},bend
          right=10,labeld={\alpha_5}} %
      }{%
        |(A)| A \& \&\&\&\&\&\& \& |(X)| X \\
        \& |(A')| A' \&\&\&\&\&\& \&  \\
        |(B)| B \& \&\&\&\&\&\& \& |(Y)| Y \\
        \& \&\&\&\&\&\& |(Y')| Y' \&  \\
        |(C)| C \& \&\&\&\&\&\& \& |(Z)| Z %
      }{%
        (A) edge[pro] (B) %
        edge[pro] (A') %
        edge (X) %
        (A') edge[pro] node[pos=.5,inner sep=0pt,anchor=south east]
        {$\scriptstyle w_2$} (B) %
        edge (Y) %
        (B) edge (Y') %
        edge[pro] (C) %
        (C) edge (Z) %
        (X) edge[pro] (Y) %
        (Y) edge[pro] node[pos=.5,inner sep=0pt,anchor=south east]
        {$\scriptstyle u'_1$} (Y') %
        edge[pro] (Z) %
        (Y') edge[pro] (Z) %
      } 
    \end{minipage}
    \hfil
    \begin{minipage}[t]{0.25\linewidth}
      \centering
      \Diag (1.5,1.5) {%
        \twocellbr{B}{A}{X}{\alpha_1} %
        \twocellbr{C}{B}{Y}{\alpha_2} %
      }{%
        |(A)| A \& |(X)| X \\
        |(B)| B \& |(Y)| Y \\
        |(C)| C \& |(Z)| Z %
      }{%
        (A) edge (X) %
        edge[pro] (B) %
        (B) edge[pro] (C) %
        edge (Y) %
        (X) edge[pro] (Y) %
        (Y) edge[pro] (Z) %
        (C) edge (Z) %
      }
    \end{minipage}
    \hfil
    \begin{minipage}[t]{0.35\linewidth}
      \centering \Diag (.5,.25) {%
        \twocell[.2][.2]{B}{A}{X}{}{celllr={0.0}{0.0},bend
          right=20,labelbr={\alpha_1}} %
        \twocell[.4][.2]{B'}{B}{X'}{}{celllr={0.0}{0.0},bend
          right=20,labelr={\alpha_2}} %
        \twocell[.4][.2]{C}{B'}{Y}{}{celllr={0.0}{0.0},bend
          right=30,labelbr={\alpha_3}} %
        \twocell[.2][.4]{C}{B}{B'}{}{celllr={0.0}{0.0},bend
          right=10,labeld={\alpha_4}} %
        \twocell[.4][.2]{X'}{X}{Y}{}{celllr={0.0}{0.0},bend
          right=10,labeld={\alpha_5}} %
      }{%
        |(A)| A \& \&\&\&\&\&\& \& |(X)| X \\
        \&  \&\&\&\&\&\& |(X')| X' \&  \\
        |(B)| B \& \&\&\&\&\&\& \& |(Y)| Y \\
        \& |(B')| B' \&\&\&\&\&\&  \&  \\
        |(C)| C \& \&\&\&\&\&\& \& |(Z)| Z %
      }{%
        (A) edge[pro] (B) %
        edge (X) %
        (B) edge (X') %
        edge[pro] node[pos=.7,inner sep=0pt,anchor=south west]
        {$\scriptstyle v_1$} (B') %
        edge[pro] (C) %
        (B') edge (Y) %
        edge[pro] (C) %
        (C) edge (Z) %
        (X) edge[pro] (X') %
        edge[pro] (Y) %
        (X') edge[pro,labelbl={u_2}] (Y) %
        (Y) edge[pro] (Z) %
      }
    \end{minipage}
  \end{center}
  with $\length{w_2} > 0$, $\length{v_1} > 0$, and $\alpha_4$ and
  $\alpha_5$ iso in $\DH$.
\end{prop}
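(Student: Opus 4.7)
The strategy is to apply Lemma~\ref{lem:big} directly to $\alpha$, viewing $u \vrond u'$ as a single composite play on the target side. First, using the definition of view, fix a decomposition $v \simeq b_1 \vrond \cdots \vrond b_n$ into basic moves; using~\axref{atomicity} applied separately to $u$ and $u'$ together with pseudo-associativity of $\Dv$, fix a decomposition $u \vrond u' \simeq M_1 \vrond \cdots \vrond M_p$ in which, say, the first $p_2 = \length{u'}$ moves $M_k$ constitute $u'$ and the remaining $p - p_2 = \length{u}$ constitute $u$. Lemma~\ref{lem:big} then yields a unique strictly monotone map $f \colon n \to p$ with $f(0)=0$, cells $\alpha_k \colon \bar b_k \to M_k$ for $1 \leq k \leq f(n)$, and a cell $\beta \colon w \to M_{f(n)+1} \vrond \cdots \vrond M_p$ whose vertical composite recovers $\alpha$ up to the chosen special isomorphisms.

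The three cases of the proposition correspond exactly to the trichotomy $f(n) < p_2$, $f(n) = p_2$, $f(n) > p_2$, which is exhaustive, and mutually exclusive because $f(n)$ is uniquely determined by $\alpha$ and our fixed decompositions. If $f(n) = p_2$, the cells $\alpha_1, \ldots, \alpha_{p_2}$ assemble -- after absorbing the identity cells $\bar b_k$ for $k \notin \im f$ via Lemma~\ref{lem:hv} and~\axref{atomicity} -- into a cell $v \to u'$, and $\beta$ is precisely a cell $w \to u$: this is the middle decomposition. If $f(n) < p_2$, the same assembly produces a cell $v \to u'_2$ for $u'_2 := M_1 \vrond \cdots \vrond M_{f(n)}$, while the target of $\beta$ regroups as $u'_1 \vrond u$ with $u'_1 := M_{f(n)+1} \vrond \cdots \vrond M_{p_2}$ of length $p_2 - f(n) > 0$; applying~\axref{leftdecomposition} to $\beta$ splits $w \simeq w_1 \vrond w_2$ with cells $w_1 \to u$, $w_2 \to u'_1$, and $\length{w_2} > 0$ follows from $\length{u'_1} > 0$ by~\axref{atomicity}: this is the left case. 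Finally, if $f(n) > p_2$, the view cells assemble into $v \to u_2 \vrond u'$ for $u_2 := M_{p_2+1} \vrond \cdots \vrond M_{f(n)}$ of length $f(n) - p_2 > 0$, and~\axref{leftdecomposition} splits $v \simeq v_1 \vrond v_2$ into cells $v_1 \to u_2$ (with $\length{v_1} > 0$ by~\axref{atomicity}) and $v_2 \to u'$, while $\beta$ provides $w \to u_1 := M_{f(n)+1} \vrond \cdots \vrond M_p$: this is the right case. Essential uniqueness within each case will follow by combining uniqueness of $f$ in Lemma~\ref{lem:big} with the essential uniqueness in~\axref{leftdecomposition}.

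The main obstacle is combinatorial rather than conceptual: carefully managing the identity-absorption and pseudo-associativity isomorphisms needed to pass from the raw output of Lemma~\ref{lem:big} to the precise target shapes $u'_1 \vrond u$ (resp.\ $u_2 \vrond u'$) demanded by~\axref{leftdecomposition}. In particular, one must verify that the composite $\bar b_1 \vrond \cdots \vrond \bar b_{f(n)}$, in which the $\bar b_k$ for $k \notin \im f$ are length-zero identities in the sense of Lemma~\ref{lem:hv}, is specially isomorphic to (a vertical factor of) $v$ itself, so that the assembled view cell genuinely has the advertised domain. As elsewhere in Section~\ref{sec:playgrounds}, pseudoness will be handled somewhat informally, treating these coherence isomorphisms as silently absorbed whenever they clutter the formulas.
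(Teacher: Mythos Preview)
Your overall strategy---apply Lemma~\ref{lem:big} and split into cases according to where $f(n)$ lands relative to $\length{u'}$, invoking~\axref{leftdecomposition} as needed---is exactly the paper's approach.  The gap is in how you justify the length conditions.

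You invoke~\axref{atomicity} to get $\length{w_2} > 0$ from $\length{u'_1} > 0$ (left case) and $\length{v_1} > 0$ from $\length{u_2} > 0$ (right case), but atomicity runs the other way: for a cell $\gamma \colon a \to b$ it gives $\length{b} = 0 \Rightarrow \length{a} = 0$, i.e.\ $\length{a} > 0 \Rightarrow \length{b} > 0$.  In the right case the conclusion $\length{v_1} > 0$ is nonetheless true, but for a different reason: the output of Lemma~\ref{lem:big} already splits $v$ so that the piece mapping into $M_{p_2+1},\ldots,M_{f(n)}$ contains $\bar b_{f(n)} = b_n$, a genuine basic move; no detour through~\axref{leftdecomposition} is needed there.

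In the left case the error is not cosmetic: $\length{w_2} = 0$ really can occur.  For instance, take $v$ a single basic move $b$, $w = \idv$, $u = \idv$, and $u' = M_1 \vrond M_2$ with $b$ mapping to $M_1$; then $f(n) = 1 < 2 = p_2$, yet applying~\axref{leftdecomposition} to $\beta \colon \idv \to M_2$ yields $w_2$ of length $0$.  The paper handles this with a further sub-case split: if $\length{w_2} \neq 0$ one is in the left form (and essential uniqueness in~\axref{leftdecomposition} then rules out the middle form); if $\length{w_2} = 0$, one absorbs the trivial $w_2$ via~\axref{atomicity} and lands in the middle form instead.  So your trichotomy on $f(n)$ does not correspond one-to-one with the three forms of the statement, and that missing sub-case must be treated.
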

A possible reading of this is that in the left and middle cases, the
whole of $v$ embeds into $u'$. In the left case, a non-trivial part of
$w$ embeds into $u'$.  In the right case, a
non-trivial part of $v$ embeds into $u$.
\begin{proof}
  Choose decompositions of $u'$ and $u$ as $M_1 \vrond \ldots \vrond
  M_p$ and $M_{p+1} \vrond \ldots \vrond M_{p+q}$, respectively, and
  of $v$ as $b_1 \vrond \ldots \vrond b_n$. Apply Lemma~\ref{lem:big}
  to obtain $f \colon n \to p+q$. If $f(n) > p$, we are in the
  right-hand case.  If $f(n) = p$, we are in the middle case. If $f(n)
  = r < p$, let $u'_2 = M_1 \vrond \ldots M_r$ and $u'_1 =
  M_{r+1}\vrond \ldots \vrond M_p$.  Lemma~\ref{lem:big} provides
  $\beta \colon w \to u'_1 \vrond u$ and $\gamma \colon v \to u'_2$
  such that $\gamma \vrond \beta = \alpha$.
  Applying~\axref{leftdecomposition} to $\beta$ gives a decomposition
  of $\alpha$ as on the left below
  \begin{center}
    \Diag (.5,.25) {%
        \twocell[.4][.2]{A'}{A}{X}{}{celllr={0.0}{0.0},bend
          right=10,labelbr={\beta_1}} %
        \twocell[.4][.2]{B}{A'}{Y}{}{celllr={0.0}{0.0},bend
          right=20,labelbr={\beta_2}} %
        \twocell[.2][.2]{C}{B}{Y'}{}{celllr={0.0}{0.0},bend
          right=20,labelbr={\gamma}} %
        \twocell[.2][.4]{B}{A}{A'}{}{celllr={0.0}{0.0},bend
          right=10,labeld={\alpha_4}} %
        \twocell[.4][.2]{Y'}{Y}{Z}{}{celllr={0.0}{0.0},bend
          right=10,labeld={\alpha_5}} %
      }{%
        |(A)| A \& \&\&\&\&\&\& \& |(X)| X \\
        \& |(A')| A' \&\&\&\&\&\& \&  \\
        |(B)| B \& \&\&\&\&\&\& \& |(Y)| Y \\
        \& \&\&\&\&\&\& |(Y')| T \&  \\
        |(C)| C \& \&\&\&\&\&\& \& |(Z)| Z %
      }{%
        (A) edge[pro] (B) %
        edge[pro] (A') %
        edge (X) %
        (A') edge[pro] node[pos=.5,inner sep=0pt,anchor=south east]
        {$\scriptstyle w_2$} (B) %
        edge (Y) %
        (B) edge (Y') %
        edge[pro] (C) %
        (C) edge (Z) %
        (X) edge[pro] (Y) %
        (Y) edge[pro] node[pos=.5,inner sep=0pt,anchor=south east]
        {$\scriptstyle u'_1$} (Y') %
        edge[pro] (Z) %
        (Y') edge[pro,labell={u'_2}] (Z) %
      }
\hfil
  \Diag{%
        \twocell[.3][.4]{Bhaut}{A}{A'haut}{}{celllr={0.0}{0.0},bend
          right=10,labelbr={\alpha_4},shorten <=-10} %
  }{%
   |(A)| A \\
   \& |(A'haut)| A' \\
   |(Bhaut)| B \& \& |(A')| A' \\
   |(B)| B  %
  }{%
    (A) edge[pro] (A'haut) %
    edge[pro,bend left=40] (A') %
    edge[pro,bend right=30] (Bhaut) %
    (A'haut) edge[pro,identity] (A') %
    edge[pro,labelal={w_2}] (Bhaut) %
    (Bhaut) edge (A') %
    edge[pro,identity] (B) %
    (A') edge[pro,labelbr={w_2}] (B) %
  }
\end{center}
with $\alpha_4$ and $\alpha_5$ isos. If $\length{w_2} \neq 0$, then we
are in the left-hand case of the proposition, and the middle case is
impossible by essential uniqueness
in~\axref{leftdecomposition}. Otherwise, we may decompose $\alpha_4$
as on the right by atomicity (empty cells are given by coherence or
\axref{atomicity}), so we are in the middle case of the proposition.
\end{proof}

Lastly, we need a few more definitions before
Proposition~\ref{prop:decompV}.
\begin{defi}
  Let $\EVi$ be the full subcategory of $\E$ consisting of views.
\end{defi}

Consider, for any $X$, the comma category $\E_X$ induced by the
vertical codomain functor $\cod \colon \E \to \Dh$
mapping~\eqref{eq:alpha} to $k$ (following notation from
Section~\ref{subsec:prelim:cats}).  Similarly, consider $\EVi_X$.
Concretely, an object of $\EVi_X$ is a pair of a view $v \colon d'
\proto d$, and a player $x \colon d \to X$ of $X$.  A morphism
$(v_1,x_1) \to (v_2,x_2)$ is a morphism $(w,\alpha) \colon v_1 \to
v_2$ in $\EVi$, such that $x_2 \rond \cod (\alpha) = x_1$.  

Recall now from above Definition~\ref{def:beh} the pullback category
$\E (X)$. It is isomorphic to the full subcategory of $\E_X$
consisting of pairs $(u,x)$ where $x = \id_X$.  Similarly, we have
$\EVi(X)$, which is empty unless $X$ is an individual.
   \begin{prop}\label{prop:decompV} We have
     \begin{enumerate}[label=(\roman*)]
     \item The inclusion $\EVi(d) \into \EVi_d$ mapping $v$ to 
       $(v, \id_d)$ is an isomorphism of categories. \label{prop:decompV:i}
     \item The inclusion $\sum_{(d,x) \in \Pl (X)} \EVi(d) \into
       \EVi_X$ mapping $((d,x),v)$ to $(v,x)$ is an isomorphism of
       categories.
       \label{prop:decompV:ii}
     \item $\EVi(d)$ is a preorder.\label{prop:decompV:iii}
     \end{enumerate}
   \end{prop}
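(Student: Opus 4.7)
My plan is to dispatch (i) and (ii) using the discreteness of $\DI$ and the individuality axiom, and to prove (iii) via Lemma~\ref{lem:big} together with the equivalence relation defining $\E$-morphisms.

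For (i), an object of $\EVi_d$ is a pair $(v,x)$ with $v\colon d'\proto d''$ a view and $x\colon d''\to d$ a horizontal morphism. Since views are composites of basic moves and basic moves have individual endpoints by~\axref{individuality}, we have $d''\in\DI$; by~\axref{discreteness} and the fullness of $\DI\into\Dh$, the only available morphism $d''\to d$ is $\id_d$ (forcing $d''=d$). Hence every object of $\EVi_d$ has the form $(v,\id_d)$. A morphism $(v_1,\id_d)\to(v_2,\id_d)$ is a pair $(w,\alpha)$ with $\id_d\rond\cod(\alpha)=\id_d$; since $\cod(\alpha)$ is between individuals, we again have $\cod(\alpha)=\id_d$, and this is exactly a morphism of $\EVi(d)$.

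For (ii), each object $(v,x)\in\EVi_X$ determines a player $(d,x)\in\Pl(X)$ with $d=\cod(v)$, together with $v\in\EVi(d)$. For a morphism $(v_1,x_1)\to(v_2,x_2)$ carried by a cell $(w,\alpha)$, the horizontal codomain $\cod(\alpha)\colon d_1\to d_2$ is between individuals, so by~\axref{discreteness} we get $d_1=d_2$ and $\cod(\alpha)=\id$; the slice compatibility $x_2\cdot\cod(\alpha)=x_1$ then forces $x_1=x_2$. Thus morphisms only exist within a single summand $\EVi(d)$, giving the coproduct decomposition.

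For (iii), consider two morphisms $(w,\alpha),(w',\alpha')\colon v_1\to v_2$ in $\EVi(d)$. Fix decompositions $v_1=b_1\vrond\cdots\vrond b_n$ and $v_2=b'_1\vrond\cdots\vrond b'_p$ into basic moves. Applying Lemma~\ref{lem:big} to $\alpha$ yields a strictly monotone $f\colon n\to p$ with $f(0)=0$, cells $\alpha_k\colon\bar b_k\to b'_k$ in $\DB_0$ for $k\leq f(n)$ (unique by Corollary~\ref{cor:B0eqrel}), and a cell $\beta\colon w\to b'_{f(n)+1}\vrond\cdots\vrond b'_p=:w_c$. I then take $(w_c,\alpha^c)$, with $\alpha^c=\alpha_1\vrond\cdots\vrond\alpha_{f(n)}\vrond\id_{w_c}$, as a canonical representative: its horizontal codomain is $\id_d$ automatically, and its existence uses only the $\alpha_k$'s from Lemma~\ref{lem:big}. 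The cell $\beta$ itself witnesses $(w,\alpha)\sim(w_c,\alpha^c)$ in~\eqref{eq:quotient:E}: by comparison of top objects via $\alpha_{f(n)}$ and~\axref{discreteness}, the horizontal codomain of $\beta$ is an identity, and the pasting equation from Lemma~\ref{lem:big} is exactly the compatibility required by the equivalence. Doing the same for $(w',\alpha')$ yields $(w'_{c},\alpha'^{c})$ built from $f'$, with $(w',\alpha')\sim(w'_{c},\alpha'^{c})$.

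The main obstacle is then to show that $f=f'$, so that both canonical forms coincide. I would prove this by iterating the right decomposition axiom~\axref{views:decomp}, peeling basic moves off the bottom of $v_2$: at each step the axiom enforces a dichotomy (``cover the bottom of $v_2$ by the bottom of $v_1$'' or ``leave it uncovered''), and which branch is realisable is determined intrinsically by $(v_1,v_2)$ via the existence of a suitable $\DB_0$-cell between the corresponding basic moves, which is unique by Corollary~\ref{cor:B0eqrel}. Hence both $\alpha$ and $\alpha'$ necessarily follow the same branch at each stage, forcing $f=f'$. Once $f=f'$, the $\alpha_k=\alpha'_k$ by Corollary~\ref{cor:B0eqrel}, so the canonical forms agree and we conclude $(w,\alpha)\sim(w',\alpha')$, establishing that $\EVi(d)$ is a preorder.
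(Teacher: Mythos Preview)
Your arguments for (i) and (ii) are correct and match the paper's. For (iii), your overall strategy---reduce both morphisms to a canonical representative via Lemma~\ref{lem:big} and the quotient~\eqref{eq:quotient:E}---coincides with the paper's, but you make the step $f=f'$ harder than it is, and your justification for it is not quite right.

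You argue that at each stage of~\axref{views:decomp} the branch taken ``is determined intrinsically by $(v_1,v_2)$ via the existence of a suitable $\DB_0$-cell between the corresponding basic moves''. But the existence of a cell $b_j\to b'_k$ does not by itself force the left branch for the given $\alpha$; what does is that the \emph{right} branch is \emph{never} available here: it would require a cell $\idv_d\to b'_k$ with identity horizontal boundaries, hence a morphism in $\DB_0$, which by Corollary~\ref{cor:B0eqrel} would be an isomorphism---impossible since $|b'_k|=1$. So one is always in the left branch, giving $f(j)=j$ for all $j$ and in particular $f=f'=\mathrm{id}$. Once you make this observation explicit, your argument is complete (and considerably shorter).

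The paper packages this more efficiently: it invokes Lemma~\ref{lem:preorder} / Corollary~\ref{cor:Vpreordergroupoid} directly. The bottom part of the Lemma~\ref{lem:big} decomposition is a $\DVi$-morphism from $v_1$ to $b'_1\vrond\cdots\vrond b'_{f(n)}$, and since $\DVi$ is an equivalence relation compatible with length, $f(n)=n$. This immediately gives the same split point for both morphisms and, again by Corollary~\ref{cor:Vpreordergroupoid}, equality of the view parts $\alpha^2_1=\alpha^2_2$; the quotienting~\eqref{eq:quotient:E} (with $\gamma=\beta_i$) then yields the common canonical form $(v^1_2,\alpha^2_1\vrond\id_{v^1_2})$.
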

   \begin{proof}
     First, because $\DI$ is discrete, $\Dh (d,d) = \ens{\id_d}$,
     hence~\ref{prop:decompV:i}. For~\ref{prop:decompV:ii}, the
     functor $\EVi_X \to \sum_{(d,x) \in \Pl (X)} \EVi(d)$ mapping any
     $(v,x)$ to $((d,x), v)$, with $v \colon d' \proto d$ a view and
     $x \colon d \to X$ a player, is inverse to the given functor.
     Finally, consider any two morphisms $v_1 \to v_2$ in $\EVi (d)$,
     say
\begin{center}
  \diag{%
    |(d'_1)| X_1 \& |(d_2)| d_2 \\
    |(d_1)| d_1 \& \\
    |(d)| d \& |(d')| d %
  }{%
    (d'_1) edge[labelu={h_1}] (d_2) %
    edge[pro,labell={w_1}] (d_1) %
    (d_1) edge[pro,labell={v_1}] (d) %
    (d_2)  edge[pro,twor={v_2}] (d') %
    (d) edge[identity] (d') %
    (d_1) edge[cell=.3,labelu={\alpha_1}] (r) %
  }
\hfil and \hfil
   \diag{%
    |(d'_1)| X_2 \& |(d_2)| d_2 \\
    |(d_1)| d_1 \& \\
    |(d)| d \& |(d')| d. %
  }{%
    (d'_1) edge[labelu={h_2}] (d_2) %
    edge[pro,labell={w_2}] (d_1) %
    (d_1) edge[pro,labell={v_1}] (d) %
    (d_2)  edge[pro,twor={v_2}] (d') %
    (d) edge[identity] (d') %
    (d_1) edge[cell=.3,labelu={\alpha_2}] (r) %
  }
\end{center}
Fixing decompositions of $v_1$ and $v_2$ into basic moves, we obtain
by Lemmas~\ref{lem:big} and~\ref{lem:preorder} that $\alpha_1$ and
$\alpha_2$ respectively decompose as
\begin{center}
  \diag{%
    |(d'_1)| X_1 \& |(d_2)| d_2 \\
    |(d_1)| d_1 \& |(d'')| d' \\
    |(d)| d \& |(d')| d %
  }{%
    (d'_1) edge[labelu={h_1}] (d_2) %
    edge[pro,twoleft={w1}{w_1}] (d_1) %
    (d_1) edge[pro,twoleft={v1}{v_1}] (d) %
    edge (d'') %
    (d_2)  edge[pro,twor={v^1_2}] (d'') %
    (d'')  edge[pro,tworight={r'}{v^2_2}] (d') %
    (d) edge[identity] (d') %
    (w1) edge[cell=.3,labelu={\alpha^1_1}] (r) %
    (v1) edge[cell=.3,labelu={\alpha^2_1}] (r') %
  }
\hfil and \hfil
  \diag{%
    |(d'_1)| X_2 \& |(d_2)| d_2 \\
    |(d_1)| d_1 \& |(d'')| d' \\
    |(d)| d \& |(d')| d. %
  }{%
    (d'_1) edge[labelu={h_2}] (d_2) %
    edge[pro,twoleft={w1}{w_2}] (d_1) %
    (d_1) edge[pro,twoleft={v1}{v_1}] (d) %
    edge (d'') %
    (d_2)  edge[pro,twor={v^1_2}] (d'') %
    (d'')  edge[pro,tworight={r'}{v^2_2}] (d') %
    (d) edge[identity] (d') %
    (w1) edge[cell=.3,labelu={\alpha^1_2}] (r) %
    (v1) edge[cell=.3,labelu={\alpha^2_2}] (r') %
  }
\end{center}
By Corollary~\ref{cor:Vpreordergroupoid}, $\alpha^2_1 =
\alpha^2_2$. Furthermore, we conclude by~\axref{fibration} and the
quotienting~\eqref{eq:quotient:E} in the definition of $\E$ that both
morphisms are equal in $\EVi(d)$ to $\alpha^2_1 \vrond \id_{v^1_2}$.
\end{proof}

\subsection{From behaviours to strategies}
   \begin{defi}
     The category $\SS_X$ of \emph{strategies} on $X$ is the category
     $\OPsh{\EVi_X}$ of presheaves of finite ordinals on $\EVi_X$.
   \end{defi}
   \begin{exa}
     On $\Dccs$, $\EVi_X$ as defined here yields a category equivalent
     to the definition in \citetalias{2011arXiv1109.4356H}, so the
     categories of strategies are also equivalent (even isomorphic
     because $\ford$ contains no non-trivial automorphism).
   \end{exa}
   The rest of this section develops some structure on strategies,
   which is needed for constructing the \lts{} in
   Section~\ref{subsec:lts:strats}. We start by extending the
   assignment $X \mapsto \SS_X$ to a pseudo double functor $\op\D \to
   \QCat$, where $\QCat$ is Ehresmann's double category of
   \emph{quintets} on the 2-category $\Cat$:
   \begin{defi}
     $\Quintets\Cat$ has small categories as objects, functors as both
     horizontal and vertical morphisms, and natural transformations as
     double cells.
   \end{defi}

   Actually, our first step is to extend the assignment $X \mapsto \EVi_X$
   to pseudo double functor $\D \to \Quintets\Cat$.  Define the action
   of a horizontal map $h \colon X \to X'$ to map any object $(v,x)$
   of $\EVi_X$ to $(v, h \rond x)$, and any morphism to itself viewed
   as a morphism in $\EVi_{X'}$. (This functor is induced by universal
   property of $\EVi_X$ as a comma category.) This defines a functor
   $\EVi_{-} \colon \Dh \to \Cat$.  The pseudo functor $\Dv \to \Cat$
   is a bit harder to construct. For any $u \colon Y \proto X$ in
   $\Dv$ and $y \colon d \to Y$, the cell $\alpha^{y,u}$ from
   Proposition~\ref{prop:views} induces a functor $\cocob{v^{y,u}}
   \colon \EVi (d) \to \EVi (d^{y,u})$ mapping any $v \colon d' \proto
   d$ to $v^{y,u} \vrond v$.  Composing with the coproduct injection
   $\inj_{d^{y,u}, y^u} \colon \EVi (d^{y,u}) \into \sum_{(d'',x) \in
     \Pl (X)} \EVi (d'')$, because $\EVi_X \iso \sum_{(d'',x) \in \Pl
     (X)} \EVi (d'')$, we obtain functors $$\EVi (d)
   \xto{\cocob{v^{y,u}}} \EVi (d^{y,u}) \xinto{\inj_{d^{y,u},y^u}}
   \EVi_X,$$ whose copairing defines a functor $\EVi_{u} \colon
   \EVi_Y \to \EVi_X$.

   Now, for any cell as on the left below,
   we obtain by Proposition~\ref{prop:views} a canonical natural isomorphism as on the right
     \begin{mathpar}
       \begin{minipage}[t]{0.3\linewidth}
         \doublecellpro{Y}{Y'}{X}{X'}{k}{u}{u'}{h}{\alpha}
       \end{minipage} \and
       \begin{minipage}[t]{0.3\linewidth}
         \doublecell{\EVi_Y}{\EVi_{Y'}}{\EVi_X}{\EVi_{X'}.}{\EVi_k}{\EVi_{u}}{\EVi_{u'}}{\EVi_h}{\iso}
       \end{minipage}
     \end{mathpar}
     By canonicity of the above double cell, we have
   \begin{prop}\label{prop:pseudodoublefunctor}
     This assignment defines a pseudo double functor $\EVi_{-} \colon
     \D \to \Quintets\Cat$.
   \end{prop}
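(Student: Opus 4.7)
The plan is to check the pseudo double functor axioms one dimension at a time, leveraging the essential uniqueness clauses already established for $v^{y,u}$ in Proposition~\ref{prop:views}.

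First, the horizontal direction: since $\EVi_h$ is defined by post-composition $x \mapsto h \rond x$ on the player component and identity on the view component, strict functoriality in $h$ (i.e., $\EVi_{h' \rond h} = \EVi_{h'} \rond \EVi_h$ and $\EVi_{\id_X} = \id_{\EVi_X}$) is immediate from the definition. So $\EVi_{-} \colon \Dh \to \Cat$ is a genuine (strict) functor, as befits the horizontal direction of a pseudo double functor.

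Second, the vertical direction: I would exhibit coherent structural isomorphisms $\EVi_{\idv_X} \iso \id_{\EVi_X}$ and $\EVi_{u' \vrond u} \iso \EVi_{u'} \rond \EVi_u$. For the unit, apply Proposition~\ref{prop:views} to $y \colon d \to X$ and $\idv_X$; by~\axref{atomicity} (via Lemma~\ref{lem:hv}) the resulting view $v^{y,\idv_X}$ has length $0$, hence comes with a canonical iso to $\idv_d$, with $y^{\idv_X}$ canonically isomorphic to $y$. For composition, given $u \colon Y \proto X$, $u' \colon Z \proto Y$, and $z \colon d \to Z$, apply Proposition~\ref{prop:views} twice (first to $z$ and $u'$, then to $z^{u'}$ and $u$) to obtain a view $v^{z^{u'},u} \vrond v^{z,u'}$ and a player $(z^{u'})^u$; this gives a cell $\alpha^{z^{u'},u} \vrond \alpha^{z,u'}$ out of $u' \vrond u$ whose vertical source is a view. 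By the essential uniqueness part of Proposition~\ref{prop:views} (applied to $z$ and $u' \vrond u$), this cell is canonically isomorphic to $\alpha^{z,u' \vrond u}$, yielding the structural iso on that component. Copairing over $(d,z) \in \Pl(Z)$ and using the coproduct decomposition $\EVi_Z \iso \sum_{(d,z) \in \Pl(Z)} \EVi(d)$ from Proposition~\ref{prop:decompV}\,(ii), this assembles into a natural isomorphism $\EVi_{u' \vrond u} \iso \EVi_{u'} \rond \EVi_u$.

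Third, for double cells $\alpha \colon u \to u'$ over $h,k$, the natural isomorphism $\EVi_k \rond \EVi_u \iso \EVi_{u'} \rond \EVi_h$ is constructed componentwise: for $(v, y) \in \EVi_Y$ with $y \colon d \to Y$, pasting $\alpha^{y,u}$ with $\alpha$ on the right yields a cell from $v^{y,u}$ to $u'$ with top horizontal side $k \rond y$; by essential uniqueness in Proposition~\ref{prop:views}, it factors canonically through $\alpha^{k \rond y, u'}$, giving the component iso. Naturality in $v$ follows from Corollary~\ref{cor:Vpreordergroupoid} (views form an equivalence relation, so all diagrams of view-isos commute).

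The last piece is coherence: the associativity pentagon and unit triangles for the pseudo functor on $\Dv$, and the compatibility of the natural isomorphism assigned to double cells with horizontal and vertical composition (interchange) and identities. The main obstacle could \emph{appear} to be verifying these coherences, but they are in fact free: every structural cell in sight is uniquely determined by its source and target, thanks to the essential uniqueness in Proposition~\ref{prop:views} together with the fact (Corollary~\ref{cor:Vpreordergroupoid}) that $\DVi$ is an equivalence relation, so any two parallel isos between cells with view source are equal. Hence any two natural transformations between the same pair of functors built out of these structural isos coincide, and all coherence diagrams commute automatically. This completes the verification that $\EVi_{-} \colon \D \to \Quintets\Cat$ is a pseudo double functor.
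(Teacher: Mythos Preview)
Your proposal is correct and takes essentially the same approach as the paper, which offers no proof beyond the sentence ``By canonicity of the above double cell'' preceding the statement. You have simply spelled out what that canonicity buys: the structural isomorphisms come from essential uniqueness in Proposition~\ref{prop:views}, and coherence is automatic because (via Corollary~\ref{cor:Vpreordergroupoid}) any two parallel isos of the relevant shape coincide.
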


   \begin{defi}
     Let the \emph{opposite} $\op{\D}$ of a pseudo double category
     $\D$ be obtained by reversing both vertical and horizontal
     arrows, and hence double cells.
   \end{defi}

   We obtain:
   \begin{defi}\label{def:SSfunctor}
     Let $\SS \colon \op\D \to \Quintets\Cat$ be the composite
     $\op\D \xto{\op{(\EVi_{-})}} \op{\Quintets\Cat} \xto{\OPsh{-}} \Quintets\Cat.$
   \end{defi}
   As a shorthand, we denote $\SS (f) (S)$ by $S \cdot f$ for $f$
   horizontal or vertical.  Concretely, for any horizontal $h \colon Z
   \to X$, $S \cdot h$ satisfies $$(S \cdot h) (v,z) = S(v,h
   \rond z),$$ whereas for any vertical $u \colon Y \proto X$, $S
   \cdot u$ satisfies
   $$(S \cdot u) (v,y) = S (v^{y,u} \vrond v, y^u).$$


We conclude this section by constructing the \emph{extension}
functor from strategies to behaviours, in arbitrary playgrounds.

Recall that strategies on a position $X$ are presheaves of finite
ordinals on $\EVi_X$, and that behaviours are presheaves of finite
sets on $\E(X)$.  To go from the former to the latter, we use $\E_X$
as a bridge.  Recall from Section~\ref{subsec:prelim:cats} that
objects of $\EVi_X$ are diagrams of the shape $d' \xproto{v} d \xto{x}
X$, with $v$ a view, and that objects of $\E(X)$ are just plays $Y
\xproto{u} X$. The idea here is that on the one hand $\EVi_X$ is richer
than $\E(X)$, in that its objects may be plays on \emph{subpositions}
of $X$, whereas objects of $\E(X)$ are plays on the whole of $X$.  But
on the other hand, $\E(X)$ is richer than $\EVi_X$ because its objects
may be arbitrary plays, whereas objects of $\EVi_X$ have to be
views. $\E_X$ contains both $\EVi_X$ and $\E(X)$, its objects
being diagrams $Y \xproto{u} Z \xto{h} X$, for arbitrary plays $u$.

First, let $k_X \colon \OPsh{\EVi_X} \to \FPsh{\EVi_X}$ denote
postcomposition with $\ford \into \set$.  Because views form a full
subcategory of $\DH$, all embeddings $i_X \colon \EVi_X \into \E_X$
are also full.  This entails:
\begin{lem}\label{lem:kanff}
  For all $X$, right Kan extension $(\op{i_X})_\star \colon \FPsh{\EVi_X}
  \into \FPsh{\E_X}$ along $\op {i_X}$ is well-defined, full, and faithful.
\end{lem}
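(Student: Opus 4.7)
The plan is to establish the lemma in two steps: well-definedness with values in $\FPsh{\E_X}$, and fully faithfulness.

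For well-definedness, I would recall that since $\EVi_X$ and $\E_X$ are small categories and $\Set$ is complete, the pointwise right Kan extension along $\op{i_X}$ is given by the end
\[
(\op{i_X})_\star F(u) = \int_{v \in \EVi_X} F(v)^{\E_X(u,v)},
\]
which a priori defines a presheaf of sets on $\E_X$. The non-trivial point is that it in fact lands in $\set$, i.e., takes only finite values. I expect this to follow from the yet-to-be-stated finiteness axiom \axref{finiteness}: this axiom should guarantee that for each $u \in \E_X$, the comma category $(\op{i_X} \downarrow u)$ has, up to isomorphism, only finitely many objects (intuitively, only finitely many views factor into $u$, one per player times finitely many prefixes of the atomic decomposition of $u$ given by \axref{atomicity}). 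Combined with the assumption that $F$ takes finite values, the end then reduces to a finite limit of finite sets and hence lies in $\set$.

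For fully faithfulness, I would observe that the inclusion $i_X \colon \EVi_X \into \E_X$ is fully faithful, since by definition $\EVi$ is a full subcategory of $\E$, and this property transfers to the comma categories over $X$ (a morphism in $\E_X$ whose domain and codomain are views is a morphism in $\EVi_X$). Hence $\op{i_X}$ is fully faithful as well. The conclusion then follows from the classical fact that right Kan extension along a fully faithful functor is itself fully faithful: concretely, the unit $F \to (\op{i_X})^* (\op{i_X})_\star F$ of the adjunction $(\op{i_X})^* \dashv (\op{i_X})_\star$ is a natural isomorphism whenever $\op{i_X}$ is fully faithful, which is equivalent to the right adjoint being fully faithful.

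The main obstacle is the finiteness claim, which rests on axiomatic control over the size of the comma categories above $u \in \E_X$, presumably the content of \axref{finiteness}. The fully faithful part is essentially a direct appeal to a standard categorical result once one notices that $\EVi \into \E$, and therefore $\EVi_X \into \E_X$, is a full inclusion.
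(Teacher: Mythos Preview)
Your proposal is correct and follows essentially the same approach as the paper: the paper likewise reduces well-definedness to essential finiteness of the comma categories $\EVi_X / (u,h)$ (deferred to the next lemma, which uses axiom~\axref{finiteness}), and invokes the standard fact that right Kan extension along a fully faithful functor is fully faithful. One minor slip: the exponent in your end formula should be $\E_X(v,u)$ rather than $\E_X(u,v)$ (correspondingly, the relevant comma category is $\EVi_X / u$, not $\op{i_X} \downarrow u$), but this does not affect the argument.
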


\begin{proof}
  One easily shows that, when defined, right extension along a full
  and faithful functor is full and faithful. 

  It remains to show that the considered right extensions exist.  It
  is well-known~\citep{MacLane:cwm} that the right Kan extension of
  any $S \in \FPsh{\EVi_X}$ maps any $(u,h)$ to the limit of the
  functor $\op{(\EVi_X / (u,h))} \to \op{(\EVi_X)} \xto{S} \set$, if
  the latter exists.  Since finite limits exist in $\set$ (though not
  in $\ford$, which explains why we use $\set$ instead of $\ford$ for
  extending strategies), it is enough to prove that each $\EVi_X /
  (u,h)$ is essentially finite, i.e., equivalent to a finite
  category. This is proved in the next lemma.
\end{proof}

\begin{lem}\label{lem:kanexists}
  For any play $u \colon Z \proto Y$ and horizontal $h \colon Y \to X$, the category $\EVi_X / (u,h)$ is
  essentially finite.
\end{lem}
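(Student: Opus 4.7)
The plan is to use Lemma~\ref{lem:big} to parameterize isomorphism classes of objects in $\EVi_X/(u,h)$ by a finite set. First note that, by Proposition~\ref{prop:decompV}\ref{prop:decompV:ii}--\ref{prop:decompV:iii}, $\EVi_X$ is a coproduct of preorders, hence itself a preorder, and so is the slice $\EVi_X/(u,h)$. Thus essential finiteness amounts to showing that there are finitely many isomorphism classes of objects.

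Fix a decomposition $u \iso M_1 \vrond \cdots \vrond M_p$ via~\axref{atomicity}, with $M_k \colon X_k \proto X_{k-1}$. Any object $\xi$ of $\EVi_X/(u,h)$ consists of a view $v \iso b_1 \vrond \cdots \vrond b_n$, a player $x \colon d_0 \to X$ (with $d_0 = \cod(v)$), and a morphism $(w,\alpha) \colon v \to u$ in $\E$ satisfying the comma constraint $x = h \rond \cod(\alpha)$. Applying Lemma~\ref{lem:big} to $\alpha$ yields a strictly monotone map $f_\xi \colon n \to p$, cells $\alpha_k \colon \bar b_k \to M_k$ for $k \le f_\xi(n)$, and a cell $\beta \colon w \to M_{f_\xi(n)+1} \vrond \cdots \vrond M_p$. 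Set $y_\xi := \cod(\beta) \colon d_n \to X_{f_\xi(n)}$; this is a player of $X_{f_\xi(n)}$, since $d_n$ is an individual by~\axref{individuality}.

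The next step is to verify that $\xi$ is determined, up to isomorphism in $\EVi_X/(u,h)$, by the pair $(f_\xi, y_\xi)$. Starting from $y_\xi$ and iteratively applying Proposition~\ref{prop:views} against $M_{f_\xi(n)}, M_{f_\xi(n)-1}, \ldots, M_{f_\xi(1)}$ (filling in identity cells where $f_\xi$ skips, using Lemma~\ref{lem:hv}) recovers the basic moves $b_i$ and the cells $\alpha_k$ uniquely up to canonical isomorphism, hence pins down $v$ in the preorder $\EVi(d_n)$. The cell $\beta$ then supplies the witness in the equivalence~\eqref{eq:quotient:E} defining $\E$ which identifies all choices of $(w,\alpha)$ compatible with $(f_\xi, y_\xi)$ into a single equivalence class. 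Finally, $x$ is forced by the comma condition.

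Since the number of strictly monotone injections into $p$ is at most $2^p$, and by~\axref{finiteness} each $\Pl(X_k)$ is finite, the set of pairs $(f, y)$ is finite, and so is the set of isomorphism classes of objects of $\EVi_X/(u,h)$. The main obstacle will be the rigorous normalization step for $(w,\alpha)$: one must check that the equivalence generated by~\eqref{eq:quotient:E} really collapses all admissible representatives down to a single class depending only on $(f_\xi, y_\xi)$, which requires careful bookkeeping of the pseudo-double-category coherences on $\D$.
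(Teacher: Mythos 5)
Your proof is correct and is essentially the paper's own argument: reduce to counting isomorphism classes of objects using that the slice is a preorder (Proposition~\ref{prop:decompV}), fix a decomposition of $u$ into moves, and use Lemma~\ref{lem:big} together with Proposition~\ref{prop:views} and the quotient~\eqref{eq:quotient:E} to see that each object is determined up to isomorphism by the strictly monotone map $f$ and the player $y=\cod(\beta)$, of which there are finitely many by~\axref{finiteness}. The only quibble is notational: the reconstructed view $v \colon d_n \proto d_0$ is pinned down in $\EVi(d_0)$ (the component of $\EVi_X$ at the player $x$), not $\EVi(d_n)$; and the ``normalization'' bookkeeping you flag at the end is exactly the step the paper also dispatches with its one-line appeal to the quotienting.
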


For this lemma to hold, we need more axioms.
\begin{ax}
  \begin{axioms}
  \item (Finiteness) For any position $X$, there are only finitely
    many players, i.e., the category $\DI / X$ is finite. \label{finiteness}
    %
  \end{axioms}
\end{ax}

\proof[Proof of Lemma~\ref{lem:kanexists}] Let us fix a pair $(u,h)$. By
Proposition~\ref{prop:decompV}, $\EVi_X / (u,h)$ is a preorder, so we just
need to prove that its object set is essentially finite.  Now, letting
$n = \length{u}$, we fix a decomposition of $u$ into moves, say $Z = Y_n \xproto{M_n} Y_{n-1}
\ldots Y_1 \xproto{M_1} Y_0$. For any morphism $\alpha \colon (v,x) \to (u,h)$ in
$\E_X$, by Lemma~\ref{lem:big}, $m = \length{v}$ may not exceed $n$.  Furthermore, by
Lemma~\ref{lem:big}, Proposition~\ref{prop:views}, and our
quotienting~\eqref{eq:quotient:E}, any such $\alpha$ is determined up
to isomorphism by $m$, a strictly monotone map $f \colon m \to n$, and
a player $y$ of $Y_{f(m)}$. Because such triples $(m,f,y)$ are in finite number,
$\EVi_X / (u,h)$ is essentially finite. \qed

  This concludes the proof of Lemma~\ref{lem:kanff}: right
  Kan extension along $\op{i_X} \colon \op{(\EVi_X)} \into \op{\E_X}$
  yields a full and faithful functor. We now design the second half of
  our bridge from $\EVi_X$ to $\E(X)$ via $\E_X$. Consider the
  embedding $j_X \colon \E (X) \into \E_X$ mapping any $u$ to $(u,
  \id_X)$. Restriction along $\op{(j_X)}$ defines a functor
  $\cob{\op{(j_X)}} \colon \FPsh{\E_X} \to \FPsh{\E (X)}$.

Recall from Definition~\ref{def:beh} the notion of behaviour.
\begin{defi}
  For any $X$, let the \emph{extension} functor $\extfun{X} \colon
  \SSX \to \Beh{X}$ be the composite 
  $$\OPsh{\EVi_X} \xto{k_X} \FPsh{\EVi_X} \xto{(\op{i_X})_\star} \FPsh{\E_X} \xto{\cob{\op{(j_X)}}} \FPsh{\E(X)}.$$
  We call a behaviour on $X$ \emph{innocent} when it is in the
  essential image of $\extfun{X}$.
\end{defi}
Notation: when $X$ is clear from context, we abbreviate $\ext{X}{S}$ as $\exta{X}{S}$.

\begin{rem}
  The calculations of Section~\ref{subsubsec:strategies} carry over
  unchanged to the new setting.
\end{rem}
Finally, the definitions of Section~\ref{subsec:fair} apply more or
less verbatim to the playground $\Dccs$, yielding a semantic fair
testing equivalence which coincides with that of \citetalias{2011arXiv1109.4356H}.

\section{Playgrounds: transition systems}\label{sec:strats}
In the previous section, we have defined behaviours and strategies,
and constructed the extension functor from the former to the latter.
In this section, we first build on this to state decomposition
theorems, which lead to a syntax and \anlts{} for strategies. Then, we
define our second \lts{}, and relate the two by a strong, functional
bisimulation.

\subsection{A syntax for strategies}\label{subsec:syntax:strats}
Let us begin by proving in the abstract setting of playgrounds
analogues of the decomposition results of
\citetalias{2011arXiv1109.4356H}, in particular that strategies form a
terminal coalgebra for a certain polynomial functor. This is
equivalent to saying that they are essentially infinite terms in a
typed grammar. We use this in the next section to define our \lts{}
$\SSS_\D$, and study transitions therein.

First, we have \emph{spatial} decomposition:
   \begin{prop}\label{prop:spatial} 
     The functor $\SSX \to \prod_{(d,x) \in \Pl(X)} \SS_d$ given at
     $(d,x)$ by $\SS(x) \colon \SSX \to \SS_d$ is an isomorphism of
     categories.
   \end{prop}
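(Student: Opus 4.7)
The plan is to reduce the statement to the decomposition of $\EVi_X$ given in Proposition~\ref{prop:decompV} and then apply a standard universal property.

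First, I would combine Proposition~\ref{prop:decompV}(ii), which provides an isomorphism $\sum_{(d,x) \in \Pl(X)} \EVi(d) \iso \EVi_X$ sending $((d,x), v)$ to $(v, x)$, with Proposition~\ref{prop:decompV}(i), which identifies $\EVi(d)$ with $\EVi_d$ via $v \mapsto (v, \id_d)$. Chaining these gives a canonical isomorphism of categories
$$\EVi_X \iso \sum_{(d,x) \in \Pl(X)} \EVi_d$$
where the component at $(d,x)$ is precisely the functor $\EVi_d \to \EVi_X$ induced by the horizontal morphism $x \colon d \to X$ (i.e., the action $\EVi_x$ defined just before Proposition~\ref{prop:pseudodoublefunctor}). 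Applying $\op{(-)}$ preserves coproducts, and then $\OPsh{-} = [\op{(-)}, \ford]$ turns coproducts of small categories into products (since the hom-functor out of a coproduct is a product). This yields the isomorphism
$$\SSX = \OPsh{\EVi_X} \iso \prod_{(d,x) \in \Pl(X)} \OPsh{\EVi_d} = \prod_{(d,x) \in \Pl(X)} \SS_d.$$

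Next, I would identify the projection at $(d,x)$ of this composite with $\SS(x)$. Unwinding Definition~\ref{def:SSfunctor}, $\SS(x)$ sends $S$ to the presheaf $S \cdot x$ with $(S \cdot x)(v, z) = S(v, x \rond z)$. Under the identification $\EVi_d \iso \EVi(d)$ from~\ref{prop:decompV}(i), an object $(v, z)$ of $\EVi_d$ corresponds to $v$ viewed at the unique player $\id_d$, and its image in $\EVi_X$ is $(v, x)$. Thus the projection at $(d,x)$ of the composite isomorphism sends $S$ to the restriction $v \mapsto S(v, x)$, which agrees with $S \cdot x$ evaluated through the above identification. Naturality in $S$ and in morphisms of $\EVi_d$ is immediate from the fact that the decomposition of $\EVi_X$ is an isomorphism of categories, so morphisms on each side decompose accordingly.

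I do not expect a serious obstacle: the content is that $\EVi_X$ splits as a disjoint union indexed by players (which is exactly~\ref{prop:decompV}), and the only bookkeeping is checking that the induced projection on presheaves coincides with the horizontal action $\SS(x)$. The mildly delicate point is keeping track of the two successive identifications $\EVi_d \iso \EVi(d) \into \EVi_X$, but once one notes that the second inclusion is by construction $v \mapsto (v, x)$, the match with the formula for $S \cdot x$ is immediate.
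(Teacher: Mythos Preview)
Your proposal is correct and follows essentially the same approach as the paper: decompose $\EVi_X$ as a coproduct over players via Proposition~\ref{prop:decompV}, then use that $\OPsh{-}$ sends coproducts to products. Your additional verification that the $(d,x)$-projection coincides with $\SS(x)$ is more explicit than the paper's proof, which leaves that identification implicit.
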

   \proof
     We have:
     \begin{center}
       $
       \begin{array}[b]{rcll}
         \SS_X & = & \Cat (\op{(\EVi_{X})}, \ford) \\
         & \iso & \Cat (\sum_{(d,x) \in \Pl (X)} \op{\EVi (d)}, \ford) & \mbox{(by Proposition~\ref{prop:decompV})} \\
         & \iso & \prod_{(d,x) \in \Pl (X)} \Cat (\op{\EVi (d)}, \ford) \\
         & = & \prod_{(d,x) \in \Pl (X)} \SS_d. 
       \end{array}
       $\qed
     \end{center}
     For any $S \in \SSX$, let $S \cdot x$ denote the strategy on $d$
     corresponding to $(d,x)$ accross the isomorphism.

     The second decomposition result is less straightforward, but goes
     through essentially as in the concrete case. Let us be a bit more
     formal here than in Section~\ref{subsubsec:syntax}, by showing
     that strategies form a terminal coalgebra for some endofunctor on
     $\Set^\DI$.  We start by defining the relevant endofunctor.

     \begin{defi}
       Let $\MMMB_d$ denotes the set of all isomorphism classes of
       basic moves from $d$ (i.e., with vertical codomain $d$).
     \end{defi}
     \begin{defi}
       Let $G \colon \Set^\DI \to \Set^\DI$ be the functor
       mapping any family $U$ to $$(G (U))_d = \left (\prod_{b \in
           \MMMB_d} U_{\dom (b)} \right )^\star,$$ where $(-)^\star$
       denotes finite sequences.
     \end{defi}

     \begin{rem}\label{rem:lists}
       This functor is polynomial in the sense of
       Kock~\citep{Kock01012011}, as $$(G (U))_d = \sum_{n \in \Nat}
       \left (\prod_{i \in n, b \in \MMMB_d} U_{\dom (b)} \right ).$$
     \end{rem}
     
     We now show that strategies, viewed as the $\DI$-indexed family
     $(\ob(\SS_d))_{d \in \DI}$, form a terminal $G$-coalgebra. We drop 
     the $\ob(-)$ for readability.
     
     \begin{defi}\label{defi:restr}
       For any $S \in \SS_d$ and $\state \in S (\idv_d)$, let the
       \emph{restriction} $\restr{S}{\state} \in \SS_d$ of $S$ to
       $\state$ be defined by the fact that $\restr{S}{\state} (v) =
       \ens{\state' \in S (v) \aalt S(!_v)(\state') = \state}$.
     \end{defi}
     (Here, we freely use the isomorphism $\EVi_d \iso \EVi (d)$ from
     Proposition~\ref{prop:decompV}, and let $!_v$ denote the unique
     morphism $\idv_d \to v$ in $\EVi(d)$.)

     In view of Remark~\ref{rem:lists}, $(G(\SS))_d = \sum_{n
       \in \Nat} \left (\prod_{b \in \MMMB_d} \SS_{\dom (b)}
     \right )^n$.  We thus may define the $G$-coalgebra structure
     $\deriv \colon \SS \to G (\SS)$ in $\Set^\DI$ of
     strategies as follows.
     \begin{defi}
       Let, for all $d \in \DI$, $\deriv_d \colon \SS_d \to
       \sum_n (\prod_{b \in \MMMB_d} \SS_{\dom (b)})^n$ send any $S
       \in \SS_d$ to $n = S (\idv_d)$ and the map
       $$
       \begin{array}{rcll}
         S(\idv_d) & \to & \prod_{b \in \MMMB_d} \SS_{\dom (b)} \\
         \state & \mapsto & b \mapsto (\restr{S}{\state}) \cdot b.
       \end{array}
       $$
     \end{defi}
     Here, we view the ordinal $S (\idv_d)$ as a natural number, and
     the given map $S(\idv_d) \to \prod_{b \in \MMMB_d} \SS_{\dom
       (b)}$ as a list of elements of $\prod_{b \in \MMMB_d} \SS_{\dom
       (b)}$.  We further use the action of $b$ on $S$, as below
     Definition~\ref{def:SSfunctor}.
We have:
\begin{thm}\label{thm:stratcoalg}
  The map $\deriv \colon \SS \to G(\SS)$ makes $\SS$ into a terminal
  $G$-coalgebra.
\end{thm}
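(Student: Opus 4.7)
The plan is to establish terminality directly by constructing, for an arbitrary $G$-coalgebra $(U,\gamma)$ in $\Set^{\DI}$, a unique coalgebra morphism $f \colon (U,\gamma) \to (\SS, \deriv)$. The key enabling fact is that, by Proposition~\ref{prop:decompV} and Corollary~\ref{cor:Vpreordergroupoid}, each $\EVi_d \iso \EVi(d)$ is a preorder whose objects, up to isomorphism, are finite vertical composites $b_1 \vrond \ldots \vrond b_n$ of basic moves ending at $d$; moreover, by Lemma~\ref{lem:preorder}, any such composite of length $n > 0$ factors essentially uniquely as $b \vrond v'$ with $b$ basic and $v'$ a view of length $n-1$ on $\dom(b)$.

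Given $(U,\gamma)$, write $\gamma_d(u) = (n_u, \sigma_u)$ with $n_u \in \Nat$ and $\sigma_u \colon n_u \to \prod_{b \in \MMMB_d} U_{\dom(b)}$. I define $f_d(u) \in \SS_d$ by induction on view length, setting
\[
  f_d(u)(\idv_d) \;=\; n_u \qquad\text{and}\qquad f_d(u)(b \vrond v') \;=\; \sum_{i \in n_u} f_{\dom(b)}\bigl(\sigma_u(i)(b)\bigr)(v'),
\]
with restriction maps given by the canonical coproduct injections; since $\EVi(d)$ is a preorder, functoriality of the restriction maps is automatic, and the essential uniqueness of the factorisation $b \vrond v'$ ensures this recipe descends to isomorphism classes of views, hence defines a genuine presheaf of finite ordinals on $\EVi_d$. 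Naturality of $f_d$ in $d$ under the action by basic moves is then immediate.

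Coalgebra compatibility $\deriv \rond f = G(f) \rond \gamma$ is a direct unfolding: the length component of $\deriv_d(f_d(u))$ is $f_d(u)(\idv_d) = n_u$, and for each $j \in n_u$ and each basic $b \colon d' \proto d$, Definition~\ref{defi:restr} together with the defining clause of $f$ yields $((\restr{f_d(u)}{j}) \cdot b)(v') = f_{d'}(\sigma_u(j)(b))(v')$, which matches $G(f)_d(\gamma_d(u))$ summand by summand. For uniqueness, if $f' \colon U \to \SS$ also satisfies $\deriv \rond f' = G(f') \rond \gamma$, then coalgebra compatibility forces $f'_d(u)(\idv_d) = n_u$, and for $v = b \vrond v'$ it forces $f'_d(u)(v) = \sum_{i \in n_u} f'_{\dom(b)}(\sigma_u(i)(b))(v')$; an induction on $\length{v}$ then gives $f' = f$.

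The main obstacle is verifying that the recursive clause above genuinely defines a presheaf on $\EVi_d$ rather than a mere assignment on iso classes of views: one must check that the induced restriction maps are coherent and that changing the representative of $b \vrond v'$ by an isomorphism yields the same finite ordinal with the same injections. Both points rest on the preorder structure from Proposition~\ref{prop:decompV} and on the essentially unique factorisation of non-identity views provided by Lemma~\ref{lem:preorder}; once these are in hand, terminality follows as above, in agreement with the fact (cf.\ Remark~\ref{rem:lists}) that $G$ is polynomial and so admits a terminal coalgebra on general grounds.
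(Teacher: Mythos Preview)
Your argument is correct and computes the same unique coalgebra morphism via the same recursive formula $f_d(u)(b \vrond v') = \sum_{i \in n_u} f_{\dom(b)}(\sigma_u(i)(b))(v')$, but the packaging differs from the paper's. The paper first exhibits an explicit inverse $\deriv'$ to $\deriv$, then defines a sequence of approximants $f_{N+1} = \deriv^{-1} \circ G(f_N) \circ a$ starting from a trivial $f_0$, proves they stabilise on views of length $\leq N$, and takes the colimit; coalgebra compatibility and uniqueness are then checked for the limit map. Your direct induction on view length collapses this approximation process into a single recursive definition, which is cleaner but places more weight on the well-definedness step you flag as the ``main obstacle'' (that the clause $f_d(u)(b \vrond v')$ is independent of the chosen decomposition and yields a genuine presheaf). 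The paper's detour through $\deriv^{-1}$ and the $f_N$ buys that well-definedness for free, since each $f_N$ is automatically a presheaf by construction; your route is shorter but requires invoking Lemma~\ref{lem:preorder} and the preorder structure of $\EVi(d)$ more explicitly at that point.
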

This intuitively means that strategies, on individuals, are
infinite terms for the following typed grammar with
judgements $d \vdashdefinite D$ and $d \vdash S$, where $D$ is a
\emph{definite strategy} and $S$ is a \emph{strategy}
\begin{mathpar}
\inferrule{\ldots \ d' \vdash S_b \ \ldots \ {(\forall b \colon d' \proto d \in \MMMB_d)}
}{
d \vdashdefinite \langle (S_b)_{b \in \MMMB_d} \rangle
}
\and
\inferrule{\ldots \  d \vdashdefinite D_i \  \ldots \ (\forall i \in n)}{
d \vdash \bigoplus_{i \in n} D_i}~(n \in \Nat). 
\end{mathpar}
Semantically, definite strategies correspond to strategies $S$ such
that $S (\idv_d) = 1$, which will play a crucial role in the \lts{}
below.

The rest of this section is a proof of Theorem~\ref{thm:stratcoalg}. 

First of all, we construct an inverse to $\deriv$.  
\begin{defi}
  Consider $B = (B_1, \ldots, B_n) \in (G(\SS))_d$. For any view $v
  \colon d' \proto d$, define $\deriv'(B) \in \SS_d$ by
$$\deriv'(B) (v) =   \left \{ \begin{array}{ll}
n & \mbox{if $v =
  \idv_d$} \\
\sum_{i \in n} B_i(b)(v') & \mbox{if $v = b \vrond v'$},
\end{array} \right .$$
and on morphisms $$\deriv'(B) (v \xto{(w,\alpha)} v')(\state) =
\left \{ \begin{array}{ll}
i & \mbox{if $v' = \idv_d$ and $\state = i \in n$} \\
& \mbox{or if $v = \idv_d$ and $\state = (i,x)$} \\
(i, B_i(b)(w,\alpha_1)(x)) & \mbox{if $v = b \vrond v_1$ and $\state = (i,x)$},
\end{array} \right .$$
where in the last clause necessarily $v' \iso b' \vrond v'_1$ and
Lemma~\ref{lem:big} yields $\alpha_b \colon b \xto{\iso} b'$ and $\alpha_1 \colon v_1 \vrond w \to v'_1$
such that $\alpha_b \vrond \alpha_1 = \alpha$.
\end{defi}
\begin{lem}
  We have $\deriv' = \inv{\deriv}$.
\end{lem}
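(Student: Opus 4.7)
The plan is to verify directly that $\deriv'$ and $\deriv$ are mutually inverse, treating strategies as functors $\op{\EVi(d)} \to \ford$ via Proposition~\ref{prop:decompV}\ref{prop:decompV:i}. I will check the two composites, first on objects (views), then on morphisms.

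For $\deriv' \rond \deriv = \id$, fix $S \in \SS_d$. On the view $\idv_d$, both $\deriv'(\deriv(S))(\idv_d)$ and $S(\idv_d)$ equal $n := S(\idv_d)$ by definition. For a non-identity view, fix a decomposition $v = b \vrond v'$ with $b$ basic. Unfolding definitions,
\[
  \deriv'(\deriv(S))(v) \;=\; \sum_{\state \in S(\idv_d)} (\restr{S}{\state} \cdot b)(v')
  \;=\; \sum_{\state \in S(\idv_d)} \restr{S}{\state}(b \vrond v'),
\]
using the formula $(\restr{S}{\state} \cdot b)(v') = \restr{S}{\state}(b \vrond v')$ coming from Definition~\ref{def:SSfunctor}. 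The sets $\restr{S}{\state}(v)$ are by Definition~\ref{defi:restr} exactly the fibres of $S(!_v) \colon S(v) \to S(\idv_d)$, hence their disjoint union, ordered by $\state \in S(\idv_d) = n$, reconstructs $S(v)$ as a finite ordinal. Thus $\deriv'(\deriv(S))(v) = S(v)$.

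For $\deriv \rond \deriv' = \id$, fix $B = (B_1, \ldots, B_n) \in (G(\SS))_d$. By definition $\deriv'(B)(\idv_d) = n$, so the first component of $\deriv(\deriv'(B))$ agrees with that of $B$. For the second component, one must check that for each $i \in n$ and each $b \in \MMMB_d$, one has $(\restr{\deriv'(B)}{i}) \cdot b = B_i(b)$. On a view $v'$,
\[
  (\restr{\deriv'(B)}{i} \cdot b)(v') \;=\; \restr{\deriv'(B)}{i}(b \vrond v') \;=\; \{\state \in \deriv'(B)(b \vrond v') \aalt \deriv'(B)(!_{b \vrond v'})(\state) = i\},
\]
and inspection of $\deriv'$ on the morphism $!_{b \vrond v'} \colon \idv_d \to b \vrond v'$ shows that this fibre is precisely the $i$-th summand $B_i(b)(v')$.

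It remains to check functoriality, i.e.\ that both composites agree on morphisms $(w,\alpha) \colon v \to v'$ of $\EVi(d)$. Since $\EVi(d)$ is a preorder (Proposition~\ref{prop:decompV}\ref{prop:decompV:iii}), uniqueness of morphisms reduces this to a case analysis on the shape of $v$, using Lemma~\ref{lem:big} to decompose $\alpha$ as an iso $\alpha_b \colon b \xto{\iso} b'$ postcomposed with a cell $\alpha_1 \colon v_1 \vrond w \to v'_1$ when $v = b \vrond v_1$; the defining clauses of $\deriv'$ on morphisms are designed exactly so that the resulting computation matches the action of $S$ (respectively of the $B_i(b)$) on $\alpha$. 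The only genuinely delicate point is to confirm, in the case $v = \idv_d$, that $\deriv'(B)(\alpha)$ collapses to the injection into the correct summand, which follows directly from the second clause of $\deriv'$ on morphisms. No step is conceptually difficult; the main obstacle is simply keeping the bookkeeping straight when matching disjoint-union decompositions of $S(v)$ against the list structure of $(G(\SS))_d$.
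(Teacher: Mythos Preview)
Your proof is correct and follows essentially the same approach as the paper: verify the two composites by direct unfolding on $\idv_d$ and on views of the form $b \vrond v'$, using that the $\restr{S}{\state}(v)$ are the fibres of $S(!_v)$. You are in fact more careful than the paper, which checks the composites only on objects and leaves the action on morphisms implicit; your final paragraph addressing functoriality via the preorder structure of $\EVi(d)$ and Lemma~\ref{lem:big} is a welcome addition.
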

\begin{proof}
  Starting from a strategy $S \in \SS_d$, let $n = S(\idv_d)$, and
  $B_i(b) = (\restr{S}{i}) \cdot b$, for any $d' \xproto{b} d$.
  We have $\deriv S = (B_1, \ldots, B_n)$, and thus
  $\deriv' (\deriv S) (v) = n$ if $v = \idv_d$, and $\deriv' (\deriv
  S) (v) = \sum_{i \in n} (\restr{S}{i})(b \vrond v') = S(v)$ if $v =
  b \vrond v'$, as desired.

  Conversely, starting from $B = (B_1, \ldots, B_n) \in
  (G(\SS))_d$, let $S = \deriv' B$.  We have that $\deriv S$ has
  length $n$, and its $i$th component maps any $b \colon d' \proto d$
  to the strategy mapping any $v' \colon d'' \proto d'$ to the
  strategy $(\restr{S}{i}) \cdot b$. Thus, $(\deriv S)_i (b) (v') =
  ((\restr{S}{i}) \cdot b) (v') = (\restr{S}{i}) (b \vrond v')$.  But
  by definition, this is equal to $B_i (b) (v')$, as desired.  
\end{proof}

Consider any $G$-coalgebra $a \colon U \to G U$.

We define by induction on $N$ a sequence of maps $f_N \colon U \to
\SS$, such that for any $d$ and $u \in U_d$ the $f_{n+N}(u)$'s agree
on views of length $\leq n$. I.e., for any $d \in \DI$, $u \in U_d$,
view $v$ of length less than $n$, and any $N$, $f_{n+N} (u) (v) =
f_{n} (u) (v)$, and similarly the action of $f_{n+N} (u)$ on morphisms
between such views is the same as that of $f_{n} (u)$.

To start the induction, take $f_0 (u)$ to be the strategy mapping
$\idv_{d}$ to $\length{a (u)}$, i.e., the length of $a (u) \in
(\prod_{b} U_{\dom (b)})^\star$, and all other views to $0$.

Furthermore, given $f_N$, define $f_{N+1}$ to be
$$U \xto{a} G U \xto{G  (f_N)} G (\SS) \xto{\inv\deriv} \SS.$$

In other words, $f_N$ is 
$$U \xto{a} G U \xto{G (a)} \ldots G^{N-1} U \xto{G^{N-1} a} G^{N} U \xto{G^N f_0} 
G^N \SS \xto{G^{N-1}(\inv\deriv)} G^{N-1} (\SS) \ldots G(\SS) \xto{\inv\deriv} \SS.$$

Unfolding the definitions yields:
\begin{lem}\label{lem:unfold}
  Consider any $u \in U_d$, and let $a (u) = (z_1, \ldots, z_k)$.  For
 any $f \colon U \to
  \SS$, we have
  \begin{itemize}
  \item  $\inv\deriv(G(f)(a(u)))(\idv_d) = k$, and
  \item $\inv\deriv(G(f)(a(u)))(b \vrond v) = \sum_{i \in k} f
    (z_i (b)) (v)$ for any composable basic move $b$ and view $v$.
  \end{itemize}
\end{lem}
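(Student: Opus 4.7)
The plan is to observe that the statement is purely a matter of chasing through the definitions of $a$, $G$, and $\inv\deriv = \deriv'$, in that order, with no case analysis beyond the two clauses of the formula for $\deriv'$.

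First, I would unpack $G(f)(a(u))$ as an element of $(G\SS)_d$. By definition, $a(u) \in (GU)_d = \bigl(\prod_{b \in \MMMB_d} U_{\dom(b)}\bigr)^\star$, so writing $a(u) = (z_1,\ldots,z_k)$ means we have $k$ maps $z_i \colon \MMMB_d \to U$ with $z_i(b) \in U_{\dom(b)}$. Since $G$ acts on morphisms by applying $f$ componentwise inside both the product and the list, we have $G(f)(a(u)) = (B_1,\ldots,B_k)$ where $B_i(b) = f(z_i(b)) \in \SS_{\dom(b)}$.

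Next, I would apply the explicit description of $\deriv' = \inv\deriv$ given just before the lemma. The first clause of that definition yields immediately
\[
\inv\deriv(G(f)(a(u)))(\idv_d) = k,
\]
establishing the first bullet. For the second bullet, I take a composable basic move $b$ and view $v$ so that $b \vrond v$ is a non-identity view. The second clause of the definition of $\deriv'$, applied to $(B_1,\ldots,B_k)$, gives
\[
\inv\deriv(G(f)(a(u)))(b \vrond v) = \sum_{i \in k} B_i(b)(v) = \sum_{i \in k} f(z_i(b))(v),
\]
using the formula $B_i(b) = f(z_i(b))$ from the first step.

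There is no real obstacle here: the lemma is a bookkeeping statement recording what the composite $\inv\deriv \circ G(f) \circ a$ does on a view. The only thing that might deserve a remark is the (mild) reindexing hidden in $G(f)$ acting on the product $\prod_b U_{\dom(b)}$; as long as this is understood componentwise, both equalities reduce to definitional expansions.
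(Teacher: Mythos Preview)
Your proposal is correct and matches the paper's approach: the paper does not give a separate proof but simply introduces the lemma with ``Unfolding the definitions yields,'' which is exactly the definitional chase through $a$, $G(f)$, and $\deriv'$ that you carry out.
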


\begin{cor}\label{cor:unfold}
  We have, for any $N \in \Nat$, $f_N(u)(\idv_d) = k$.  
  Furthermore, for any basic move $b \colon d' \proto d$, and view $v \colon d''
  \proto d'$, we have for any $N \in \Nat$:
  $$f_{N+1} (u) (b \vrond v) = \sum_{i
    \in k} f_N (z_i (b)) (v).$$
\end{cor}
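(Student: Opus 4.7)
The corollary is an immediate consequence of Lemma~\ref{lem:unfold} together with the inductive definition of the sequence $f_N$, so the plan is simply to unfold the definitions carefully. I do not anticipate any real obstacle; the main point is just to line up the right instance of Lemma~\ref{lem:unfold} with the recursive clause $f_{N+1} = \inv{\deriv} \rond G(f_N) \rond a$.

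For the first statement, $f_N(u)(\idv_d) = k$, I would split on $N$. When $N = 0$, this holds directly by the definition of $f_0$: by construction $f_0(u)$ sends $\idv_d$ to $\length{a(u)} = k$. When $N \geq 1$, writing $f_N(u) = \inv{\deriv}\bigl(G(f_{N-1})(a(u))\bigr)$ and applying the first clause of Lemma~\ref{lem:unfold} to the map $f_{N-1} \colon U \to \SS$ immediately gives $f_N(u)(\idv_d) = k$, independently of $N$.

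For the second statement, fix a basic move $b \colon d' \proto d$ and a view $v \colon d'' \proto d'$. Unfolding the definition of $f_{N+1}$ gives
\[
f_{N+1}(u)(b \vrond v) \;=\; \inv{\deriv}\bigl(G(f_N)(a(u))\bigr)(b \vrond v),
\]
and the second clause of Lemma~\ref{lem:unfold} applied to $f = f_N$ rewrites the right-hand side as $\sum_{i \in k} f_N(z_i(b))(v)$, which is precisely the claim.

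Hence both assertions follow at once, and no induction on the structure of views or on $N$ beyond the trivial base case is needed; the work has already been done in Lemma~\ref{lem:unfold}.
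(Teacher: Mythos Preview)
Your argument is correct and is exactly the intended one: the paper states the result as a corollary of Lemma~\ref{lem:unfold} without further proof, and your unfolding of the definition of $f_{N+1}$ together with the two clauses of that lemma (plus the base case $f_0(u)(\idv_d)=\length{a(u)}=k$) is precisely what is needed.
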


As announced, we have:
\begin{lem}\label{lem:station}
  For any view $v \colon d' \proto d$ and $n \in \Nat$,
  $f_{\length{v}+n}(u)(v) = f_{\length{v}} (u) (v)$.
\end{lem}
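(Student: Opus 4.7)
The plan is to proceed by induction on $\length{v}$.

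For the base case $\length{v} = 0$, Corollary~\ref{cor:Vpreordergroupoid} and Lemma~\ref{lem:hv} imply that $v$ is (specially) isomorphic to $\idv_d$, so we may as well take $v = \idv_d$. The first bullet of Corollary~\ref{cor:unfold} gives $f_N(u)(\idv_d) = \length{a(u)}$ for every $N \in \Nat$ (independently of $N$), which yields $f_{0+n}(u)(\idv_d) = f_0(u)(\idv_d)$ and establishes the base case.

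For the inductive step, fix a view $v$ with $\length{v} = m+1$, and pick a decomposition $v \iso b \vrond v'$ with $b$ a basic move and $v'$ a view of length $m$ (this decomposition exists by the definition of views). Writing $a(u) = (z_1, \ldots, z_k)$, two applications of the second bullet of Corollary~\ref{cor:unfold} give
\begin{equation*}
f_{(m+1)+n}(u)(b \vrond v') \;=\; \sum_{i \in k} f_{m+n}(z_i(b))(v')
\qquad\text{and}\qquad
f_{m+1}(u)(b \vrond v') \;=\; \sum_{i \in k} f_{m}(z_i(b))(v').
\end{equation*}
Applying the induction hypothesis to each $z_i(b) \in U_{\dom(b)}$ and the view $v'$ of length $m$, we get $f_{m+n}(z_i(b))(v') = f_m(z_i(b))(v')$ for all $i$, whence the two sums coincide.

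The potential subtlety is that $v$ is only specially isomorphic, rather than literally equal, to $b \vrond v'$; but this is the kind of pseudoness sloppiness explicitly tolerated throughout the section (compare the remark following \axref{views:decomp}), and all that is really needed is that the invariance of $f_N$ under such isomorphisms follows from functoriality of $\SS$ in $\DH$, which is already implicit in the definition of $\deriv$ and $\inv{\deriv}$. Granting this, the induction goes through with no further work, and one can simultaneously verify by the same argument the parenthetical claim about actions on morphisms between views of length $\le n$, should it be needed downstream.
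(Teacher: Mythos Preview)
Your proof is correct and follows essentially the same approach as the paper: both use Corollary~\ref{cor:unfold} to unfold $f_{N+1}(u)(b \vrond v')$ into a sum over the $z_i(b)$ and then apply an inductive hypothesis on the shorter view $v'$. The only cosmetic difference is that the paper phrases this as well-founded induction on $(\length{v}, n)$ in the lexicographic order, whereas you induct on $\length{v}$ alone; since the paper's inductive step only ever decreases the first component, your formulation is in fact the cleaner one.
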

\proof
  We proceed by well-founded induction on $(\length{v},n)$, for the
  lexical ordering. 
  Let again $a(u) = (z_1, \ldots, z_k)$.
  First, we have $f_{\length{\idv}}(u) (\idv) = k$, and for any $n$,
  $f_{\length{\idv}+n+1}(u)(\idv) = k$ by Corollary~\ref{cor:unfold}.
  Now, if $v = b \vrond v'$, then by Corollary~\ref{cor:unfold} again:
  \begin{center}
    $\begin{array}[b]{rcll} f_{\length{v}+n+1} (u) (b \vrond v') &=&
      \sum_{i \in k} f_{\length{v}+n} (z_i(b)) (v') \\
      &=&
      \sum_{i \in k} f_{\length{v'}+n+1} (z_i(b)) (v') & \mbox{(by $\length{v} = \length{v'}+1$)}\\
      &=&
      \sum_{i \in k} f_{\length{v'}} (z_i(b)) (v') & \mbox{(by induction hypothesis)} \\
      &=& f_{\length{v}} (u) (b \vrond v') & \mbox{(by Corollary~\ref{cor:unfold} again).}
    \end{array}$ \qed
  \end{center}

  The sequence $(f_n(u))$ thus has a colimit in $\SS_d = \EVihato{d}$:
  the presheaf mapping any view $v$ to $f_{\length{v}} (u)(v)$. This
  allows us to define:
\begin{defi}
  Let $f \colon U \to \SS$ map any $u \in U_d$ to the colimit of
  the $f_N (u)$'s.
\end{defi}

\begin{lem}
  The following diagram commutes:
  \begin{center}
    \diag{%
      |(U)| U \& |(FU)| G U \\
      |(CVhatf)| \SS \& |(FCVhatf)| G  (\SS). %
    }{%
      (U) edge[labelu={a}] (FU) %
      edge[labell={f}] (CVhatf) %
      (FU) edge[labelr={G (f)}] (FCVhatf) %
      (FCVhatf) edge[labelu={\inv\deriv}] (CVhatf) %
    }
  \end{center}
\end{lem}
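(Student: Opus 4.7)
The plan is to verify, for every $d\in\DI$ and $u\in U_d$, the equality of presheaves $f(u)=\inv\deriv(G(f)(a(u)))$ in $\SS_d\iso\EVihato{d}$. Since these are both presheaves on $\EVi(d)$, I would check the equality on objects (views $v\colon d'\proto d$) and on morphisms between views. By Proposition~\ref{prop:decompV}~\ref{prop:decompV:iii}, $\EVi(d)$ is a preorder, so the morphism-wise check reduces to showing the action of the two presheaves agrees on (isomorphism classes of) morphisms; this should go through essentially the same calculation as the object-wise one, using naturality of the constructions.

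For the object-wise check, I would proceed by cases on the view $v$. Write $a(u)=(z_1,\dots,z_k)$. When $v=\idv_d$, Corollary~\ref{cor:unfold} gives $f_N(u)(\idv_d)=k$ for every $N$, hence $f(u)(\idv_d)=k$ by definition of $f$ as the colimit. On the other hand, Lemma~\ref{lem:unfold} (applied to $f$ itself) gives $\inv\deriv(G(f)(a(u)))(\idv_d)=k$, so both sides agree. When $v=b\vrond v'$, definition of $f$ together with Lemma~\ref{lem:station} gives $f(u)(b\vrond v')=f_{\length{v'}+1}(u)(b\vrond v')$, which by Corollary~\ref{cor:unfold} equals $\sum_{i\in k} f_{\length{v'}}(z_i(b))(v')$. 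By Lemma~\ref{lem:station} once more, each summand is $f(z_i(b))(v')$. This sum is exactly $\inv\deriv(G(f)(a(u)))(b\vrond v')$ by Lemma~\ref{lem:unfold}, closing the case.

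The argument on morphisms proceeds similarly: given a morphism $\alpha\colon v\to v'$ in $\EVi(d)$, I would use Lemma~\ref{lem:big} to split $\alpha$ into a component inside the leading basic move and a residual piece, and then apply the definition of $\inv\deriv$ on morphisms together with the stationarity of the $f_N$ from Lemma~\ref{lem:station}. The main (minor) obstacle is really a bookkeeping one: keeping track of indices and the pseudoness implicit in $v=b\vrond v'$ while matching up the two definitions. Once the view-wise identity holds for both object- and morphism-actions, the two presheaves coincide, proving commutativity of the square.
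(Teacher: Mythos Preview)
Your proposal is correct and takes essentially the same approach as the paper: both argue by case analysis on views, using Lemma~\ref{lem:unfold} and Corollary~\ref{cor:unfold} to match the two sides at $\idv_d$ and at $b\vrond v'$. The paper's proof is exactly your object-wise check (it does not spell out the morphism case at all); your additional paragraph on morphisms is extra caution rather than a different route.
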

\proof Consider any $u \in U_d$ and view $v$, and let $a(u) = (z_1,
\ldots, z_k)$.  Let also $n = f(u)(v) = f_{\length{v}} (u) (v)$ and
$n' = \inv\deriv (G(f)(a(u)))(v)$.
  \begin{itemize}
  \item If $\length{v} = 0$, then by Lemma~\ref{lem:unfold} $n = n' =
    k$.
  \item If $v = b \vrond v'$, then by Lemma~\ref{lem:unfold} again we
    have $n' = \sum_{i \in k} f (z_i (b)) (v')$. But by definition of
    $f$, we obtain $n' = \sum_{i \in k} f_{\length{v'}} (z_i (b))
    (v')$, which is in turn equal to $f_{\length{v}} (u) (v) = n$ by
    Corollary~\ref{cor:unfold}. \qed
  \end{itemize}

\begin{cor}
  The map $f$ is a map $U \to \SS$ of $G $-coalgebras.
\end{cor}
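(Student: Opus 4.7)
The plan is to observe that this corollary is an immediate consequence of the immediately preceding lemma, combined with the fact that $\deriv$ is an isomorphism. Recall that a morphism of $G$-coalgebras from $(U,a)$ to $(\SS,\deriv)$ is by definition a map $f \colon U \to \SS$ in $\Set^\DI$ making the square
\[
\deriv \rond f \;=\; G(f) \rond a
\]
commute. The preceding lemma establishes the equivalent equation $f = \inv\deriv \rond G(f) \rond a$. Since $\deriv$ was shown to be invertible earlier, with explicit inverse $\deriv'$ (the lemma stating $\deriv' = \inv\deriv$), post-composing both sides with $\deriv$ yields the required coalgebra-morphism equation.

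So the proof reduces to a single rearrangement, with no further computation. The only thing to take care about is pointing explicitly to the invertibility of $\deriv$ to move it across the equality; this was already established in the pair of construction-and-lemma defining $\deriv'$. Nothing else needs to be checked: functoriality of $G$ and naturality are already implicit in the statement of the preceding lemma. There is no obstacle here, and in particular no need to invoke terminality of $\SS$ — that would be used instead for the uniqueness half of Theorem~\ref{thm:stratcoalg}, which is a separate matter from this corollary.
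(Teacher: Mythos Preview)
Your proposal is correct and matches the paper's approach exactly: the paper states this corollary with no proof, treating it as immediate from the preceding lemma together with the invertibility of $\deriv$ established earlier. Your one-line rearrangement $f = \inv\deriv \rond G(f) \rond a \Rightarrow \deriv \rond f = G(f) \rond a$ is precisely the intended inference.
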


\begin{lem}
  The map $f$ is the unique map $U \to \SS$ of $G $-coalgebras.
\end{lem}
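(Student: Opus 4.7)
The plan is to let $g \colon U \to \SS$ be an arbitrary $G$-coalgebra morphism and to show $g = f$ by well-founded induction on the length of views. Fix $u \in U_d$ and write $a(u) = (z_1, \ldots, z_k)$. Since $g$ is a coalgebra morphism, $g(u) = \inv{\deriv}(G(g)(a(u)))$, so Lemma~\ref{lem:unfold} applied with $g$ in place of $f$ yields $g(u)(\idv_d) = k$ and $g(u)(b \vrond v) = \sum_{i \in k} g(z_i(b))(v)$ for every composable basic move $b$ and view $v$. The analogous equations for $f$ have already been recorded in Corollary~\ref{cor:unfold}.

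The induction on $\length{v}$ is then short. In the base case $\length{v} = 0$, the view $v$ is specially isomorphic to $\idv_d$ and both sides equal $k$. In the inductive step with $v \iso b \vrond v'$ and $\length{v'} < \length{v}$, the induction hypothesis supplies $g(z_i(b))(v') = f(z_i(b))(v')$ for each $i \in k$, and summing yields $g(u)(v) = f(u)(v)$. This establishes pointwise agreement of the presheaves $g(u)$ and $f(u)$ on objects of $\EVi(d)$.

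To upgrade this to equality as presheaves, I also need agreement on morphisms. Here Proposition~\ref{prop:decompV} is decisive: since $\EVi(d)$ is a preorder, there is at most one morphism between any two views, so it is enough to show that the two induced monotone maps between the agreed-upon ordinals coincide. Unfolding the coalgebra morphism property via the explicit clauses that define $\deriv'$ on a morphism $(w, \alpha) \colon v \to v'$ expresses the action of $g(u)$ in terms of the action of the $g(z_i(b))$ on strictly shorter morphism data $(w, \alpha_1)$ extracted from Lemma~\ref{lem:big}; the same well-founded induction therefore goes through verbatim. I expect the only real obstacle to lie in this last bookkeeping, in particular keeping track of the decomposition $\alpha = \alpha_b \vrond \alpha_1$ and of the canonical choices guaranteed by Corollary~\ref{cor:Vpreordergroupoid}, but no new ingredients are required. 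Combining agreement on objects and morphisms yields $g(u) = f(u)$ for every $u \in U_d$ and every $d \in \DI$, hence $g = f$, proving uniqueness.
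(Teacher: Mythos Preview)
Your proposal is correct and follows essentially the same approach as the paper: derive the recursion $g(u)(\idv_d)=k$ and $g(u)(b\vrond v)=\sum_i g(z_i(b))(v)$ from the coalgebra morphism equation via Lemma~\ref{lem:unfold}, then conclude $g=f$ by induction on $\length{v}$. You are in fact more careful than the paper, which stops at ``imposes by induction that $f=g$'' without separately addressing the action on morphisms; your use of Proposition~\ref{prop:decompV} and the explicit clauses of $\deriv'$ to handle that part is a legitimate way to fill the gap the paper leaves implicit.
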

\begin{proof}
  Consider any such map $g$ of coalgebras, and let $a(u) = (z_1,\ldots,z_k)$. The map $g$ must be such that 
  $$g (u) (\idv_{d}) = \inv\deriv (G (g) (a (u))) (\idv_d) = k,$$ by
  Lemma~\ref{lem:unfold}.
  Furthermore, by the same lemma, it must satisfy:
  $$g (u) (b \vrond v) = \inv\deriv (G (g) (a (u))) (b \vrond v) = \sum_{i \in k}
  g (z_i (b)) (v),$$ which imposes by induction that $f = g$.
\end{proof}

The last two results directly entail Theorem~\ref{thm:stratcoalg}.

\subsection{The labelled transition system for strategies}\label{subsec:lts:strats}
In this section, we go beyond \citetalias{2011arXiv1109.4356H}, and
define \anlts{} for strategies, for an arbitrary playground $\D$. 

First, the alphabet for our \lts{} will constist of quasi-moves, in the following sense.
\begin{notation}\label{not:cartesian}
  We use the following notation for cartesian lifting
  (by~\axref{fibration}) of a play $u$ along a horizontal morphism
  $k$ (fixing a global choice of liftings):
  \begin{center}
    \doublecellpro{D_{k,u}}{X'}{Y}{X.}{h_{k,u}}{\restr{u}{k}}{u}{k}{\alpha_{k,u}}
  \end{center}
\end{notation}
\begin{defi}\label{def:quasi-move}
  A \emph{quasi-move} is a vertical morphism which locally either is a
  move or has length 0.  More precisely, a play $u \colon Y \proto X$ is
  a quasi-move iff for all players $x \colon d \to X$, $\restr{u}{x}$
  either is a move or has length 0.

  A quasi-move is \emph{full} when it locally either is a full move
  or has length 0. Let $\QF$ denote the subgraph of $\Dv$ consisting
  of full quasi-moves.
\end{defi}
Observe that a quasi-move on an individual either is a move or has
length 0.

States in our \lts{} will be the following special kind of strategies:
\begin{defi}
  A strategy $S \in \SSX$ is \emph{definite} when $\exta{X}{S}(\idv_X)
  = 1$, or equivalently when for all players $(d,x) \in \Pl (X)$,
  we have $S(\idv_d,x) = 1$.
\end{defi}

Intuitively, for any quasi-move $X' \xproto{M} X$, we would like
transitions $(X',S') \xto{M} (X,S)$ in our \lts{} to occur when $S'$
is a definite restriction of $S \cdot M$ to some state of
$\exta{X}{S}(M)$.  I.e., a transition roughly corresponds to a way for
$\exta{X}{S}$ to accept $M$.  However, $S \cdot M$ is not quite
$\exta{X}{S}(M)$ so the right notion of restriction may not be
obvious. But we have defined a notion of restriction in
Definition~\ref{defi:restr}, for strategies on individuals.  We now
define restriction for general strategies, and use this to define our
\lts{}. Finally, we elucidate the connection with $\exta{X}{S}(M)$.

Consider, for any $S \in \SSX$ and $\state \in \prod_{(d,x) \in \Pl
  (X)} S (\idv_d, x)$, and recall from below
Definition~\ref{def:SSfunctor} that $S \cdot h$ is shorthand for the
image of $S \in \SSX$ under the action of a horizontal morphism $h
\colon Y \to X$ for the horizontal part of our pseudo double functor
$\SS$.

\begin{defi}
  Let the \emph{restriction} $\restr{S}{\state} \in \SSX$ of $S$ to
  $\state$ be defined by the fact that for any player $x \colon d \to
  X$, $$(\restr{S}{\state}) \cdot x = \restr{(S \cdot x)}{\state (d,x)}.$$
\end{defi}
Concretely, we have, for any $v$, $\restr{S}{\state} (v,x) =
\ens{\state' \in S (v,x) \aalt S(!_v)(\state') = \state(d, x)}$, where
$!_v$ is the unique morphism $(\idv_d,x) \to (v,x)$ in $\EVi_X$.

We now define our \lts{} for strategies over $\QF$.
\begin{defi}
  The underlying graph $\SSS_\D$ for our \lts{} is the graph with as
  vertices all pairs $(X, S)$ where $X$ is a position and $S \in \SSX$
  is a definite strategy, and whose edges $(X',S') \to (X,S)$ are all
  full quasi-moves $M \colon X' \proto X$ such that there exists a
  state $\state \in \prod_{(d',x') \in \Pl (X')} (S \cdot M)
  (\idv_{d'}, x')$ with $S' = \restr{(S \cdot M)}{\state}.$

The assignment $(X,S) \mapsto X$ defines a morphism $p_{\SS} \colon
\SSS_\D \to \QF$ of reflexive graphs, which is our \lts{}.
\end{defi}
An alternative characterisation of transitions $(X,S) \xot{M} (X',S')$ is the existence of $\state$ such that
$$S' \cdot x' = \restr{(S \cdot M \cdot x')}{\state(d',x')} = \restr{(S \cdot (x')^M \cdot v^{x',M})}{\state(d',x')}$$
for all $(d',x') \in \Pl (X')$.

Let us now return to the connection between $\prod_{(d',x') \in \Pl (X')} (S \cdot M) (\idv_{d'}, x')$
and $\exta{X}{S}(M)$.
First, we have
by definition $(S \cdot M) (\idv_{d'}, x') = S (v^{x',M}, (x')^M)$,
for any player $x' \colon d' \to X'$.  
Now, as recalled above, $\exta{X}{S}(M)$ may be characterised as a
limit of
  $$\op{(\EVi_X / M)} \xto{\dom} \op{(\EVi_X)} \xto{S} \ford \into \set.$$ 
  Since $\alpha^{x',M} \colon v^{x',M} \to M$ is an object in
  $\op{(\EVi_X / M)}$, we obtain by projection a map $\exta{X}{S} (M)
  \to S(v^{x',M},(x')^M)$.

\begin{defi}
  For any $S \in \SSX$, let $\psi_M \colon \exta{X}{S}(M) \to
  \prod_{(d',x') \in \Pl (X')} S(v^{x',M},(x')^M)$ denote the
  corresponding tupling map.
\end{defi}

\begin{prop}\label{prop:SM:SdotM}
  For any definite $S \in \SSX$, the map $\psi_M$ is a bijection.
\end{prop}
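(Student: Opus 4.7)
The plan is to compute $\exta{X}{S}(M)$ as the limit $\lim_{(v,x,\phi) \in \EVi_X/M} S(v,x)$ coming from the right Kan extension defining $\extfun{X}$, and then build an inverse to $\psi_M$ directly. The objects $A_{x'} := (v^{x', M}, (x')^M, \alpha^{x', M})$ for $x' \in \Pl(X')$ sit inside $\EVi_X/M$, and $\psi_M$ is the canonical projection of a compatible family to its $A_{x'}$-components. Definiteness of $S$ enters crucially: since $S(\idv_d, x) = 1$ for every player $(d,x) \in \Pl(X)$, the contributions of identity-view objects of $\EVi_X/M$ to the limit are trivial, which is what enables collapse to the product.

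The first step is to construct, for every object $(v,x,\phi)$ of $\EVi_X/M$, a canonical morphism $\sigma_{(v,x,\phi)} \colon (v,x,\phi) \to A_{x'}$ to some $A_{x'}$, essentially unique when $\length{v} > 0$. Represent $\phi$ by a double cell $\beta \colon v \vrond w \to M$, and apply Lemma~\ref{lem:big} after fixing decompositions of $v$ into basic moves $b_1 \vrond \ldots \vrond b_n$ and of $M$ into moves $M_1 \vrond \ldots \vrond M_p$. This yields a strictly monotone $f \colon n \to p$ matching the $b_i$'s with some of the $M_k$'s, a residual $\beta' \colon w \to M_{> f(n)}$ for the part of $M$ above $v$'s embedding, and compatibility double cells. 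The top individual $d_n$ of $v$ sits, via $\beta$, at level $f(n)$ in $M$'s decomposition. Applying Proposition~\ref{prop:views} to the upper part of $M$ with the induced map $d_n \to X_{f(n)}$ then promotes $d_n$ to a player $x' \colon d' \to X'$, and produces a view $v^{x', M}$ together with the universal double cell $\alpha^{x', M}$ that factors the embedding of $v$ into $M$ through $v^{x', M}$. This yields $\sigma_{(v,x,\phi)}$. Essential uniqueness follows from Corollary~\ref{cor:Vpreordergroupoid} (views form a preorder) combined with the quotient~\eqref{eq:quotient:E} defining morphisms of $\E$.

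With $\sigma$ in hand, the bijection is formal. For injectivity, any compatible family $(s_{(v,x,\phi)})$ in the limit satisfies $s_{(v,x,\phi)} = S(\sigma_{(v,x,\phi)})(s_{A_{x'}})$ by compatibility, so it is determined by $\psi_M$. For surjectivity, given $(s_{x'}) \in \prod_{x'} S(A_{x'})$, define $s_{(v,x,\phi)} := S(\sigma_{(v,x,\phi)})(s_{x'})$; compatibility with a morphism $\tau \colon (v_1,x_1,\phi_1) \to (v_2,x_2,\phi_2)$ reduces, by essential uniqueness, to the observation that $\sigma_{(v_2,x_2,\phi_2)} \circ \tau$ is again a canonical morphism at $(v_1,x_1,\phi_1)$, hence agrees with $\sigma_{(v_1,x_1,\phi_1)}$.

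The main obstacle is the identity-view edge case, $v = \idv_d$: there, distinct players $x'$ of $X'$ may project to the same $x \in \Pl(X)$ under $(-)^M$ (as happens at a fork), so the associated $\sigma$ is not unique. Definiteness of $S$ is the essential saving grace: $S(\idv_d, x) = 1$ forces $s_{(\idv_d, x, \phi)}$ to the unique element of a singleton regardless of the choice, so well-definedness of the surjectivity construction is preserved. A secondary, clerical obstacle is carefully handling pseudoness of $\D$ and the quotient~\eqref{eq:quotient:E} throughout; with the structural lemmas of Section~\ref{subsec:views} in hand, these do not cause genuine difficulty.
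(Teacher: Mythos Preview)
Your approach is correct and captures the same content as the paper's, but the paper takes a cleaner route. Rather than constructing, for each object of $\EVi_X/M$, a canonical morphism $\sigma$ to some $A_{x'}$ and then checking compatibility by hand, the paper first observes that since $M$ is a full quasi-move, every view into $M$ has length at most $1$. It then exhibits an explicit equivalence between $\EVi_X/M$ and the finite poset $P$ with underlying set $\Pl(X) + \Pl_M(X')$ and ordering $(d,x) < (d',x')$ iff $x = (x')^M$, sending $(d,x)$ to the unique morphism $\idv_d \to M$ with lower border $x$ and $(d',x')$ to $\alpha^{x',M}$. Once this equivalence is in place, the limit defining $\exta{X}{S}(M)$ is a limit over $\op{P}$; definiteness collapses the $\Pl(X)$ factors to singletons, leaving exactly $\prod_{(d',x') \in \Pl_M(X')} S(v^{x',M},(x')^M)$. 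A final observation (again using definiteness) shows that the factors for $x' \notin \Pl_M(X')$ are also singletons, so $\psi_M$ itself is a bijection.

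Your detour through Lemma~\ref{lem:big} with general $n$ is unnecessary here: the length-$\leq 1$ bound makes the combinatorics trivial, and your ``promotion to a player of $X'$ via Proposition~\ref{prop:views} on the upper part of $M$'' is doing exactly the work that the paper packages into the equivalence $P \simeq \EVi_X/M$. The essential uniqueness you invoke for the compatibility check in surjectivity is precisely Proposition~\ref{prop:decompV}\ref{prop:decompV:iii} (that $\EVi(d)$ is a preorder), which the paper uses to show the functor $p \colon P \to \EVi_X/M$ is full. So the two arguments are the same underneath; the paper's framing just makes the structure of $\EVi_X/M$ explicit up front and avoids the case analysis on $\length{v}$.
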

We prove this through the following lemma.
For any full quasi-move $M \colon X' \proto X$, observe that for any
player $x' \colon d' \to X'$, $v^{x',M}$ has length at most $1$
(consider $\restr{M}{(x')^M}$), and let $$\Pl_M (X') = \ens{(d',x')
  \in \Pl (X') \aalt \length{v^{x',M}} \neq 0}.$$
\begin{lem}\label{lem:psibij}
  For any definite $S \in \SSX$, and full quasi-move $M \colon X' \proto X$,
  the map
  $$\exta{X}{S}(M) \xto{\psi_M} \prod_{(d',x') \in \Pl (X')} S (v^{x',M}, (x')^M) 
  \to \prod_{(d',x') \in \Pl_M (X')} S(v^{x',M}, (x')^M),$$ where the
  second map is by projection, is bijective.
\end{lem}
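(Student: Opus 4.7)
The strategy is to construct an explicit inverse to the displayed composite map, by exploiting the limit presentation of $\exta{X}{S}(M)$ together with the decomposition results of Section~\ref{subsec:views}.

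First I would dispose of the ``trivial'' factors. For any $(d', x') \in \Pl(X') \setminus \Pl_M(X')$, by definition $\length{v^{x',M}} = 0$, so Lemma~\ref{lem:alphau} combined with Lemma~\ref{lem:hv} produces a horizontal iso $v^{x',M} \iso \idv_{d'}$, whence $S(v^{x',M}, (x')^M) \iso S(\idv_{d'}, (x')^M)$. By definiteness of $S$, this set is a singleton. The second projection in the composite therefore discards only singleton factors and is itself bijective, reducing the lemma to showing that $\psi_M$ is a bijection (which is exactly Proposition~\ref{prop:SM:SdotM}).

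Second, unfold $\exta{X}{S}(M)$ as the limit of the diagram $\op{(\EVi_X/M)} \xto{\dom} \op{\EVi_X} \xto{S} \set$: its elements are compatible families $(\eta_\alpha \in S(v,x))$ indexed by objects $((v,x), \alpha)$ of $\EVi_X/M$. The map $\psi_M$ is precisely the tupling of the limit projections at the canonical objects $\alpha^{x',M} \colon (v^{x',M}, (x')^M) \to M$, for $x' \in \Pl(X')$. These canonical objects exist by Proposition~\ref{prop:views}, applied stagewise along a chosen decomposition of $M$ into moves, and are essentially unique by Corollary~\ref{cor:Vpreordergroupoid}. Bijectivity of $\psi_M$ thus amounts to showing that the full subcategory of $\EVi_X/M$ spanned by these canonical objects is final, i.e., that every object $((v,x), \alpha)$ admits an essentially unique morphism to some $((v^{x',M}, (x')^M), \alpha^{x',M})$.

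Existence is obtained by applying Lemma~\ref{lem:big} to a representative cell $\tilde\alpha \colon v \vrond w \to M$ for $\alpha$: this yields a strictly monotone embedding of the basic-move decomposition of $v$ into a chosen decomposition of $M$, and extracts a canonical player $x'$ of $X'$ (tracking the top individual of $v$ along the top horizontal boundary of $\tilde\alpha$ through $w$). The result is a factorization of $\tilde\alpha$ through $\alpha^{x',M}$, i.e., a morphism in $\EVi_X/M$. Essential uniqueness of such a factorization then follows from Corollary~\ref{cor:Vpreordergroupoid} together with the quotienting~\eqref{eq:quotient:E} in the definition of $\E$. The main obstacle will be rigorously managing the pseudo-double-categorical coherences (associators and unitors for vertical composition, together with the special isomorphisms of~\axref{atomicity}) when composing $\tilde\alpha$ with the canonical cells $\alpha^{x',M}$, and verifying that the extracted player $x'$ and the factorization are well-defined modulo~\eqref{eq:quotient:E}. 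As in the proofs of Proposition~\ref{prop:views} and Lemma~\ref{lem:big}, these coherences should be handled in the paper's intentionally loose pseudoness style.
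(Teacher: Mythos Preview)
Your overall strategy is close to the paper's, but the finality claim in your second step is false and this is where the argument breaks. Consider $M$ generated from a fork seed $\forkn$. The forking player $x \in \Pl(X)$ has two avatars $x'_1, x'_2 \in \Pl_M(X')$, and the object $((\idv_d, x),!)$ of $\EVi_X/M$ admits morphisms to both $\alpha^{x'_1,M}$ and $\alpha^{x'_2,M}$; these have non-isomorphic targets (the basic moves $\paraln$ and $\pararn$ are distinct), and there is no morphism between them, so the relevant comma category is disconnected and your subcategory is not final. Your uniqueness appeal to Corollary~\ref{cor:Vpreordergroupoid} and~\eqref{eq:quotient:E} only gives uniqueness of a morphism once the target is fixed, not uniqueness of the target; and your existence argument via Lemma~\ref{lem:big} extracts $x'$ from a chosen representative $(w,\tilde\alpha)$, which is precisely the data that~\eqref{eq:quotient:E} quotients out, so different representatives yield different $x'$.

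The missing idea is a \emph{second} use of definiteness, beyond discarding the length-$0$ factors in the codomain product. The paper does this by exhibiting an equivalence between $\EVi_X/M$ and the poset $P$ with underlying set $\Pl(X) + \Pl_M(X')$, ordered by $(d,x) < (d',x')$ iff $x = (x')^M$ (checking that the obvious functor $P \to \EVi_X/M$ is full, faithful, and essentially surjective, using Proposition~\ref{prop:decompV} and the fact that all views into a quasi-move have length $\leq 1$). Then $\exta{X}{S}(M)$ is a limit over $\op{P}$, and since $S$ is definite the diagram sends every $(d,x) \in \Pl(X)$ to a singleton, collapsing the limit to the product over $\Pl_M(X')$. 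In your language: although your canonical subcategory fails to be final, the only objects of $\EVi_X/M$ witnessing the failure are the identity views at players $x$ with $\length{\restr{M}{x}} = 1$, and at those objects the diagram takes value $S(\idv_d,x) = 1$, so the limit is unaffected. Either route works, but definiteness must be invoked on the domain side, not only on the codomain.
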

\begin{proof}
  Recall that $\exta{X}{S}(M)$ is a limit of
  $$\op{(\EVi_X / M)} \xto{\dom} \op{(\EVi_X)} \xto{S} \ford \into \set,$$ 
  and consider the poset $P$ with underlying set $\Pl (X) + \Pl_M
  (X')$ and ordering given by $(d,x) < (d',x')$ iff $x = (x')^M$.
  Consider the functor $p \colon P \to \EVi_X / M$ mapping any $(d,x)
  \in \Pl(X)$ to the unique morphism $\idv_d \to M$ with lower border
  $x$, and any $(d',x') \in \Pl_M (X')$ to $\alpha^{x',M}$. Since $P$
  is a poset, $p$ is faithful.  It is furthermore full by
  Proposition~\ref{prop:decompV}.  Finally,
  for any $(w,\alpha) \colon v \to M$ in $\EVi_X / M$,
  \begin{itemize}
  \item either $\length{v} = 0$ and there is a unique player $x \colon
    d \to X$ such that $(w,\alpha)$ is the (unique) morphism $\idv_d \to M$
    with lower border $x$,
  \item or $\length{v} = 1$ and there exists a unique player $(d',x')
    \in X'$ such that $(w,\alpha) = (\id,\alpha^{x',M})$ (let $x =
    \cod (\alpha)$; $\length{\restr{M}{x}} = 1$, so by
    Proposition~\ref{prop:views:decomp} $\length{w} = 0$).
  \end{itemize}
  This entails that $p$ is essentially surjective on objects, hence an equivalence.
  Thus, $\exta{X}{S}(M)$ is also a limit of 

  $$\op{P} \equi \op{(\EVi_X / M)} \xto{\dom} \op{(\EVi_X)} \xto{S} \ford \into \set.$$ 

  But now, because $S$ is definite, this functor maps any $(d,x) \in
  \Pl (X)$ to a singleton, hence $\exta{X}{S}(M)$ is also a limit of
$$\Pl_M (X')  \into \op{P} \equi \op{(\EVi_X / M)} \xto{\dom} \op{(\EVi_X)} \xto{S} \ford \into \set,$$ 
i.e., isomorphic to $\prod_{(d',x') \in \Pl_M (X')} S(v^{x',M}, (x')^M)$,
as desired.
\end{proof}

\begin{proof}[Proof of Proposition~\ref{prop:SM:SdotM}]
  If $\length{v^{x',M}} = 0$, then $S(v^{x',M}, (x')^M)$ is a
  singleton. Thus, the second map of Lemma~\ref{lem:psibij} is bijective, hence so is $\psi_M$.
\end{proof}

The moral of Proposition~\ref{prop:SM:SdotM} is that transitions
$(X,S) \xot{M} (X',S')$ in $\SSS_\D$ are precisely given by full
quasi-moves $M \colon X' \proto X$ such that there exists a state
$\state \in \exta{X}{S} (M)$ with
$$S' = \restr{(S \cdot M)}{\psi_M(\state)},$$
for all $(d',x') \in \Pl (X')$.

We now give more syntactic characterisations of transitions, starting 
with transitions from states of the shape $(d,S)$.
Recall the syntax for strategies below Theorem~\ref{thm:stratcoalg}.
\begin{prop}
  If $S = \with{(S_b)_{b \in \MMMB_d}}$ is a definite strategy on $d
  \in \DI$, and if for all $b \in \MMMB_d$, $S_b = \bigoplus_{i \in
    n_b} D^b_i$ for definite $D^b_i$, then for any $M \colon X' \proto
  d$ we have $(d,S) \xot{M} (X',S')$ iff
  \begin{itemize}
  \item for all $(d',x') \in \Pl_M (X')$, there exists $i_{x'} \in
    n_{v^{x',M}}$ such that $S' \cdot x' = D^{v^{x',M}}_{i_{x'}}$,
  \item and for
    all $(d',x') \in \Pl (X') \setminus \Pl_M (X')$, $S' \cdot x' = S$.
  \end{itemize}
\end{prop}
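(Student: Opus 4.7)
The plan is to unfold the definition of a transition in $\SSS_\D$ applied to the source state $(d,S)$ and to match the resulting data with the description in the statement, player by player. By definition, $(d,S) \xot{M} (X',S')$ amounts to the existence of a state $\state \in \prod_{(d',x') \in \Pl (X')} (S \cdot M)(\idv_{d'}, x')$ such that $S' = \restr{(S \cdot M)}{\state}$, i.e., $S' \cdot x' = \restr{((S \cdot M) \cdot x')}{\state(d',x')}$ for every $(d',x') \in \Pl(X')$. It thus suffices to compute $(S \cdot M) \cdot x'$ for each such player, split the analysis according to whether $(d',x') \in \Pl_M(X')$ or not, and then identify the residuals using the syntactic presentation of $S$.

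First I would observe that since $d \in \DI$ and $\DI$ is discrete (\axref{discreteness}), the horizontal component $(x')^M \colon d'^{x',M} \to d$ produced by Proposition~\ref{prop:views} forces $d'^{x',M} = d$ and $(x')^M = \id_d$. Then, using the defining equations $(S \cdot M)(v',x') = S(v^{x',M} \vrond v', (x')^M)$ and $\SS_d \iso \FPsh{\EVi(d)}$ from Proposition~\ref{prop:decompV}, I get that $(S \cdot M) \cdot x'$ is, as a strategy on $d'$, the presheaf $v' \mapsto S(v^{x',M} \vrond v')$. For $(d',x') \notin \Pl_M(X')$, Lemma~\ref{lem:hv} gives $d' = d$, $\bar{v^{x',M}} = \id_d$, and a horizontal iso $\idv_d \to v^{x',M}$, so via functoriality $(S \cdot M) \cdot x' \iso S$; moreover $(S \cdot M)(\idv_{d'}, x') = S(\idv_d) = 1$ since $S$ is definite, so the component $\state(d',x')$ is forced and $S' \cdot x' = S$ (this is exactly the content of the $\Pl_M$-projection being bijective in Lemma~\ref{lem:psibij}).

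Next I would handle the case $(d',x') \in \Pl_M(X')$. Here $v^{x',M}$ is a view of length $1$, hence specially isomorphic to a basic move which, up to isomorphism, is a well-defined element $b = v^{x',M} \in \MMMB_d$; its domain is $d'$. The syntactic description below Theorem~\ref{thm:stratcoalg} gives $S \cdot b = S_b$ as strategies on $d'$, so $(S \cdot M) \cdot x' \iso S_b = \bigoplus_{i \in n_b} D^b_i$. In particular $(S \cdot M)(\idv_{d'},x') = S_b(\idv_{d'}) = n_b$, so a choice of $\state(d',x')$ is exactly an index $i_{x'} \in n_{v^{x',M}}$. A direct computation from Definition~\ref{defi:restr} and the formula $\bigl(\bigoplus_{i \in n_b} D^b_i\bigr)(b' \vrond v'') = \sum_{i \in n_b} D^b_i(b')(v'')$ shows $\restr{S_b}{i} \iso D^b_i$, whence $S' \cdot x' = D^{v^{x',M}}_{i_{x'}}$, matching the first bullet. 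Conversely, any family of indices $(i_{x'})_{x' \in \Pl_M(X')}$ together with the forced trivial components on non-$\Pl_M$ players assembles into a state $\state$, yielding the converse direction.

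The main technical obstacle is keeping the pseudoness coherences honest: the identifications $v^{x',M} \vrond \idv \simeq v^{x',M}$ and $v^{x',M} \simeq \idv_d$ (in the length-$0$ case) are only up to special isomorphism in $\DH$, so the equalities $(S \cdot M) \cdot x' = S_{v^{x',M}}$ and $(S \cdot M) \cdot x' = S$ rest on functoriality of $S$ applied to the coherence isos of Lemma~\ref{lem:alphau} and Lemma~\ref{lem:hv}, together with essential uniqueness in Proposition~\ref{prop:views}. Once these are absorbed, the statement reduces to rereading the definition of a transition through the coalgebraic syntax, and both directions of the equivalence follow.
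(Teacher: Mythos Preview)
Your proposal is correct and is precisely the intended argument: the paper states this proposition without proof, treating it as a direct unfolding of the definition of transitions in $\SSS_\D$ together with the coalgebraic description of $S$ from Theorem~\ref{thm:stratcoalg}. Your identification $(S\cdot M)\cdot x' = S\cdot v^{x',M} = S_{v^{x',M}}$ via discreteness of $\DI$ and the case split on $\Pl_M(X')$ versus its complement (using Lemma~\ref{lem:hv} for the length-$0$ case) is exactly what is needed, and your care about the coherence isomorphisms is appropriate given the paper's acknowledged sloppiness on pseudoness.
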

Let us now characterise transitions from arbitrary positions in terms
of their restrictions to individuals.  Recalling
Notation~\ref{not:cartesian}, we have:
\begin{prop}\label{prop:localtrans}
  We have $(X,S) \xot{M} (X',S')$ iff for all $(d,x) \in \Pl (X)$, 
  $$(d,S \cdot x) \xot{\restr{M}{x}} (D_{x,M}, S' \cdot h_{x,M}).$$
\end{prop}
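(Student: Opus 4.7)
My plan is to unfold both sides of the claimed equivalence via the definition of transitions in $\SSS_\D$ and then reduce to Proposition~\ref{prop:spatial} (spatial decomposition) once two pieces are in place: a bijection on players and a natural isomorphism coming from pseudo-functoriality of $\SS$. First, observe that $M \colon X' \proto X$ is a full quasi-move iff each $\restr{M}{x}$ is, since the definition quantifies over players $x$ of $X$; so the quasi-move hypotheses on both sides match automatically. By definition, the left-hand transition asks for a state $\state \in \prod_{(d',x') \in \Pl(X')}(S \cdot M)(\idv_{d'},x')$ with $S' = \restr{(S \cdot M)}{\state}$, while each local transition on the right asks for $\state_x \in \prod_{(d',\phi) \in \Pl(D_{x,M})}((S \cdot x) \cdot \restr{M}{x})(\idv_{d'},\phi)$ with $S' \cdot h_{x,M} = \restr{((S \cdot x) \cdot \restr{M}{x})}{\state_x}$.

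The first required ingredient is a bijection $\Pl(X') \iso \sum_{(d,x) \in \Pl(X)} \Pl(D_{x,M})$. Given $x' \colon d' \to X'$, the universal property of the cartesian cell $\alpha_{(x')^M, M}$ applied to the view cell $\alpha^{x',M}$ produces a unique $s_{x'} \colon v^{x',M} \to \restr{M}{(x')^M}$ with $\cod_v(s_{x'}) = \id$ satisfying $\alpha_{(x')^M, M} \rond s_{x'} = \alpha^{x',M}$; its top horizontal $\phi_{x'} \colon d' \to D_{(x')^M, M}$ gives the desired second component, with $x' = h_{(x')^M,M} \rond \phi_{x'}$. Conversely, $(x,\phi) \mapsto h_{x,M} \rond \phi$, and mutual inverseness follows from uniqueness in the cartesian factorisation together with the essential uniqueness in Proposition~\ref{prop:views}. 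The second ingredient is that pseudo-functoriality of $\SS \colon \op\D \to \Quintets\Cat$ applied to $\alpha_{x,M}$ yields a natural isomorphism $(S \cdot M) \cdot h_{x,M} \iso (S \cdot x) \cdot \restr{M}{x}$ in $\SS_{D_{x,M}}$; combined with the player bijection, the state set for the left-hand transition splits canonically as
$$\prod_{(d',x') \in \Pl(X')} (S \cdot M)(\idv_{d'},x') \iso \prod_{(d,x) \in \Pl(X)} \prod_{(d',\phi) \in \Pl(D_{x,M})} ((S \cdot x) \cdot \restr{M}{x})(\idv_{d'},\phi),$$
so states $\state$ on the left correspond bijectively to families $(\state_x)_{x \in \Pl(X)}$ of states for the local transitions.

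It then remains to carry out the final identification. By Proposition~\ref{prop:spatial}, the equation $S' = \restr{(S \cdot M)}{\state}$ is equivalent to $S' \cdot x' = \restr{((S \cdot M) \cdot x')}{\state(d',x')}$ for all $x'$. Rewriting $x' = h_{x,M} \rond \phi$, using horizontal functoriality to collapse $S' \cdot (h_{x,M} \rond \phi) = (S' \cdot h_{x,M}) \cdot \phi$ (and likewise for $S \cdot M$), and applying the natural isomorphism from the second step, this becomes $(S' \cdot h_{x,M}) \cdot \phi = \restr{(((S \cdot x) \cdot \restr{M}{x}) \cdot \phi)}{\state_x(d',\phi)}$ for all $(x,\phi)$. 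A final application of spatial decomposition, now on $D_{x,M}$, bundles these conditions for each fixed $x$ into $S' \cdot h_{x,M} = \restr{((S \cdot x) \cdot \restr{M}{x})}{\state_x}$, which is exactly the data of the local transition at $x$. The main technical obstacle will be verifying that the natural isomorphism from pseudo-functoriality of $\SS$ interacts correctly with the player bijection; once both constructions are traced back to the same cartesian cell $\alpha_{x,M}$ via its universal property, this reduces to an essentially mechanical diagram chase.
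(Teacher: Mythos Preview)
The paper states this proposition without proof, so there is nothing to compare against directly. Your argument is correct and is the natural one given the surrounding infrastructure. Two remarks: the player bijection you build from the cartesian lifting is exactly Lemma~\ref{lem:ru}, which the paper records a bit later (for process terms) with the maps $r^u$ and $i^u$; and the natural isomorphism $(S \cdot M) \cdot h_{x,M} \iso (S \cdot x) \cdot \restr{M}{x}$ is indeed the component at $S$ of the double cell $\SS(\alpha_{x,M})$ supplied by Definition~\ref{def:SSfunctor} and Proposition~\ref{prop:pseudodoublefunctor}. With those two pieces and Proposition~\ref{prop:spatial}, your reduction goes through as written.
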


Putting both previous results together, we obtain:
\begin{cor}
  Let, for all $(d,x) \in \Pl (X)$,  $S \cdot x = \with{(S^x_b)_{b \in
      \MMMB_d}}$ and for all $b \in \MMMB_d$, $S^x_b = \bigoplus_{i
    \in n^x_b} D^{x,b}_i$ for definite $D^{x,b}_i$.

  Then, for any $M \colon X' \proto X$, we have $(X,S) \xot{M}
  (X',S')$ iff
  \begin{itemize}
  \item for all $(d',x') \in \Pl_M (X')$, there exists $i_{x'} \in
    n^{(x')^M}_{v^{x',M}}$ such that $S' \cdot {x'} = D^{(x')^M, v^{x',M}}_{i_{x'}}$,
  \item and for
    all $(d',x') \in \Pl (X') \setminus \Pl_M (X')$, $S' \cdot {x'} = S \cdot {(x')^M}$.
  \end{itemize}
\end{cor}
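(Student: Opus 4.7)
The corollary is essentially the composite of Proposition~\ref{prop:localtrans}, which localises transitions to individuals, with the preceding proposition, which syntactically characterises local transitions on individuals. My plan is to chain the two and check that the indexings agree once re-expressed over $\Pl(X')$.

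First, I apply Proposition~\ref{prop:localtrans} to replace $(X,S) \xot{M} (X',S')$ by the family of local transitions $(d, S\cdot x) \xot{\restr{M}{x}} (D_{x,M}, S'\cdot h_{x,M})$ indexed by $(d,x) \in \Pl(X)$. Second, for each such $(d,x)$, I apply the preceding proposition with $S\cdot x = \with{(S^x_b)_{b \in \MMMB_d}}$ and $S^x_b = \bigoplus_i D^{x,b}_i$: it characterises the local transition by saying that, for every $y' \colon d' \to D_{x,M}$, either $y' \in \Pl_{\restr{M}{x}}(D_{x,M})$ and $(S'\cdot h_{x,M})\cdot y' = D^{x,v^{y',\restr{M}{x}}}_{i_{y'}}$ for some $i_{y'} \in n^{x}_{v^{y',\restr{M}{x}}}$, or else $(S'\cdot h_{x,M})\cdot y' = S\cdot x$.

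The crux is the third step: re-indexing this family over $\Pl(X')$. I claim that the assignment $((d,x), y') \mapsto (d', h_{x,M}\rond y')$ defines a bijection $\sum_{(d,x)\in\Pl(X)} \Pl(D_{x,M}) \iso \Pl(X')$ restricting to a bijection $\sum_{(d,x)} \Pl_{\restr{M}{x}}(D_{x,M}) \iso \Pl_M(X')$, and moreover that under this bijection $v^{y',\restr{M}{x}}$ and $(y')^{\restr{M}{x}}$ correspond to $v^{x',M}$ and $(x')^M$, so in particular $x = (x')^M$. Both surjectivity and injectivity rest on the essential uniqueness clause of Proposition~\ref{prop:views}: given any $x' \colon d' \to X'$, first form the view decomposition $\alpha^{x',M}$, then use the universal property of the cartesian lifting $\alpha_{x,M}$ at $x = (x')^M$ to produce a unique $y'$ with $h_{x,M}\rond y' = x'$; essential uniqueness of the view decomposition forces the view data to match, and the length condition defining $\Pl_M$ transfers across because the cartesian cell preserves the view-plus-move factorisation of $\alpha^{x',M}$.

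Finally, pseudo-functoriality of $\SS$ on horizontal morphisms (Proposition~\ref{prop:pseudodoublefunctor}) gives $(S'\cdot h_{x,M})\cdot y' = S'\cdot(h_{x,M}\rond y') = S'\cdot x'$, so the first clause from Step~2 becomes $S'\cdot x' = D^{(x')^M, v^{x',M}}_{i_{x'}}$ and the second becomes $S'\cdot x' = S\cdot (x')^M$, which is precisely the statement of the corollary. The only substantive obstacle is Step~3, i.e.\ verifying the bijection and the compatibility of the view data; the remaining steps are routine composition of previously established results.
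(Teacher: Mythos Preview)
Your proposal is correct and follows exactly the route the paper intends: the corollary is stated immediately after the two propositions with the words ``Putting both previous results together, we obtain'', and no further proof is given. Your Steps~1, 2, and~4 are precisely that combination, and the re-indexing bijection you establish in Step~3 is the content of Lemma~\ref{lem:ru} (the maps $r^u$ and $i^u$), which the paper states a little later; so you are simply making explicit what the paper leaves implicit. One minor remark: when you say that $(y')^{\restr{M}{x}}$ ``corresponds to'' $(x')^M$, note that $(y')^{\restr{M}{x}}$ is necessarily $\id_d$ since $d$ is an individual; the substantive claim is just $x = (x')^M$, which you correctly extract.
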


\subsection{Process terms}\label{sec:syn}
In the previous section, starting from a playground $\D$, we have
constructed \anlts{} $\SSS_\D$ of strategies. We now begin the
construction of the \lts{} $\TTT_\D$ of \emph{process terms} announced
in Section~\ref{subsec:overview}, starting with process terms themselves.

\begin{defi}\label{defn:MMM}
  For any $X$, let $\MMMF_X$ be the set of isomorphism classes of full
  moves with codomain $X$, in $\DH (X)$, and let $\BsofF$ denote the map
  $$\begin{array}[t]{rcl}
    \MMMF_d & \to & \powfin(\MMMB_d) \\
    M & \mapsto & \ens{ [b] \in \MMMB_d \aalt \exists \alpha \in \DH(b,M) }.
  \end{array}$$

  Let $\MMMFB_d$ denote the subset of $\MMMF_d$ consisting of
  (isomorphism classes of) full moves $M \colon X' \proto d$ such that
  $\Pl_M (X')$ is a singleton (and hence so is $\BsofF (M)$).  Let
  $\MMMFplus_d$ denote the complement subset.
\end{defi}
The map $\BsofF$ is easily checked to be well-defined.

We state one more axiom to demand that basic sub-moves of a full move
$[M] \in \MMMF_d$ may not be sub-moves of other full moves.
\begin{ax}
  \begin{axioms}
  \item (Basic vs.\ full) For any $d \in \DI$ and $M, M'
    \in \MMMF_d$, if $M \neq M'$, then $\BsofF(M) \cap \BsofF(M')
    = \emptyset$.
    \label{basic:full}
  \end{axioms}
\end{ax}

  Let \emph{process terms} be infinite terms in the typed grammar:
    \begin{mathpar}
        \inferrule{%
          \ldots \ d_i \vdash T_i \ \ldots \ (\forall i \in n) %
        }{%
          d \vdash \sum_{i \in n} M_i.T_i %
        }~(n \in \Nat; \forall i \in n, M_i \in \MMMFB_d  \mbox{\ and\ } \BsofF[M_i] =
    \ens{b_i \colon d_i \proto d})
\and
  \inferrule{ \ldots \ d' \vdash T_b \ \ldots \ {(\forall
      (b \colon d' \proto d) \in \BsofF [M])} }{ d \vdash M \with{(T_b)_{b
        \in \BsofF [M]}} %
  }~{(M \in \MMMFplus_d).}
    \end{mathpar}

    The first rule is a guarded sum, in a sense analogous to guarded
    sum in CCS. It should be noted that guards have to be full moves
    with only one non-trivial view. There is good reason for that,
    since allowing general moves as guards would break bisimilarity
    between process terms and strategies. To understand this, consider
    a hypothetical guarded sum $R = (P|Q) + (P'|Q')$. Since this has no
    interaction before the choice is made, $R$ behaves, in CCS, just
    like an internal choice $(P|Q) \oplus (P'|Q')$. However, our
    translation to strategies does not translate guarded sum as an
    internal choice, with right, since other guarded sums, e.g., $a.P
    + b. Q$ should certainly not be translated this way. Instead, $R$
    would be translated as something equivalent to $(P|Q) \oplus
    (P'|Q) \oplus (P|Q') \oplus (P'|Q')$, which is clearly not
    bisimilar to $R$ in general.

    We could easily include internal choice in the grammar, since
    strategies do model it, directly. We refrain from doing so for
    simplicity.

\begin{defi}
  Let $\TT_\D$ be the set of process terms.
\end{defi}

\begin{exa}\label{ex:ccs:terms}
  For $\Dccs$, the obtained syntax is equivalent to
  \begin{mathpar}
    \inferrule{\ldots \ \Gam \cdot \alpha_i \vdash P_i \ldots}{
      \Gam \vdash \sum_i \alpha_i.P_i
    }
    \and
    \inferrule{
      \Gam \vdash P \\       \Gam \vdash Q
    }{
      \Gam \vdash P|Q
    }~\cdot
  \end{mathpar}%
  where
  \begin{itemize}
  \item $\Gam$ ranges over natural numbers;
  \item $\alpha ::= a \aalt \abar
    \aalt \tick \aalt \nu$ (for $a \in \Gam$);
  \item $\Gam \cdot \alpha$ denotes $(\Gam + 1)$ if $\alpha = \nu$ and
    just $\Gam$ otherwise.
  \end{itemize}
  This grammar obviously contains CCS, and we let $\theta \colon
  \ob(\ccs) \into \TT_{\Dccs}$ be the injection.
\end{exa}

\subsection{The labelled transition system for process terms}
We now define the \lts{} $\TTT_\D$. States, i.e., vertices of
the graph underlying this \lts{}, are pairs $(X, T)$ of a position $X$
and a family $T$ of process terms, indexed by the players of $X$,
i.e., $T \in \prod_{(d,x) \in \Pl (X)} (\TT_\D)_d$, where $(\TT_\D)_d$
is the set of process terms of type $d$.

To define edges, we need a lemma. For any play $u \colon X' \proto X$
and $x \colon d \to X$, recalling Notation~\ref{not:cartesian},
consider the map $$
\begin{array}[t]{ccrcl}
r^u & \colon & \sum_{(d,x) \in \Pl (X)} \Pl (D_{x,u}) & \to & \Pl (X') \\
&& ((d,x),(d',x')) & \mapsto & h_{x,u} \rond x'
\end{array}$$
 sending any $(d,x) \in \Pl (X)$ and $x' \colon d' \to D_{x,u}$ to
$d' \xto{x'} D_{x,u} \xto{h_{x,u}} X'.$

Consider also the map $i^u$ in the other direction sending any $y
\colon d' \to X'$ to the pair $((d^{y,u}, y^u), (d',
\restr{y}{y^u}))$, where $\restr{y}{y^u}$ is the (domain in $\DV$ of
the) unique $\alpha'$ making the diagram
\begin{center}
      \diagramme[diag={.4}{1.5}]{baseline=(d'.base)}{}{%
      |(d')| d' \\
      \& |(D)| D_{y^u,u} \& \& |(X')| X' \\
      {} \& \\
      |(dyu)| d^{y,u} \\
      \& |(dyui)| d^{y,u} \& \& |(X)| X %
    }{%
      (X') edge[pro,twor={u}] (X) %
      (d') edge[bend left=10,labelar={y}] (X') %
      edge[pro,twol={v^{y,u}}] (dyu) %
      edge[dashed,labelbl={\restr{y}{y^u}}] (D) %
      (l) edge[bend left=10,cell=1] node[pos=.6,anchor=north] {$\scriptstyle \alpha^{y,u}$} (r) %
      (dyu) edge[identity] (dyui) %
      edge[bend left=10] node[pos=.6,anchor=north] {$\scriptstyle y^u$} (X) %
      (dyui) edge[labeld={y^u}] (X) %
      (D) edge[pro,fore,twoleft={uy}{},tworight={uy'}{}] 
      node[pos=.4,anchor=west] {$\restr{u}{y^u}$} (dyui) %
      edge[labeld={h_{y^u,u}}] (X') %
      (uy) edge[cell=.5,labeld={\scriptstyle \alpha_{y^u,u}}] (r) %
      (l) edge[cell=.3,labelbl={\scriptstyle \alpha'},dashed] (uy') %
    }
\end{center}
commute (by~\preaxref{fibration}).  This map $i^u$ is well-defined by
uniqueness of $y^u$ and cartesianness of $\alpha_{y^u,u}$.
\begin{lem}\label{lem:ru}
The maps $i^u$ are $r^u$ are mutually inverse.
\end{lem}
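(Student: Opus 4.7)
The plan is to verify the two composites are identities. The equation $r^u \circ i^u = \id$ is essentially immediate from the diagram defining $i^u$: reading off the top borders of the factorisation $\alpha^{y,u} = \alpha_{y^u,u} \rond \alpha'$ forces $h_{y^u,u} \rond \restr{y}{y^u} = y$, i.e.\ $r^u(i^u(y)) = y$, so I will dispatch this direction in one line.

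The substantive direction is to show $i^u(r^u((d,x),(d',x'))) = ((d,x),(d',x'))$. Setting $y := h_{x,u} \rond x'$, I must establish $(d^{y,u}, y^u) = (d,x)$ and that the top border $\restr{y}{y^u}$ of the associated factoring cell equals $x'$. My strategy is to exhibit an explicit candidate cell factoring $y$ through $u$ via a view, and then invoke the essential uniqueness clause of Proposition~\ref{prop:views} to identify it with the canonical $(v^{y,u}, \alpha^{y,u})$.

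To build the candidate, I apply Proposition~\ref{prop:views} to $x' \colon d' \to D_{x,u}$ and $\restr{u}{x} \colon D_{x,u} \proto d$, obtaining a view $v^{x',\restr{u}{x}}$ and a cell $\alpha^{x',\restr{u}{x}}$ with top border $x'$ and bottom border $(x')^{\restr{u}{x}} \colon d^{x',\restr{u}{x}} \to d$. The crux of the proof is the observation that $d^{x',\restr{u}{x}}$, being the domain of a view, is an individual; since $\DI$ is discrete by~\axref{discreteness}, $\Dh$ admits only $\id_d$ from an individual to $d$, forcing $d^{x',\restr{u}{x}} = d$ and $(x')^{\restr{u}{x}} = \id_d$. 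Horizontally composing with the cartesian cell $\alpha_{x,u}$ then produces a cell $v^{x',\restr{u}{x}} \to u$ whose top border is $y$ and whose bottom border is $x$. Essential uniqueness in Proposition~\ref{prop:views} now yields $(d^{y,u}, y^u) = (d,x)$ together with $\alpha^{y,u} = \alpha_{x,u} \rond \alpha^{x',\restr{u}{x}}$; cartesianness of $\alpha_{x,u}$ then identifies the top border $\restr{y}{y^u}$ of the unique factoring $\alpha'$ with that of $\alpha^{x',\restr{u}{x}}$, namely $x'$.

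The main obstacle I anticipate is pedantically tracking the pseudo-double-category coherence isomorphisms implicit in "horizontally composing" cells — the same sloppiness the paper itself flags after Axiom~\axref{views:decomp}. Modulo these coherences, the argument reduces to a short invocation of discreteness of $\DI$ together with the universal properties of Proposition~\ref{prop:views} and of cartesian lifts.
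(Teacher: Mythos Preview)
Your proposal is correct and is precisely the natural unpacking of what the paper dismisses as ``Straightforward.'' The paper gives no argument beyond that one word, so there is nothing to compare: your use of discreteness of $\DI$, essential uniqueness from Proposition~\ref{prop:views}, and cartesianness of $\alpha_{x,u}$ is exactly the expected route.
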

\begin{proof}
  Straightforward.
\end{proof}

Let us return to the definition of our \lts{}. 
We first say that for any full quasi-move $M \colon D \proto d$, a
process term $d \vdash T$ has an $M$-transition to $(D, T')$, for $T'
\in \prod_{(d',x') \in \Pl (D)} (\TT_\D)_{d'}$, when
one of the following holds:
\begin{enumeratei}
\item \label{pt:trans:i}  $\exists M' \in \MMMFplus$, $T = M' \langle T''
  \rangle$, and, for all $(d',x') \in \Pl (D)$,
  \begin{itemize}
  \item if $v^{x',M}$ is a basic move, then $v^{x',M} \in \BsofF(M')$ and $T'_{d',x'} = T''_{v^{x',M}}$;
  \item otherwise $\length{v^{x',M}} = 0$ (hence $d' = d$), and $T'_{d',x'} = T$;
  \end{itemize}
\item $[M] \in \MMMFB$, $T = \sum_{i \in n} M_i.T_i$, $M_{i_0} =
  [M]$ for some $i_0 \in n$, and for all players $x' \colon d' \to D$
  \begin{itemize}
  \item if $v^{x',M} \in \BsofF (M)$, then 
    $T'_{d',x'} = T_{i_0}$,
  \item and otherwise ($\length{v^{x',M}} = 0$),  $T'_{d',x'} = T$;
  \end{itemize}

\item $\length{M} = 0$ 
  and for all $(d',x') \in \Pl (D)$, $T'_{d',x'} = T$ (which, again,
  makes sense by Lemma~\ref{lem:hv}).
\end{enumeratei}
We denote such a transition by $T \xot{M} (D,T')$.
\begin{rem}
  The first case \ref{pt:trans:i} allows $\BsofF(M) = \emptyset$, but
  if $\BsofF(M) \neq \emptyset$, then $[M] = M'$
  by~\axref{basic:full}. Also, let us mention that $\BsofF(M) \neq
  \emptyset$ does not imply $\length{M} = 0$ in general, although it
  does in $\Dccs$.
\end{rem}

\begin{defi}
Let $\TTT_\D$ be the graph with pairs $(X,T)$ as vertices, and as
edges $(X',T') \to (X, T)$ full quasi-moves $M \colon X' \proto X$ such
that for all $(d,x) \in \Pl (X)$, $T_{d,x} \xot{\restr{M}{x}}
(D_{x,M}, (T' \rond \cocob{h_{x,M}}))$. Here, we let $\cocob{h_{x,M}}$ denote
composition with $h_{x,M} \colon D_{x,M} \to X'$, viewed as a map $\Pl
(D_{x,M}) \to \Pl (X')$.

  $\TTT_\D$ is viewed as \anlts{} over ${\QF}$, by mapping $(X,T)
  \xot{M} (X',T')$ to $X \xproot{M} X'$.
\end{defi}

\begin{exa}\label{ex:ltsccs}
  For $\Dccs$, the obtained \lts{} differs subtly, but significantly
  from the usual \lts{} for CCS. In order to explain this clearly, let
  us introduce some notation. First, let \emph{evaluation contexts} be
  generated by the grammar
  \begin{mathpar}
    \inferrule{ }{\Gam ; x \colon n \vdash x (a_1, \ldots, a_n)}
    \and
    \inferrule{\Gam ; \Del_1 \vdash \ec_1 \\ 
      \Gam ; \Del_2 \vdash \ec_2 %
    }{%
      \Gam ; \Del_1, \Del_2 \vdash \ec_1 | \ec_2 %
    },
  \end{mathpar}
  where, in the first rule, $\forall i \in n,a_i \in \Gam$, and in the
  second $\dom (\Del_1) \cap \dom (\Del_2) = \emptyset$.  Here, $x$
  ranges over a fixed set of \emph{variables}, and $\Del, \ldots$
  range over finite maps from variables to natural numbers. Evaluation
  contexts are considered equivalent up to associativity and
  commutativity of $|$.  Positions are essentially a combinatorial,
  direct representation of such contexts.

  Leaving the details aside, states in $\TTT_{\Dccs}$ may be viewed as
  pairs $(X, T)$ of an evaluation context $X$, plus, for each $n$-ary
  variable $x (a_1, \ldots, a_n)$ in $X$, a process term over $n$ in
  the grammar of Example~\ref{ex:ccs:terms}.  Instead of separately
  writing the evaluation context and the map from its variables to
  process terms, we inline process terms between brackets in the
  context, thus avoiding variables. Moves are either put in context
  similarly, or located implicitly. E.g., for a state $(X,T)$ where
  $X$ contains two players respectively mapped by $T$ to process terms
  $P$ and $Q$, we would write $[P]|[Q]$. There is some ambiguity in
  this notation, e.g., in case some channels are absent from $P$: are
  they absent from the arity of $P$, or only unused? Since we use this
  notation mostly for clarifying examples, we will avoid such
  ambiguities.  Finally, we sometimes use brackets to denote the fact
  that some holes are filled with the given state. E.g., $X[[P]|[Q]]$
  denotes a state $X$, where a hole has been replaced by a parallel
  composition of two holes, respectively filled with $P$ and $Q$.

  Returning to our comparison of $\TTT_{\Dccs}$ and $\ccs$, of course,
  a first difference is the fact that labels may contain several
  moves, as quasi-moves only locally have length 1.

  A second difference is the presence of \emph{heating} rules for
  parallel composition and channel creation, in a sense close to the
  chemical abstract machine \citep{DBLP:conf/popl/BerryB90}.  For
  example, we have transitions $X[P | Q] \xot{\pi} X[[P]|[Q]]$.

  There is a third important difference, related to channel
  creation. For instance, we have transitions
  $$[\nu a.a.P] \xot{\nu} [a.P] \xot{\iotaneg{a}} [P].$$
  The second transition cannot occur in a closed-world setting, since
  the environment cannot know $a$.  And it does not occur in $\ccs$
  either.

  A final difference is that labels contain too much information to be
  relevant for behavioural equivalences. E.g., they contain the whole
  evaluation context in which the transition takes place, as well as
  which players are involved.

  The second difference, i.e., the presence of heating rules, is not
  really problematic, and merely forces us to use weak bisimulations
  rather than strong ones.  All other defects will be corrected below.
\end{exa}

\subsection{Translation and a first correctness result}
We conclude this section on the general theory of playgrounds by
establishing a strong, functional bisimulation from process terms to
strategies.

Mimicking \eqref{eq:traduc} (page~\pageref{eq:traduc}), our
translation from process terms to definite strategies (\emph{qua}
families over $\DI$) is defined coinductively by
\begin{equation}
\begin{array}{rcl}
  \transl{\sum_{i \in n} M_i.T_i} & = & \langle b \mapsto
      \bigoplus_{\ens{i \in n \aalt b \in \BsofF(M_i)}} \transl{T_i}
      \rangle \\
      \transl{M\langle (T_{b})_{b \in \BsofF (M)} \rangle} & = & 
      \left \langle b \mapsto \left \{
        {\begin{array}[c]{ll}
          \transl{T_{b}} & \mbox{if $b \in \BsofF(M)$} \\
          \emptyset & \mbox{otherwise}
        \end{array}}
\right . \right \rangle.
\end{array}\label{eq:transl}
\end{equation}

Let us extend the map $\translfun \colon \TT_\D \to \SS_\D$ to a map
$\translfun \colon \ob(\TTT_\D) \to \ob(\SSS_\D)$, defined by
$\transl{X,T} = (X, (\transl{T_{d,x}})_{(d,x) \in \Pl (X)})$, using
Proposition~\ref{prop:spatial}.

\begin{thm}\label{thm:bisim}
  The map $\translfun \colon \ob(\TTT_\D) \to \ob(\SSS_\D)$ is a
  functional, strong bisimulation.
\end{thm}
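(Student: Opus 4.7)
The plan is to reduce the claim to the level of individuals, and then to case-split on the shape of the source term. First, both labelled transition systems decompose pointwise over the players of the source position: in $\TTT_\D$ this holds by definition, and in $\SSS_\D$ it holds by Proposition~\ref{prop:localtrans}. Since $\translfun$ is itself pointwise ($\transl{X,T} \cdot x = \transl{T_{d,x}}$, via Proposition~\ref{prop:spatial}), it suffices to establish, for every individual $d \in \DI$, every process term $d \vdash T$, and every full quasi-move $M \colon D \proto d$, a bijection between transitions $T \xot{M} (D,T')$ in $\TTT_\D$ and transitions $(d, \transl{T}) \xot{M} (D, S')$ in $\SSS_\D$, so arranged that $S' \cdot x' = \transl{T'_{d',x'}}$ for every player $x' \colon d' \to D$.

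From this point we work at an individual $d$. The Corollary following Proposition~\ref{prop:localtrans} describes transitions out of $(d, \langle (S_b)_{b \in \MMMB_d}\rangle)$, with $S_b = \bigoplus_{i \in n_b} D^b_i$, as the datum of $M$ together with an index $i_{x'} \in n_{v^{x',M}}$ for every $(d',x') \in \Pl_M(D)$, the strategy being inherited on players outside $\Pl_M(D)$; in particular, a transition at $M$ exists iff $n_{v^{x',M}} \geq 1$ for every active $(d',x')$. The case $\length{M} = 0$ (where $D = d$ by Lemma~\ref{lem:hv}, $\Pl_M(D) = \emptyset$, and everything is inherited unchanged) matches kind~(iii) of the process-term transitions on the nose, regardless of the shape of $T$, so we may henceforth assume $\Pl_M(D) \neq \emptyset$.

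If $T = M' \with{T''}$ with $M' \in \MMMFplus_d$, formula~\eqref{eq:transl} makes $n_b = 1$ when $b \in \BsofF(M')$ and $n_b = 0$ otherwise. Thus the transition exists iff every basic view $v^{x',M}$ lies in $\BsofF(M')$, which by axiom~\axref{basic:full} forces $[M] = M'$; the indices $i_{x'}$ are then forced, and $S' \cdot x' = \transl{T''_{v^{x',M}}}$, matching kind~(i) exactly. If $T = \sum_{i \in n} M_i.T_i$, formula~\eqref{eq:transl} gives $n_b = |\{\,i : b \in \BsofF(M_i)\,\}|$, and since each $M_i$ lies in $\MMMFB_d$ each $\BsofF(M_i)$ is a singleton. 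Axiom~\axref{basic:full} again forces all basic views $v^{x',M}$ to lie in a common $\BsofF(M_{i_0})$, whence $[M] = M_{i_0}$, so in particular $[M] \in \MMMFB_d$ and $\Pl_M(D)$ is a singleton. The $\SSS_\D$-datum then reduces to the single choice of such an $i_0$, and $S' \cdot x' = \transl{T_{i_0}}$ on the unique active player, matching kind~(ii) exactly.

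The main obstacle in the proof is bookkeeping: correctly aligning the per-player indexed choices $(i_{x'})$ in $\SSS_\D$-transitions with the single syntactic choice present in process-term transitions of kinds~(i) and~(ii). Axiom~\axref{basic:full}, which ensures that any full move is uniquely determined by any of its basic sub-moves, does the heavy lifting in both cases. The remaining verifications are routine unfoldings of the definitions, relying on Proposition~\ref{prop:spatial} to reassemble local strategies into global ones and on Lemma~\ref{lem:hv} to settle the degenerate behaviour on length-$0$ quasi-moves.
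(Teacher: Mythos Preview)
Your proposal is correct and follows essentially the same route as the paper: reduce to individuals via Proposition~\ref{prop:localtrans} (and the pointwise definition of $\TTT_\D$), then establish the local lemma by case analysis on the shape of $T$, with axiom~\axref{basic:full} doing the work of pinning down $[M]$ in both the $M'\with{-}$ and the guarded-sum cases.

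One small point of care: your inference ``$\length{M}=0$ is handled, so we may assume $\Pl_M(D)\neq\emptyset$'' is not literally licensed by the axioms---a priori a move on $d$ could have $\BsofF(M)=\emptyset$. The paper sidesteps this by phrasing the condition in the $M'\with{-}$ case as $\BsofF(M)\subseteq\BsofF(M')$ (vacuous when $\BsofF(M)=\emptyset$), and the Remark after the definition of process-term transitions flags exactly this edge case. It does not affect the substance of your argument, but you may want to phrase the case split as ``$\BsofF(M)=\emptyset$ vs.\ $\BsofF(M)\neq\emptyset$'' rather than on $\length{M}$.
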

\begin{proof}
  The theorem follows from Proposition~\ref{prop:localtrans} and the next lemma.
\end{proof}

\begin{lem}
  For any full quasi-move $M \colon X' \proto d$, for any $T \in
  (\TT_\D)_d$ and $S' \in (\SS_\D)_{X'}$, we have
  \begin{center}
    $(d,\transl{T}) \xot{M} (X',S')$ \hfil iff \hfil
    $\exists T', (T \xot{M} (X',T')) \wedge ((X',S') = \transl{X',T'})$.
  \end{center}
\end{lem}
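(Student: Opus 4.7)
We prove the equivalence by case analysis on the structure of $T$ (two clauses of the grammar) and on whether $\length{M} = 0$; this aligns exactly with the three clauses in the definition of $T \xot{M} (X',T')$. The strategy side is controlled throughout by the characterisation proposition just preceding this lemma: writing $\transl{T} = \with{(S_b)_{b \in \MMMB_d}}$ with each $S_b = \bigoplus_{i \in n_b} D^b_i$ as a sum of definite strategies, a transition $(d,\transl{T}) \xot{M} (X',S')$ is precisely a choice, for each $(d',x') \in \Pl_M(X')$, of an index $i_{x'} \in n_{v^{x',M}}$ determining $S' \cdot x' = D^{v^{x',M}}_{i_{x'}}$, with $S' \cdot x' = \transl{T}$ for $x' \notin \Pl_M(X')$. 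This reduces the lemma to showing that, for each syntactic shape of $T$, the admissible tuples $(M, (i_{x'})_{x'})$ on the strategy side correspond bijectively to the witnesses $(M', T'')$ or $(i_0)$ produced by the clauses of $T \xot{M} (X',T')$.

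\emph{Guarded-sum case.} Suppose $T = \sum_{i \in n} M_i.T_i$. Unfolding~\eqref{eq:transl} gives $S_b = \bigoplus_{\{i \mid b \in \BsofF(M_i)\}} \transl{T_i}$, so each summand is already definite and $n_b$ equals the number of indices $i$ with $b \in \BsofF(M_i)$. Fix $x' \in \Pl_M(X')$: the condition $n_{v^{x',M}} > 0$ means $v^{x',M} \in \BsofF(M_{i_{x'}})$ for some $i_{x'}$. Since $v^{x',M} \in \BsofF([M])$ as well, \axref{basic:full} forces $[M] = M_{i_{x'}}$. Because every $M_i$ lies in $\MMMFB$, we conclude $[M] \in \MMMFB$; hence $\Pl_M(X')$ is a singleton $\{x'_0\}$ and $i_{x'_0}$ is the unique $i_0$ with $M_{i_0} = [M]$. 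The resulting $S'$ satisfies $S' \cdot x'_0 = \transl{T_{i_0}}$ and $S' \cdot x' = \transl{T}$ otherwise, which is exactly the image under $\translfun$ of the $T'$ produced by clause~(ii) of the process-term transitions. The converse is obtained by running the same computation backwards.

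\emph{Structural case.} Suppose $T = M' \with{(T''_b)_{b \in \BsofF(M')}}$ with $M' \in \MMMFplus$. Then $S_b = \transl{T''_b}$ (a definite strategy, so $n_b = 1$) if $b \in \BsofF(M')$ and $S_b = \emptyset$ (so $n_b = 0$) otherwise. For any $x' \in \Pl_M(X')$, the transition condition again requires $v^{x',M} \in \BsofF(M')$, so $\BsofF([M]) \cap \BsofF(M') \neq \emptyset$ and \axref{basic:full} gives $[M] = M'$; in particular $[M] \in \MMMFplus$. The unique choice $i_{x'} = 1$ yields $S' \cdot x' = \transl{T''_{v^{x',M}}}$ on $\Pl_M(X')$ and $S' \cdot x' = \transl{T}$ elsewhere, matching clause~(i) of the process-term definition verbatim under $\translfun$, and conversely.

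\emph{Degenerate case.} Finally, when $\length{M} = 0$, Lemma~\ref{lem:big} applied to the cartesian liftings $\restr{M}{x'}$ forces $\length{v^{x',M}} = 0$ for every player $x'$, so $\Pl_M(X') = \emptyset$. Both sides of the equivalence then collapse: the strategy transition requirement is vacuous and simply says $S' \cdot x' = \transl{T}$ for all $x'$, while clause~(iii) of the process-term definition produces $T'_{d',x'} = T$, whose translation indeed coincides with $S'$. The only delicate point is that clauses~(i) and~(iii) can both apply when $T = M'\with{T''}$ and $\length{M} = 0$; both yield the same $T'$, so this causes no ambiguity. The only genuine obstacle in the argument is the careful bookkeeping of \axref{basic:full} to rule out spurious transitions crossing different full moves, together with the observation that \emph{only} these compatibilities can arise because each summand in $\transl{T}$ is already definite (so each $n_{v^{x',M}}$ is $0$ or $1$ in the structural case, and is the number of summands of $\sum_i M_i.T_i$ guarded by $[M]$ in the sum case).
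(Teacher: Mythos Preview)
Your proof is correct and matches the paper's approach: case analysis first on whether $\length{M}=0$, then on the syntactic shape of $T$, using the characterisation of transitions out of $(d,\transl{T})$ together with~\axref{basic:full}. Two minor slips against the paper's version: in the structural case the paper concludes only $\BsofF(M)\subseteq\BsofF(M')$ rather than $[M]=M'$ --- your stronger claim fails when $\Pl_M(X')=\emptyset$ with $\length{M}>0$ (which the remark after clause~(i) explicitly allows), though your matching with clause~(i) still goes through vacuously there; and in the guarded-sum case the $i_0$ with $M_{i_0}=[M]$ need not be unique, the point being that choices of $i_{x'_0}$ and of such $i_0$ correspond bijectively.
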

Note the implicit typing: $T' \in \prod_{(d',x') \in \Pl (X')} (\TT_\D)_{d'}$. Also
the second condition on the right is equivalent to $\forall x'
\colon d' \to X', S' \cdot {x'} = \transl{T'_{d',x'}}$.

\begin{proof}
  If $\length{M} = 0$, then both sides are equivalent to the fact that
  for all $x' \colon d' \to X'$, $S' \cdot {x'} = \transl{T}$.

  Otherwise, we proceed by case analysis on $T$. 

  If $T = M' \with{(T''_b)_{b \in \BsofF (M')}}$, then by \axref{basic:full} both sides are
  equivalent to $\BsofF(M) \subseteq \BsofF(M')$, plus
  \begin{itemize}
  \item for all $(d',x') \in \Pl_M (X')$, $S' \cdot {x'} = \transl{T''_{v^{x',M}}}$, and
  \item for all $(d',x') \in \Pl (X') \setminus \Pl_M (X')$, $S' \cdot {x'} = \transl{T}$.
  \end{itemize}
  Indeed, for any $b \in \BsofF (M)$, $\transl{T} \cdot b =
  \transl{T''_b}$ is definite.  We thus put $T'_{d',x'} =
  T''_{v^{x',M}}$ in the first case and $T'_{d',x'} = T$ in the second
  case.

  If $T = \sum_{i \in n} M_i.T_i$, then both sides are equivalent to 
  the existence of $i_0 \in n$ such that 
  $M_{i_0} = [M]$ and
  \begin{itemize}
  \item for the unique $(d',x') \in \Pl_M (X')$, $S' \cdot {x'} = \transl{T_{i_0}}$, and
  \item for all $(d',x') \in \Pl (X') \setminus \Pl_M (X')$, $S' \cdot {x'} = \transl{T}$.
  \end{itemize}
  This uses~\axref{basic:full}, since the left-hand side unfolds to
  the existence of $x' \colon d' \to X'$ such that $v^{x',M} \in
  \BsofF [M]$ and $\transl{T} \cdot v^{x',M} \neq \emptyset$, i.e., $v^{x',M}
  \in \BsofF (M_{i_0})$ for some $i_0 \in n$, by definition of
  $\transl{T}$. This entails in particular $[M] = M_{i_0}$
  by~\axref{basic:full}.
\end{proof}


\section{Graphs and fair morphisms}\label{sec:graphs}
In this section, we derive our main result.  For this, we develop a
notion of \emph{graph with complementarity}, which aims at being a
theory of \ltss{} over which fair testing makes sense.  Although the
theory would apply with any predicate $\bot$ compatible with
$\wbisimsierp$ equivalence classes (see below), the question of
whether such a generalisation would have useful applications is
deferred for now.

For any graph with complementarity $A$ and relation $R \colon G \modto
H$ over $A$, we exhibit sufficient conditions for $R$ to be
\emph{fair}, i.e., to preserve and reflect fair testing
equivalence. We then relate this theory to our semantics, and show
that it entails our main result.  For now, this section lies outside
the scope of playground theory. Some aspects of it could be formalised
there, but we leave the complete formalisation for further
work. Because the only playground involved is $\Dccs$, we often omit
sub or superscripts, e.g., in $\D$, $\SSS_{\D}$ (even just $\SSS$),
etc.

Before we start, let us define $\ccsW$ to be the set of
\emph{closed-world quasi-moves}, i.e., vertical morphisms in $\D$
which either are closed-world moves
(Definition~\ref{def:cw:successful}) or have length 0.  
Please note: quasi-moves must locally restrict to plays of length $\leq 1$, whereas
closed-world quasi-moves have length $\leq 1$ globally.
Let $\DW$ be
the subbicategory of $\Dv$ generated by $\ccsW$, and let $\Sierp$ be the
free reflexive graph on an endo-edge $\tick$.  Finally, let
$\labelDccsfunctor \colon \DW \to \freecat{\Sierp}$ be the pseudo functor
determined by the mapping $\labelDccs \colon \ccsW \to \Sierp$ sending all closed-world
quasi-moves to $\id$ except $\tick$ moves, which are sent to $\tick$.

\subsection{Graphs with complementarity}
A \emph{relation} $A \modto B$ between two reflexive graphs $A$ and
$B$ is a subgraph $R \into A \times B$. Such a relation $R$ is
\emph{total} when, for all vertices, resp.\ edges, $x \in A$, there
exists a vertex, resp.\ an edge $y \in B$, such that $(x,y) \in R$. It
is \emph{partially functional} if there is at most one such $y$.  It is
\emph{functional} when it is total and partially functional.  The
\emph{domain} of $R$ is the subgraph of $A$ consisting of vertices and
edges related to something in $B$.
\begin{defi}
  A \emph{graph with complementarity} is a reflexive graph $A$,
  equipped with a subgraph $\aW$, a relation $\compata \colon A^2
  \modto \aW$, and a map $\labela \colon \aW \to \Sierp$, such that
  the composite $A^2 \modto \aW \to \Sierp$ is partially functional
  and symmetric.
\end{defi}
We let $\acoh = \dom(\compata)$ and write $a \coh a'$ for $(a,a') \in
\acoh$. We further denote the map $\acoh \into A^2 \modto \aW \to
\Sierp$ by $(a,b) \mapsto (a \dpara b)$, and deem edges in $\aW$
\emph{closed-world}.

\begin{rem}
  $\acoh$ has to be symmetric as the domain of a symmetric relation.
\end{rem}

\begin{defi}
  A morphism of graphs with complementarity is a morphism $f \colon A
  \to B$ of reflexive graphs such that
  \begin{mathpar}
    f(\aW) \subseteq \bW
    \and
    \labelb \rond \fW = \labela \and ((a_1,a_2) \compata a_3)
    \Rightarrow ((f(a_1),f(a_2)) \compata f(a_3)),
  \end{mathpar}
  where $\fW \colon \aW \to \bW$ is the restriction of $f$.
\end{defi}

\begin{prop}
  Graphs with complementarity and morphisms between them form a category $\GCompl$.
\end{prop}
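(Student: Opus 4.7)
The plan is to show that $\GCompl$ inherits its category structure from $\Gph$ (the category of reflexive graphs), by verifying that the three extra conditions defining morphisms of graphs with complementarity are closed under identities and composition; associativity and unit laws then come for free from $\Gph$.

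First I would handle identities. For any graph with complementarity $A$, the identity $\id_A$ of reflexive graphs trivially satisfies $\id_A(\aW) \subseteq \aW$, so its restriction $(\id_A)_W$ is just $\id_{\aW}$, whence $\labela \rond (\id_A)_W = \labela$; and preservation of $\compata$ is immediate.

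Second I would verify composition. Given $f \colon A \to B$ and $g \colon B \to C$ in $\GCompl$, the underlying composite $g \rond f$ in $\Gph$ sends $\aW$ into $\bW$ via $f$ and then into $\cW$ via $g$, so $(g \rond f)(\aW) \subseteq \cW$ and the restriction $(g \rond f)_W$ equals $g_W \rond \fW$. The label compatibility then chains as $\labelc \rond (g \rond f)_W = \labelc \rond g_W \rond \fW = \labelb \rond \fW = \labela$. For complementarity, if $(a_1,a_2) \compata a_3$, then applying the preservation condition for $f$ gives $(f(a_1),f(a_2)) \compatb f(a_3)$, and then applying it for $g$ gives $(g(f(a_1)),g(f(a_2))) \compatc g(f(a_3))$, as required.

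Finally, associativity and unit laws hold because $\Gph$ is a category and the forgetful map $\GCompl \to \Gph$ is, by construction, injective on homsets. There is no real obstacle; the only micro-point to keep in mind is that $\fW$ is well-defined precisely because $f(\aW) \subseteq \bW$ — this is why that clause appears separately rather than being absorbed into the label condition.
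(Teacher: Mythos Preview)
Your proposal is correct and is the standard routine verification. The paper itself does not supply a proof for this proposition, treating it as immediate; your argument is exactly the expected unpacking of why the stated conditions are closed under identities and composition, so there is nothing to compare.
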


We now introduce the graph $\LLL$, which as announced in the
introduction will serve as a base for making $\SSS_{\Dccs}$ and
$\TTT_{\Dccs}$ into graphs with complementarity. It is an
\emph{interfaced} variant of $\QF$, hence its name.
\begin{exa}
  Let $\LLL$ be the graph with as vertices all horizontal morphisms $h
  \colon I \to X$ from some interface to some position, and whose
  edges $k \to h$ are given by diagrams
\begin{equation}
  \Diag{%
    \twocellbr{m-2-1}{m-1-1}{m-1-2}{\alpha} %
  }{%
    I \& Y \\
    I \& X %
  }{%
    (m-1-1) edge[labelu={k}] (m-1-2) %
    edge[pro,identity] (m-2-1) %
    (m-2-1) edge[labeld={h}] (m-2-2) %
    (m-1-2) edge[pro,labelr={M}] (m-2-2) %
  }\label{eq:limove}
\end{equation}
in $\DH$, where $M$ is either a full move or an identity, such that if
$M$ is an input or an output, then the corresponding channel is in the
image of $I$.  $\LLL$ forms a reflexive graph with identities given by
the case where $M = \idv$, which forms a graph with complementarity as
follows.

Let $(\LLL)^\W$ consist of all closed-world quasi-moves in $\QFI$. For any $h
\colon I \to X$, $k \colon J \to Y$, and $c \colon K \to Z$, let
$(h,k) \compatof{\LLL} c$ iff $I = J = K$, $Z = h +_I k$, and $c$ is
the corresponding map $I \to Z$. On edges, for any $M_h \colon h' \to
h$, $M_k \colon k' \to k$, and $M_c \colon c' \to c$, let $(M_h,M_k)
\compatof{\LLL} M_c$ iff there exists a diagram
      \begin{equation}\label{eq:trans}
      \Diag{%
        \pbk{X}{Z}{Y} %
        \pbkk{X}{X'}{Z'} %
        \pbkk{Y}{Y'}{Z'} %
        \pbkk[2em]{I}{I'}{Y'} %
        \pbkk[2em]{I}{I'}{X'} %
        \pullback[2em]{X'}{I'}{Y'}{draw,-,fore} %
        \path[->,draw] %
        (X') edge[fore] (Z') %
        (I') edge[labelblat={h'}{.8},fore] (X') %
        ; %
      }{%
       |(I')| I \&  \& \&|(Y')| Y' \\
        \&|(X')| X' \& \& \&|(Z')| Z' \\
        \\
       |(I)| I \&  \& \&|(Y)| Y \\
        \&|(X)| X \& \& \&|(Z)| Z, %
      }{%
        (I') edge (Y') %
        edge[pro,identity] (I)
        (Y') edge (Z') %
        (I) edge (Y) %
        edge[labelbl={h}] (X) %
        (Y) edge (Z) %
        (X) edge (Z) %
        %
        (Y') edge[pro,labellat={M_k}{0.7}] (Y) %
        (X') edge[pro,fore,labell={M_h}] (X) %
        (Z') edge[pro,labell={M_c}] (Z) %
      }
    \end{equation}
    where $M_c$ is a closed-world quasi-move and double cells with a
    `double pullback' mark are cartesian, as below
    Axiom~\axref{fibration} (page~\pageref{fibration}). (One
    easily shows that the upper square is also a pushout.)  Then
    $(\LLL)^\coh$, consists of all pairs $(M_h,M_k)$ for which there
    exists a diagram of the shape~\eqref{eq:trans}.

    Let $\labelLLL$ be the composite $(\LLL)^\W \into \ccsW
    \xto{\labelDccs} \Sierp$. It thus maps tick moves to $\tick$ and
    all other closed-world moves to $\id$. The composite $(\LLL)^2
    \xmodto{\compatof{\LLL}} (\LLL)^\W \to \Sierp$ is indeed partially
    functional and symmetric.

    There is an obvious morphism $\chi \colon \LLL \to \QF$ of reflexive graphs.
\end{exa}

\begin{exa}\label{exa:Axi}
  Recall the alphabet $\A$ for CCS. It also forms a graph with complementarity, as follows.
Let $\AW$ consist of all vertices and of all $\tick$ and $\id$ edges.
Let $\Acoh$ consist, on vertices, of the diagonal, i.e., all pairs $(n,n)$.
On edges, let $e \coh e'$ when $\dom(e) \coh \dom(e')$ and:
\begin{itemize}
\item one of $e$ and $e'$ is in $\AW$, the other being an identity,
\item or one of $e$ and $e'$ is an input on some $i \in \dom(e)$,
  the other being an output on $i$.
\end{itemize}
Define now our relation $\compatA$ to be the graph of the map sending
all coherent pairs $e \coh e'$ to $\id$, except when one is a $\tick$,
in which case the pair is sent to $\tick \colon n \to n$. The axioms
are easily satisfied.

Let $\xi \colon \LLL \to \A$ map any vertex $h \colon I \to X$ to 
$n = I(\star)$, and any edge~\eqref{eq:limove} to
\begin{itemize}
\item $\id_n$ if $M$ is an identity, a synchronisation, a fork, or a channel creation,
\item $\tickn$ if $M$ is a tick move,
\item $i$ if $M$ is an input on $h_\star(i)$,
\item $\overline{i}$ if $M$ is an output on $h_\star(i)$.
\end{itemize}
This map $\xi$ is a morphism of graphs with complementarity.
\end{exa}

We have the following general way of constructing graphs with
complementarity.  For any graph with complementarity $A$ and morphism
of reflexive graphs $p \colon G \to A$, consider the following
candidate complementarity structure on $G$. 

  Let $\GW = G \times_A \aW$ denote the pullback
  \begin{equation}
      \Diag{%
        \pbk{m-2-1}{m-1-1}{m-1-2} %
      }{%
        \GW \& \aW \\
        G \& A. %
      }{%
        (m-1-1) edge[labelu={p^\W}] (m-1-2) %
        edge[labell={}] (m-2-1) %
        (m-2-1) edge[labeld={p}] (m-2-2) %
        (m-1-2) edge[into,labelr={}] (m-2-2) %
      }\label{eq:GW}
    \end{equation}
Further, let $\labelG$ be the composite $\GW \xto{p^\W}
  \aW \xto{\labela} \Sierp$, and let $(x,y) \compatG z$ iff $(p(x),p(y))
  \compata p(z)$ (for both vertices and edges). In other words,
  $\compatG$ is the relational composite 
    $$G^2 \xto{p^2} A^2 \xmodto{\compata} \aW \xot{p^\W} G^\W,$$
    where the backwards $p^\W$ arrow denotes the converse of the graph
    of $p^\W$.

    \begin{prop}\label{prop:compatpR}
      For any subrelation $R \subseteq \compatG$, if $R$
      is symmetric, then $(G,\GW,R,\labelG)$ forms a graph
      with complementarity, and $p$ is a morphism of graphs with
      complementarity to $A$.
    \end{prop}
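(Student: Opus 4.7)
The plan is to verify the two claims in turn, reducing each to the corresponding property of $A$.

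First I would check that $(G, \GW, R, \labelG)$ is a graph with complementarity. Since the data $\GW$, $R$, and $\labelG$ are already given (on both vertices and edges), the only nontrivial axiom is that the composite $G^2 \xmodto{R} \GW \xto{\labelG} \Sierp$ is partially functional and symmetric. Symmetry is immediate from the assumed symmetry of $R$ together with symmetry of $\labela$ on $A$: if $(x,y) R z$ then $(y,x) R z$, and both produce the same label $\labelG(z)$. For partial functionality, I would suppose $(x,y) R z_1$ and $(x,y) R z_2$; by $R \subseteq \compatG$ this gives $(p(x),p(y)) \compata p(z_1)$ and $(p(x),p(y)) \compata p(z_2)$, so partial functionality of the composite $A^2 \xmodto{\compata} \aW \xto{\labela} \Sierp$ forces $\labela(p(z_1)) = \labela(p(z_2))$. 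Since $\labelG = \labela \rond p^\W$ by construction, this yields $\labelG(z_1) = \labelG(z_2)$, as required.

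Next I would verify that $p$ is a morphism of graphs with complementarity. The condition $p(\GW) \subseteq \aW$ is built into the definition~\eqref{eq:GW} of $\GW$ as the pullback $G \times_A \aW$, which also supplies the restriction $p^\W \colon \GW \to \aW$. The label-preservation condition $\labela \rond p^\W = \labelG$ holds by the very definition of $\labelG$. For preservation of compatibility, suppose $(x,y) R z$; then $(x,y) \compatG z$, which by definition means exactly $(p(x), p(y)) \compata p(z)$; this is the required condition on $p$.

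These verifications are all direct unpackings of definitions, so I do not expect any genuine obstacle. The only place where one must be slightly careful is to notice that both vertex-level and edge-level components of $\GW$, $R$, and the various compatibility relations must be treated uniformly; but since pullbacks in $\Gph$ are computed pointwise and $\compatG$ is defined uniformly on vertices and edges as a relational composite with $p^2$ and $p^\W$, both levels work by identical arguments.
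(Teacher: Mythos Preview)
Your proof is correct and follows essentially the same approach as the paper's. The paper phrases partial functionality more abstractly---showing $\labelG \rond \compatG \subseteq \labela \rond \compata \rond p^2$ via relational algebra and then observing that any subrelation of a partially functional relation is partially functional---whereas you unpack this elementwise; but the underlying argument is identical. (A minor point: your appeal to ``symmetry of $\labela$ on $A$'' is unnecessary, since symmetry of $R$ alone already gives $(y,x) \mathrel{R} z$ with the same $z$, hence the same label.)
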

    \begin{proof}
    By standard relational algebra, the composite relation
    $$G^2 \xmodto{\compatG} G^\W \xto{p^\W} \aW,$$
    which is equal to
    $$G^2 \xto{p^2} A^2 \xmodto{\compata} \aW \xot{p^\W} G^\W \xto{p^\W} \aW,$$
    is included in
    $$G^2 \xto{p^2} A^2 \xmodto{\compata} \aW.$$

    Composing with $\labela$, we obtain that $(\labelG \rond \compatG)
    \subseteq (\labela \rond \compata \rond p^2)$, which is
    straightforwardly symmetric and partially functional.  A
    subrelation of a partially functional relation is automatically
    partially functional, so $\labelG \rond R$ is partially functional.
    It is symmetric because $R$ is, hence the result.
    \end{proof}

\begin{exa}\label{ex:lgraphs}
  $\ccs$ forms a graph with complementarity over $\A$ by the last
  proposition, taking $R$ to relate
  \begin{itemize}
  \item all pairs $(n \vdash P, n \vdash Q)$ to $n \vdash P \para Q$
    on vertices,
  \item any transitions $(\Gam \vdash P_1) \xot{\alpha} (\Gam \vdash
    P'_1)$ and $(\Gam \vdash P_2) \xot{\id} (\Gam \vdash P_2)$ with
    $(\Gam \vdash P_1 \para P_2) \xot{\alpha} (\Gam \vdash P'_1 \para
    P_2)$, and symmetrically,
  \item and any two transitions $(\Gam \vdash P_1) \xot{\alpha} (\Gam
    \vdash P'_1)$ and $(\Gam \vdash P_2) \xot{\overline{\alpha}} (\Gam
    \vdash P'_2)$ with $(\Gam \vdash P_1 \para P_2) \xot{\id} (\Gam
    \vdash P'_1 \para P'_2)$.
  \end{itemize}
\end{exa}

    \begin{prop}\label{prop:constructcompl} Suppose given a choice, for all $x, y \in G$ and $a
      \in A$ such that $(p(x),p(y)) \compata a$, of a vertex $[x,y]_a
      \in G$ such that $p([x,y]_a) = a$, satisfying the following
      condition: for all edges $e_x \colon x' \to x$ and $e_y \colon
      y' \to y$ in $G$, and $e_a \colon a' \to a$ in $A$, if
      $(p(e_x),p(e_y)) \compata e_a$, then there exists a
      $[e_x,e_y]_{e_a} \colon [x',y']_{a'} \to [x,y]_a$ such that
      $p([e_x,e_y]_{e_a}) = e_a$.

      Then, $(G,\GW,\compatG,\labelG)$ forms a graph with
      complementarity, and $p$ is a morphism of graphs with
      complementarity.
  \end{prop}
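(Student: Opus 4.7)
The plan is to verify directly that the composite $G^2 \xmodto{\compatG} \GW \xto{\labelG} \Sierp$ is partially functional and symmetric, and then observe that $p$ being a morphism of graphs with complementarity is immediate from the construction of $\compatG$, $\GW$, and $\labelG$.

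Partial functionality comes for free from the proof of Proposition~\ref{prop:compatpR}: by definition, $(x,y) \compatG z$ unfolds as $(p(x),p(y)) \compata p(z)$, and $\labelG(z) = \labela(p(z))$, so $\labelG \rond \compatG$ is a subrelation of $\labela \rond \compata \rond p^2$, which is partially functional by the hypothesis on $A$. Hence so is $\labelG \rond \compatG$.

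The substantive content is symmetry, and this is exactly what the choice function is designed to deliver. On vertices, suppose $(x,y) \compatG z$ with $\labelG(z) = \ell$. Then $(p(x),p(y)) \compata p(z)$ and $\labela(p(z)) = \ell$, so symmetry of $\labela \rond \compata$ on $A$ provides some $a' \in \aW$ with $(p(y),p(x)) \compata a'$ and $\labela(a') = \ell$. The vertex choice then furnishes $[y,x]_{a'} \in G$ with $p([y,x]_{a'}) = a' \in \aW$, so $[y,x]_{a'} \in \GW$; by construction $(y,x) \compatG [y,x]_{a'}$ and $\labelG([y,x]_{a'}) = \ell$, giving the symmetric witness. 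On edges the argument is parallel: from $(e_x,e_y) \compatG e_z$ with $\labelG(e_z) = \ell'$, edge-symmetry on $A$ yields $e_{a'} \in \aW$ with $(p(e_y),p(e_x)) \compata e_{a'}$ and matching label, and the edge condition on the choice function produces $[e_y,e_x]_{e_{a'}}$ with source $[y',x']_{\dom(a')}$, target $[y,x]_{\cod(a')}$, and $p$-image $e_{a'}$; it therefore lies in $\GW$ with the required label.

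Finally, that $p$ is a morphism of graphs with complementarity is just unfolding definitions: $p(\GW) \subseteq \aW$ by the pullback square~\eqref{eq:GW}, $\labela \rond p^\W = \labelG$ by the definition of $\labelG$, and $(x,y) \compatG z \Rightarrow (p(x),p(y)) \compata p(z)$ holds definitionally. The only bookkeeping worth flagging is checking in the edge step that the source and target of the symmetric witness edge live in $\GW$ (not merely in $G$), but this is immediate: their $p$-images are $\dom(a')$ and $\cod(a')$, which belong to the subgraph $\aW$ because $e_{a'}$ does. No step presents a real obstacle; the whole argument is a transport of structure along $p$ mediated by the choice function.
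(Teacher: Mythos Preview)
Your proof is correct and uses the same underlying idea as the paper's. The paper packages it more compactly by observing that the choice function upgrades the inclusion $(\labelG \rond \compatG) \subseteq (\labela \rond \compata \rond p^2)$ from Proposition~\ref{prop:compatpR} to an equality, so that both partial functionality and symmetry follow at once from the corresponding properties of $\labela \rond \compata$; your direct construction of symmetric witnesses is exactly what establishes the reverse inclusion, just unpacked.
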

  \begin{proof}
    Recalling the beginning of the proof of Proposition~\ref{prop:compatpR},
    the hypothesis implies that the inclusion
    $$(G^2 \xmodto{\compatG} G^\W \xto{p^\W} \aW) \subseteq (G^2 \xto{p^2} A^2 \xmodto{\compata} \aW)$$
    is actually an equality.
    
    Composing with $\labela$, we obtain that $\labelG \rond \compatG
    = \labela \rond \compata \rond p^2$, which is straightforwardly
    symmetric and partially functional. The morphism $p$ is a morphism
    of graphs with complementarity by construction.
\end{proof}

\begin{defi}
  Let $\SSSL = \cob{\chi}(\SSS)$ and $\TTTL = \cob{\chi} (\TTT)$ be
  the pullbacks of $\SSS \to \QF$ and $\TTT \to \QF$ along $\chi \colon \LLL \to
  \QF$.   
\end{defi}

\begin{exa}
  $\SSSL$ and $\TTTL$ form graphs with complementarity over $\QFI$ by
  Proposition~\ref{prop:constructcompl}.  The canonical relation
  $\compatof{\ccs}$ does not satisfy the condition of
  Proposition~\ref{prop:constructcompl}, however. Indeed, e.g., any
  non-silent transition $(\Gam \vdash P_1) \xot{\alpha} (\Gam \vdash
  P'_1)$ and silent but non-identity transition $(\Gam \vdash P_2)
  \xot{\id} (\Gam \vdash P'_2)$ are not coherent in $\ccs$, although
  their images under the projection to $\A$ are so.  (Amalgamating two
  such transitions in $\ccs$ requires a path of length 2, as will be
  used below.) What saves $\SSSL$ and $\TTTL$ from this issue is that
  projecting to $\QFI$ does not hide away, e.g., synchronisations.
\end{exa}

\subsection{Modular graphs and fair testing equivalence}
We now introduce the notion of \emph{modular} graph, which is
appropriate for defining fair testing. We could actually introduce
fair testing for arbitrary graphs with complementarity, but the extra
generality would make little sense.

For any graph with complementarity $G$, $\Gcoh$ forms \anlts{} over
$\Sierp$, through $\Gcoh \xto{\dpara} \Sierp$.
\begin{defi}
  $G$ is \emph{modular} iff for all $(x,y) \compatG z$ we have both:
  \begin{enumerate}
  \item for all $e \colon z' \to z$, there exists $e_x \colon x' \to x$ and
    $e_y \colon y' \to y$ such that $(e_x,e_y) \compatG e$; and\label{modularity:i}
  \item for all $e_x \colon x' \to x$ and $e_y \colon y' \to y$ such that $e_x \coh e_y$
    there exists $e \colon z' \to z$ such that $(e_x,e_y) \compatG e$. \label{modularity:ii}
  \end{enumerate}
\end{defi}
\begin{rem}
  The second condition is almost redundant: in any graph with
  complementarity $G$, there exists $e'$ such that $(e_x,e_y) \compatG
  e'$, but the target of $e'$ may be any $u$ such that $(x,y) \compatG
  u$; it does not have to be $z$.
\end{rem}

\begin{prop}\label{prop:modbis}
   $G$ is \emph{modular} iff ${\compatG}$ is a strong bisimulation
  over $\Sierp$.
\end{prop}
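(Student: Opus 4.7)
The plan is to prove the equivalence by directly unfolding both definitions and matching them up. The key observation is that $\compatG$, regarded as a subgraph of $\Gcoh \times_\Sierp \GW$, is a relation over $\Sierp$ in the sense of Section~\ref{subsec:prelim:lts}, so bisimulation means that both projections $\compatG \to \Gcoh$ and $\compatG \to \GW$ are graph fibrations (Remark~\ref{rem:joyal} and the characterisation just after).

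First I would check that $\compatG$ actually lives in $\Gcoh \times_\Sierp \GW$: given $(x,y) \compatG z$ (resp.\ $(e_x,e_y) \compatG e_z$), the partial functionality of $\dpara = \labela \rond \compata$ forces $\dpara(x,y) = \labelG(z)$ (resp.\ on edges), so the compatibility over $\Sierp$ holds automatically. Then I would spell out what an edge of $\compatG$ looks like: an edge $(x',y',z') \to (x,y,z)$ is a pair of an edge $(e_x,e_y) \colon (x',y') \to (x,y)$ in $\Gcoh$ (i.e.\ $e_x \colon x'\to x$, $e_y \colon y'\to y$ in $G$ with $e_x \coh e_y$) and an edge $e_z \colon z' \to z$ in $\GW$, such that $(e_x,e_y) \compatG e_z$.

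With this in place, the rest is a direct translation. The projection $\compatG \to \GW$ being a graph fibration says: for every $(x,y,z) \in \compatG$ and every $e_z \colon z' \to z$ in $\GW$, there exist $x',y'$ and edges $e_x \colon x'\to x$, $e_y \colon y'\to y$ with $(e_x,e_y) \compatG e_z$ — this is exactly modularity clause~\ref{modularity:i}. Symmetrically, the projection $\compatG \to \Gcoh$ being a graph fibration says: for every $(x,y,z) \in \compatG$ and every edge $(e_x,e_y) \colon (x',y') \to (x,y)$ in $\Gcoh$, there exist $z'$ and $e_z \colon z' \to z$ in $\GW$ with $(e_x,e_y) \compatG e_z$ (the condition $(x',y') \compatG z'$ is automatic from $(e_x,e_y) \compatG e_z$ by taking sources) — this is exactly modularity clause~\ref{modularity:ii}. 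So both conditions hold simultaneously iff $\compatG$ is a bisimulation, giving the equivalence.

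There is no real obstacle here; it is a dictionary between the definitions. The only mild subtlety is being careful with the direction of edges (our convention that $e \colon x' \to x$ represents a transition from $x$ to $x'$) and with the automatic compatibility over $\Sierp$, which is why partial functionality of $\labela \rond \compata$ is in the definition of graph with complementarity.
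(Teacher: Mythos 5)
Your proof is correct and follows the same route as the paper: view $\compatG$ as a relation $\Gcoh \modto \GW$ over $\Sierp$ and observe that both projections being graph fibrations is exactly the two modularity clauses. The paper's own proof is just a one-line version of this dictionary; your extra checks (the automatic compatibility over $\Sierp$ via partial functionality, and the explicit matching of each projection with each clause) simply make explicit what the paper leaves implicit.
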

We here implicitly view $\compatG$ as a relation $\Gcoh \modto \GW$.
\begin{proof} 
Since $\compatG$ is a relation over $\Sierp$, it is enough to prove that both 
projections are graph fibrations, which is directly equivalent to modularity.
%
\end{proof}

\begin{exa}
  $\SSSL$ and $\TTTL$, as well as $\ccs$, are modular.
\end{exa}

We now define fair testing in any modular graph, and compare with both
semantic fair testing equivalence ($\faireq$) for strategies and
standard fair testing equivalence ($\faireqs$) for CCS processes.
Recall that $\bisimsierp$ denotes strong bisimilarity over $\Sierp$.
\begin{lem}
  For any modular graph with complementarity $G$ and $x,y,z,t \in
  G$, if $(x,y) \compatG z$ and $(x,y) \compatG t$, then $z
  \bisimsierp t$.
\end{lem}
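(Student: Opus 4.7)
The plan is to exhibit an explicit strong bisimulation over $\Sierp$ containing the pair $(z,t)$. Define a relation $R$ on $\GW$ by declaring $z_1 \mathrel{R} z_2$ iff there is some pair $(x_1,x_2) \in G^2$ with both $(x_1,x_2) \compatG z_1$ and $(x_1,x_2) \compatG z_2$. By hypothesis, $z \mathrel{R} t$, so it suffices to check that $R$ is a strong bisimulation in $\GW$ over $\Sierp$.

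Fix $z_1 \mathrel{R} z_2$ witnessed by some $(x_1,x_2)$; note $(x_1,x_2) \in \Gcoh$ since it lies in the domain of $\compatG$. Consider an edge $e \colon z'_1 \to z_1$ in $\GW$. Apply modularity condition~\eqref{modularity:i} to $(x_1,x_2) \compatG z_1$ and $e$ to obtain edges $e_1 \colon x'_1 \to x_1$ and $e_2 \colon x'_2 \to x_2$ with $(e_1,e_2) \compatG e$; in particular $(e_1,e_2) \in \Gcoh$, i.e.\ $e_1 \coh e_2$. Now apply modularity condition~\eqref{modularity:ii} to the second witness $(x_1,x_2) \compatG z_2$ and the coherent pair $(e_1,e_2)$, which yields $e' \colon z'_2 \to z_2$ with $(e_1,e_2) \compatG e'$. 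Since $(x'_1,x'_2) \compatG z'_1$ and $(x'_1,x'_2) \compatG z'_2$, we have $z'_1 \mathrel{R} z'_2$.

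It remains to verify that $\labelG(e) = \labelG(e')$. Both are values of the composite $\Gcoh \xmodto{\compatG} \GW \xto{\labelG} \Sierp$ at the same input $(e_1,e_2)$. This composite is required to be partially functional in the definition of a graph with complementarity, so the two labels must coincide. Swapping the roles of $z_1$ and $z_2$ gives the symmetric clause, establishing that $R$ is a bisimulation over $\Sierp$; this concludes the proof.

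The argument is essentially routine once one has Proposition~\ref{prop:modbis} in hand; the only subtlety is ensuring that one may legitimately invoke condition~\eqref{modularity:ii} with the \emph{second} witness $(x_1,x_2) \compatG z_2$ rather than the first, which is exactly the extra freedom provided by modularity beyond the mere bisimulation content of $\compatG$.
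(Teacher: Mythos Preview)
Your proof is correct, but it takes a longer route than the paper's. The paper's proof is a one-liner: since $G$ is modular, Proposition~\ref{prop:modbis} gives that $\compatG$ is a strong bisimulation $\Gcoh \modto \GW$ over $\Sierp$; hence $(x,y) \compatG z$ and $(x,y) \compatG t$ immediately yield $z \bisimsierp (x,y) \bisimsierp t$ (using the paper's convention that $\bisimsierp$ may relate vertices living in different graphs over $\Sierp$, here $\Gcoh$ and $\GW$), and transitivity of bisimilarity finishes.

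Your argument instead builds a bespoke bisimulation $R$ on $\GW$ directly from the two modularity clauses, essentially re-deriving by hand a special case of what Proposition~\ref{prop:modbis} already packages. This is more self-contained (it avoids the cross-graph bisimilarity convention and the implicit use of closure under composition of bisimulations), but it duplicates work. Your closing remark is a bit misleading: you cite Proposition~\ref{prop:modbis} as making things routine, yet your proof never invokes it; and the ``subtlety'' you flag about using the second witness in clause~\eqref{modularity:ii} is not really a subtlety --- that clause is quantified over all triples $(x,y) \compatG z$, so applying it with $z_2$ is immediate.
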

\begin{proof}
  We have $z \bisimsierp (x,y) \bisimsierp t$.
\end{proof}

Any modular graph may be equipped with a choice of $z$ such that
$(x,y) \compatG z$, for all $x \coh y$. We denote such a choice by
$[x,y]$. By the lemma, the choice of $z$ does not matter as long as we
only consider properties invariant under $\bisimsierp$.  Here, we only
need the standard predicate for fair testing.

\begin{defi}
  For any reflexive graph $G$ over $\Sierp$, let $\bot^G$ denote the set of all
  $x \in G$ such that for all $x \Leftarrow x'$ there exists $x'
  \xLeftarrow{\tick} x''$.
\end{defi}
When $G$ is a graph with complementarity,
we often denote $\bot^{\GW}$ by $\botG$. There is no confusion because 
$G$ is not even a graph over $\Sierp$ in general.

In any modular graph with complementarity $G$, let, for any $x \in G$,
$\testable{x} = \ens{y \aalt x \coh y}$, and let $\eqtestable{x}{y}$
iff $\testable{x} = \testable{y}$.
\begin{defi}
  For any $x,y \in G$, let $x \faireqG y$ iff $\eqtestable{x}{y}$ and
  for all $z \in \testable{x}$, $[x,z] \in \botG$ iff $[y,z] \in
  \botG$.
\end{defi}

We may at last define fair relations:
\begin{defi}\label{def:fairrel}
  For all modular graphs with complementarity $G$ and $H$, and full
  relations $R \colon G \modto H$, let $R$ \emph{preserve fair testing
    equivalence} when, for all $x \relR x'$ and $y \relR y'$, $(x
  \faireqof{G} y)$ implies $(x' \faireqof{H} y')$.  $R$ \emph{reflects
    fair testing equivalence} when the converse implication holds.
  $R$ is \emph{fair} when it preserves and reflects fair testing
  equivalence.
\end{defi}

Modularity enables a first, easy characterisation of fair testing.
\begin{prop}
  If $G$ is modular, then for any $x \coh y$, $[x,y] \in \bot^G$ iff
  $(x,y) \in \bot^{\Gcoh}$.
\end{prop}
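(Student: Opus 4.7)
The plan is to reduce the claim to two standard facts: (i) by Proposition~\ref{prop:modbis}, modularity gives that $\compatG \colon \Gcoh \modto \GW$ is a strong bisimulation over $\Sierp$; and (ii) the predicate $\bot^{(-)}$ on any reflexive graph over $\Sierp$ is invariant under strong bisimilarity over $\Sierp$. Since $(x,y) \compatG [x,y]$ by definition of $[x,y]$, these two facts combine to give the equivalence $[x,y] \in \bot^{\GW} \iff (x,y) \in \bot^{\Gcoh}$, which is exactly the statement (recalling that $\bot^{G}$ abbreviates $\bot^{\GW}$).

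First I would invoke Proposition~\ref{prop:modbis} to conclude that $\compatG$, viewed as a relation $\Gcoh \modto \GW$, is a strong bisimulation over $\Sierp$ (via $\dpara$ on the left and $\labela \rond p^\W$ on the right). Then, noting that $(x,y) \compatG [x,y]$ holds by the very choice of $[x,y]$, I obtain $(x,y) \bisimsierp [x,y]$ as vertices of the disjoint union $\Gcoh + \GW$ viewed over $\Sierp$ (using the abuse of notation for $\bisimsierp$ recorded just after Proposition~\ref{prop:change of base}).

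Next I would establish the general lemma: for any reflexive graph $H$ over $\Sierp$ and $u,v \in H$ with $u \bisimsierp v$, one has $u \in \bot^H \iff v \in \bot^H$. This is a routine unfolding: assume $u \in \bot^H$ and take any path $v \Leftarrow v'$; the bisimulation yields a matching $u \Leftarrow u'$ with $u' \bisimsierp v'$; applying the hypothesis to $u'$ furnishes a path $u' \xLeftarrow{\tick} u''$, and transporting it back through the bisimulation (now starting from $u'$) yields $v' \xLeftarrow{\tick} v''$; hence $v \in \bot^H$. The converse is symmetric.

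Applying this lemma to $u = (x,y)$ and $v = [x,y]$ inside $\Gcoh + \GW$ gives the desired equivalence. There is no real obstacle here; the only point worth being careful about is the partial functionality of $\labela \rond \compata$, which guarantees that the labels used on the two sides of the bisimulation genuinely match (so that the $\tick$-labelled transition witnessed on one side transports to a $\tick$-labelled transition on the other), but this is already built into the definition of a graph with complementarity.
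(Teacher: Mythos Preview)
Your proof is correct and follows exactly the approach the paper has in mind: the paper's own proof is the single sentence ``A direct consequence of Proposition~\ref{prop:modbis}'', and what you have written is precisely the unpacking of that consequence (modularity gives a strong bisimulation $\compatG$ over $\Sierp$, the pair $(x,y)$ is related to $[x,y]$ by it, and $\bot^{(-)}$ is invariant under $\bisimsierp$). One cosmetic point: since $G$ here is itself a graph with complementarity (not assumed to lie over some separate $A$), the labelling on $\GW$ is just $\labelG$, not $\labela \rond p^\W$.
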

\begin{proof}
A direct consequence of Proposition~\ref{prop:modbis}.
\end{proof}

We now prove that the general definition of fair testing equivalence
instantiates correctly for $\SSSL$ and $\ccs$.  First, we easily have 
\begin{prop}\label{prop:fairccs}
  For any two CCS processes $P$ and $Q$ over $n$, $P \faireqs Q$ iff
  $P \faireqof{\ccs} Q$.
\end{prop}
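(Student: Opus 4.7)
The plan is to unfold both definitions side by side and check they agree. The result is essentially bookkeeping once the correspondence between silent paths in $\ccs$ over $\A$ (as used in Definition~\ref{def:ccsfair}) and silent paths in $\ccs^\W$ over $\Sierp$ (as used in Definition~\ref{def:fairrel}) is established.

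First, I would reconcile the ``testability'' sides of both equivalences. By construction of the complementarity structure on $\ccs$ (Example~\ref{ex:lgraphs}), we have $P \coh_{\ccs} Q$ iff $P$ and $Q$ are typed over the same $\Gam$, so $\eqtestable{P}{Q}$ amounts to $\Gam_P = \Gam_Q$, matching the condition $\Gam = \Gam'$ in Definition~\ref{def:ccsfair}. Moreover, from the same complementarity, for any $P, Q$ over $\Gam$, a representative $[P,Q]$ with $(P,Q) \compatof{\ccs} [P,Q]$ is (up to $\bisimsierp$) $P \para Q$, matching the definition of $P$ passing the test $Q$ as $P \para Q \in \bot$.

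Second, the main technical step is to show that for all $P, Q$ over $\Gam$:
\[
(P \para Q) \in \bot \iff (P \para Q) \in \bot^{\ccs^\W}.
\]
Unfolding, the left-hand side says: for every path $(P\para Q) \xxOt{\A}{} R$ in $\ccs$, there exists a path $R \xxOt{\A}{\tick} R'$. The right-hand side says: for every path $(P\para Q) \Leftarrow R$ in $\ccs^\W$ (over $\Sierp$), there exists $R \xLeftarrow{\tick} R'$. The core observation is that the pullback $\ccs^\W = \ccs \times_\A \A^\W$ has as edges exactly the $\id$ and $\tick$-labelled transitions of $\ccs$, i.e., reflexive identities, synchronisations (which carry the $\id$ label via the last rule of Figure~\ref{fig:ccs}), and tick transitions. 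Furthermore, the composite $\ccs^\W \to \A^\W \to \Sierp$ sends $\id$-labelled edges of $\ccs$ to $\id_\star$ and $\tick$-edges to $\tick$. Hence an identity-free path in $\ccs^\W$ over $\Sierp$ is the same thing as a path in $\ccs$ whose image in $\A$ uses only the $\id$ edge, which is exactly what $(P\para Q) \xxOt{\A}{} R$ denotes; similarly for paths of shape $\xxOt{\A}{\tick}$ versus $\xLeftarrow{\tick}$.

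The only mild subtlety is that in $\bot$ we consider arbitrary paths in $\ccs$ over $\A$ while asking that their $\A$-image be identity-free-equivalent to $()$, whereas in $\bot^{\ccs^\W}$ the paths are constrained from the outset to lie in $\ccs^\W$. But the two constraints cut out the same set of paths in $\ccs$, so the two conditions are equivalent. Putting these pieces together yields the claim. I do not expect any real obstacle; the proof is a short bureaucratic verification that the graph-with-complementarity recasting of $\ccs$ faithfully represents the original fair testing setup.
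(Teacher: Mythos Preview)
Your proposal is correct and is precisely the unfolding the paper has in mind when it writes ``Straightforward.'' You have made explicit the two ingredients --- that $\eqtestable{P}{Q}$ reduces to equality of arities via Example~\ref{ex:lgraphs}, and that $\bot$ and $\bot^{\ccs^\W}$ coincide because the pullback $\ccs^\W$ picks out exactly the $\id$/$\tick$-labelled transitions --- which is all that is needed.
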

\begin{proof}
  Straightforward.
\end{proof}

We now wish to compare $\bot^{\SSSL}$, as defined in this section, and
the semantic $\bbot$. As an intermediate step, we consider the
following, bare $\barebot$, which lives over $\QF$, but is defined in
terms of \ltss{} (as opposed to successful states of strategies).  Let
$\SSS^\W$ be the restriction of $\SSS$ to closed-world transitions,
i.e., the pullback of $\SSS \to \QF$ along the inclusion $\ccsW \into
\QF$; this is \anlts{} over $\Sierp$ via $\labelDccs$.  Let $\barebot
= \bot^{\SSS^\W}$ denote the set of pairs $(X,S) \in \SSS$ such that
for all $(X,S) \Leftarrow (X',S')$ there exists $(X',S')
\xLeftarrow{\tick} (X'',S'')$.
\begin{lem}\label{lem:bbot:bot}
  For all $(X,S) \in \SSS$,  $\exta{X}{S} \in \bbot_X$ iff $(X,S) \in \barebot$.
\end{lem}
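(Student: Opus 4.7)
The plan is to reduce both sides of the biconditional to a common intermediate datum: pairs $(U, \state)$ where $U \colon X' \proto X$ is a closed-world play and $\state \in \exta{X}{S}(U)$. I will exhibit a bijection between such pairs and silent-or-tick paths $(X,S) \Leftarrow \ldots \Leftarrow (X', S')$ in $\SSS^\W$ starting at $(X,S)$, under which successful plays correspond exactly to paths whose $\Sierp$-projection contains a $\tick$. Given this bijection, the result is immediate: $\exta{X}{S} \in \bbot_X$ says every unsuccessful $(U, \state)$ extends to some successful $(U', \state')$ along some $f \colon U \to U'$ in $\E(X)$, which by the bijection is the same as every silent path from $(X,S)$ having a continuation containing a $\tick$, i.e.\ $(X,S) \in \barebot$.

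To construct the bijection, I would first observe that each closed-world move (those generated by $\nun, \tickn, \paran, \taunimj$) is full, so the closed-world quasi-moves sit inside $\QF$ and hence can label transitions in $\SSS_\D$. Decompose a closed-world play as $U = M_n \vrond \ldots \vrond M_1$. Since $S$ is definite, Proposition~\ref{prop:SM:SdotM} gives a bijection $\exta{X}{S}(M_1) \iso \prod_{(d',x') \in \Pl(X_1)} (S \cdot M_1)(\idv_{d'}, x')$, whose elements are precisely the $\state_1$ such that setting $S_1 = \restr{(S \cdot M_1)}{\psi_{M_1}(\state_1)}$ yields a transition $(X,S) \xot{M_1} (X_1, S_1)$ in $\SSS_\D$; moreover $(X_1, S_1)$ is again definite. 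Iterating, a state $\state \in \exta{X}{S}(U)$ is the same datum as a tuple $(\state_1, \ldots, \state_n)$, which is the same as a path $(X,S) \xot{M_1} (X_1, S_1) \xot{M_2} \cdots \xot{M_n} (X', S')$ in $\SSS^\W$. The independence of the end datum from the chosen decomposition of $U$ follows because composing successive Proposition~\ref{prop:SM:SdotM} bijections agrees with the limit description of $\exta{X}{S}(U)$ applied to a coarser decomposition, via the interchange-style coherence supplied by Lemma~\ref{lem:big} applied to each $M_i$.

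Next, I would check the compatibility with morphisms in $\E(X)$. A morphism $f \colon U \to U'$ in $\E(X)$ presents $U'$ as an extension of $U$ by a further play $w$, so decomposing $w$ into closed-world moves and concatenating with the decomposition of $U$ realises the path for $(U', \state')$ as an extension of the path for $(U, \state)$ exactly when $\state' \cdot f = \state$ (this equation is precisely the naturality of the bijection in $U$, using functoriality of $\exta{X}{S}$ on $\op{\E(X)}$). The label $\labelDccs(M_i)$ is $\tick$ iff $M_i$ is a tick move, so the path contains a $\tick$ iff $U$ contains a tick move, i.e.\ is successful in the sense of Definition~\ref{def:cw:successful}.

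With the bijection and its compatibilities in hand, the two defining quantifications are translated into each other verbatim. The main obstacle is the coherence check that Proposition~\ref{prop:SM:SdotM}, which is stated one move at a time, composes associatively when $U$ is decomposed; for this I would lean on Lemma~\ref{lem:big} to reconcile arbitrary decompositions with the canonical limit description of $\exta{X}{S}(U)$, and on the pseudo-double-functoriality of $\SS$ (Proposition~\ref{prop:pseudodoublefunctor}) to ensure that the residual $S \cdot M_1 \cdots \cdot M_i$ is well-defined up to canonical iso.
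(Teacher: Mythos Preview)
Your approach is essentially the same as the paper's: both directions go by translating between states $\state \in \exta{X}{S}(U)$ over closed-world plays and transition paths in $\SSS^\W$, using Proposition~\ref{prop:SM:SdotM} iteratively along a decomposition of $U$ into closed-world moves.

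There is one genuine gap. You assert that ``a morphism $f \colon U \to U'$ in $\E(X)$ presents $U'$ as an extension of $U$ by a further play $w$, so decomposing $w$ into closed-world moves\ldots''. This is not automatic: a morphism in $\E(X)$ is (an equivalence class of) a pair $(w,\alpha)$ with $\alpha \colon U \vrond w \to U'$ a general double cell, not a priori an isomorphism, and $w$ is not a priori closed-world. To turn $f$ into an extension of $U$ by closed-world moves you need to know that between closed-world plays over $X$ there is no room for a non-trivial $\alpha$; the paper states and uses exactly this as Lemma~\ref{lem:closedworld} (any $\alpha \colon W \to W'$ in $\DH(X)$ with $W,W'$ closed-world is an iso, and unique). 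Without it the backward direction of your bijection-on-morphisms step does not go through.

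Two minor remarks. First, your ``bijection'' is really a correspondence: a pair $(U,\state)$ together with a chosen decomposition of $U$ determines a path, and different decompositions give different paths with the same endpoint $(X',S')$; this is harmless for the argument but worth stating accurately. Second, Lemma~\ref{lem:big} is not the right coherence tool here---it decomposes a single cell along a decomposition of its codomain, whereas what you need is that iterating Proposition~\ref{prop:SM:SdotM} along successive moves computes $\exta{X}{S}(M_1 \vrond \cdots \vrond M_n)$; the paper simply does this ``by induction'' using the equation $\exta{}{\restr{(S\cdot W)}{\psi(\state)}}(W') = \{\state'' \in \exta{}{S}(W \vrond W') \mid \state'' \cdot f = \state\}$.
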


This essentially amounts to checking that the notions of closed-world,
successful, and unsuccessful play
(Definition~\ref{def:cw:successful}), correspond with closed-world,
successful, and unsuccessful transition sequences.  The former are
defined in terms of plays and moves therein, while the latter rest
upon the map $\labelDccs \colon \ccsW \to \Sierp$.

  We first observe:
\begin{lem}\label{lem:closedworld}
  For any two closed-world plays $W,W'$ over $X$, and $\alpha \colon W
  \to W'$ in $\DH(X)$, $\alpha$ is an isomorphism, and it is unique.
\end{lem}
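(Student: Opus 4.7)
The plan is to proceed by induction on the length $n$ of $W'$, proving invertibility and uniqueness of $\alpha$ simultaneously. For the base case $n = 0$, axiom~\axref{atomicity} gives $W' \iso \idv_X$, and the very existence of $\alpha$ forces also $W \iso \idv_X$; then $\alpha$ reduces to an endomorphism of $\idv_X$ in $\DH(X)$, which is the identity by~\axref{discreteness}.

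For the inductive step, I factor $W' \iso W'' \vrond M'$, with $M'$ a single closed-world move on top and $W''$ a closed-world play of length $n$ below. Applying~\axref{leftdecomposition} to $\alpha$ yields an essentially unique decomposition $\alpha = \alpha_2 \vrond \alpha_1$, where $\alpha_1 \colon u_1 \to M'$ sits on top and $\alpha_2 \colon u_2 \to W''$ sits below and again has horizontal codomain $\id_X$, with $W \iso u_2 \vrond u_1$. By~\axref{atomicity}, $W$'s decomposition into closed-world moves restricts to compatible decompositions of $u_1$ and $u_2$, so both are closed-world plays. The induction hypothesis then applies to $\alpha_2$, yielding a unique iso, and the task reduces to the top cell $\alpha_1$.

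For $\alpha_1 \colon u_1 \to M'$, I argue by case analysis on the seed type of $M'$ (tick, channel creation, fork, or synchronisation). The key observation in each case is that a closed-world move, pinned at its codomain, is rigid: its set of players and channels transforms in a completely determined way, so there is no room for a proper intermediate move on top of a shorter closed-world play that could still map down to $M'$. Combined with~\axref{discreteness} applied to basic sub-moves and the individuality axiom~\axref{individuality}, this forces $u_1$ to have length $1$ and to coincide with $M'$ up to special isomorphism, with $\alpha_1$ the unique such comparison cell.

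The main obstacle lies in the case analysis for non-basic closed-world moves, namely forks $\paran$ and synchronisations $\tau_{n,a,m,c}$: axiom~\axref{views:decomp} does not apply since it requires a basic top move, so the argument must descend to the concrete cospan-of-string-diagrams presentation of $\Dccs$. Here, the fact that the horizontal codomain of $\alpha_1$ is determined (from $\alpha_2$ being in $\DH(X)$ and the decomposition), together with the monicity of the legs of moves and the rigidity of the pushouts defining $\paran$ and $\tau_{n,a,m,c}$, pins $u_1$ down via a direct combinatorial argument on the underlying cospans in $\Chatf$.
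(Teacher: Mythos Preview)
The paper states this lemma without proof, treating it as an observation, so there is no detailed argument to compare against. Your inductive strategy via~\axref{leftdecomposition} is a reasonable plan, but several steps are not actually carried by the axioms you invoke.

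First, the base case: \axref{discreteness} asserts that $\idv_d$ has no non-trivial endomorphisms only for \emph{individuals} $d$, whereas $X$ here is an arbitrary position. The conclusion still holds, but it needs the concrete description of double cells in $\Dccs$ (a cell $\idv_X \to \idv_X$ in $\DH(X)$ is literally the diagram $(\id_X,\id_X,\id_X)$), not the axiom.

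Second, and more seriously, the treatment of $\alpha_1 \colon u_1 \to M'$ is where the argument lives, and here you essentially defer to an unspecified ``direct combinatorial argument on the underlying cospans''. Two things need to be shown and neither follows from \axref{discreteness} or \axref{individuality}: (a) that $\length{u_1} \neq 0$ (otherwise you would get $W \cong W''$ of length $n$ embedding into $W'$ of length $n+1$, and you have not yet proved that this is impossible), and (b) that when $\length{u_1} = 1$ the cell is an isomorphism. Your appeal to ``rigidity'' and to discreteness of basic sub-moves does not bite: $M'$ is a full move, not a basic one, and forks and synchronisations are not covered by those axioms.

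A cleaner route, which is presumably what the paper has in mind, works directly with the presheaf model using the tools of Section~\ref{sec:ccs} (which the paper explicitly uses in advance). A morphism $\alpha \colon W \to W'$ in $\DH(X)$ is a monomorphism of presheaves $k \colon W \hookrightarrow W'$ restricting to $\id_X$. Because $W$ is closed-world, every core of $W$ has type $\nu_n$, $\tickn$, $\paran$, or $\taunimj$, hence its image under $k$ is again a core of $W'$. Now pick a core $\mu'$ of $W'$ not in $\im(k)$, minimal for the causal order; its targets are then in $\im(k)$, say $y' = k(y)$. By Lemma~\ref{lem:morpbk} (the bottom and top squares of a double cell are pullbacks) $y$ is non-final in $W$, hence is the target of some core $\mu$; then $k(\mu)$ is a core of $W'$ with target $y'$, so $k(\mu) = \mu'$ by target-linearity of $G_{W'}$, a contradiction. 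Thus $k$ is onto, hence an iso. Uniqueness is a separate induction on $G_W$: any automorphism in $\DH(X)$ fixes $X$, and if it fixes all targets of a core $\mu$ then it fixes $\mu$ by target-linearity, hence fixes its sources and created channels; so it is the identity.
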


\begin{proof}[Proof of Lemma~\ref{lem:bbot:bot}]
  Let $S \in \SSX$ and assume $\exta{X}{S} \in
  \bbot_X$. Let $S \Leftarrow S'$ (over $\Sierp$).  This
  means that there exists a path $p$
$$X = X_0 \xot{M_1} X_1 \xot{M_2} \ldots X_n = X',$$
such that, omitting positions, 
$$S = S_0 \xot{M_1} S_1 \xot{M_2} \ldots S_n = S',$$
and $p$ is mapped by $\labelDccs^\star$ to the path of length $n$ consisting
only of $\id$ edges. This implies by induction the existence of
$\state \in \exta{X}{S}(W)$, where $W = M_1 \vrond \ldots \vrond M_n$
is closed-world and unsuccessful, such that $S' = \restr{(S \cdot
  W)}{\psi(\state)}$.  Because $\exta{X}{S} \in \bbot_X$, there exists a
successful, closed-world play $W'$, a morphism $f \colon W \to W'$ in $\E(X)$, and $\state' \in
\exta{X}{S}(W')$ such that $\state' \cdot f = \state$.  By
Lemma~\ref{lem:closedworld}, $W'$ is isomorphic to an extension of $W$ with
closed-world moves, say $W' \iso W \vrond M_{n+1} \vrond \ldots \vrond
M_{n+m}$. By induction on $m$, we obtain a path
$$S' = S_n \xot{M_{n+1}} S_{n+1} \xot{M_{n+2}} \ldots S_{n+m},$$ where $S_{n+m} = \restr{(S \cdot
  W')}{\psi(\state')}$. Because $W'$ is successful, there exists $i \in m$
such that $\labelDccs(M_{n+i}) = \tick$, hence $S' \xLeftarrow{\tick}
S_{n+i}$. Thus, $(X,S) \in \barebot$.

Conversely, assume $(X,S) \in \barebot$. Let $W$ be an
unsuccessful, closed-world play over $X$ and $\state \in
\exta{X}{S}(W)$. Picking a decomposition $W = M_1 \vrond \ldots \vrond
M_n$ of $W$, we obtain a path $p$
$$S = S_0 \xot{M_1} S_1 \ldots \xot{M_n} S_n = S'$$
in $\SSS$ such that $S' = \restr{(S \cdot W)}{\psi(\state)}$, which 
yields $S \Leftarrow S'$. Because $(X,S) \in \barebot$, there exists
$S' \Leftarrow S'' \xot{\tick} S'''$, with underlying path
$$S' = S_n \xot{M_{n+1}} S_{n+1} \ldots \xot{M_{n+m}} S_{n+m} = S'' \xot{M_{n+m+1}} S'''$$
in $\SSS^\W$, such
that $\labelDccs(M_{n+i}) = \id$ for all $i \in m$ and $\labelDccs(M_{n+m+1}) =
\tick$.  But by definition this means that $S''' = \restr{(S' \cdot
  W')}{\psi(\state')}$ for some
$$\state' \in \exta{}{S'}(W') = \exta{}{\restr{(S \cdot W)}{\psi(\state)}}(W') 
= \ens{\state'' \in \exta{}{S}(W \vrond W') \aalt \state''
  \cdot f = \state},$$ where $W' = M_{n+1} \vrond \ldots \vrond
M_{n+m+1}$ and $f \colon W \to (W \vrond W')$ is the extension.
By construction, $\state' \cdot f = \state$. Hence, $\exta{X}{S} \in \bbot_X$.  
\end{proof}

We furthermore have:
\begin{lem}\label{lem:barebat:botL}
  For any vertex $h \colon I \to X$ of $\LLL$ and
  $S \in \SSX$, $(I,h,S) \in \bot^{\SSSL}$ iff $(X,S) \in \barebot$.
\end{lem}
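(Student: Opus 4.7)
The plan is to establish the lemma by showing that the projection $\pi \colon (\SSSL)^\W \to \SSS^\W$ induced by the pullback defining $\SSSL$ is a graph fibration, and that it commutes with the labelings to $\Sierp$ (since $\labelLLL$ is by construction the composite $(\LLL)^\W \into \ccsW \xto{\labelDccs} \Sierp$, and since $\pi$ is a restriction of the cobase-change functor along $\chi \colon \LLL \to \QF$). Once the fibration property is in place, both directions of the iff follow by a standard argument: any closed-world path out of $(X,S)$ in $\SSS^\W$ lifts, edge by edge, to a closed-world path out of $(I,h,S)$ in $(\SSSL)^\W$ with the same image in $\Sierp$, and conversely any closed-world path out of $(I,h,S)$ in $(\SSSL)^\W$ projects to one out of $(X,S)$ in $\SSS^\W$ with the same label.

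Concretely, for $(\Rightarrow)$, given $(I,h,S) \in \bot^{\SSSL}$ and $(X,S) \Leftarrow (X',S')$ in $\SSS^\W$, iterating the fibration yields a lift $(I,h,S) \Leftarrow (I,h',S')$; applying the hypothesis gives a $\tick$-extension in $(\SSSL)^\W$, whose projection under $\pi$ provides the required extension in $\SSS^\W$. For $(\Leftarrow)$, given $(X,S) \in \barebot$ and $(I,h,S) \Leftarrow (I,h',S')$ in $(\SSSL)^\W$, one projects to $\SSS^\W$, uses the hypothesis to obtain a $\tick$-extension $(X',S') \xLeftarrow{\tick} (X'',S'')$, and lifts again. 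Note that along any closed-world path in $\SSSL$ the interface $I$ is preserved (edges of $\LLL$ carry the identity on $I$), so the bookkeeping on interfaces is trivial.

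The key technical step is thus the fibration claim: given a closed-world transition $(X,S) \xot{M} (X',S')$ in $\SSS^\W$ and a vertex $(I,h,S)$ of $(\SSSL)^\W$ projecting to $(X,S)$, we must produce $h' \colon I \to X'$ together with a double cell of the shape~\eqref{eq:limove} filling the square with $h$, $M$, and $h'$. We apply \axref{fibration} (cf.\ Notation~\ref{not:cartesian}) to form the cartesian lifting of $M$ along $h$, giving $D_{h,M}$, $h_{h,M} \colon D_{h,M} \to X'$, and $\restr{M}{h} \colon D_{h,M} \proto I$. Since $I$ is an interface, $\Pl(I) = \emptyset$, so the local-restriction condition defining quasi-moves is vacuous for $\restr{M}{h}$, forcing $\restr{M}{h}$ to have length $0$. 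By \axref{atomicity}, it is canonically isomorphic to $\idv_I$, which canonically identifies $D_{h,M}$ with $I$ and defines $h'$ as the induced map. The channel constraint for membership in $\LLL$ is automatic since closed-world moves in $\Dccs$ (synchronisations, forks, channel creations, ticks) are never inputs or outputs.

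The main obstacle is justifying that the cartesian restriction $\restr{M}{h}$ to an interface $I$ really has length $0$: while the intuition is clear from the concrete description of closed-world moves in $\Dccs$ (channels of $X$ are preserved in $X'$, and no player of $I$ exists to record any nontrivial restriction), a clean general argument must appeal to the interplay of \axref{fibration}, \axref{fibration:continued}, and the absence of individuals in $I$, together with the fact that a closed-world move has global length $\leq 1$. One must also check that the lifted edge sits in $(\SSSL)^\W$ (not merely in $\SSSL$), which follows since $M$ and its lift have the same image under $\labelDccs \rond \chi^\W = \labelLLL$.
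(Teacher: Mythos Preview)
Your approach is essentially the paper's, but the paper packages it more cleanly: it observes that $\chi^\W \colon \LLL^\W \to \QF^\W$ is a strong functional bisimulation (which is exactly your lifting statement, phrased at the level of the alphabets rather than the \ltss{}), and then invokes Proposition~\ref{prop:change of base} to conclude that the pulled-back projection $(\SSSL)^\W \to \SSS^\W$ is again a strong functional bisimulation. This saves you the trouble of unfolding the lifting/projecting argument for $\bot$ by hand, since strong bisimulations over $\Sierp$ manifestly preserve and reflect membership in $\bot$.

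There is, however, a genuine gap in your justification of the key lifting step. You write that because $\Pl(I) = \emptyset$, ``the local-restriction condition defining quasi-moves is vacuous for $\restr{M}{h}$, forcing $\restr{M}{h}$ to have length $0$.'' This is a non sequitur: the quasi-move condition being vacuously satisfied only tells you that $\restr{M}{h}$ \emph{is} a quasi-move, not that it has length $0$. Likewise, \axref{fibration:continued} concerns restrictions to \emph{individuals}, and an interface is not an individual, so that axiom does not apply either. The correct argument is concrete to $\Dccs$ (and the lemma is stated for $\Dccs$ anyway): every seed has a player in its initial position, hence every move has a player in its codomain, hence a play on a position with no players has length $0$. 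Equivalently---and this is what the paper actually says---for any closed-world move $M \colon Y \proto X$ and $h \colon I \to X$, the channels of $X$ all survive in $Y$, so one directly obtains $k \colon I \to Y$ and the required double cell~\eqref{eq:limove}, without going through the cartesian lifting at all.
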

\begin{proof}
  The map $\chi^\W \colon \LLL^\W \to \QF^\W$ is a strong, functional bisimulation,
  because for any $h \colon I \to X$ and closed-world move $M \colon Y
  \proto X$, there exists a diagram~\eqref{eq:limove}.  Thus, the
  projection $(\SSSL)^\W \to \SSS^\W$ is a strong, functional bisimulation by
  Proposition~\ref{prop:change of base}.
\end{proof}
\begin{rem}
  Interfaces are pretty irrelevant here, and indeed we could have
  decreed that closed-world moves only relate vertices with empty
  interfaces in $\LLL$. This is unnecessary here, though, so we stick
  to the simpler definition, but it will be crucial for the
  $\pi$-calculus.
\end{rem}

This entails:
\begin{cor}\label{cor:SSSLI}
  For any $h \colon I \to X$, $h' \colon I \to X'$, $S \in \SSX$, and
  $S' \in \SS_{X'}$, $(I,h,S) \faireq (I,h',S')$ iff $(I,h,S)
  \faireqof{\SSSL} (I,h',S')$.
\end{cor}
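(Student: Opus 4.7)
The plan is to unfold both equivalences and exhibit a bijection between semantic tests and graph-theoretic tests under which the two notions of ``passing'' agree by the preceding two lemmas.

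First I would unfold $(I,h,S) \faireqof{\SSSL} (I,h',S')$. By Definition~\ref{def:fairrel} instantiated to endorelations, this asks (a) $\testable{(I,h,S)} = \testable{(I,h',S')}$ and (b) for every $z$ in this common set, $[(I,h,S),z] \in \bot^{\SSSL}$ iff $[(I,h',S'),z] \in \bot^{\SSSL}$. Condition (a) is automatic: by the definition of $\compatof{\LLL}$ on vertices, $(I,h) \coh (J,k)$ in $\LLL$ forces $J = I$, and this coherence relation is pulled back to $\SSSL$ along $\chi$. Hence both testable sets consist of exactly the triples $(I,k,T)$ with $k \colon I \to Y$ horizontal and $T \in \SS_Y$, that is, of the semantic tests for $(I,h,S)$ in the sense of Section~\ref{subsec:fair}.

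Next I would identify the amalgamation. Given $(I,k,T) \in \testable{(I,h,S)}$, let $Z = h +_I k$ be the pushout, $c \colon I \to Z$ the canonical map, and $[S,T] \in \SS_Z$ the strategy obtained from the decomposition $\SS_Z \iso \SS_X \times \SS_Y$ recalled in Section~\ref{subsubsec:syntax}. Then $(I,c,[S,T])$ is a valid choice for $[(I,h,S),(I,k,T)]$: its projection to $\LLL$ witnesses $((I,h),(I,k)) \compatof{\LLL} (I,c)$ by construction of $Z$, so the pullback description of $\compatof{\SSSL}$ from Proposition~\ref{prop:compatpR} applies. Any other choice is $\bisimsierp$-bisimilar by the lemma preceding Definition~\ref{def:fairrel}, so $\bot^{\SSSL}$-membership is insensitive to it. Chaining Lemma~\ref{lem:barebat:botL} (applied to $c \colon I \to Z$ and $[S,T]$) with Lemma~\ref{lem:bbot:bot} (applied to $Z$ and $[S,T]$) then yields
\[
[(I,h,S),(I,k,T)] \in \bot^{\SSSL} \iff (Z,[S,T]) \in \barebot \iff \exta{Z}{[S,T]} \in \bbot_Z,
\]
and the rightmost condition is precisely that $(I,h,S)$ passes the semantic test $(k,T)$. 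Hence the semantic test set satisfies $(I,h,S)^{\bbot{}} = \{\,z \in \testable{(I,h,S)} \mid [(I,h,S),z] \in \bot^{\SSSL}\,\}$, and similarly for $(I,h',S')$. Equality of these two sets is simultaneously condition (b) for $\faireqof{\SSSL}$ and the condition $(I,h,S)^{\bbot{}} = (I,h',S')^{\bbot{}}$ defining $\faireq$, so the two equivalences coincide.

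The main obstacle will be the second step, namely checking that $(I,c,[S,T])$ is a legitimate representative for the amalgamation in the pullback complementarity on $\SSSL$; this is essentially bookkeeping about how $\compatof{\SSSL}$ is built from $\compatof{\LLL}$ along $\chi$, together with the universal property of the pushout $Z$ used to produce the interface map $c$.
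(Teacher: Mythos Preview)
Your proposal is correct and follows essentially the same approach as the paper: both arguments unfold the two fair testing equivalences and bridge them via the chain $\bbot_Z \Leftrightarrow \barebot \Leftrightarrow \bot^{\SSSL}$ provided by Lemmas~\ref{lem:bbot:bot} and~\ref{lem:barebat:botL}. You are simply more explicit than the paper about two points it sweeps under ``by definition'': that the testable sets coincide (because coherence in $\SSSL$ is pulled back from $\LLL$, where it depends only on the interface $I$), and that $(I,c,[S,T])$ with $Z = h +_I k$ is a legitimate choice of amalgamation. One small slip: the reference for $\faireqof{\SSSL}$ should be the definition of $\faireqG$ (just after Proposition~\ref{prop:modbis}), not Definition~\ref{def:fairrel}, which concerns fair \emph{relations}.
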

\proof We have
  \begin{center}
    $(I,h,S) \faireq (I,h',S')$ 
    
      $\Updownarrow$\makebox[0pt][l]{(by definition)}

      $\forall Y, k \colon I \to Y, T \in \SS_Y, (\exta{}{[S,T]} \in \bbot_{X +_I Y} \Leftrightarrow
      \exta{}{[S',T]} \in \bbot_{X' +_I Y})$

      $\Updownarrow$\makebox[0pt][l]{(by Lemma~\ref{lem:bbot:bot})}

      $\forall Y, k \colon I \to Y, T \in \SS_Y, ((X +_I Y, {[S,T]}) \in \barebot \Leftrightarrow
      (X' +_I Y, {[S',T]}) \in \barebot)$

      $\Updownarrow$\makebox[0pt][l]{(by Lemma~\ref{lem:barebat:botL})}

      $\forall Y, k \colon I \to Y, T \in \SS_Y, ((I \to X +_I Y, {[S,T]}) \in \bot^{\SSSL} \Leftrightarrow
      (I \to X' +_I Y, {[S',T]}) \in \bot^{\SSSL})$

      $\Updownarrow$\makebox[0pt][l]{(by definition)}

      $(I,h,S) \faireqof{\SSSL} (I,h',S'),$
\end{center}
which concludes the proof.
\qed

\subsection{Adequacy}
Until now, our study of graphs with complementarity and fair testing
therein is intrinsic, i.e., fair testing equivalence in a modular
graph with complementarity $G$ does not depend on any alphabet.  We
now address the question of what an alphabet should be, for $G$. The
main idea is that such an alphabet $A$ should be a graph with
complementarity, and that viewing it as an alphabet for $G$ is the
same as providing a morphism $p \colon G \to A$ in $\GCompl$, satisfying
a certain condition called \emph{adequacy}. To understand the role of
this condition, one should realise that edges in $G$ may be much too
fine a tool for checking fair testing equivalence. E.g., in $\SSSL$,
they include information about which players played which move.  Thus,
although it is true that weak bisimilarity implies fair testing
equivalence, this property is essentially useless for fair testing,
because too few strategies are weakly bisimilar.  Any morphism $p
\colon \SSSL \to A$ induces an \emph{a priori} coarser version of fair
testing for $\SSSL$, where one only looks at labels in $A$.
\emph{Adequacy} is a sufficient condition for this latter version to
coincide with the original. This will in particular entail that weak
bisimilarity over $A$ is finer than fair testing equivalence.

Adequacy relies on the following:
\begin{defi}
  Consider, for any $p \colon G \to A$ and $q \colon H \to A$ the pullback
    \begin{center}
      \Diag{%
        \pbk{m-2-1}{m-1-1}{m-1-2} %
      }{%
        \combinea{G}{H} \& \acoh \\
        G \times H \& A^2. %
      }{%
        (m-1-1) edge[labelu={}] (m-1-2) %
        edge[labell={}] (m-2-1) %
        (m-2-1) edge[labeld={p \times q}] (m-2-2) %
        (m-1-2) edge[into,labelr={}] (m-2-2) %
      }
    \end{center}
    We call $\combinea{G}{H}$ the \emph{blind composition} of $G$ and
    $H$ over $A$, viewed as \anlts{} over $\Sierp$ via
    $\combinea{G}{H} \to \acoh \to \Sierp$.
\end{defi}

Recall from Section~\ref{subsec:prelim:lts} that $\wbisim_A$ denotes
weak bisimilarity for reflexive graphs over $A$.
\begin{defi}
  Let $p \colon G \to A$ be a morphism of graphs with
  complementarity. We say that $p$ is \emph{adequate} iff
  \begin{itemize}
  \item the graph of $\ob \Gcoh \into \ob(\combinea{G}{G})$ is
    included in $\wbisimsierp$, and
  \item for all $x,y \in \ob(G)$, $x \coh y$ iff $p(x) \coh p(y)$.
  \end{itemize}

\end{defi}
Concretely, any transition $(e_1,e_2) \in \Gcoh$ is matched, without
any hypothesis on $G$, by $(e_1,e_2)$ itself. Conversely, having a
transition $(x_1,x_2) \xot{e_1,e_2} (x'_1,x'_2)$ in $\combinea{G}{G}$
means that $p(e_1) \dpara p(e_2) = \sigma$. Adequacy demands that
there exists a path $(r_1,r_2) \colon (x_1,x_2) \ot^\star
(x''_1,x''_2)$ in $\Gcoh$, such that $\idfree{r_1 \dpara^\star
  r_2} = \idfree{(\sigma)}$, and $(x''_1,x''_2) \wbisimsierp
(x'_1,x'_2)$, where the left-hand side is in $\Gcoh$ and the
right-hand side is in $\combinea{G}{G}$.

Recall the map $\xi \colon \LLL \to \A$ from Example~\ref{exa:Axi}.
Via this map, $\SSSL$ and $\TTTL$ form \ltss{} and even graphs with
complementarity over $\A$.
\begin{prop}
  The maps from $\ccs$, $\SSSL$, and $\TTTL$ to $\A$ are adequate.
\end{prop}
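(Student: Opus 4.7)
The plan is to handle all three morphisms uniformly. The condition that $x \coh y$ iff $p(x) \coh p(y)$ on vertices is straightforward in each case: vertex coherence in $\ccs$, $\SSSL$, $\TTTL$, and $\A$ all boil down to sharing a common index/interface (a natural number of channels), which $\theta$ and the projections from $\SSSL$ and $\TTTL$ preserve and reflect. The substance of the proof lies in checking that the graph of the inclusion $\ob \Gcoh \into \ob(\combinea{G}{G})$ is a weak bisimulation over $\Sierp$. The forward direction is immediate because $\Gcoh \to \combinea{G}{G}$ is a $\Sierp$-morphism, so the work is the backward direction: given $(x,y)$ with $x \coh y$ and a transition $(x,y) \xot{(e_x, e_y)} (x',y')$ in $\combinea{G}{G}$ with $p(e_x) \dpara p(e_y) = \sigma$, one must produce a path $(x,y) \xot{} (x'',y'')$ in $\Gcoh$ whose composite $\Sierp$-label is identity-free equal to $\sigma$, with $(x'',y'') \wbisim_{\Sierp} (x',y')$.

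For $\ccs$, I would proceed by case analysis on the $\A$-labels $(\alpha_x, \alpha_y)$, using the rules of Example~\ref{ex:lgraphs}. If one of them is $\id$ in $\A$, the corresponding $\ccs$-transition is itself an identity (since $\A$ has no non-identity silent edges), and the first non-vertex clause of Example~\ref{ex:lgraphs} provides a single-step witness in $\ccs^\coh$. If $(\alpha_x,\alpha_y) = (a, \bar a)$, the synchronisation clause yields a single step with $\Sierp$-label $\id$. If both are non-identity $\tau$-transitions (both mapped to $\id$ in $\A$), I would interleave: $(x,y) \xot{(e_x, \id)} (x', y) \xot{(\id, e_y)} (x', y')$, whose composite in $\Sierp^\star$ is two identities, identity-free equal to the empty path. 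Ticks paired with identities fall under the first case. In each case the target is exactly $(x',y')$, so weak bisimilarity is automatic.

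For $\SSSL$ and $\TTTL$, the analysis transfers via $\LLL$: since $\compatof{\SSSL}$ and $\compatof{\TTTL}$ are defined by pullback from $\compatof{\LLL}$ along the respective projections (Proposition~\ref{prop:constructcompl}), it suffices to verify the condition on $\LLL$ and lift it. Two $\LLL$-edges over $h \colon I \to X$ and $k \colon I \to Y$ have coherent $\A$-labels precisely when either one is $\id$ in $\A$, or they form an input/output pair on a shared channel of $I$, or one is a tick paired with an identity. By construction of $\compatof{\LLL}$ via diagram~\eqref{eq:trans}, each of these cases except the one in which both underlying full quasi-moves are non-identity but silent (i.e.\ forks, $\nu$-moves, or synchronisations not touching $I$) supplies a direct one-step witness in the coherence graph of $\LLL$. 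The remaining case is handled by interleaving the two moves on the pushout $Z = X +_I Y$, using that full quasi-moves on disjoint players of $Z$ commute up to canonical isomorphism in $\DH$ (a straightforward consequence of~\axref{fibration} and~\axref{leftdecomposition}).

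The main obstacle will be the last sub-case: verifying that the target reached by interleaving is weakly $\Sierp$-bisimilar to $(x', y')$ in the ambient $\combinea{G}{G}$. For $\TTTL$ this reduces to unfolding the inductive definition of $\TTT_\D$-transitions, which only rewrites data locally at each player touched by the quasi-move, so interleaving in either order produces the same final family of process terms. For $\SSSL$ I would invoke Proposition~\ref{prop:localtrans} together with the decomposition $\SS_{X +_I Y} \iso \SS_X \times \SS_Y$ from Section~\ref{subsubsec:syntax}: the residual strategy after two interleaved silent moves on disjoint subpositions agrees pointwise, on each player, with the residual produced by the combined quasi-move. Throughout, the use of $\LLL$ rather than $\QF$ ensures that channels carrying input/output labels remain exposed in the interface, so no information needed for matching $\A$-labels is hidden by projection.
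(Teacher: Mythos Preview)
Your overall strategy matches the paper's: either the pair $(e_x, e_y)$ is already coherent in $G$, or one interleaves as $(e_x, \id)$ followed by $(\id, e_y)$. But the case analysis for $\ccs$ contains an error. You claim that if one of the $\A$-labels is $\id$, then ``the corresponding $\ccs$-transition is itself an identity (since $\A$ has no non-identity silent edges)''. This conflates inert silence of $\A$ with that of $\ccs$: a CCS synchronisation $(\Gam \vdash P_1 \para P_2) \xot{\id} (\Gam \vdash P'_1 \para P'_2)$ is a non-identity edge of $\ccs$ whose image in $\A$ is $\id_\Gam$. So, for instance, a $\tick$-transition on one side against a genuine silent step on the other is \emph{not} directly coherent in $\ccs$ (the second clause of Example~\ref{ex:lgraphs} requires the partner of a closed-world step to be a literal identity edge), and your first case wrongly asserts it is. The repair is to drop the faulty claim and fall back on interleaving in this case as well, which is precisely the paper's uniform formulation: for all $e, e'$ in $G$ with $p(e) \coh p(e')$, either $e \coh e'$, or both $(e, \id)$ and $(\id, e')$ are pointwise coherent. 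Note also that interleaving lands on $(x', y')$ on the nose, so the endpoint-bisimilarity issue you flag as the ``main obstacle'' never arises.

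Your treatment of $\SSSL$ and $\TTTL$ via $\LLL$, invoking~\axref{fibration} and~\axref{leftdecomposition} to commute disjoint quasi-moves, is heavier than needed. The paper dispatches all three graphs with the same one-line dichotomy over $\A$: the only way $p(e) \coh p(e')$ can fail to lift to $e \coh e'$ in any of these $G$ is that both edges are non-identity, and then pairing each with a literal identity restores coherence straight from the definition of $\compatG$.
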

\begin{proof}
  For all three graphs $p \colon G \to \A$ over $\A$, both $\Gcoh$ and
  $\combinea{G}{G}$ form graphs over $\A^\W$, because $\compatA \colon
  \A^2 \modto \A^\W$ is actually partially functional.  In each case,
  the graph of $\ob \Gcoh \into \ob(\combinea{G}{G})$ is a weak
  bisimulation over $\AW$, because for all $e$ and $e'$ in $G$, if
  $p(e) \coh p(e')$, then either $e \coh e'$, or both interleavings
  are coherent, i.e., $(e,\id) \coh (\id,e')$ and $(\id,e) \coh (e',
  \id)$, pointwise. (Here, e.g., $(e,\id)$ denotes the \emph{path}
  $\cdot \xot{e} \cdot \xot{\id} \cdot$.)
\end{proof}
The only subtle point is that this only holds thanks to the
restrictions put on edges of $\LLL$. E.g., consider the graph the
graph $\LLL_{-}$ with the same vertices as $\LLL$, and edges $(I
\xto{k} Y) \to (I \xto{h} X)$ given just as for $\LLL$, except that we
do not require existence of a diagram~\eqref{eq:limove}.  Pullback
yields a graph $\SSS_{-}$ over $\LLL_{-}$.  Extending $\xi$ to $\xi'
\colon \LLL_{-} \to \A$ in the obvious way, we obtain a graph over
$\A$.  Consider now the moves $\iotapos{2,1}, \iotaneg{2,1} \colon [2]
\proto [2]$, let $I = 2 \cdot \star$, and let $f$ be one of the two
embeddings $I \to [2]$, say the one which is an inclusion at $\star$,
$f'$ being the other.  Recalling labels in $\A$ from
Definition~\ref{def:A}, we have edges $(I, f,[2]) \xot{\onebar} (I,
f,[2])$ and $(I, f,[2]) \xot{1} (I, f',[2])$, and $(I, f,[2]) \coh (I,
f,[2])$. However, the two edges are not coherent, because any attempt
to construct a diagram~\eqref{eq:trans} (with here $h = h' = k = f$,
and $k' = f'$) fails (even if we forget about the vertical
identity). This is the very reason we use $\LLL$ instead of
$\LLL_{-}$.

We have the following two easy properties of blind composition.
\begin{prop}\label{prop:combine:adeq}
  For any modular $G$, adequate $p \colon G \to A$, and $x \coh y$ in $G$, we have 
  $[x,y] \in \botG$ iff $(x,y) \in \bot^{\combinea{G}{G}}$.
\end{prop}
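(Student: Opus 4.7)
The plan is to combine the immediately preceding proposition (the equivalence $[x,y]\in\botG$ iff $(x,y)\in\bot^{\Gcoh}$, valid for modular $G$) with adequacy, via the standard fact that the fair-testing predicate on a graph over $\Sierp$ is invariant under $\wbisimsierp$.

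More concretely, I would proceed in three short steps. First, since $G$ is modular, the preceding proposition gives
\[
[x,y]\in\botG \iff (x,y)\in\bot^{\Gcoh},
\]
where on the right $\Gcoh$ is viewed as an LTS over $\Sierp$ via $\dpara$. Second, note that the inclusion $\Gcoh\hookrightarrow\combinea{G}{G}$ is a morphism of graphs over $\Sierp$: on an edge $(e_1,e_2)$ of $\Gcoh$ we have $p(e_1)\dpara p(e_2)=\sigma$ by definition of $\Gcoh$, and the same label is assigned in $\combinea{G}{G}$ through its projection to $\acoh\to\Sierp$. By adequacy of $p$, the graph of the object-part of this inclusion is contained in weak bisimilarity over $\Sierp$, i.e., for every $(x,y)\in\Gcoh$, viewing both sides over $\Sierp$ yields $(x,y)_{\Gcoh} \wbisimsierp (x,y)_{\combinea{G}{G}}$.

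Third, I would invoke the general fact that, for any reflexive graph $H$ over $\Sierp$, membership in $\bot^H$ is preserved by $\wbisimsierp$: if $u\wbisimsierp v$ and $u\in\bot^H$, then any path $v\Leftarrow v'$ is matched by some $u\Leftarrow u'$ with $u'\wbisimsierp v'$; the $\tick$-extension $u'\xLeftarrow{\tick} u''$ provided by $u\in\bot^H$ is then matched by some $v'\xLeftarrow{\tick} v''$, so $v\in\bot^H$. Applied to the pair of bisimilar vertices $(x,y)_{\Gcoh}$ and $(x,y)_{\combinea{G}{G}}$, this yields
\[
(x,y)\in\bot^{\Gcoh} \iff (x,y)\in\bot^{\combinea{G}{G}},
\]
and chaining with step one gives the desired equivalence.

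The only step that is not completely formal is the $\wbisimsierp$-invariance of $\bot$, but it is entirely routine and follows from unfolding the two definitions; I would prove it inline as a one-line observation. The modularity hypothesis is used only through the previous proposition, and adequacy is used only to move from $\Gcoh$ to $\combinea{G}{G}$ up to $\wbisimsierp$, so no further assumptions on $G$ or $p$ should be needed.
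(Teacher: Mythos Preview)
Your proof is correct and follows essentially the same approach as the paper. The paper's proof is the single line ``$[x,y] \wbisimsierp ((x,y) \in \Gcoh) \wbisimsierp ((x,y) \in \combinea{G}{G})$,'' which chains the modularity-induced bisimilarity with the adequacy-induced one and then implicitly uses the $\wbisimsierp$-invariance of $\bot$ that you spell out; your only variation is to invoke the preceding proposition for the first half rather than the raw bisimilarity, which amounts to the same thing.
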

\begin{proof}
We have  $[x,y] \wbisimsierp ((x,y) \in \Gcoh) \wbisimsierp ((x,y) \in \combinea{G}{G})$.
\end{proof}

\begin{prop}
  For any $H$ over $A$, modular $G$, adequate $p \colon G \to A$, $x_1,x_2 \in G$, and
  $y$ in $H$, if $x_2 \wbisima y$, then
  \begin{center}
    $[x_1,x_2] \in \bot^G$ iff $(x_1,y) \in \bot^{\combinea{G}{H}}.$
  \end{center}
\end{prop}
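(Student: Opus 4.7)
The plan is to reduce to Proposition~\ref{prop:combine:adeq} by establishing that $(x_1,x_2) \in \combinea{G}{G}$ is weakly $\Sierp$-bisimilar to $(x_1,y) \in \combinea{G}{H}$. First I would observe that $\wbisima$ lives over $A$, so $x_2 \wbisima y$ entails $p(x_2) = q(y)$; by adequacy, $x_1 \coh x_2$ is then equivalent to $(x_1,y) \in \combinea{G}{H}$, so that both sides of the claimed equivalence are well-formed under the same implicit hypothesis.

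The key step is to show that the relation $R = \{((z_1,z_2),(z_1,w)) \mid z_2 \wbisima w\}$, viewed on the disjoint union $\combinea{G}{G} + \combinea{G}{H}$, is a weak bisimulation over $\Sierp$. For the direction from $\combinea{G}{G}$ to $\combinea{G}{H}$: given a transition $(z_1,z_2) \xot{(e_1,e_2)} (z'_1,z'_2)$ with $\Sierp$-label $\sigma = p(e_1) \dpara p(e_2)$, I would invoke $z_2 \wbisima w$ on $e_2$ to obtain an identity-free $H$-path $r$ underlying a transition from $w$ to some $w'$, with $\idfree{q^\star(r)} = \idfree{(p(e_2))}$ and $z'_2 \wbisima w'$. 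The matching path in $\combinea{G}{H}$ is then built by placing $e_1$ at the unique position of $r$ that carries a non-$A$-identity label (or at any position if $p(e_2)$ is itself an $A$-identity), while pairing each remaining edge of $r$---all of which project to $A$-identities---with an identity $\id_{z_1}$ or $\id_{z'_1}$ on the left component.

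The main obstacle is to check that each such pair really lies in $\combinea{G}{H}$, i.e., that its projection to $A^2$ lies in $\acoh$. The pair involving $e_1$ and the non-trivial edge of $r$ is immediate from $(p(e_1),p(e_2)) \in \acoh$. For the identity pairs $(\id_?,f_i)$ with $q(f_i)$ an $A$-identity, I would rely on $\acoh$ being a reflexive subgraph of $A^2$ in reflexive graphs: the relevant vertex pair already lies in $\acoh$ (as source or target of $(p(e_1),p(e_2))$, with $q$-vertex unchanged along $A$-identity labels on the right), so the identity edge on it automatically belongs to $\acoh$. The $\Sierp$-labels then compute to $\sigma$ at the non-trivial pair and $\id$ at the identity pairs (since $\labela$ is a morphism of reflexive graphs, hence preserves identities), giving overall identity-free $\Sierp$-label $\idfree{(\sigma)}$, as required. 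The reverse simulation is entirely symmetric, once again using $z_2 \wbisima w$ to translate an $H$-edge into a matching $G$-path on the right.

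Once $R$ is established as a weak $\Sierp$-bisimulation, the standard invariance of the $\bot$-predicate under $\wbisimsierp$---any $\tick$-free transition sequence followed by a $\tick$ on one side is matched up to $\wbisimsierp$ on the other---yields $(x_1,x_2) \in \bot^{\combinea{G}{G}}$ iff $(x_1,y) \in \bot^{\combinea{G}{H}}$. Chaining with Proposition~\ref{prop:combine:adeq}, which identifies $[x_1,x_2] \in \bot^G$ with $(x_1,x_2) \in \bot^{\combinea{G}{G}}$, will conclude the proof.
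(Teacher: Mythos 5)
Your proposal is correct and follows essentially the same route as the paper: reduce via Proposition~\ref{prop:combine:adeq} to comparing $(x_1,x_2) \in \bot^{\combinea{G}{G}}$ with $(x_1,y) \in \bot^{\combinea{G}{H}}$, and settle that comparison using the hypothesis $x_2 \wbisima y$ together with invariance of $\bot$ under weak $\Sierp$-bisimilarity. The paper merely compresses this second step into ``straightforward by hypothesis,'' and your explicit bisimulation relation (pairing the lifted $H$-path with identities on the left component, using reflexivity of $\acoh$ as a subgraph) is a correct unpacking of it.
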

\begin{proof}
  By Proposition~\ref{prop:combine:adeq}, it is enough to prove that
  the right-hand side is equivalent to $(x_1,x_2) \in
  \bot^{\combinea{G}{G}}$, which is straightforward by hypothesis.
\end{proof}


We conclude this section by stating the main property of blind
composition, Proposition~\ref{prop:combinecombine} below, which will
be used extensively in the next section.

To start with, recall the following notation from
Section~\ref{subsubsec:notation:lts}. There, considering a morphism $p
\colon G \to A$ of reflexive graphs, we defined $x \xxOt{A}{r} x'$,
for $x,x' \in \ob(G)$ and $r \colon p(x) \ot p(x')$ in
$A^\star$.  Namely, this denotes any path $r' \colon x \ot^\star
x'$ in $G$, such that $\idfree{p^\star(r')} = \idfree{r}$.

In order to state Proposition~\ref{prop:combinecombine}, we now need
to equip $\freecat{A}$ with complementarity structure, but we cannot
do it over the graph $\Sierp$, because closed-world paths may contain
more than one $\tick$ edge, hence cannot all be mapped to $\Sierp$.  We
thus define \emph{categories} with complementarity.

The notions of relation, partial functionality, functionality,
totality, and domain on reflexive graphs carry over to categories,
e.g., a relation $A \modto B$ is a subcategory $R \subseteq A \times
B$. The only subtlety is that the definitions imply certain
functoriality properties.  E.g., for any composites $g \rond f$ in $A$
and $g' \rond f'$ in $B$, if $(f,f') \in R$ and $(g,g') \in R$,
because $R$, as a subcategory, is stable under composition, we have
for free that $(g \rond f, g' \rond f') \in R$.  Similarly, if $(x,y)
\in R$ for objects $x \in A$ and $y \in B$, then $(\id_x, \id_y) \in
R$.  We thus rename partial functionality and functionality into
partial functoriality and functoriality in this setting.
\begin{defi}
  A \emph{category with complementarity} is a category $A$, equipped
  with a subcategory $\aW$, a relation $\compata \colon A^2 \modto
  \aW$, and a functor $\labela \colon \aW \to \freecat{\Sierp}$, such
  that the composite $A^2 \modto \aW \to \freecat{\Sierp}$ is
  partially functorial and symmetric.
\end{defi}
Again, we let $\acoh = \dom(\compata)$ and write $a \coh a'$ for
$(a,a') \in \acoh$. We further denote the map $\acoh \into A^2 \modto
\aW \to \freecat{\Sierp}$ by $(a,b) \mapsto (a \dpara b)$, and deem
morphisms in $\aW$ \emph{closed-world}.

Defining functors with complementarity in the obvious way, we obtain:
\begin{prop}
  Categories with complementarity form a (locally small) category
  $\CCompl$.
\end{prop}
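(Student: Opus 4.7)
The plan is to follow essentially the same pattern as for $\GCompl$, replacing reflexive graphs by categories throughout. First I would fix the definition hinted at by ``in the obvious way'': a functor with complementarity $f \colon A \to B$ consists of a functor $f \colon A \to B$ between the underlying categories such that $f(\aW) \subseteq \bW$ (so that $f$ restricts to a functor $\fW \colon \aW \to \bW$), $\labelb \rond \fW = \labela$, and for all $(a_1,a_2) \compata a_3$ in $A$, we have $(f(a_1),f(a_2)) \compatb f(a_3)$. This is pointwise identical to the definition given earlier for $\GCompl$, except that the underlying data are now categories (and subcategories, and subfunctors), so the preservation conditions automatically interact well with composition and identities in $A$ and $B$.

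Next I would verify the category axioms. For identities, $\id_A \colon A \to A$ trivially satisfies all three conditions: it restricts to $\id_{\aW}$, it respects labels ($\labela \rond \id_{\aW} = \labela$), and it sends any triple in $\compata$ to itself. For composition, given $f \colon A \to B$ and $g \colon B \to C$ in $\CCompl$, the composite $g \rond f$ restricts on closed-world subcategories to $\gW \rond \fW$, which lands in $\cW$; the label equation follows by pasting, $\labelc \rond (g \rond f)^{\W} = \labelc \rond \gW \rond \fW = \labelb \rond \fW = \labela$; and preservation of the complementarity relation is immediate from the corresponding property of $f$ and $g$, applied in sequence. Associativity and unitality are inherited from $\Cat$, since the forgetful functor $\CCompl \to \Cat$ sending $(A,\aW,\compata,\labela)$ to $A$ and a morphism to its underlying functor is, by construction, faithful and injective on identities and composites.

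Finally, local smallness follows from that of $\Cat$ on small categories: since a category with complementarity is given by a small category $A$ together with extra small data, morphisms in $\CCompl$ are functors satisfying a set of conditions, hence form a subset of $\Cat(A,B)$, which is itself a set.

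There is no serious obstacle here; the only minor subtlety is to make explicit that the preservation conditions, being imposed only on generating data (objects and morphisms), are automatically compatible with composition and identities because they are equations between functors and inclusions of subrelations, all of which are stable under the relevant categorical operations.
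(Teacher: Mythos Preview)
Your proposal is correct and matches the paper's approach: the paper gives no proof at all, treating the result as an immediate consequence of ``defining functors with complementarity in the obvious way,'' which is exactly the definition you spell out. Your verification of identities, composition, and local smallness is the routine check the paper leaves implicit.
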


Consider the functor $\UCompl \colon \CCompl \to \GCompl$
mapping any category with complementarity $\C$ to its underlying graph, say $G$, which we equip
with complementarity structure as follows. First, define
$\GW$ and $\labelG$ by the pullback
\begin{center}
  \Diag{%
    \pbk{m-2-1}{m-1-1}{m-1-2} %
  }{%
    \GW \& \C^\W \\
    \Sierp \& \freecat{\Sierp}. %
  }{%
(m-1-1) edge[labelu={i}] (m-1-2) %
edge[labell={\labelG}] (m-2-1) %
(m-2-1) edge[labeld={\eta}] (m-2-2) %
(m-1-2) edge[labelr={\labelC}] (m-2-2) %
  }
\end{center}
Furthermore, let $\compatC$ consist of all triples $(x,y,z)$ of
vertices (resp.\ edges) such that $z \in \GW$ and
$\compatibleC{x}{y}{z}$ (which is a pullback of $(\compatC) \into \C^2
\times \CW$ along $\C^2 \times \GW \into \C^2 \times \CW$).  This
clearly equips $G$ with complementarity structure and extends to the
announced functor $\UCompl$.

This functor does not appear to have a left adjoint, because
complementarity in $G$ may behave badly w.r.t.\ composition in
$\freecat{G}$. However, we may define the following candidate structure on $\freecat{G}$.
Consider any graph with complementarity $G$, and
  let us start by defining a complementarity structure on $G^\star$.
  Let $(G^\star)^\W$ denote the subcategory of closed-world paths in $G$, 
  i.e., $(G^\star)^\W = (G^\W)^\star$.
  Accordingly, let $\labelof{{G^\star}}$ be the composite
  $$(G^\W)^\star \xto{(\labelG)^\star} \Sierp^\star \xto{\idfree{-}} \freecat{\Sierp}.$$
  Finally, consider the functor
  $$(G^2)^\star \xto{\pairing{\pi^\star}{(\pi')^\star}} (G^\star)^2.$$
  It yields a relation $(G^2)^\star \modto (G^\star)^2$, whose
  converse we use to define $\compatGstar$ as the composite relation
  $$(G^\star)^2 \xmodto{\converse{\pairing{\pi^\star}{(\pi')^\star}}} (G^2)^\star \xmodto{(\compatG)^\star}
  G^\star.$$ Concretely, $\compatGstar$ is $\compatG$ on objects, 
  and on paths, we have $(r_1,r_2) \compatGstar r$ iff all three
  paths $r_1, r_2,$ and $r$ have the same length $n$ and
  $(r^i_1,r^i_2) \compatG r^i$ for all $i \in n$.  This clearly makes
  $G^\star$ into a category with complementarity.
  
  Let us now define our candidate complementarity structure on $\freecat{G}$,
  for any $G \in \GCompl$.  Let first $\freecat{G}^\W$ be the image of
  $(G^\star)^\W \into G^\star \xto{\idfree{-}} \freecat{G}$, i.e., all
  $\id$-free, closed-world paths.  This in particular induces a
  functor $(G^\star)^\W \to (\freecat{G})^\W$, with which
  $\labelof{(G^\star)}$ is obviously compatible, hence we define
  $\labelof{\freecat{G}}$ to be the induced functor.  Finally, let
  $\compatfcG$ be the following relational composite, where the backwards
  arrow denotes a converse:
  $$(\freecat{G})^2 \xot{(\idfree{-})^2} (G^\star)^2 \xmodto{\compatGstar} (G^\star)^\W \to
  \freecat{G}^\W.$$ Concretely, $(\rho_1,\rho_2) \compatfcG \rho_3$
  iff there exist $(r_1,r_2) \compatGstar r_3$ such that $\idfree{r_i}
  = \rho_i$ for $i = 1,2,3$.  Intuitively, $\rho_1$ and $\rho_2$ are
  coherent if upon insertion of identities at appropriate places they
  become pointwise coherent.  

  The relational composite $\freecat{G}^2 \xmodto{\compatfcG} \freecat{G}^\W \xto{\labelfcG}
  \Sierp$ is obviously symmetric and furthermore partially functional
  on objects, so in order to equip $\freecat{G}$ with complementarity
  structure, it only misses partial functoriality on morphisms.

  \begin{defi}
    Let $\GCompl_+$ denote the full subcategory of $\GCompl$ spanning
    objects $G$ such that the above composite is partially functorial
    on morphisms, which we call \emph{functorial} graphs with
    complementarity.
  \end{defi}

  \begin{exa}
    A sufficient condition for a graph with complementarity $G$ to be
    functorial is to satisfy
     \begin{enumerate}[label=(\roman*)]
     \item for any two edges $e$ and $e'$, and object $x$, if $e \coh
       e'$, $e \neq id$, and $e \coh \id_x$, then $e'$ is an identity;
     \item for all edges $e$ and $e'$, $\idfree{(e \dpara \id) ; (id
         \dpara e')}$ and $\idfree{(\id \dpara e') ; (e \dpara \id)}$
       are defined at the same time and then equal.
     \end{enumerate}    
     The three graphs with complementarity $\ccs$, $\SSSL$, and
     $\TTTL$ satisfy these conditions, hence are functorial.
  \end{exa}

  The forgetful functor $\UCompl$ of course lands into $\GCompl_+$ and
  we view it as a functor $\CCompl \to \GCompl_+$ from now on.
  \begin{prop}
    The above construction of $\freecat{G}^\W$,
    $\labelof{\freecat{G}}$, and $\compatfcG$ extends to a left
    adjoint to $\UCompl$, which coincides with $\freecatfun$ on
    underlying graphs.
  \end{prop}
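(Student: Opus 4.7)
The plan is to verify, in turn, that the construction (i) yields an object of $\CCompl$ for every $G \in \GCompl_+$, (ii) extends functorially to morphisms, and (iii) satisfies the expected universal property relative to $\UCompl$. The last clause of the statement (coincidence with $\freecatfun$ on underlying graphs) holds by construction, since the underlying reflexive graph of $\freecat{G}$, forgetting complementarity, is literally $\freecatfun(G)$.

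For (i), one checks that $\freecat{G}^\W \subseteq \freecat{G}$ is a subcategory: it contains identities (empty paths) and is closed under composition since concatenation of identity-free, closed-world paths is again closed-world, and removing identities preserves this. The map $\labelfcG$ is a functor by construction of $\freecat{\Sierp}$ via the $\idfree{-}$ quotient. The relation $\compatfcG$ inherits reflexivity on objects from $\compatG$ and is closed under composition in $\freecat{G}^2 \times \freecat{G}^\W$ by concatenation of the witnessing pointwise-coherent sequences in $\compatGstar$. The composite $\labelfcG \rond \compatfcG$ is symmetric by pointwise symmetry of $\labelG \rond \compatG$, and partial functoriality on morphisms is precisely the hypothesis $G \in \GCompl_+$.

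For (ii), a morphism $f \colon G \to H$ in $\GCompl_+$ induces a functor $\freecat{f} \colon \freecat{G} \to \freecat{H}$ on the underlying graphs. Because $f$ sends $\GW$ into $\HW$, this functor restricts to $\freecat{G}^\W \to \freecat{H}^\W$; compatibility with labels follows by pointwise application of $\labelh \rond \fW = \labelG$ and naturality of $\idfree{-}$; compatibility with the complementarity relation follows by pointwise application of the corresponding property of $f$ to a witnessing sequence in $\compatGstar$. For (iii), I would exhibit the unit and show it is universal. The unit $\eta_G \colon G \to \UCompl(\freecat{G})$ is the standard inclusion of edges as length-one paths: it lands in $\UCompl(\freecat{G})^\W = \freecat{G}^\W$ on closed-world edges, preserves labels (a length-one path through $\idfree{-}$ gives back the original label when non-identity), and sends $(e,e') \compatG e''$ to the corresponding length-one witness in $\compatfcG$. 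Given $\C \in \CCompl$ and $f \colon G \to \UCompl(\C)$ in $\GCompl_+$, the plain adjunction $\freecatfun \dashv U$ on (reflexive) graphs yields a unique functor $\hat{f} \colon \freecat{G} \to \C$ with $U(\hat{f}) \rond \eta_G = f$; it remains to check $\hat{f} \in \CCompl$. Preservation of closed-world morphisms is automatic since any path in $\freecat{G}^\W$ is a composite of edges sent by $f$ into $\CW$, and $\CW$ is a subcategory; preservation of labels is a direct computation using that $\labelC \colon \CW \to \freecat{\Sierp}$ is a functor and thus absorbs the $\idfree{-}$ normalisation; preservation of $\compatC$ is obtained by applying $f$ pointwise to a witnessing sequence and then composing in $\compatC$, which is legitimate since $\compatC \subseteq \C^2 \times \CW$ is a subcategory.

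The main obstacle I expect is the verification that $\hat{f}$ preserves $\compatfcG$. The subtle point is that $\compatfcG$ is defined through the $\idfree{-}$ quotient over $\compatGstar$, so given $(\rho_1,\rho_2) \compatfcG \rho_3$ one has witnesses $(r_1,r_2) \compatGstar r_3$ of a common length with $\idfree{r_i} = \rho_i$, and some components of the $r_i$ may be identities while the others are non-identities. Applying $f$ componentwise gives pointwise coherent triples in $\C$; one then composes in the subcategory $\compatC$ and invokes the fact that identity edges in $G^\W$ map under $f$ to identity morphisms in $\CW$ (the morphism-of-reflexive-graphs condition), so that normalising by $\idfree{-}$ on the source is compatible with composition on the target. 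This is the step where functoriality of $\C$ and the precise definition of $\GCompl_+$ interact most delicately.
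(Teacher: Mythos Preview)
Your proposal is correct and follows essentially the same approach as the paper. The paper's proof is extremely terse: it simply observes that the adjunction reduces to checking that the bijection $\Gph(G,U(\C)) \cong \Cat(\freecat{G},\C)$ from the underlying adjunction $\freecatfun \dashv U$ restricts to a bijection $\GCompl_+(G,\UCompl(\C)) \cong \CCompl(\freecat{G},\C)$, and declares this ``routine''. Your argument via the unit and its universal property is the standard unpacking of exactly this verification; the content you check (that $\hat f$ preserves closed-world morphisms, labels, and complementarity by pointwise application followed by composition in the subcategory $\compatC$) is precisely what the paper's factorisation claim amounts to. One minor slip: $\UCompl(\freecat{G})^\W$ is not literally equal to $\freecat{G}^\W$ (it is the pullback along $\Sierp \into \freecat{\Sierp}$, so only those closed-world paths with label in $\Sierp$), but this does not affect your argument since length-one closed-world paths do land there.
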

  We henceforth denote the left adjoint by $\freecatfun$. 
\begin{proof}
  Proving that this is left adjoint to $\UCompl$ reduces to showing
  that the composite
  $$\GCompl_+(G, \UCompl (C)) \into \Gph(G, U (C)) \xto{\iso} \Cat(\freecat{G},C)$$
  factors through $\CCompl(\freecat{G},C) \into \Cat(\freecat{G},C)$, and conversely the composite
  $$\CCompl(\freecat{G}, C) \into \Cat(\freecat{G},C) \xto{\iso} \Gph(G, U (C))$$
  factors through $\GCompl_+ (G, \UCompl (C)) \into \Gph(G, U (C))$,
  which is routine.
\end{proof}

We may now state the main property of blind composition:
\begin{prop}\label{prop:combinecombine}
  For any graphs with complementarity $G$ and $H$ over $A$, and
  transition sequences $x \xTrans{A}{\rho_x} x'$ and $y
  \xTrans{A}{\rho_y} y'$ respectively in $G$ and $H$, if
  $(\rho_x,\rho_y) \compatfca \rho$, then $(x,y) \xTrans{A}{\rho}
  (x',y')$ in $\combinea{G}{H}$.
\end{prop}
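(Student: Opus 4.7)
The plan is to unfold the definition of $\compatfca$ to obtain a compatible template in $A^\star$, lift that template back into $G$ and $H$ using the given witness paths, and then assemble the result into a path of the pullback $\combinea{G}{H}$.

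By definition of $\compatfca$, I would first choose representatives $r_x^\star \in A^\star$, $r_y^\star \in A^\star$, and $r^\star \in (\aW)^\star$, all of the same length $n$, such that $(r_x^{\star i}, r_y^{\star i}) \compata r^{\star i}$ for every $i \in n$, with $\idfree{r_x^\star}$, $\idfree{r_y^\star}$, $\idfree{r^\star}$ coinciding respectively with $\idfree{\rho_x}$, $\idfree{\rho_y}$, $\idfree{\rho}$. The hypotheses $x \xxOt{A}{\rho_x} x'$ and $y \xxOt{A}{\rho_y} y'$ also yield witness paths $\pi_x$ from $x'$ to $x$ in $G$ and $\pi_y$ from $y'$ to $y$ in $H$ such that $\idfree{p^\star(\pi_x)} = \idfree{\rho_x}$ and $\idfree{q^\star(\pi_y)} = \idfree{\rho_y}$.

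The heart of the argument is then an alignment step: I would refine $\pi_x$, $\pi_y$, and the template $(r_x^\star, r_y^\star, r^\star)$ by inserting identity edges and identity triples so as to obtain paths $\hat{\pi}_x$ in $G$ and $\hat{\pi}_y$ in $H$ of some common length $N \geq n$ together with a refined template $(\hat{r}_x^\star, \hat{r}_y^\star, \hat{r}^\star)$ of length $N$, with $p^\star(\hat{\pi}_x) = \hat{r}_x^\star$ and $q^\star(\hat{\pi}_y) = \hat{r}_y^\star$ on the nose and pointwise compatibility $(\hat{r}_x^{\star i}, \hat{r}_y^{\star i}) \compata \hat{r}^{\star i}$ preserved. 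Inserting an identity triple $(\id_{a_1}, \id_{a_2}, \id_{a_3})$ at a vertex triple $(a_1, a_2) \compata a_3$ of the template is always legal since $\compata$, being a subgraph of a reflexive graph, contains the identity edges of its own vertices; and the insertion leaves every $\idfree{}$ condition intact. Once the alignment is done, for each $i \in N$, the triple $(\hat{\pi}_x^i, \hat{\pi}_y^i, \hat{r}^{\star i})$ is by construction an edge of the pullback $\combinea{G}{H}$; concatenating these $N$ triples yields a path $\hat{\pi} \colon (x',y') \to^\star (x,y)$ in $\combinea{G}{H}$ whose sequence of $\aW$-labels is $\hat{r}^\star$, an identity-insertion of $r^\star$, hence with $\idfree{}$ equal to $\idfree{\rho}$. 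This is precisely $(x,y) \xxOt{A}{\rho} (x',y')$.

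The main obstacle is the scheduling hidden in the alignment step. One must simultaneously accommodate the \emph{invisible} edges of $\pi_x$ and $\pi_y$ (non-identity edges of $G$, resp.\ $H$, which project to identities of $A$) and the existing identity positions of the template. I expect to handle this by induction on $|\pi_x|+|\pi_y|+n$, processing the three sequences left-to-right with a case split at each step: when $r_x^{\star i}$ and $r_y^{\star i}$ are both non-identity, consume the corresponding visible edges on both sides; when only one of them is non-identity, consume the corresponding visible edge on that side and an identity on the other; when $\pi_x$ (or $\pi_y$) presents an invisible edge, insert an identity triple into the template and pair that edge with identities on the other two components; and when the template has an identity position with no pending invisible edge, use $G$- and $H$-identities at the current vertices. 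The invariant throughout is that the current vertex triple $(p(v_x), q(v_y), v_a)$ already lies in $\compata$, so the padding identity triples are always genuine edges of $\compata$ and the construction never gets stuck.
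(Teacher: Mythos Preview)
Your proposal is correct and follows essentially the same approach as the paper: unfold $\compatfca$ to get pointwise-compatible witnesses $(r_1^i,r_2^i)\compata r_3^i$ of common length $n$, then lift along $p$ and $q$ and assemble a path in the pullback. The paper compresses all of this into one sentence --- ``a trivial induction on the length of $r_3$ using the definition of $\combinea{G}{H}$'' --- leaving the alignment you describe (padding invisible edges of $\pi_x$, $\pi_y$ against identities on the other side, using that $\compata$ is a reflexive subgraph) entirely implicit; your detailed scheduling is the honest content of that ``trivial'' induction, so nothing is missing and nothing is genuinely different.
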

\begin{proof}
  Let $p \colon G \to A$ and $q \colon H \to A$ be the given
  projections. Let also $(r^i_1,r^i_2) \compata r^i_3$ for
  all $i \in n$ witness the fact that $(\rho_x,\rho_y) \compatfca
  \rho$.  It is enough to prove $(x,y)
  \xtrans{A^\star}{r_3} (x',y')$, which is in fact a
  trivial induction on the length of $r_3$ using the definition of
  $\combinea{G}{H}$.
\end{proof}

\subsection{Trees}
Returning to our main question, we know by Theorem~\ref{thm:bisim}
that the graph morphism $\TTT \to \SSS$ is a functional, strong
bisimulation over $\QF$. Hence, by Proposition~\ref{prop:change of
  base}, we have:
\begin{prop}\label{prop:strongbisima}
  The graph morphism $\TTTL \to \SSSL$ is a functional, strong
  bisimulation over $\LLL$, and thus also over $\A$.
\end{prop}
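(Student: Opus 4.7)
The plan is to obtain this proposition as a near-immediate corollary of Theorem~\ref{thm:bisim} by invoking Proposition~\ref{prop:change of base} twice, once for pullback and once for post-composition. Since both results are already available, there is essentially no new content to establish; the work is in correctly threading them together, and there is no serious obstacle.

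First, I would recall that Theorem~\ref{thm:bisim} gives that $\translfun \colon \TTT \to \SSS$ is a functional, strong bisimulation in $\Gph/\QF$. To descend this to $\LLL$, I would apply the change-of-base functor $\cob{\chi} \colon \Gph/\QF \to \Gph/\LLL$ induced by $\chi \colon \LLL \to \QF$. By definition, $\TTTL = \cob{\chi}(\TTT)$ and $\SSSL = \cob{\chi}(\SSS)$, and the image of $\translfun$ under $\cob{\chi}$ is precisely the map $\TTTL \to \SSSL$ obtained by the universal property of the pullback. By Proposition~\ref{prop:change of base}, $\cob{\chi}$ preserves functional (strong) bisimulations, so $\TTTL \to \SSSL$ is a functional, strong bisimulation over $\LLL$.

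To transport this to a bisimulation over $\A$, I would post-compose with $\xi \colon \LLL \to \A$, i.e., apply $\cocob{\xi}$. By the other half of Proposition~\ref{prop:change of base}, $\cocob{\xi}$ also preserves functional (strong) bisimulations, so the map $\TTTL \to \SSSL$, now viewed in $\Gph/\A$ via $\xi \rond p_{\TTTL}$ and $\xi \rond p_{\SSSL}$, remains a functional, strong bisimulation. (Note that strictly speaking $\cob{\chi}$ and $\cocob{\xi}$ do not affect the underlying graph morphism $\TTTL \to \SSSL$; they only reinterpret which base graph it lives over, which is exactly what we want.)

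The only minor thing to check is that the square involving $\translfun$, $\chi$, and the projections $\TTT \to \QF$, $\SSS \to \QF$ commutes so that the pullback really produces the map called $\TTTL \to \SSSL$; this is immediate from the fact that $\translfun$ preserves the underlying position (it acts by $(X,T) \mapsto (X, (\transl{T_{d,x}}))$), and hence commutes with projection to $\QF$ (and by post-composition with $\xi$, to $\A$). With this observation, the proof is a two-line invocation of the cited results.
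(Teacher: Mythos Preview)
Your proposal is correct and follows essentially the same approach as the paper: the paper also derives this as an immediate consequence of Theorem~\ref{thm:bisim} via Proposition~\ref{prop:change of base}, first pulling back along $\chi$ to get the bisimulation over $\LLL$, then post-composing with $\xi$ to get it over $\A$. Your added remark about commutativity of the relevant square is a small but welcome explicitation of something the paper leaves implicit.
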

In this section, we introduce a criterion for a relation $R \colon G
\modto H$ between modular graphs with complementarity over some
adequate alphabet $A$, which essentially ensures that if $R \subseteq
\wbisima$, then $R$ is fair.  This will reduce our main question to
proving that the full relation induced by the map $\ccs \into \TTTL$
is included in weak bisimilarity over $\A$, which we do in
Section~\ref{subsec:horror}.

Our criterion will rest upon the notion of $A$-tree, for any graph
with complementarity $A$, which is directly inspired by the work of
Brinksma et al.\ on
\emph{failures}~\cite{DBLP:journals/iandc/RensinkV07}.

Let the set $\atrees$ of \emph{$A$-trees} consist of possibly infinite terms in the
grammar
\begin{mathpar}
  \inferrule{%
    \ldots \\ v_i \vdash t_i \\ \ldots \\ (\forall i \in n) %
  }{%
    v \vdash \sum_{i \in n} a_i.t_i %
  }~(n \in \Nat)
\end{mathpar}
where for all $i \in n$, $a_i \colon v_i \to v$ in $A$ is not silent,
i.e., $a_i \in \aW$ implies $\labela(a_i) \neq \id$. $A$-trees form a
reflexive graph over $A$ with edges determined by
$$(v \vdash \sum_{i \in n} a_i.t_i) \xot{a_i} (v_i \vdash t_i).$$

\begin{defi}
  A modular graph $p \colon G \to A$ over $A$ \emph{has enough
    $A$-trees} iff for all $x \in G$, $v \in A$ such that $p(x) \coh
  v$, for all $A$-trees $v \vdash t$, there exists $x_t \in G$ such
  that $x \coh x_t$ and $x_t$ is weakly bisimilar to $t$ (over $A$).
\end{defi}
\begin{rem}\label{rem:trees}
  In the case where $x \coh x'$ iff $p(x) \coh p(x')$,
  this is equivalent to requiring that for all $a \in A$ and $A$-tree
  $t$ over $a$, there exists $x_t \in G$ such that $p(x_t) = a$ and
  $x_t \wbisima t$.
\end{rem}
\begin{exa}\label{ex:enoughatrees}
  $\ccs$, $\SSSL$, and $\TTTL$ have enough $\A$-trees, and
  Remark~\ref{rem:trees} applies.
\end{exa}

$A$-trees yield a new testing equivalence, called \emph{$A$-tree}
equivalence, as follows.
\begin{defi}
  For any modular $p \colon G \to A$, let
  $\treeqGa$ be the relation defined by $x \treeqGa y$ iff
  $\eqtestable{x}{y}$ and for all $v \in A$ such that $p(x) \coh v$
  and $A$-trees $t \in \atrees_v$,
  \begin{center}
    $(x,t) \in \bot^{\combinea{G}{\atrees}}$ iff $(y,t) \in
    \bot^{\combinea{G}{\atrees}}$.
  \end{center}
\end{defi}

A graph with complementarity $A$ \emph{has enough ticks} iff for all
$a \in A$, there exists an edge $\tick_a \colon a' \to a$ such that
$\labela(\tick_a) = \tick$. Furthermore, $A$ is \emph{inertly silent}
iff for all $e \colon b \to a$ in $\aW$ such that $\labela(e) = \id$,
we have $a = b$ and $e = \id_a$.
\begin{defi}\label{def:nice}
  A graph with complementarity $A$ is a \emph{nice alphabet} iff it
  has enough ticks, and is finitely branching and inertly silent.
\end{defi}

\begin{exa}
  $\A$ is a nice alphabet, but $\LLL$ is not, because it is not
  inertly silent.
\end{exa}

The main property of $A$-trees is:

\begin{prop}\label{prop:failures}
  Consider any modular $G$ and adequate $p \colon G \to A$, where $G$
  has enough $A$-trees and $A$ is a nice alphabet. Then,
  ${\faireqof{G}} = {\treeqGa}$.
\end{prop}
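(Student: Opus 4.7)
The plan is to prove both inclusions of the equality $\faireqof{G} = \treeqGa$. In both directions, the central tool is the earlier proposition: for any graph with complementarity $H$ over $A$, if $z_2 \in G$ is weakly bisimilar over $A$ to $w \in H$, then $[z_1, z_2] \in \botG$ iff $(z_1, w) \in \bot^{\combinea{G}{H}}$. I use adequacy of $p$ throughout to translate the $\coh$ relation between $G$ and $A$.

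For $\faireqof{G} \subseteq \treeqGa$, assume $x \faireqof{G} y$. For the testability part of $\treeqGa$, given $v \in A$ with $p(x) \coh v$, apply enough $A$-trees to the empty tree $0 \in \atrees_v$ to obtain $z_0 \in G$ with $x \coh z_0$ and $z_0 \wbisima 0$. Since weak bisimulations match $A$-projections on objects, $p(z_0) = v$; then $\faireqof{G}$ gives $y \coh z_0$, so $p(y) \coh v$ by adequacy. For the $\bot$-clause, given $t \in \atrees_v$, enough $A$-trees yields $z_t \in G$ with $x \coh z_t$ and $z_t \wbisima t$; the key proposition converts $(x, t) \in \bot^{\combinea{G}{\atrees}}$ into $[x, z_t] \in \botG$, and analogously for $y$ (where $y \coh z_t$ comes from $x \faireqof{G} y$); then $\faireqof{G}$ applied at test $z_t$ closes the equivalence.

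For $\treeqGa \subseteq \faireqof{G}$, assume $x \treeqGa y$. The testability part of $\faireqof{G}$ is immediate from adequacy applied to $\testable{p(x)} = \testable{p(y)}$ in $A$. For the $\bot$-clause, given $z$ with $x \coh z$, the goal is to construct an $A$-tree $t_z \in \atrees_{p(z)}$ weakly bisimilar over $A$ to $z$: once done, two applications of the key proposition reduce $[x, z] \in \botG$ iff $[y, z] \in \botG$ to an instance of $\treeqGa$ at $t_z$.

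The construction of $t_z$ is the main obstacle. I would proceed coinductively: at each reachable state $z'$, take as branches one $a.t_{z''}$ per weak-bisimilarity class of non-silent weak transitions $z' \xLeftarrow{a} z''$ in $G$. The nice-alphabet hypotheses are essential: finite branching of $A$ bounds the possible labels $a$ at each step to a finite set, and inert silence forces silent transitions of $G$ to project to identities in $A$, so that silent and observable behavior separate cleanly. The delicate points are to guarantee that each sum remains finite (so the $A$-tree grammar applies) and that $t_z \wbisima z$ holds; the latter will be a coinductive transition-matching argument relying on adequacy and on the structure of silent paths in $G$.
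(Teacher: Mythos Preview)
Your easy direction ($\faireqof{G} \subseteq \treeqGa$) is fine and essentially what the paper leaves as ``straightforward'': use enough $A$-trees to realise a given $A$-tree $t$ by some $z_t \in G$ with $z_t \wbisima t$, then invoke the transfer proposition. (A minor point: both $\faireqof{G}$ and $\treeqGa$ require the \emph{same} condition $\eqtestable{x}{y}$ in $G$, so no separate testability argument is needed there.)

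The hard direction has a genuine gap. You propose to build, for an arbitrary $z \in G$, an $A$-tree $t_z$ with $t_z \wbisima z$. But $A$-trees are finitely branching by definition (the grammar has $n \in \Nat$), whereas no finiteness hypothesis is placed on $G$. Finite branching of $A$ only bounds the set of \emph{labels} available at each step; it does not bound the number of weak-bisimilarity classes of successors of $z$ under a given label. So in general no such $t_z$ exists, and the reduction you outline cannot be carried out. The ``enough $A$-trees'' hypothesis goes the other way (it realises trees inside $G$, not $G$-states as trees) and does not help here.

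The paper avoids this obstacle by \emph{not} trying to approximate $z$ up to bisimilarity. Instead it argues the contrapositive: from a witness $z$ and a stuck run $(x,z) \Leftarrow (x',z')$ it extracts a \emph{failure} $(\rho,L)$, where $\rho$ is the loud $A$-path traced by $z$ and $L$ is the set of loud $A$-paths out of $z'$, and it builds the $A$-tree $\failof{\rho,L}$. This tree follows $\rho$ (with a $\tick$-escape at each step, using that $A$ has enough ticks), and after $\rho$ branches only over the \emph{labels} occurring in $L$; finite branching of $A$ then guarantees each sum is finite. One checks directly that $(x,\failof{\rho,L}) \notin \bot^{\combinea{G}{\atrees}}$ while $(y,\failof{\rho,L}) \in \bot^{\combinea{G}{\atrees}}$. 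The tree $\failof{\rho,L}$ is typically \emph{not} bisimilar to $z$; it only captures enough of $z$'s behaviour to reproduce the distinguishing test, which is exactly what makes the construction go through without any finiteness assumption on $G$.
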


We start with some preparation.  Let a path in $A$ be \emph{loud} iff
it contains no silent (=identity if $A$ is inertly silent) edge, and
\emph{$\tick$-free} iff no edge is in $\inv{(\labela)} (\tick)$.  Let
the set $\FFF_a$ of \emph{failures} over $a \in A$ consist of all
pairs $(p,L)$, where $p \colon a' \tostar a$ is any loud, $\tick$-free
path in $A$ and $L \subseteq A^\star$ is a set of loud paths such that
for all $q \in L$, $\cod(q) = a'$.

  We define a map $\failoffun \colon \FFF_a \to \atrees_a$ to $A$-trees
  over $a$, for all $a$, by induction on $p$, followed by coinduction
  on $L$:
  $$
  \begin{array}[t]{rcll}
      (e \rond p, L) & \mapsto & e.(\failof{p,L}) + \tick_a.0 \\
    (\epsilon, L) & \mapsto & 
        \failof{L} 
      \\[.5em]
      L & \mapsto & 
        \sum_{\ens{e \in A(-,a) \aalt L \cdot e \neq \emptyset}}
        e.\failof{L \cdot e} 
  \end{array}
  $$
  where $L \cdot e$ is the set of paths $p$ such that $(e \rond p) \in
  L$.  Note in particular that if $L = \emptyset$ or $\ens{\epsilon}$,
  then $\failof{L} = 0$.  The sum is finite at each stage because $A$
  is finitely branching, and we use the fact that $A$ has enough
  ticks.

\proof[Proof of Proposition~\ref{prop:failures}]
  It is straightforward to show that ${\faireqG} \subseteq
  {\treeqGa}$, by Proposition~\ref{prop:combine:adeq}. For the
  converse, assume $\eqtestable{x}{y}$ and $x \fairneqG y$.  This
  means that there exists $z$ such that $x \coh z$ and $y \coh z$,
  and, w.l.o.g., $(y,z) \in \bot^{\combinea{G}{G}}$ and $(x,z) \notin
  \bot^{\combinea{G}{G}}$.
  
  By the latter, we obtain a transition sequence $(x,z) \Leftarrow
  (x',z')$, such that for no $(x'',z'')$ we have $(x',z')
  \xLeftarrow{\tick} (x'',z'')$.  Let $r$ be the given path witnessing
  $(x,z) \Leftarrow (x',z')$.  Its second projection $(\pi')^\star(r)$
  is mapped by $p^\star$ to a path in $A$, from which we remove all
  identity edges (which are also all silent edges by $A$ being inertly
  silent) to obtain $\rho = \idfree{(p \rond \pi')^\star(r)}$, a loud,
  $\tick$-free path in $A$.  Further let $L \subseteq A^\star$ be the
  set of all $\idfree{p^\star(r')}$ for paths $r' \colon z' \ot^\star
  z''$.  Let $t = \failof{\rho,L}.$ We show $(x,t) \notin
  \bot^{\combinea{G}{\atrees}}$ and $(y,t) \in
  \bot^{\combinea{G}{\atrees}}$.

  For the first point, $t \xLeftarrow{\rho} t'$, with $t' =
  \failof{\epsilon,L}$, hence $(x,t) \Leftarrow (x',t')$, by
  Proposition~\ref{prop:combinecombine}.  Now, assume $(x',t')
  \xLeftarrow{\tick} b''$.  By definition of $\combinea{G}{\atrees}$,
  we split this into $x' \xLeftarrow{\rho_1} x''$ and $t'
  \xLeftarrow{\rho_2} t''$, with $b'' = (x'',t'')$. But then $z'
  \xLeftarrow{\rho_2} z''$ by construction of $t$, and hence $(x',z')
  \xLeftarrow{\tick} (x'',z'')$ (by
  Proposition~\ref{prop:combinecombine}), contradicting $(x,z) \notin
  \bot^{\combinea{G}{G}}$.

  Let us now show $(y,t) \in \bot^{\combinea{G}{\atrees}}$.  For any
  $(y,t) \Leftarrow (y',t')$, we have accordingly $t
  \xLeftarrow{\rho'} t'$. 
  By construction, $\rho'$ is in the prefix closure of $\rho \rond L$ ,i.e.,
  $$\rho' \in \ens{r \in A^\star \aalt \exists r' \in A^\star, l \in L, r \rond r' = \rho \rond l}.$$
\begin{itemize}
\item If $\rho'$ is a strict prefix of $\rho$, then by construction $t'
  \xLeftarrow{\tick} 0$ and we are done by
  Proposition~\ref{prop:combinecombine}, since $(\id,\tick) \compatA
  \tick$.
\item Otherwise, $\rho$ is a prefix of $\rho'$. Let $\rho''$ be the unique path such
  that $\rho' = \rho \rond \rho''$.  We have $\rho'' \in L$,
  hence by construction of $L$ there exists $z''$ such that $z
  \xLeftarrow{\rho} z' \xLeftarrow{\rho''} z''$, and thus $(y,z)
  \Leftarrow (y',z'')$, by Proposition~\ref{prop:combinecombine}. By
  $(y,z) \in \bot^{\combinea{G}{G}}$, there exists $(y',z'')
  \xLeftarrow{\tick} (y'',z''')$, which projects to $y'
  \xLeftarrow{\rho_y} y''$ and $z'' \xLeftarrow{\rho_z} z'''$.  But
  then $t' \xLeftarrow{\rho_z} t''$, hence $(y',t') \xLeftarrow{\tick}
  (y'',t'')$, by Proposition~\ref{prop:combinecombine} again, which
  concludes the proof. \qed
\end{itemize}%

\begin{cor}\label{cor:wbisim:fair}
  For any nice alphabet $A$, modular $G$ and $H$, adequate $p \colon G
  \to A$ and $q \colon H \to A$, and relation $R \colon G \modto H$
  over $A$ such that $R \subseteq {\wbisima}$, if $G$ and $H$ have
  enough $A$-trees and $R$ preserves and reflects $\eqtestable{}{}$,
  then for any $x R x'$ and $y R y'$, we have ${x \faireqG y}$ iff $x'
  \faireqH y'$.
\end{cor}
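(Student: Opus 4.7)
The plan is to use Proposition~\ref{prop:failures} to reduce fair testing equivalence to $A$-tree equivalence, and then transfer the tree testing conditions across $R$ using that $R \subseteq {\wbisima}$.

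First, I would apply Proposition~\ref{prop:failures} to both $p \colon G \to A$ and $q \colon H \to A$, obtaining $\faireqof{G} = \treeqGa$ in $G$ and the analogous equality for $H$. It thus suffices to show that $A$-tree equivalence between $x$ and $y$ in $G$ holds iff $A$-tree equivalence between $x'$ and $y'$ in $H$ holds. The $\eqtestable{\cdot}{\cdot}$ conjunct in the definition of tree equivalence is handled directly by the hypothesis that $R$ preserves and reflects $\eqtestable{}{}$. For the quantified $\bot$-conditions, observe that $x R x'$ together with $R \subseteq {\wbisima}$ yields $x \wbisima x'$; since weak bisimulations are subrelations of the pullback over $\freecat{A}$, this forces $p(x) = q(x')$, and similarly $p(y) = q(y')$. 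Hence the ranges of quantification over $v \in A$ in the tree-equivalence conditions coincide on both sides.

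The key step is the following lemma: if $u \in G$ and $u' \in H$ satisfy $u \wbisima u'$, then for every $v \in A$ with $p(u) = q(u') \coh v$ and every $A$-tree $t \in \atrees_v$, we have $(u,t) \wbisimsierp (u',t)$ in the respective blind compositions. I would prove this by showing that $R' = \ens{((u,t),(u',t)) \aalt u \wbisima u'}$ is a weak bisimulation over $\Sierp$ between $\combinea{G}{\atrees}$ and $\combinea{H}{\atrees}$. Given an edge $(u,t) \ot (u^*,t^*)$ in $\combinea{G}{\atrees}$ coming from coherent $(e_u,e_t)$ with $e_u \colon u^* \to u$ in $G$ and $e_t \colon t^* \to t$ in $\atrees$, use $u \wbisima u'$ to obtain a path $r_u$ in $H$ witnessing $u' \xxOt{A}{(p(e_u))} u^{**}$ with $u^* \wbisima u^{**}$. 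Pad the single $\atrees$-step $t \xot{e_t} t^*$ with identity edges into a path $r_t$ of the same length as $r_u$, placing $e_t$ at a position whose $A$-image is non-identity (or extending $r_u$ by one $A$-identity step if $r_u$ has none). Pointwise coherence of $(r_u,r_t)$ follows from the coherence of $(p(e_u),p_\atrees(e_t))$ together with the reflexivity of $\acoh$, which guarantees that $A$-identities pair coherently with each other. By Proposition~\ref{prop:combinecombine}, this assembles into a path $(u',t) \to^\star (u^{**},t^*)$ in $\combinea{H}{\atrees}$ whose $\freecat{\Sierp}$-label has the same $\idfree$-normal form as the original edge, and the endpoints are $R'$-related.

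Finally, weak bisimilarity over $\Sierp$ preserves membership in $\bot^{(-)}$, so the lemma implies $(x,t) \in \bot^{\combinea{G}{\atrees}}$ iff $(x',t) \in \bot^{\combinea{H}{\atrees}}$, and analogously for $y,y'$. Combined with the preservation and reflection of $\eqtestable$, this gives the desired equivalence of tree-testing conditions, and hence of fair testing equivalence across $R$. The main obstacle is the bisimulation lemma, specifically the case analysis when $e_u$ and $e_t$ differ in being identity vs.\ non-identity: the padding must be positioned so that every constructed pair lies in $\acoh$. This is routine but technical, and relies on the reflexive structure of $\acoh$ together with the inertly silent structure of the nice alphabet $A$.
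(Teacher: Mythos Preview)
Your proposal is correct and follows essentially the same route as the paper: reduce fair testing to $A$-tree equivalence via Proposition~\ref{prop:failures} on both sides, then transfer the tree-testing conditions across $R$ using $R \subseteq {\wbisima}$. The paper's proof is a four-line chain of equivalences that compresses your key lemma into the single phrase ``by weak bisimilarity over $A$''; your explicit construction of the weak bisimulation $R'$ between $\combinea{G}{\atrees}$ and $\combinea{H}{\atrees}$, together with the padding argument and appeal to Proposition~\ref{prop:combinecombine}, is a legitimate way to unpack that phrase, and the case analysis you flag as the main obstacle is indeed routine given the reflexive structure.
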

\proof
  We have

  \begin{minipage}[b]{.8\linewidth}
    \centering
  $x \faireqG y$

$\Updownarrow$\makebox[0pt][l]{(by Proposition~\ref{prop:failures})}

$\eqtestable{x}{y}$ and $\forall v \in p(x)^\coh.\forall t \in \atrees_v.
(x,t) \in \bot^{\combinea{G}{\atrees}} 
\Leftrightarrow 
(y,t) \in \bot^{\combinea{G}{\atrees}}$

$\Updownarrow$\makebox[0pt][l]{(by weak bisimilarity over $A$)}

$\eqtestable{x'}{y'}$ and $\forall v \in p(x)^\coh.\forall t \in \atrees_v.
(x',t) \in \bot^{\combinea{H}{\atrees}} 
\Leftrightarrow 
(y',t) \in \bot^{\combinea{H}{\atrees}}$

$\Updownarrow$\makebox[0pt][l]{(by Proposition~\ref{prop:failures} again)}

  $x' \faireqH y'$.
\end{minipage}
\qed

\subsection{Main result}\label{subsec:horror}
We now provide the missing piece to our main result, and then conclude. 

\begin{lem}\label{lem:ccsTTTL}
  The graph of $\theta \colon \ob\ccs \to \ob\TTTL$ is included in weak
  bisimilarity over $\A$.
\end{lem}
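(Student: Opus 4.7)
The plan is to exhibit a weak bisimulation $R \colon \ccs \modto \TTTL$ over $\A$ containing the graph of $\theta$. Taking $\theta(n \vdash P)$ to be the $\TTTL$-state consisting of the individual position $[n]$ with interface $I_n = n \cdot \star$ and with the single player mapped to the process term $\theta(P)$, let $R$ relate $(n \vdash P)$ to $(I_n \to X, T)$ whenever the latter is reachable from $\theta(n \vdash P)$ by a sequence of \emph{heating} transitions, i.e.\ transitions whose underlying quasi-move is a fork or a channel-creation move (both silent in $\A$ under $\xi$). By construction, the graph of $\theta$ is contained in $R$ via the empty heating sequence.

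A preliminary step is to establish a structural invariant: for any $(I_n \to X, T) \in R((n \vdash P))$, the players of $X$ stand in bijection with certain occurrences in $P$ exposed by the heating path (obtained by descending through top-level parallel compositions and $\nu$-binders), and the component $T(d,x)$ is precisely the process-term translation of the corresponding sub-process of $P$, with arity determined by its free channels. This invariant, proved by induction on the heating path using the description of $\TTT_\D$-transitions from Example~\ref{ex:ltsccs}, provides the correspondence between players in $X$ and subterms of $P$ on which the case analysis rests.

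To show $R$ is a weak bisimulation I then proceed by case analysis. For the forward direction, a CCS transition $(n \vdash P) \xot{\alpha} (n \vdash P')$ arises, by induction on its derivation, from an atomic action, or a pair of complementary ones in the case of synchronization, performed by subterms of $P$; by the invariant we can locate the corresponding player(s) in $X$, heat further if needed to expose the action at the top level, perform the matching $\TTTL$ move (non-silent for inputs, outputs, ticks; a synchronization move, silent in $\A$, for CCS synchronizations), and verify that the resulting state is still reachable from $\theta(n \vdash P')$ by heating. For the backward direction, a $\TTTL$-transition from $(I_n \to X, T)$ is either a heating move (extending the heating path, matched by no $\ccs$-edge since the label is an identity in $\A$), an atomic input, output, or tick by a single player (traced back via the invariant to a CCS atomic transition on the corresponding subterm), or a synchronization move (traced back similarly to a CCS synchronization, which is silent in $\A$). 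Crucially, the interface constraint in~\eqref{eq:limove} ensures that $\TTTL$ admits input and output transitions only on channels in the image of $I_n$, i.e.\ free in $P$, ruling out the spurious extra transitions noted as the third "difference" in Example~\ref{ex:ltsccs}.

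The main obstacle will be the bookkeeping of the structural invariant across all these cases, in particular establishing a "commutativity" statement: any state reached after an atomic or synchronizing $\TTTL$-move from a heated state is again in the $R$-image of the corresponding CCS-reduct, so that the invariant is preserved by the simulation step. Each individual case is conceptually unsurprising—it merely reflects the close correspondence between the two LTSs—but the verification is large and essentially syntactic, which is the reason the author describes this proof as brute force.
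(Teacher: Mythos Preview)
Your approach is sound and would work, but it is genuinely different from the paper's. You build the relation \emph{forward}: a CCS process $P$ is related to every $\TTTL$ state reachable from $\theta(P)$ by heating, and you argue directly that this is a weak bisimulation via a structural invariant linking players to exposed subterms. The paper instead goes \emph{backward}: for an arbitrary state $(I,h,\theta(P))$ with $P$ a family of CCS processes indexed by players, it defines a CCS process $h[P]$ (essentially $\nu$-restricting the non-interface channels and placing the $P_x$ in parallel, with explicit De~Bruijn renamings via a bijection $\gamma_h$), sets $\III = \{(h[P],(I,h,\theta(P)))\}$, and proves that the composite $\RRR = {\scong};\III$ is an \emph{expansion}. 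Structural congruence is thus used explicitly to absorb the associativity/commutativity ambiguity of parallel composition, where your approach bakes this into the heating-reachability closure.

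The two routes trade bookkeeping in different places. The paper's $h[P]$ requires careful channel-index arithmetic (the map $\gamma_h$ and the $\epsilon$-based reindexing), but then the bisimulation clauses reduce to checking $h[P] \scong k[P']$ or a single CCS step, uniformly for all states. Your route avoids defining $h[P]$ but must verify, for every atomic or synchronising move, that the resulting $\TTTL$ state is literally reachable from $\theta(P')$ by the corresponding heating path; this is where you need positions produced by heating $P$ and $P'$ to coincide, which holds because heating depends only on the $|/\nu$ shape and atomic moves leave positions unchanged. The paper's relation is also strictly larger (it covers states $(I,h,\theta(P))$ whose position need not arise from heating any single process), and proving expansion rather than weak bisimulation gives a tighter step-count in the backward direction, though only weak bisimilarity is needed for the lemma.
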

\proof We would like, for any $h \colon I \to X$ and family $P
\in \prod_{n \in \Nat} \prod_{x \in X[n]} \ccs_n$, to define 
a process term $h[P]$ with interface $I(\star)$,
which would amount to 
$$(\para_{n} \para_{x \in X[n]} P_x
[l \mapsto x \cdot s_l]),$$ 
but restricting all channels in $X (\star) \setminus h(I (\star))$. 
When $h$ is not an inclusion, this is a bit tricky, because in our De Bruijn-like syntax
$\Gam \vdash \nu.P$ may be understood as $\Gam \vdash \nu (\Gam+1).P$. That is,
$\nu$-bound channels are always strictly greater than names in $\Gam$.

The correct way of doing this is to use subtraction, i.e.,
restrict channels in $X(\star) - I(\star)$, and accordingly rename channels in the body.
Formally, let $\gamma_h$ be the unique non-decreasing
isomorphism $(X (\star) \setminus h(I (\star))) \to (X (\star) -
I(\star))$ (which exists thanks to $h$ being monic), and let
$h[P]$ be
$$I(\star) \vdash \nu^{X (\star) - I (\star)}. \left (\para_{n} \para_{x \in X[n]} P_x
\left [
  \begin{array}[c]{ll}
    l \mapsto \epsilon a.(h_\star(a) = x \cdot s_l) & \mbox{if $x \cdot s_l \in h_\star(I(\star))$} \\
    l \mapsto \gamma_h(x \cdot s_l) & \mbox{otherwise} 
  \end{array}
\right ] \right ),$$ where $\epsilon$ is Hilbert's definite
description operator, i.e., $\epsilon a.A(a)$ denotes the unique $a$
such that $A(a)$ holds, and $\nu^n.P$ denotes $\nu. \ldots \nu.P$, $n$
times. 

\begin{defi}
  Let $\III \colon \ob \ccs \modto \ob\TTTL$ consist, for any $P \in
  \prod_{n \in \Nat} \prod_{x \in X[n]} \ccs_n$, of all pairs
  $(h[P],(I,h,\theta(P))$.
\end{defi}

  Let $\RRR$ be the composite relation
  $$\ob\ccs \xmodto{\scong} \ob\ccs \xmodto{\III} \ob\TTTL.$$

  We show that $\RRR$ is an expansion~\cite[Chapter 6]{SangioRutten}, which implies
  that it is a weak bisimulation.  Hence, since the graph of $\theta$
  is included in $\RRR$, this entails the desired result.  Let $x
  \xhatot{\alpha} x'$ iff
  \begin{itemize}
  \item either $\alpha$ is an identity and $x \xOt x'$ in zero or one step,
  \item or $\alpha$ is not an identity and $x \xLeftarrow{\alpha} x'$.
  \end{itemize}
  Recall:
  \begin{defi}
    $\RRR$ is an expansion iff for all $P \relRRR T$,
    \begin{itemize}
    \item if $P \xot{\alpha} P'$, then there exists $T'$ such
      that $P' \relRRR T'$ and $T \xLeftarrow{\alpha} T'$; and
    \item if $T \xot{\alpha} T'$, then there exists $P'$ such that
      $P \xhatot{\alpha} P' \relRRR T'$.
    \end{itemize}
  \end{defi}

  First, one easily shows that transitions in $\ccs$ are dealt with by
  `heating' the right-hand side until it may match the given
  transition.

  Conversely, we show below in (1) that for any transition
  $(I,h,\theta(P)) \xot{M} (I,k,T')$, for $M \colon k \to h$ in
  $\LLL$, where $M$ is either a fork or a channel creation, then $T' =
  \theta(P')$, for some $P' \in \prod_{n \in \Nat} \prod_{y \in Y[n]}
  \ccs_n$, and $h[P] \scong k[P']$.

  Thus, any such transition, which is silent, is matched by the empty
  transition sequence, as in

\begin{center}
  \diag(.15,.3){%
    Q \& \scong \& h[P] \& \III \& (I,h,\theta(P)) \\
    \mbox{\rotatebox{90}{$=$}} \& \& \mbox{\rotatebox{90}{$\scong$}} \\
    Q \& \scong \& k[P'] \& \III \& (I,k,T'). %
  }{%
    (m-3-5) edge[labelr={M}] (m-1-5) %
  }
\end{center}

Similarly, for any transition $(I,h,\theta(P)) \xot{M}
(I,k,T')$ not falling in the previous cases,  
we prove below in (2) that there exists $P' \in
\prod_{n \in \Nat} \prod_{y \in Y[n]} \ccs_n$ and $Q'$ such that 
$h[P] \xot{\xi(M)} Q' \scong k[P']$. Thus, any such transition is matched
as in 

\begin{center}
  \diag(.6,.3){%
    \& \& Q \& \scong \& h[P] \& \III \& (I,h,\theta(P)) \\
    Q'' \& \scong \& Q' \& \scong \& k[P'] \& \III \& (I,k,T'), %
  }{%
    (m-2-7) edge[labelr={M}] (m-1-7) %
    (m-2-3) edge[labelr={\ \,\xi(M)}] (m-1-5) %
    (m-2-1) edge[labelal={\xi(M)}] (m-1-3) %
  }
\end{center}
where $Q''$ is obtained by $\scong$ being a bisimulation.

(1) As announced, let us now consider the case of a transition
$(I,h,\theta(P)) \xot{M} (I,k,T')$, for $M \colon k \to h$ in $\LLL$,
where $M$ is either a fork or a channel creation. Consider first the
case where $M$ is a fork. Let $x_1, \ldots, x_n$ be the players of
$X$, let $m_1, \ldots, m_n$ be their respective arities, and let $i_0
\in n$ be the forking player.  Let, for any $i \in n+1$,
$$\mu(i) = \left \{
\begin{array}[c]{ll}
  i & \mbox{if $i < i_0$} \\
  i_0 & \mbox{if $i = i_0$ or $i = i_0 + 1$} \\
  {i-1} & \mbox{if $i > i_0 + 1$}
\end{array} \right .$$
and $$P'_i = \left \{
\begin{array}[c]{ll}
  P_i & \mbox{if $i < i_0$} \\
  P_{i_0}^1 & \mbox{if $i = i_0$} \\
  P_{i_0}^2 & \mbox{if $i = i_0 + 1$} \\
  P_{i-1} & \mbox{if $i > i_0 + 1$},
\end{array} \right .$$
where $P_{i_0} = P^1_{i_0} \para P^2_{i_0}$.
For all $j \in n+1$, we have that $y_j$ is an avatar of
$x_{\mu(j)}$ (i.e., $x_{\mu(j)} = (y_j)^M$), 
and $P'_j = P_{\mu(j)}$ if $\mu(j) \neq i_0$, while $P_{i_0} = P'_{i_0} \para P'_{i_0 + 1}$.

Thanks to the restriction of edges
  \begin{center}
    \diag(.6,.6){%
      \& |(I)| I \\
      |(X)| X \& |(M)| M \&|(Y)| Y %
  }{%
    (I) edge[labelal={h}] (X) %
    edge[labelo={u}] (M) %
    edge[labelar={k}] (Y) %
    (X) edge[labeld={t}] (M) %
    (Y) edge[labeld={s}] (M) %
  }
\end{center}
 in $\LLL$,
for any $j \in n+1$, if $\mu(j) = i$, $l \in m_i$ and $a,b
\in I(\star)$, we have that if $h_\star(a) = x_i \cdot s_l$
and $k_\star(b) = y_j \cdot s_l$, then,
since $s \rond y_j \rond s_l = t \rond x_i \rond s_l$, both squares
\begin{center}
  \diag{%
    \star \& I \\
    {[m_i]} \& M %
  }{%
(m-1-1) edge[labelu={a,b}] (m-1-2) %
edge[labell={s_l}] (m-2-1) %
(m-2-1) edge[labeld={s \rond y_j, t \rond x_i}] (m-2-2) %
(m-1-2) edge[labelr={u}] (m-2-2) %
  }
\end{center}
commute, hence $a = b$ by monicity of $u$. 

So, for all $j \in n+1$ and $l \in m_i$, for $i = \mu(j)$, we have
$x_i \cdot s_l \in h_\star (I (\star))$ iff $y_j \cdot s_l \in k_\star
(I (\star))$, in which case
$$\epsilon a.(h_\star(a) = x_i \cdot s_l) =
\epsilon b.(k_\star(b) = y_j \cdot s_l).$$

Furthermore, we have a commuting diagram
\begin{center}
  \diag{%
    X (\star) \setminus h(I (\star)) \& M (\star) \setminus u(I (\star)) \&
    Y (\star) \setminus k(I (\star)) \\
    X (\star) - I (\star) \& \&
    Y (\star) - I (\star), %
  }{%
    (m-1-1) edge[iso] (m-1-2) %
    edge[labell={\gamma_h}] (m-2-1) %
    edge[bend left,labelu={\delta_M}] (m-1-3) %
    (m-1-3) edge[iso] (m-1-2) %
    edge[labelr={\gamma_k}] (m-2-3) %
    (m-2-1) edge[dashed,labeld={\delta'_M}] (m-2-3) %
  }
\end{center}
of bijections, where $\delta_M$ and $\delta'_M$ are obtained by
composition and the arrows marked 
$\iso$ are the respective restrictions of $t$ and $s$.
This diagram is such that for all $j \in n+1$ and $i = \mu(j)$, $l \in m_i$, if
$x_i \cdot s_l \notin h (I (\star))$, then $\delta_M(x_i \cdot s_l) =
y_j \cdot s_l$.
We have
$$h[P] =  \nu^{X (\star) - I (\star)}. \left (\para_{i \in n} P_i\left [
  \begin{array}[c]{ll}
    l \mapsto \epsilon a.(h_\star(a) = x_i \cdot s_l) & \mbox{if $x_i \cdot s_l \in h_\star (I(\star))$} \\
    l \mapsto \gamma_h(x_i \cdot s_l) & \mbox{otherwise} 
  \end{array}
\right ] \right ) $$
and
$$k[P'] =  \nu^{Y (\star) - I (\star)}. \left (\para_{j \in n+1} P'_j\left [
  \begin{array}[c]{ll}
    l \mapsto \epsilon b.(k_\star(b) = y_j \cdot s_l) & \mbox{if $y_j \cdot s_l \in k_\star (I(\star))$} \\
    l \mapsto \gamma_k(y_j \cdot s_l) & \mbox{otherwise} 
  \end{array}
\right ] \right ). 
$$
Via the renaming $\delta'_M$,
we have
$$\begin{array}{rcl}
h[P] & \scong & \nu^{Y (\star) - I (\star)}. \left (\para_{j \in n+1, j \neq i_0 + 1} P_{\mu(j)} \left [
  \begin{array}[c]{ll}
    l \mapsto \epsilon b.(k_\star(b) = y_j \cdot s_l) & \mbox{if $y_j \cdot s_l \in k_\star (I(\star))$} \\
    l \mapsto \gamma_k(\delta_M(x_i \cdot s_l)) & \mbox{otherwise} 
  \end{array}
\right ] \right ) \\
& \scong & \nu^{Y (\star) - I (\star)}. \left (\para_{j \in n+1, j \neq i_0 + 1} P_{\mu(j)} \left [
  \begin{array}[c]{ll}
    l \mapsto \epsilon b.(k_\star(b) = y_j \cdot s_l) & \mbox{if $y_j \cdot s_l \in k_\star (I(\star))$} \\
    l \mapsto \gamma_k(y_j \cdot s_l)) & \mbox{otherwise} 
  \end{array}
\right ] \right ) \\
& \xot{\id} & \nu^{Y (\star) - I (\star)}. \left (\para_{j \in n + 1} P'_j\left [
  \begin{array}[c]{ll}
    l \mapsto \epsilon b.(k_\star(b) = y_j \cdot s_l) & \mbox{if $y_j \cdot s_l \in k_\star (I(\star))$} \\
    l \mapsto \gamma_k(y_j \cdot s_l)) & \mbox{otherwise} 
  \end{array}
\right ] \right ) \\
& \scong & k[P'].
\end{array}$$
The case of a channel creation move is similar.

(2) Consider now any transition $(I,h,\theta(P)) \xot{M} (I,k,T')$,
where $M$ is an input or an output on some channel $c \in h_\star (I
(\star))$, or a synchronisation, or a tick. Then, proceeding as for the forking move above,
we may take $\mu = \id$, and still obtain $\delta_M$ and $\delta'_M$.
In all cases, we have $T'_i = \theta(P'_i)$, for some family $P'$
of CCS processes. E.g., if $M$ is an input on $c$ by $x_{i_0}$,
then $P'_i = P_i$ for all $i \neq i_0$, and $P_{i_0} \scong c.P'_{i_0} + P''$.
We have $h[P] \xot{\xi(M)} Q$, where
$$Q = \nu^{X (\star) - I (\star)}. \left (\para_{i \in n} P'_i\left [
  \begin{array}[c]{ll}
    l \mapsto \epsilon a.(h_\star(a) = x_i \cdot s_l) & \mbox{if $x_i \cdot s_l \in h_\star (I(\star))$} \\
    l \mapsto \gamma_h(x_i \cdot s_l) & \mbox{otherwise} 
  \end{array}
\right ] \right ), $$
which via the renaming $\delta'_M$, is structurally congruent to 
\begin{center}
  $ \begin{array}[b]{l} \nu^{Y (\star) - I (\star)}. \left (\para_{i
        \in n} P'_i\left [
        \begin{array}[c]{ll}
          l \mapsto \epsilon b.(k_\star(b) = y_i \cdot s_l) & \mbox{if $y_i \cdot s_l \in k_\star (I(\star))$} \\
          l \mapsto \gamma_k(\delta_M(x_i \cdot s_l)) & \mbox{otherwise} 
        \end{array}
      \right ] \right ) \\
    {} \scong 
    \nu^{Y (\star) - I (\star)}. \left (\para_{i \in n} P'_i\left [
        \begin{array}[c]{ll}
          l \mapsto \epsilon b.(k_\star(b) = y_i \cdot s_l) & \mbox{if $y_i \cdot s_l \in k_\star (I(\star))$} \\
          l \mapsto \gamma_k(y_i \cdot s_l) & \mbox{otherwise} 
        \end{array}
      \right ] \right ) \\
    {} \scong k[P'],
  \end{array}$
\end{center}
which concludes the proof.\qed

This leads to our first full abstraction result:
\begin{cor}\label{cor:wbisim}
  The composite $\ob(\ccs) \into \ob (\TTTL) \to \ob (\SSSL)$ is included in weak bisimilarity.
\end{cor}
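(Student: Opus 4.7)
The plan is to combine the two weak-bisimilarity facts already in hand via transitivity of $\wbisima$. By Lemma~\ref{lem:ccsTTTL}, the relation $\ens{(P, \theta(P)) \aalt P \in \ob(\ccs)} \subseteq {\wbisima}$. By Proposition~\ref{prop:strongbisima}, the functional graph morphism $\translfun \colon \TTTL \to \SSSL$ is a strong bisimulation over $\A$, hence in particular its graph is included in $\wbisima$ (every strong bisimulation is a weak bisimulation, by taking each single transition as its own identity-free path).

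The composite relation $\ob(\ccs) \to \ob(\SSSL)$ sends any $P$ to $\translfun(\theta(P))$, so its graph is the relational composite of the two graphs above. Since $\wbisima$ is an equivalence relation, and in particular transitive, this composite is again contained in $\wbisima$. Explicitly, for any $P \in \ob(\ccs)$, we have $P \wbisima \theta(P)$ (first fact) and $\theta(P) \wbisima \translfun(\theta(P))$ (second fact), hence $P \wbisima \translfun(\theta(P))$ by transitivity.

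This is essentially a one-line consequence of the two preceding results, so no obstacle is expected; the real work was in proving Lemma~\ref{lem:ccsTTTL} and Theorem~\ref{thm:bisim} (from which Proposition~\ref{prop:strongbisima} was obtained by change of base along $\chi$ and then along $\xi$, using Proposition~\ref{prop:change of base}).
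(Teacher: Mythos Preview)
Your proof is correct and matches the paper's own argument: it combines Lemma~\ref{lem:ccsTTTL} and Proposition~\ref{prop:strongbisima}, then closes under composition. The only cosmetic difference is that the paper phrases the last step as ``weak bisimulations are closed under composition'' rather than ``transitivity of $\wbisima$'', but these amount to the same thing here.
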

\begin{proof}
  By the previous lemma, Proposition~\ref{prop:strongbisima}, and
  the fact that weak bisimulations are closed under composition.
\end{proof}

\begin{cor}\label{cor:final}
  The composite $\ob\ccs \xto{\theta} \ob\TTTL \xto{\translfun}
  \ob\SSSL$ is fair, and we have for all CCS processes $P$ and $Q$
  over any common $n$:
  \begin{mathpar}
    P \faireqs Q \and \mbox{iff} \and \transl{\theta(P)} \faireq
    \transl{\theta(Q)}.
  \end{mathpar}
\end{cor}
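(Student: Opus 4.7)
My plan is to derive Corollary~\ref{cor:final} by invoking Corollary~\ref{cor:wbisim:fair} on the full relation $R \colon \ccs \modto \SSSL$ induced by the composite $\translfun \circ \theta$, with the common alphabet taken to be $\A$. The second statement of the corollary then follows by translating between the four notions of fair testing equivalence: Proposition~\ref{prop:fairccs} identifies $\faireqs$ with $\faireqof{\ccs}$, Corollary~\ref{cor:SSSLI} identifies $\faireq$ with $\faireqof{\SSSL}$, and fairness of $R$ bridges the two.

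To apply Corollary~\ref{cor:wbisim:fair} I need to check each of its hypotheses. First, $\A$ is a nice alphabet in the sense of Definition~\ref{def:nice}: it is finitely branching (for each $n$, there are $2n+2$ non-identity edges $n \to n$), it has enough ticks (by construction), and it is inertly silent (the only silent closed-world edges in $\A$ are the identities). Second, $\ccs$ and $\SSSL$ are modular graphs with complementarity over $\A$, and both projections to $\A$ are adequate, as stated in Example~\ref{ex:lgraphs} and the subsequent paragraphs. Third, both graphs have enough $\A$-trees by Example~\ref{ex:enoughatrees}. Fourth, the composite relation $R$ is included in weak bisimilarity over $\A$ by Corollary~\ref{cor:wbisim}.

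The remaining hypothesis is that $R$ preserves and reflects $\eqtestable{}{}$. Here I would exploit the fact that, in all three graphs $\ccs, \SSSL$, and $\A$, two vertices are coherent iff they live over the same arity $n$; more precisely, $\ccs$-processes $P$ and $Q$ are coherent iff they have the same set of free channels, while two strategies $(I,h,S)$ and $(I',h',S')$ in $\SSSL$ are coherent iff $I = I'$, and in both cases this reduces to the underlying projection to $\A$. Since $\theta$ preserves arities and since $\translfun$ preserves projections to $\LLL$ and hence to $\A$, $(P,Q)$ and $(\transl{\theta(P)},\transl{\theta(Q)})$ land over the same vertex of $\A$, and a direct inspection shows $\testable{P} \subseteq \testable{Q}$ in $\ccs$ iff $\testable{\transl{\theta(P)}} \subseteq \testable{\transl{\theta(Q)}}$ in $\SSSL$ (both amount to $P$ and $Q$ having common arity). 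This yields preservation and reflection of $\eqtestable{}{}$.

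Granted all hypotheses, Corollary~\ref{cor:wbisim:fair} gives that $P \faireqof{\ccs} Q$ iff $\transl{\theta(P)} \faireqof{\SSSL} \transl{\theta(Q)}$, and combining with Proposition~\ref{prop:fairccs} and Corollary~\ref{cor:SSSLI} finishes the proof. I expect the main obstacle to be a careful verification of the coherence structures on $\ccs$ and $\SSSL$ needed to conclude that $R$ preserves and reflects $\eqtestable{}{}$; everything else is essentially bookkeeping over results already in place.
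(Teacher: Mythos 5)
Your proposal is correct and follows essentially the same route as the paper: the paper's proof of Corollary~\ref{cor:final} is precisely the chain $\faireqs \Leftrightarrow \faireqof{\ccs}$ (Proposition~\ref{prop:fairccs}), then $\Leftrightarrow \faireqof{\SSSL}$ via Corollary~\ref{cor:wbisim:fair} together with Corollary~\ref{cor:wbisim} and Example~\ref{ex:enoughatrees}, then $\Leftrightarrow \faireq$ by Corollary~\ref{cor:SSSLI}. The only difference is that you spell out the hypothesis checks (nice alphabet, modularity, adequacy, preservation of $\eqtestable{}{}$) that the paper leaves implicit, relying on the already-stated examples and propositions; your reading of the coherence structures (same arity in $\ccs$, same interface in $\SSSL$) matches the paper's definitions, so the $\eqtestable{}{}$ condition is indeed immediate for processes over a common $n$.
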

\proof
We have:
\begin{center}
\hspace*{-10em}  \begin{minipage}[t]{.8\linewidth}
\centering

    $P \faireqs Q$

    $\Updownarrow$\makebox[0pt][l]{(by Proposition~\ref{prop:fairccs})}

    $P \faireqof{\ccs} Q$




    $\Updownarrow$\makebox[0pt][l]{(by
      Corollaries~\ref{cor:wbisim:fair} and~\ref{cor:wbisim}, and
      Example~\ref{ex:enoughatrees})}

    $\transl{\theta(P)} \faireqof{\SSSL} \transl{\theta(Q)}$
    
    $\Updownarrow$\makebox[0pt][l]{(by Corollary~\ref{cor:SSSLI})}

    $\transl{\theta(P)} \faireq \transl{\theta(Q)}$,
  \end{minipage}
\end{center}
as desired.\qed

\section{CCS as a playground}\label{sec:ccs}
At last, we prove that $\Dccs$ forms a playground. We rewind to the
beginning of Section~\ref{subsec:pseudodouble}, to state things a bit
more formally.

\subsection{A pseudo double category}
Recall from \citetalias{2011arXiv1109.4356H} the notion of dimension
in $\C$: $\star$ is the sole object of dimension $0$, all $[n]$'s have
dimension $1$, all $\iotaposni$, $\iotanegni$, $\paraln$, $\pararn$,
$\tickn$, and $\nun$ have dimension $2$, all $\paran$ have dimension
3, and all $\taunimj$ have dimension $4$.  By extension, a presheaf
$F$ has dimension $i$ if $F$ is empty over objects of dimension
strictly greater than $i$.  We call \emph{interfaces} the presheaves
of dimension $0$ (i.e., empty beyond dimension $0$), \emph{positions}
the finite presheaves of dimension $1$.

We start by viewing the base pseudo double category of our playground, $\Dccs$,
as a sub-pseudo double category of the following pseudo double category $\Dccso$.
\begin{defi}
  Let $\Dccso$ have:
  \begin{itemize}
  \item as objects all positions,
  \item horizontal category $\Dccsoh$ the subcategory of $\Chatf$
    consisting of positions and monic arrows between them;
  \item vertical (bi)category $\Dccsov$ the sub-bicategory of
    $\Cospan{\Chatf}$ consisting of positions and cospans of monic
    arrows between them;
  \item and all commuting diagrams
    \begin{equation}
      \diag{%
        X \& X' \\
        U \& V \\
        Y \& Y' %
      }{%
        (m-1-1) edge[into,labelu={h}] (m-1-2) %
        (m-2-1) edge[into,labelo={k}] (m-2-2) %
        (m-3-1) edge[into,labeld={l}] (m-3-2) %
        (m-1-1) edge[into,labell={s}] (m-2-1) %
        (m-1-2) edge[into,labelr={s'}] (m-2-2) %
        (m-3-1) edge[linto,labell={t}] (m-2-1) %
        (m-3-2) edge[linto,labelr={t'}] (m-2-2) %
      } \hfil \mbox{as double cells} \hfil
      \doublecellpro{X}{X'}{Y}{Y',}{h}{U}{V}{l}{(h,k,l)}
      \label{eq:doublecelll}
    \end{equation}
    with all $\into$ arrows monic.
  \end{itemize}
\end{defi}
Horizontal composition of double cells is induced by composition in
$\Chatf$.  Vertical composition of double cells is induced by pushout
in $\Chatf$.  It is of course the vertical direction here which is
pseudo.

\begin{prop}
  $\Dccs$ is the pseudo double category obtained by restricting
  $\Dccso$ to vertical morphisms which are either equivalences or
  finite composites of moves.
\end{prop}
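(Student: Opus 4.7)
The proposition asserts an identification between two pseudo double categories sharing the same objects (positions) and horizontal category, and whose distinction lies only in which vertical morphisms are admitted. My plan is to verify this identification by unwinding definitions and checking that the restriction is well-defined.

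The first step is to confirm that $\Dccso$ really is a pseudo double category. The horizontal category is a full-on-monics subcategory of $\Chatf$ and poses no difficulty. For the vertical bicategory, the key point is that monomorphisms are stable under pushout in any presheaf category (indeed, presheaf categories are adhesive), so the vertical composition of two cospans with monic legs, computed by pushout in $\Chatf$, again has monic legs; horizontal composition of double cells as in~\eqref{eq:doublecelll} is inherited from $\Chatf$, and the interchange law follows from the corresponding law in $\Cospan{\Chatf}$. The coherent associators and unitors for vertical composition of morphisms and double cells are inherited from $\Cospan{\Chatf}$, since they only involve universal properties of pushouts and these are computed the same way.

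The second step is to show that plays and equivalences form a sub-pseudo double category of $\Dccso$. For this, I would observe:
\begin{itemize}
\item Every seed $Y \to M \ot X$ has monic legs (by inspection of the cospans listed after Figure~\ref{fig:stringmoves} and for forking/synchronisation).
\item Every move obtained by extension~\eqref{eq:extend} has monic legs, since pushouts of monics along monics in $\Chatf$ are again monic.
\item Vertical composites of moves again have monic legs, for the same reason, and equivalences are trivially cospans of monics.
\end{itemize}
Hence the collection of plays (equivalences or composites of moves) lies inside the class of cospans with monic legs, is closed under vertical composition (by definition of $\Dccsv$ as a locally full subbicategory), and contains the vertical identities (as equivalences). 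This gives a sub-pseudo double category.

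The final step is to check that the restriction of $\Dccso$ to plays reproduces $\Dccs$ on the nose. The objects clearly agree. The horizontal categories agree since both are positions with monic presheaf morphisms. The vertical morphisms agree by construction: $\Dccsv$ is the locally full subbicategory of $\Cospan{\Chatf}$ on plays, and plays have monic legs, so they lie in $\Dccsov$. Finally, double cells in $\Dccs$ are specified as commuting diagrams with all arrows monic, which is exactly the restriction of $\Dccso$-cells between plays. The main (minor) subtlety is to verify that the pseudo structure is preserved by the restriction, i.e.\ that the coherence isomorphisms constructed in $\Dccso$ land in $\Dccs$; this follows because composing plays by pushout yields again a play, and the associator/unitor cospan isomorphisms are themselves equivalences, hence plays. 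The only real content in the proof is therefore the stability of monics under pushout, which I would invoke explicitly from the adhesivity of presheaf categories referenced in Section~\ref{subsec:prelim:cats}.
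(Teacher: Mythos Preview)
Your argument is correct. The paper does not actually supply a proof for this proposition: it is stated without proof, immediately followed by a notational remark, and is treated as essentially definitional --- a repackaging of the scattered definition of $\Dccs$ from Sections~\ref{sec:HP} and~\ref{subsec:pseudodouble} into the cleaner form of a restriction of the explicitly constructed $\Dccso$. Your write-up therefore supplies more detail than the paper itself; the verification you give (monics stable under pushout in presheaf categories, seeds and moves having monic legs, closure under vertical composition, coherence cells being equivalences) is sound and is precisely the content one would need to justify the claim if pressed.
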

Since $\Dccs$ is again the only involved (candidate) playground in this section,
we often omit the superscript. E.g., $\Do$ denotes $\Dccso$.

The rest of Section~\ref{sec:ccs} is devoted to proving:
\begin{thm}
  $\D$, equipped with
  \begin{itemize}
  \item as individuals, all positions of the shape $\C (-,[n])$, i.e.,
    all strictly representable presheaves,
  \item moves as moves, seeds as basic
    moves, and full moves as full moves.
  \end{itemize}
forms a playground.
\end{thm}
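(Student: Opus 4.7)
The plan is to verify each axiom of Definition/Section~\ref{sec:playgrounds} in turn, with the two main structural tools being the correctness criterion for plays and the restriction construction sketched in Section~\ref{subsec:overview}. First I would dispatch the easy structural axioms: \axref{discreteness} follows because strictly representable $[n]$'s are pairwise non-isomorphic in $\C$ and basic moves (= seeds) have rigid cospans by the monicity of their legs; \axref{individuality} is immediate from the shape of seeds; and \axref{finiteness} follows because a position, being a finite presheaf, has only finitely many elements at each $[n]$. The length $\length{u}$ is defined as the number of dimension~$\geq 2$ elements of $U$ not arising from synchronisations plus, say, twice the number of synchronisations, so that length~$0$ coincides with being (specially iso to) an identity cospan.

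The main technical work lies in establishing \axref{fibration}, together with \axref{atomicity} and \axref{fibration:continued}. For these I would first develop the correctness criterion: a cospan $Y \to U \ot X$ in $\Chatf$ is a play iff $U$ satisfies a list of combinatorial conditions (every dimension-$\geq 2$ element of $U$ arises uniquely from a seed gluing, synchronisations pair input/output moves correctly, legs are monic, etc.). This criterion is stable under pushout and allows us to recognise composites of moves purely in terms of the diagram $U$. From this, \axref{atomicity} follows by induction on $\length{U}$: one exhibits a ``minimal'' move in $U$ that may be peeled off to decompose $U = M_1 \vrond \cdots \vrond M_n$, and the essential uniqueness comes from rigidity of seeds. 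For \axref{fibration}, given $u \colon Y \proto X$ and $l \colon X' \into X$, I would construct $\restr{u}{X'}$ in two steps: first, ignore synchronisations and project all remaining dimension-$\geq 2$ elements of $U$ to their initiating player (inputs to the receiver, outputs to the sender, forks to the forker, $\nu$'s to the creator), giving a functor $U \to X$; pull back along $l$ to get a preliminary $U_0$; then reintroduce synchronisations automatically via an orthogonal factorisation system on $\Chatf$ (a pair of input and output projections gets factored through a synchronisation whenever both lie over $X'$). The cartesianness of the resulting double cell then follows from the universal property of pullback combined with that of the factorisation. Axiom \axref{fibration:continued} is then verified by inspection: restricting a seed to a single player $[n] \into X$ either retains the full seed (when the player is involved) or yields a length-$0$ identity (when it is not), and in the ``full move'' case the former is again a full move.

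For \axref{ax:views}, given a move $M \colon Y \proto X$ and a player $y \colon [n] \to Y$, the restriction $\restr{M}{y}$ constructed above is by \axref{fibration:continued} either a basic seed or an identity; in either case there is a unique $v^{y,M} \in \DB_0$ over $y$, which defines a quasi-inverse to $\dom$. For \axref{leftdecomposition}, a double cell $\alpha$ as shown corresponds, via pushout, to an embedding $A \into (X +_Y \mathrm{pushout}) = U_2 \vrond U_1$; one factors this embedding through the pushout of its restriction to $w_1$, which separates the part of $A$ already in $u_1$-land from the part that reaches $w_2$, giving $\alpha_1$, $\alpha_2$ and a specially iso $\alpha_3$. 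For \axref{views:decomp}, the restriction $b$ basic is essential: the view $b$ must land entirely inside either $u$ or $M$ (since a basic move is indecomposable and a pair of synchronising moves spans exactly one seed on each side); case-splitting gives the left and right alternatives. The non-trivial counterexample with a synchronisation play $S$ in the excerpt justifies the restriction of $b$ to basic.

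Finally, \axref{basic:full} follows by direct case inspection on full seeds: the basic seeds generating a full fork $\paran$, a synchronisation $\taunimj$, a channel creation $\nun$, a tick $\tickn$, or a standalone input/output all involve distinct combinations of players and channel-uses, so no basic move embeds into two different full moves simultaneously. The hardest steps are certainly the correctness criterion and the restriction/fibration construction; once these are in place, the decomposition axioms reduce to combinatorial case analyses on the classified shape of seeds, and \axref{ax:views} and \axref{basic:full} become routine.
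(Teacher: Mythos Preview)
Your overall strategy matches the paper's: dispatch the easy axioms, develop a combinatorial correctness criterion for plays, prove \axref{fibration} by projecting moves back to the initial position and then reintroducing synchronisations, and finally handle the decomposition axioms by case analysis on the causal structure. The projection-plus-factorisation-system idea for \axref{fibration} is exactly what the paper does (it calls the projection the \emph{history} $p_U \colon U \histto Y$ and implements the synchronisation step as horn-filling, which is the relevant factorisation system).

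Two points deserve attention. First, your proposed length function is off: a synchronisation is a single move with a single core (the element at $\taunimj$), so it should contribute $1$ to the length, not $2$; the correct definition is simply the number of cores of $U$, which by the correctness criterion equals the number of moves in any decomposition. Second, and more substantially, your sketch of \axref{leftdecomposition} (``factor through the pushout of its restriction to $w_1$'') does not obviously yield a decomposition of $u$ into \emph{plays}. The paper's argument is sharper: it takes $u_1 = u \times_w w_1$ and $u_2 = u \times_w w_2$ as pullbacks, and then invokes \emph{adhesivity} of $\Chatf$ (Van Kampen squares) to conclude that $u \cong u_1 +_C u_2$, after which the correctness criterion is used to verify that each $u_i$ is a play. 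Without the adhesivity step you have no reason to believe the pullback pieces reassemble into $u$, so this is a genuine gap in your plan. A minor additional quibble: your argument for \axref{ax:views} via restriction does not type-check as stated, since restriction is along the codomain (initial position) while $y$ lives in the domain; the paper simply verifies this axiom by inspection of the seeds.
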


We start with a combinatorial correctness criterion for characterising
plays $U \colon X \proto Y$ among general cospans $X \into U \otni Y$,
which we then put to use in proving the theorem. Our convention for
plays $X \into U \otni Y$ is that the (candidate) final position is
always on the left.

\subsection{Correctness}
We prove a few properties of plays, which we then find are sufficient
for a cospan to be a play.  

Given a play $X \into U \otni Y$, we start by forgetting the cospan
structure and exhibiting some properties of $U$ alone.

 \begin{defi}
   A \emph{core} of a presheaf $U \in \Chatf$ is an element of
   dimension $> 1$ which is not the image (under the action of some
   morphism of $\C$) of any element of higher dimension.
 \end{defi}

 Here is a first easy property of plays. Observing that for all seeds
 $Y \into M \otni X$, $M$ is a representable presheaf, we put:
\begin{defi}
  A presheaf $U$ is \emph{locally 1-injective} iff for any
  seed $Y \into M \otni X$ with interface $I$ and core $\mu \in U (M)$, if
  two elements of $M$ are identified by the Yoneda morphism $\mu
  \colon M \to U$, then they are in (the image of) $I (\star)$.
\end{defi}
The name `locally 1-injective' is designed to evoke the fact that $M
\to U$ is injective above dimension 0.
\begin{prop}
  Any play $U$ is locally 1-injective.
\end{prop}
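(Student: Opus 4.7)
The plan is to induct on the length of $U$ as a composite of moves (the base case swallowing equivalences), and in fact prove the formally stronger statement that for any core $\mu \in U(M)$, the Yoneda morphism $\mu \colon M \to U$ is monic. Local 1-injectivity is then vacuous. Throughout the argument I would rely on two facts about $\Chatf$: (i) monics are stable under pushout (as $\Chatf$ is a topos, hence adhesive, as mentioned in Section~\ref{subsec:prelim:cats}); (ii) pushouts of presheaves are computed pointwise.

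For the base case, if $U$ is an equivalence, then up to isomorphism $U$ is (the identity cospan on) a position $X$ of dimension $1$; hence $U$ has no elements of dimension $>1$, and no cores at all, so the property holds vacuously. For a single move $M_{\mathit{ext}}$, constructed from a seed $Y \to M \ot X$ by pushout along an interface embedding $I_X \to Z$ into a position, the induced map $M \to M_{\mathit{ext}}$ is the pushout of the monic $I_X \to Z$ along the monic $I_X \to M$, hence monic by~(i). Moreover, since $Z$ is a position of dimension $1$ and $M$ is representable, the unique core of type $M$ in $M_{\mathit{ext}}$ is the canonical image of $M$ itself, and its Yoneda morphism is precisely this monic $M \to M_{\mathit{ext}}$.

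For the inductive step, decompose $U$ as $U_1 \vrond U_0$ where $U_0$ is a move $X' \to U_0 \ot X$ and $U_1$ is a shorter play, so that $U$ is the pushout $U_0 +_{X'} U_1$. For any $c \in \C$ of dimension $> 1$, we have $X'(c) = \emptyset$ since $X'$ is a position; combined with~(ii), this gives $U(c) = U_0(c) \sqcup U_1(c)$. Consequently, every core of $U$ (being of dimension $> 1$) is the image of a unique element of $U_0$ or of $U_1$, which is itself a core there. The induction hypothesis supplies a monic Yoneda morphism $\mu' \colon M \to U_i$ ($i \in \{0,1\}$), and since the coprojection $U_i \to U$ is a pushout of the monic $X' \to U_{1-i}$ and hence monic by~(i), the composite $\mu \colon M \to U$ is monic, as required.

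The main point to watch is that the notion of core behaves well under play composition: one must verify that passing through the pushout neither creates nor destroys cores, and the observation above (that intermediate positions live entirely in dimension $\leq 1$ while cores sit in dimension $\geq 2$) gives this at once. Everything else reduces to bookkeeping with adhesivity of $\Chatf$ and pointwise pushouts of presheaves.
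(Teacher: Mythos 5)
Your induction is essentially an expanded version of the paper's argument: you decompose $U$ into moves, observe (via pointwise pushouts and the fact that the intermediate positions are concentrated in dimensions $\leq 1$) that every core of $U$ lies in exactly one constituent move and is a core there, and use stability of monos under pushout to see that each move's coprojection into $U$ is monic, so that the only possible identifications in $\mu \colon M \to U$ are those already present at the level of the single move. The paper compresses all of this into ``choose a decomposition; $\mu$ corresponds to precisely one move $M'$'' and then inspects the single pushout~\eqref{eq:extend} defining $M'$ from its seed. Where you genuinely diverge is in strengthening the base case to outright monicity of $M \to M'$ (pushout of the \emph{injective} $I_X \to Z$), so that local 1-injectivity holds vacuously.

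That strengthening is exactly as strong as the word ``injective'' in the definition of moves, and it is fragile. The whole point of local 1-injectivity is to allow a core to identify channels lying in the image of $I(\star)$, which happens precisely when the gluing map $I_X \to Z$ identifies interface channels --- e.g.\ a move performed by a player whose channels are not pairwise distinct in the ambient position. The paper's own proof of this proposition says that $M'$ is obtained from $M$ ``by identifying some channels according to $I \to Z$'', and the converse direction of Theorem~\ref{thm:completeness} reconstructs the bottom move of a cospan by gluing its seed along a map $I_0 \to X_1$ that need not be injective; if cores of plays were always monically embedded, a cospan satisfying the four conditions of Theorem~\ref{thm:completeness} but whose core identifies two interface channels could not be a play, and the correctness criterion would break. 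So you should not lean on monicity of $M \to M'$: the robust form of your base case is that in the pointwise pushout $M' = M +_{I_X} Z$ the only elements of $M$ that can be identified are those coming from $I_X$, i.e.\ interface channels, since $I_X$ is concentrated in dimension $0$. With that adjustment your inductive step goes through verbatim --- the monic coprojection $U_i \to U$ creates no further identifications --- and you land exactly on the paper's proof rather than on a strengthening that trivialises the notion being introduced.
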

\begin{proof}
  Choose a decomposition of $U$ into moves; $\mu$ corresponds
  to precisely one such move, say $M'$, obtained, by
  definition, from some seed $M$ as a
  pushout~\eqref{eq:extend}.  By construction of pushouts in presheaf
  categories, $M'$ is obtained from $M$ by identifying some channels
  according to $I \to Z$.
\end{proof}

%

We now extract from any presheaf a graph, which represents its
candidate causal structure.  Observe that, in $\C$, for any object
$\mu$ of dimension $>1$ (i.e., a move), all morphisms from a player,
i.e., an object of the shape $[n]$, to $\mu$ have exactly one of the
shapes $f \rond s \rond f'$ and $f \rond t \rond f'$. In the former
case, the given player belongs in the final position of $\mu$ and we
say that it is a \emph{source} of $\mu$; in the latter, it belongs in
the initial position and we call it a \emph{target}. We extend these
notions to arbitrary presheaves.
 \begin{defi}
   In any $U$, the \emph{sources} of a core $\mu$ are the players $x$
   with a morphism, in $\elements U$ (the category of elements of $U$,
   recalled in Section~\ref{subsec:diagrams}), of the shape $x \xto{f
     \rond s \rond f'} \mu$ to $\mu$; its \emph{targets} are the
   players $y$ with a morphism of the shape $y \xto{f \rond t \rond
     f'} \mu$.
 \end{defi}
 \begin{exa}
   In the representable $\paran$, there is one target, $l \rond t$ (or
   equivalently $r \rond t$), and two sources, $s_1 = l \rond s$ and
   $s_2 = r \rond s$, respectively the left and right players obtained
   by forking.  Another example is $\taunimj$, which has two targets,
   the sender $\sender \rond t$ and the receiver $\receiver \rond t$,
   and two sources $\sender \rond s$ and $\receiver \rond s$.
 \end{exa}
 \begin{defi}
   A channel $a \in M (\star)$ is \emph{created} by a seed $Y
   \xinto{s} M \xotni{t} X$ iff $a \in Y (\star) \setminus X (\star)$.
 \end{defi}
 Recall that in $\C$, the channels known to a player $[n]$ are
 represented by morphisms $s_1, \ldots, s_n \colon \star \to [n]$, so
 that in a presheaf $U \in \Chatf$, the channels known to $x \in U[n]$
 are $x \cdot s_1$, \ldots, and $x \cdot s_n$.

 Given a presheaf $U$, we construct its \emph{causal
   (simple) graph} $G_U$ as follows:
 \begin{itemize}
 \item its vertices are all channels, players, and cores in $U$;
 \item there is an edge to each core from its sources and one from
   each core to its targets, as in
   \begin{center}
     \diag{%
       \mathrm{source}_1 \& \&        \mathrm{source}_2;\\
       \&        \mathrm{core} \\
       \mathrm{target}_1 \& \&        \mathrm{target}_2; %
     }{%
       (m-1-1) edge (m-2-2) %
       (m-1-3) edge (m-2-2) %
       (m-2-2) edge (m-3-1) %
       edge (m-3-3) %
     }
   \end{center}
 \item there is an edge $x \to x \cdot s_i$ for all $x \in U[n]$ and
   $i \in n$;
 \item there is an edge $a \to \mu$ for each channel $a$ created by $\mu$.
 \end{itemize}
 This graph is actually a binary relation, since there is at most one
 edge between any two vertices. It is also a coloured graph, in the
 sense that it comes equipped with a morphism to the graph $L$:
 \begin{center}
   \diag{%
     \infty \& 1 \& 0, %
   }{%
     (m-1-1) edge[bend left] (m-1-2) %
     (m-1-2) edge[bend left] (m-1-1) %
     (m-1-3) edge[bend left=40] (m-1-1) %
     (m-1-2) edge (m-1-3) %
   }
 \end{center}
 mapping cores to $\infty$, players to $1$, and channels to
 $0$. (Observe in particular that there are no edges from channels to
 players nor from cores to channels.)  For any simple graph $G$,
 equipped with a morphism $l \colon G \to L$, we call vertices of $G$
 channels, players, or cores, according to their label.

\begin{defi}
  Seen as an object of $\Gph / L$, $G$ is \emph{source-linear} iff
  for any cores $\mu, \mu'$, and other vertex (necessarily a player or a channel) $x$,
  $\mu \ot x \to \mu'$ in $G$, then $\mu = \mu'$.
  $G$ is \emph{target-linear} iff for any cores $\mu,\mu'$ and player $x$, if
  $\mu \to x \ot \mu'$ in $G$, then $\mu = \mu'$.
  $G$ is \emph{linear} iff it is both source-linear and target-linear.
\end{defi}

 \begin{prop}
   For any play $Y \xinto{s} U \xotni{t} X$, $G_U$ is linear.
 \end{prop}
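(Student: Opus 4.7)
The plan is to induct on the length of $U$, reducing the claim to two ingredients: a direct check that each seed has a linear causal graph, and a stability result saying that the pushouts used to build moves from seeds (extension along a channel interface) and plays from moves (vertical composition in $\Cospan{\Chatf}$) preserve linearity. For the base case, either $U$ is an equivalence or has length $0$, so it contains no cores and both conditions hold vacuously.

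For the case of a single seed, I would inspect each one ($\paraln$, $\pararn$, $\paran$, $o_{m,c}$, $\inna$, $\tickn$, $\nun$, $\taunimj$) by unfolding its category of elements, as is already done in the excerpt for $\paran$ and $\tau_{2,1,3,2}$. In every case the number of cores is small (one, except three for $\tau$), and one reads off directly that each player appears as a source of at most one core and as a target of at most one core, and that each channel is created by at most one core. For $\tau_{n,a,m,c}$ the only subtle point is that the two sub-moves $\epsilon$ (output) and $\rho$ (input) share the synchronisation core $\id_{\tau_{n,a,m,c}}$ through the arrows $\epsilon \to \id_{\tau}$ and $\rho \to \id_{\tau}$, but these are core-to-core edges and hence irrelevant to linearity. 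Extending a seed to a move by pushout~\eqref{eq:extend} along its interface $I_X$ (a pure channel presheaf, no players or cores) cannot introduce new edges between non-cores and cores, so moves inherit linearity.

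For the inductive step, write $U \iso U' \vrond M'$ with $M'$ a move, and consider the pushout
\[
\xymatrix@=1em{
Y \ar@{^{(}->}[r] \ar@{_{(}->}[d] & M' \ar[d] \\
U' \ar[r] & U,
}
\]
where $Y$ is the common final-of-$U'$/initial-of-$M'$ position. Since $Y$ has dimension $1$ (no cores), the cores of $U$ are precisely the disjoint union of the cores of $U'$ and the (single) core $\mu_0$ of $M'$, and the causal graph $G_U$ is obtained by gluing $G_{U'}$ and $G_{M'}$ along $G_Y$ (a discrete-edge graph on the players and channels of $Y$). To verify source-linearity, suppose $\mu \ot x \to \mu'$ in $G_U$ with $x$ a non-core. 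If both cores lie in $U'$, apply the induction hypothesis; if both equal $\mu_0$ there is nothing to show; in the mixed case $x$ would have to be simultaneously a source of a core of $U'$ (or a channel created by it) and a source of $\mu_0$ (or a channel created by it). But sources of $\mu_0$ and channels created by $\mu_0$ are \emph{new} elements in $M'$ not present in $Y$, whereas sources of cores in $U'$ or channels created in $U'$ live in $U'$; since the pushout glues only along $Y$, these two sets are disjoint in $U$, a contradiction. Target-linearity is symmetric: the targets of $\mu_0$ lie in the image of $Y$ in $U$, and a player of $Y$ is not a target of any core of $U'$ (it is in the \emph{final} position of $U'$, hence at most a source), so no player can be a common target of $\mu_0$ and a core of $U'$; within $U'$ alone induction applies.

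The main obstacle to watch is the pseudoness of vertical composition and the possibility that the decomposition $U \iso U' \vrond M'$ is only chosen up to special isomorphism in $\DH$; one must check that linearity is invariant under such isomorphisms, which is immediate since $G_{(-)}$ is manifestly functorial in iso\-morphisms of presheaves. The other delicate point is the synchronisation seed $\tau_{n,a,m,c}$, where one must be careful not to confuse the three cores with their sub-structure, but once the category of elements is laid out as in Figure~\ref{fig:tau}, linearity follows by inspection.
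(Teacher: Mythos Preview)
Your approach is the paper's: induction on a decomposition of $U$ into moves. The inductive step is argued correctly.

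The base case for $\tau_{n,a,m,c}$, however, rests on a misreading of \emph{core}. A core is an element of dimension $>1$ that is not the image of any higher-dimensional element. In the representable on $\tau_{n,a,m,c}$, the dimension-$2$ elements $\epsilon$ and $\rho$ \emph{are} images of the dimension-$4$ element $\id_{\tau_{n,a,m,c}}$ (under the action of the morphisms $\epsilon,\rho$ of $\C$), so they are not cores; the sole core is $\id_{\tau_{n,a,m,c}}$, and linearity there is vacuous. Consequently $\epsilon$ and $\rho$ are not even vertices of $G_\tau$, and there are no ``core-to-core edges'' in any $G_U$ (the label graph $L$ has no loop at $\infty$). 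The error is harmless for the conclusion, since fewer cores only make linearity easier, but it is not innocuous as a piece of reasoning: had $\epsilon$ and $\rho$ genuinely been cores, your claimed direct verification would \emph{fail}, because the final sender $\epsilon s$ would be a source both of the putative core $\epsilon$ and of $\id_\tau$ (via $\epsilon s = \id_\tau \cdot (\epsilon \rond s)$), violating source-linearity outright.
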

 \begin{proof}
   By induction on any decomposition of $U$ into moves.
 \end{proof}

 \begin{prop}
   For any play as above, $G_U$ is acyclic (in the directed sense).
 \end{prop}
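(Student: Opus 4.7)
The plan is to proceed by induction on the length of a decomposition of $U$ into moves. If $U$ has length $0$, then it is isomorphic to an identity cospan, so $G_U$ has no cores, and its only edges are the player-to-channel edges; these form a bipartite subgraph with channels as sinks, so no directed cycle is possible.

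For the inductive step, fix a decomposition and peel off the ``last'' move: write (up to iso) $U \cong U' \vrond M$ with $U'$ of smaller length, where $M$ is the move whose seed part contributes a core $\mu$ sitting at the final position $Y$ of $U$. Because composition in $\Dccsv$ is a pushout in $\Chatf$ along the interface of $M$, and by Definition~\ref{def:interface} that interface contains only channels, the vertices of $G_U$ split cleanly into \emph{old} vertices (those already present in $U'$) and \emph{new} vertices (namely the core $\mu$, its sources, and any channels freshly created by $M$). Passive players in the interface belong to $U'$ and are counted as old.

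The heart of the argument is a one-way flow observation, read off directly from how $M$ is built from a seed by the pushout~\eqref{eq:extend}: (a) the targets of $\mu$, i.e.\ the endpoints of edges $\mu \to x$, all lie in the initial position of $M$, hence are old vertices of $U'$; (b) the sources of $\mu$ are new players, since the interface carries no players; (c) every edge $a \to \mu$ comes from a channel created by $M$, which is new; and (d) the edges of $G_U$ between two old vertices are exactly those of $G_{U'}$, because the pushout adds nothing new between pre-existing elements. Consequently, every outgoing edge from an old vertex lands in $U'$, so every directed path starting in $U'$ stays in $U'$.

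From here, any hypothetical directed cycle in $G_U$ falls into two cases, both of which we rule out. If the cycle avoids $\mu$, then it also avoids every edge $a \to \mu$; the remaining outgoing edges from new vertices go only to channels (for new players) or are absent (for new channels, whose sole outgoing edge points to $\mu$), so new vertices are sinks in the $\mu$-deleted subgraph and cannot sit on a cycle; hence the cycle lies in $G_{U'}$, contradicting the inductive hypothesis. If the cycle passes through $\mu$, it must leave via some $\mu \to x$ with $x$ old, then traverse a path returning to $\mu$; but the flow observation forces that path to remain among old vertices, while $\mu$ is new, a contradiction. The only step that requires real care is the flow observation (a)--(d); everything else is then a clean case split. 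This completes the induction.
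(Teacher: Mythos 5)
Your proof takes the same route as the paper's own (which is stated in one line: induction on any decomposition of $U$), and its skeleton is sound: split the vertices of $G_U$ into those of $G_{U'}$ plus the new core $\mu$, its sources, and the channels it creates; observe that every edge leaving an old vertex lands on an old vertex; and rule out cycles by a case split on whether $\mu$ is visited. Two of your local justifications are off, however, and should be repaired. First, vertical composition $U' \vrond M$ is a pushout along the intermediate \emph{position} $Z$, which contains players as well as channels --- not along the interface of $M$ (interfaces, which consist of channels only, enter when a move is built from a seed, not when plays are composed). The old/new split you want still holds, but for a slightly different reason: $Z$ embeds into both $U'$ and $M$, subpresheaves are closed under the action of morphisms of $\C$, and the elements of $M$ outside $Z$ are exactly the core, its sources, and its created channels. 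Second, in the case of a cycle avoiding $\mu$, your claim that the new vertices are sinks in the $\mu$-deleted subgraph is false for new players: as you yourself note, they keep their edges to the channels they know, and those channels may be old. What actually saves this case is that new players have no \emph{incoming} edges at all --- the only edges into a player come from cores having it as a target, the targets of $\mu$ lie in $Z \subseteq U'$, and an old core cannot have a target outside $U'$ --- so no cycle can pass through a new player; and a new channel's unique outgoing edge goes to $\mu$, so it cannot lie on a $\mu$-avoiding cycle either. With these two patches your induction goes through exactly as intended.
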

 \begin{proof}
   Again by induction on any decomposition of $U$.
 \end{proof}

 \begin{defi}
   A player $x$ in $U$ is \emph{final} iff it is not the target of any move, i.e., for no move $\mu \in U$, $x = \mu \cdot t$.
 \end{defi}
 \begin{lem}
   A player is final in $U$ iff it has no edge from any core in $G_U$.
 \end{lem}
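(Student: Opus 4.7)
The proof is essentially by unfolding definitions, using case analysis on the shapes of morphisms in $\C$. By construction of $G_U$, an incoming edge from a core $\mu$ into a player $x$ exists precisely when $x$ is a target of $\mu$, i.e., admits a morphism $x \xto{f \rond t \rond f'} \mu$ in $\elements U$. Inspection of $\C$ shows there are no morphisms $[n] \to [n']$ at all, so $f'$ is forced to be $\id_{[n]}$; and the possible $f$'s are identities (for basic-typed $\mu$), $l$ or $r$ (for $\mu$ of type $\forkn$), and $\rho$ or $\epsilon$ (for $\mu$ of type $\taunimj$).

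For the $(\Rightarrow)$ direction, if $x$ is not final then some $\mu \in U$ of dimension $> 1$ and some morphism $t$ satisfy $x = \mu \cdot t$. Since $t$ lands in a basic object, $\mu$ must have dimension $2$. If $\mu$ is itself a core, then $x$ is a target of $\mu$ via $f = \id$, giving the required edge. Otherwise $\mu$ is the image under one of $l, r, \rho, \epsilon$ of a strictly higher-dimensional element; since dimensions in $\C$ are bounded by $4$, iterating at most twice yields a core $\mu^*$ and a composite $g$ (of these maps) with $\mu = \mu^* \cdot g$, so that $x = \mu^* \cdot (g \rond t)$ exhibits $x$ as a target of $\mu^*$, giving the desired incoming edge in $G_U$.

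For the $(\Leftarrow)$ direction, suppose there is an edge into $x$ from a core $\mu$. By the first paragraph, this provides $f$ (a composite of $l, r, \rho, \epsilon$, possibly identity) and a $t$-morphism with $x = \mu \cdot (f \rond t)$. Let $\mu' = \mu \cdot f$; inspection of the codomains of $l, r, \rho, \epsilon$ shows that $\mu'$ has basic type of dimension $2$, and $x = \mu' \cdot t$, so $x$ is not final.

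The only genuine obstacle is the book-keeping of the possible shapes of morphisms into cores inside $\C$, together with checking that the iterative descent to a core preserves the factorisation through a $t$-morphism. Once this combinatorial enumeration is in hand, there is no real difficulty.
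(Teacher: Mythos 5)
Your proof is correct: the paper states this lemma without proof, treating it as exactly the unfolding of the definitions of target, core and $G_U$ that you carry out. Two cosmetic points: in the converse direction you mean the \emph{domains} (not codomains) of $l,r,\rho,\epsilon$ (these give $\mu' = \mu \cdot f$ its dimension-2 type), and in the forward direction a single ascent step already reaches a core, since no dimension-3 or dimension-4 object of $\C$ admits a morphism into anything of higher dimension, so every element of dimension $\geq 3$ is automatically a core.
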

 \begin{defi}
   A player is \emph{initial} in $U$ when it is not the source of any
   move, i.e., for no move $\mu \in U$, $x = \mu \cdot s$. A channel
   is initial when it is not created by any move.
 \end{defi}
 \begin{lem}
   A player is initial in $U$ iff it has no edge to any core in $G_U$.
 \end{lem}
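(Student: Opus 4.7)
The plan is to unpack both sides of the iff from the definitions and show they coincide. The only point that requires care is relating a basic-move element of $U$ (witnessing that $x$ is not initial) with a \emph{core} (needed to obtain an edge in $G_U$).

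First I would observe that any $\C$-morphism $g\colon [n]\to M$ of the shape $f\rond s\rond f'$ forces $f'$ to be an identity. Indeed, the generator $s$ starts from some dimension-$1$ object $[n']$, so $f'\colon [n]\to [n']$ lives entirely within dimension $1$; since $\C$ has no non-identity morphisms between distinct $[n]$, $[n']$, we must have $n=n'$ and $f'=\id_{[n]}$. Consequently, an edge $x\to\mu$ in $G_U$ from a player $x\in U[n]$ to a core $\mu\in U(M)$ amounts to the data of a basic move type $v$ equipped with the generator $s\colon [n]\to v$ and a morphism $f\colon v\to M$ in $\C$ such that $\mu\cdot f\cdot s = x$.

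For the ``only if'' direction, suppose $x$ has an edge to some core $\mu\in U(M)$. With $v$, $f$ as above, set $\mu'=\mu\cdot f\in U(v)$. Then $x=\mu'\cdot s$, so $x$ is the source of a basic-move element of $U$, hence not initial.

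For the ``if'' direction, suppose $x$ is not initial, so there exists a basic-move element $\mu'\in U(v)$ with $x=\mu'\cdot s$. If $\mu'$ is already a core, then $x\to\mu'$ is an edge of $G_U$ and we are done. Otherwise, by definition of core, $\mu'=\mu''\cdot h$ for some $\mu''\in U(M')$ of strictly higher dimension and some $\C$-morphism $h\colon v\to M'$; then $x=\mu''\cdot(h\rond s)$, where $h\rond s$ is again a source morphism of the form $f\rond s\rond f'$ (with $f'=\id$). Iterating replaces $\mu'$ by a higher-dimensional element witnessing the same property. Since dimensions in $\C$ are bounded by $4$, this process terminates at a core $\mu$ with $x$ a source of $\mu$, producing the required edge in $G_U$.

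The main obstacle, as indicated, is the second direction: the witness supplied by non-initiality is only a basic-move element, which need not itself be a core; the key is to climb up the dimension along $\C$ until a core is reached, using the fact that sub-move generators like $l,r$ (for $\forkn$) and $\rho,\epsilon$ (for $\taunimj$) compose with $s$ to yield source morphisms of the required shape $f\rond s\rond f'$.
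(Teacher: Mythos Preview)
Your proof is correct. The paper states this lemma without proof, treating it as an immediate consequence of the definitions; your argument is exactly the natural elaboration one would write out. One minor remark: your ``climb'' in the second direction terminates after at most one step, since the only non-core dimension-$2$ elements are those of type $\forkln$, $\forkrn$, $\inna$, or $o_{m,c}$ lying under some $\forkn$ or $\taunimj$, and elements of the latter two types are always cores (nothing in $\C$ maps them into a strictly higher-dimensional object). So the iteration you describe is really a single case analysis rather than an unbounded process, but this does not affect the validity of what you wrote.
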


Now, here is the expected characterisation:
\begin{thm}\label{thm:completeness}
  A cospan $Y \xinto{s} U \xotni{t} X$ is a play iff
  \begin{enumerate}[label=(\roman*)]
  \item $U$ is locally 1-injective, \label{cond:locinj}
  \item $X$ contains precisely the initial players and channels in
    $U$, \label{cond:X}
  \item $Y$ contains all channels, plus precisely the final players in
    $U$, \label{cond:Y}
  \item and $G_U$ is linear and acyclic. \label{cond:causality}
  \end{enumerate}
\end{thm}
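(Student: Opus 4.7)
The forward direction largely follows from results already established above; what remains is to verify~(ii) and~(iii) by induction on a decomposition of $U$ into moves. The substantive content is the converse: given an arbitrary cospan satisfying (i)--(iv), I will reconstruct a decomposition of $U$ into moves by induction on the (finite) number of cores of $U$.

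\textbf{Forward direction.} Local 1-injectivity, linearity, and acyclicity of $G_U$ were already proved, so (i) and~(iv) hold for any play. For~(ii) and~(iii), pick any decomposition $U \cong M_1 \vrond \cdots \vrond M_n$ and induct on $n$: the identity case is trivial, and adjoining a move to a play over $Y$ exactly removes its targets from the final position, adds its sources as fresh non-initial final players, and creates (if applicable) one new non-initial channel.

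\textbf{Reverse direction.} Induct on the number of cores in $U$. If $U$ has none, then by~(ii)--(iii) every player is both initial and final and every channel belongs to $Y$; together with~(ii), both $s$ and $t$ are isomorphisms, so the cospan is an equivalence in $\Dccsv$, hence a play. For the inductive step, I peel off a \emph{first move} of $U$. Call a core $\mu$ \emph{initial} if all its targets lie in the image of $t \colon X \hookrightarrow U$, or equivalently, by~(ii) and the construction of $G_U$, if no other core has any of $\mu$'s targets among its sources. Because $G_U$ is finite and acyclic and cores are linked only via players and channels, there is a partial order on cores defined by $\mu' \prec \mu$ iff $G_U$ admits a directed path from $\mu$ to~$\mu'$; by finiteness it has a minimal element $\mu$, and by linearity this $\mu$ is initial in the sense above.

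\textbf{Peeling the move.} From $\mu$ I read off a seed $Z \xinto{s_M} M \xotni{t_M} X$ whose final position $Z$ is obtained from $X$ by deleting the targets of $\mu$ and adjoining its sources together with any channels $\mu$ creates; the interface $I_M$ of this seed consists of the channels of $X$ shared with~$\mu$. Let $U'$ be the sub-presheaf of $U$ obtained by removing $\mu$ together with the sources and created channels of $\mu$ that do not already belong to $X$. One checks that $U$ is the pushout $M +_{I_M} U'$ in $\Chatf$, yielding a factorisation as on~\eqref{eq:extend} and a cospan $Y \xinto{s'} U' \xotni{t'} Z$ with $U \cong M \vrond U'$ in $\Dccsv$. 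The sub-presheaf $U'$ has one fewer core than $U$; local 1-injectivity is inherited from $U$, $G_{U'}$ is a full subgraph of $G_U$ and hence linear and acyclic, the initial players and channels of $U'$ are exactly those of $Z$ by construction, and its final players are those of $U$, exactly $Y$. The induction hypothesis then provides a decomposition of $U'$ as a composite of moves, and prepending $M$ completes the decomposition of~$U$.

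\textbf{Main obstacle.} The delicate point is verifying that the pushout $M +_{I_M} U'$ really recovers $U$ up to canonical isomorphism rather than collapsing or duplicating elements. Local 1-injectivity~(i) is essential here: without it the inclusion $M \hookrightarrow U$ could identify channels outside the declared interface, so the pushout would inject new gluings. Linearity of $G_U$ is the other key ingredient, ensuring that every element of $U$ outside $U'$ either equals $\mu$ or sits on a unique chain of player/channel edges emanating from $\mu$, and hence corresponds canonically to a unique element of $M$. Once this pointwise identification is established in $\Chatf$, everything else is bookkeeping on cospans.
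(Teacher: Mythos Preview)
Your overall strategy coincides with the paper's: pick a core whose targets all lie in $X$ (the paper calls this a \emph{maximal} core in $G_U$), peel it off as a first move, verify the remainder still satisfies (i)--(iv), and induct. The forward direction is fine. But the execution of the peeling step is wrong in several related places.

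\textbf{What to remove.} To obtain $U' \colon Y \proto Z$, you must delete from $U$ the \emph{past} of $\mu$: the core $\mu$, its sub-move elements (e.g.\ the $\paraln,\pararn$ under a $\paran$, or the input/output under a $\tau$), and its \emph{targets}. The sources of $\mu$ and any created channels belong to $Z$, the initial position of $U'$, and must stay. You do the opposite: you remove the sources and created channels and keep the targets. With your $U'$, the targets of $\mu$ are no longer targets of any core, hence become final, so $U'$ violates condition~(iii) for $Y$. Worse, the sources of $\mu$ are typically targets of later cores in $U$, so removing them does not even yield a sub-presheaf.

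\textbf{The pushout.} The reconstruction of $U$ is a pushout over $Z$, not over the interface $I_M$. Writing $M'$ for the \emph{move} (the seed pushed out along $I_M \hookrightarrow X_1$, where $X_1$ is $X$ minus the targets of $\mu$), one has $M' \cap U' = Z$ and $M' \cup U' = U$, i.e.\ $U \cong M' +_{Z} U'$. Gluing over $I_M$ would duplicate the players of $X_1$. (Relatedly, the cospan $Z \hookrightarrow M \hookleftarrow X$ you write down is a move, not a seed; a seed's legs are representable.)

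\textbf{The actual obstacle.} The delicate point is not that the pushout recovers $U$, but that removing $\past{\mu}$ from $U$ is a sub-presheaf at all, i.e.\ that no element outside $\past{\mu}$ restricts into $\past{\mu}$. This is exactly where maximality of $\mu$ and target-linearity of $G_U$ are used: any element with an image in $\past{\mu}$ would yield either a second core with the same target as $\mu$ (contradicting target-linearity) or a path from $\mu$ to another core (contradicting maximality). Your ``Main obstacle'' paragraph points at linearity but does not isolate this step.
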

Of course, we have almost proved the `only if' direction, and the rest
is easy, so only the `if' direction remains to prove. The rest of this
section is devoted to this. First, let us familiarise ourselves with
removing elements from a presheaf. For two morphisms of presheaves $U
\xto{f} V \xot{g} W$, we denote by $U \setminus W$ the topos-theoretic
difference $U \cap \neg W$ of (the images of) $f$ and $g$ in the
lattice $\Sub (V)$ of subobjects of $V$. This differs in general from
what we denote $U - W$, which is the set of elements in $V$ which are
in the image of $U$ but not that of $W$, i.e., $\sum_{c \in \C} U (c)
\setminus W (c)$. More generally, for any morphism of presheaves $f
\colon U \to V$ and set $W$, let $U - W = \sum_{c \in \C} \im(U (c))
\setminus W$.  $U - W$ is generally just a set, not a presheaf; i.e.,
its elements are not necessarily stable under the action of morphisms
in $\C$. Proposition~\ref{prop:maxcore} below exhibits a case where
they are, which is useful to us.
\begin{defi}
  For any seed $Y \into M \otni X$, let the \emph{past}
  $\past{M} = M - Y$ of $M$ be the set of its elements not in the
  image of $Y$.  For any such $M$, presheaf $U$, and core $\mu \in U
  (M)$, let $\past{\mu} = \im(\past{M})$ consist of all images of
  $\past{M}$.
\end{defi}
To explain the statement a bit more, by Yoneda, we see $\mu$ as a map
$M \to U$, so we have a set-function $$\past{M} \into {\elements M}
\to {\elements U}.$$ Observe that $\past{\mu}$ is always a set of
players and moves only, since channels present in $X$ always are in
$Y$ too.

Given a core $\mu \in U$, an important operation for us will be $$U
\fatbslash \mu = \bigcup \ens{V \into U \aalt \elements V \cap \past{\mu} =
  \emptyset}.$$ $U \fatbslash \mu$ is thus the largest subpresheaf of
$U$ not containing any element of the past of $\mu$. The good property
of this operation is:
\begin{prop}\label{prop:maxcore}
  If $\mu$ is a maximal core in $G_U$ (i.e., there is no path to any
  further core) and $G_U$ is target-linear, then $U \fatbslash \mu =
  U - \past{\mu}$, i.e., $(U \fatbslash \mu) (c) = U (c) \setminus
  \past{\mu}$ for all $c$.
\end{prop}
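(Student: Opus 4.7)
The inclusion $U \fatbslash \mu \subseteq U - \past{\mu}$ is immediate: every subpresheaf $V$ contributing to the union defining $U \fatbslash \mu$ has elements disjoint from $\past{\mu}$, so none of them contains an element of $\past{\mu}$. For the reverse inclusion, the plan is to show that $U - \past{\mu}$, viewed componentwise as $(U - \past{\mu})(c) = U(c) \setminus \past{\mu}$, is itself a subpresheaf of $U$. Once established, it qualifies as the largest member of the union, giving $U - \past{\mu} \subseteq U \fatbslash \mu$, and thus the required equality. This reduces to a closure-under-restriction property: for every $x \in U(c')$ and $f \colon c \to c'$ in $\C$, if $x \cdot f \in \past{\mu}$, then $x \in \past{\mu}$.

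To verify this implication, I would argue by cases on the element $z \in \past{M}$ witnessing $x \cdot f = \mu \cdot z$. A key preliminary observation is that $\past{M}$ contains no channels (channels all lie in $Y$), so $z$ has dimension at least one; moreover $Y$ contains no elements at dimension $\geq 2$ either, since $Y$ is a position. The element $z$ therefore falls into three families: the top-dimensional core $\id_M$; a dim-$2$ sub-element $g \in \{l, r, \epsilon, \rho\}$; or a target player (a morphism $[n] \to M$ of the form $(\cdots) \rond t$). The cases where $f$ is the identity are trivial. Otherwise, when $z$ is a core or sub-core, $f$ is forced to be one of the canonical inclusions into the top type of $M$; computing the targets of the two sub-elements yields a shared target player $y$ with $\mu \to y \ot x$ in $G_U$, and target-linearity forces $x = \mu$.

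The main work lies in the case where $z$ is a target player, so $c = [n]$ and $f \colon [n] \to c'$ is one of the composite morphisms out of $[n]$ in $\C$. Here I would factor $x$ through the unique maximal-dimensional core containing it: write $x = x' \cdot g$ with $x'$ at some top type, and correspondingly decompose $f = g \rond f'$ with $f' \colon [n] \to \cod(x')$ of the form $(\cdots) \rond s$ or $(\cdots) \rond t$. In the target-like subcase, $\mu \cdot z$ is a shared target of $\mu$ and $x'$, so target-linearity gives $x' = \mu$; since $Y$ is empty at dimension $\geq 2$ we have $g \in \past{M}$, hence $x = \mu \cdot g \in \past{\mu}$. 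In the source-like subcase, $\mu \cdot z$ becomes a source of $x'$, producing a directed path $\mu \to \mu \cdot z \to x'$ in $G_U$ to a core, which by maximality of $\mu$ forces $x' = \mu$, and we conclude as before.

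The delicate point will be the source-like subcase: one must check that the resulting configuration $\mu \to y \to x' = \mu$, which looks cyclic, does not create a spurious obstruction given that the hypotheses do not include source-linearity, acyclicity, or local $1$-injectivity. What saves the argument is that $x' = \mu$ is precisely the desired conclusion ($x = \mu \cdot g \in \past{\mu}$), not a contradiction to derive; the possibly cyclic shape of $G_U$ plays no role here. Apart from this subtlety, the proof reduces to a routine enumeration over the morphisms out of $[n]$ in $\C$, driven throughout by the interplay between target-linearity, maximality of $\mu$, and the structural emptiness of $Y$ at dimension $\geq 2$.
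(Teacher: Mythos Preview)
Your argument is correct and follows essentially the same strategy as the paper: show that $c \mapsto U(c) \setminus \past{\mu}$ is a subpresheaf by reducing to a configuration in which a target player of $\mu$ is also a target (use target-linearity) or a source (use maximality) of some other core. The paper's version is more economical --- it uses two ``without loss of generality'' reductions (first replace $x \cdot f$ by one of its target players, which stays in $\past{\mu}$; then lift $x$ to the core above it, which stays outside $\past{\mu}$) in place of your explicit case split on the dimension of $z$, reaching the same dichotomy in a couple of lines. Two minor points: your decomposition should read $f' = g \rond f$ rather than $f = g \rond f'$, and the case $z = \id_M$ with $f \neq \id$ is in fact vacuous, since $\mu$ being a core already forbids $\mu = x \cdot f$ for non-identity $f$.
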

\begin{proof}
  The direction $(U \fatbslash \mu) (c) \subseteq U (c) \setminus
  \past{\mu}$ is by definition of $\fatbslash$. Conversely, it is
  enough to show that $c \mapsto U(c) \setminus \past{\mu}$ forms a
  subpresheaf of $U$, i.e., that for any $f \colon c \to c'$ in $\C$,
  and $x \in U(c') \setminus \past{\mu}$, $x \cdot f \notin
  \past{\mu}$. Assume on the contrary that $x' = x \cdot f \in
  \past{\mu}$. Then, of course $f$ cannot be the
  identity. Furthermore, $x'$ is either a player or a move; so, up to
  pre-composition of $f$ with a further morphism, we may assume that
  $x'$ is a player.  But then, since $f$ is non-identity, $x$ must be
  a move, with $x'$ being one of its sources or targets.  Now, up to
  post-composition of $f$ with a further morphism, we may assume that
  $x$ is a core.  So, there is either an edge $x \to x'$ or an edge
  $x' \to x$ in $G_U$. However, $x \neq \mu$, so $x \to x'$ is
  impossible by target-linearity of $G_U$, and $x' \to x$ is
  impossible by maximality of $\mu$.
\end{proof}

\begin{proof}[Proof of Theorem~\ref{thm:completeness}]
  We proceed by induction on the number of moves in $U$. If it is
  zero, then $U$ is a position; by~\ref{cond:X}, $t$ is an iso, and
  by~\ref{cond:Y} so is $s$, hence the cospan is a play. For the
  induction step, we first decompose $U$ into
  $$Y \xinto{s_2} U' \xotni{t_2} Z \xinto{s_1} M' \xotni{t_1} X,$$
  and then show that $M'$ is a move and $U'$ satisfies the
  conditions of the theorem. 

  So, first, pick a maximal core $\mu$ in $G_U$, i.e., one with no
  path to any other core. Let
  \begin{center}
    \diag{%
      \& |(I_0)| I_0 \\
      |(Y_0)| Y_0 \& |(M_0)| M_0 \& |(X_0)| X_0 %
    }{%
      (I_0) edge (Y_0) %
      edge (M_0) %
      edge (X_0) %
      (Y_0) edge (M_0) %
      (X_0) edge (M_0) %
    }
  \end{center}
  be the seed with interface corresponding to $\mu$, so
  we have the Yoneda morphism $\mu \colon M_0 \to U$.

  Let $U' = (U \fatbslash \mu)$, and $X_1 = X - \Pl(X_0)$. $X_1$ is a
  subpresheaf of $X$, since it contains all names. The square
  \begin{center}
    \Diag{%
      \pbk{m-2-1}{m-2-2}{m-1-2} %
    }{%
      I_0 \& X_1 \\
      X_0 \& X %
    }{%
      (m-1-1) edge[labelu={}] (m-1-2) %
      edge[labell={}] (m-2-1) %
      (m-2-1) edge[labeld={}] (m-2-2) %
      (m-1-2) edge[labelr={}] (m-2-2) %
    }
  \end{center}
  is a pushout, since it just adds the missing players to $X_1$. 
  Define now $Z$, $M'$, $s_1$, and $t_1$ by the pushouts
  \begin{center}
    \Diag(.02,.6){%
      \pbk[1.1em]{Y}{Y'}{Z} %
      \pullback[1.1em]{Z}{M'}{M}{draw,-,fore} %
      \pullback[1.1em]{Z}{X'}{X}{draw,-,fore} %
    }{%
      \& |(X)| Y_0 \& \& |(X')| {Z} \\
      \& \ \& \\
      \& |(M)| M_0 \& \& |(M')| M'  \& \& |(U)| U \\
      |(I)| I_0 \&\& |(Z)| X_1 \\
      \& |(Y)| X_0 \& \& |(Y')| X %
    }{%
      (X') edge[into] (U) %
      (M') edge[into] (U) %
      (Z)  edge[into] (U) %
      (Y') edge[into] (U) %
      (Z) edge[into] (Y') %
      edge (M') %
      edge (X') %
      (I) edge[into] (Y) %
      edge node[pos=.8,anchor=south] {$C$} (Z) %
      edge (M) %
      edge (X) %
      (X) edge[fore] (X') %
      (M) edge[fore] (M') %
      (Y) edge (Y') %
      (Y') edge[dashed,into,fore,labelr={t_1}] (M') %
      (X') edge[dashed,into,labelr={s_1}] (M') %
      (Y) edge[fore,into] (M) %
      (X) edge[fore,into] (M) %
    }
  \end{center}
  and the induced arrows.  We further obtain arrows to $U$ by
  universal property of pushout, which are monic because $X \into U$
  is, using~\ref{cond:locinj}. We observe that $U = M' \cup U'$, i.e.,
  the square
  \begin{center}
    \Diag{%
      \pbk{M'}{U}{U'} %
    }{%
      |(Z)| Z \& |(U')| U' \\
      |(M')| M' \& |(U)| U %
    }{%
      (Z) edge[into,labeld={}] (M') %
      edge[into] (U') %
      (U') edge[into] (U) %
      (M') edge[into] (U) %
    }
  \end{center}
  is a pushout, so $U$ is indeed a composite as claimed, with $Z \into
  M' \otni X$ a move by construction. So, it remains to
  prove that $Y \into U' \otni Z$ satisfies the conditions. First, as
  a subpresheaf of $U$, $U'$ is locally 1-injective and has a linear
  and acyclic causal graph, so satisfies~\ref{cond:locinj}
  and~\ref{cond:causality}. $U'$ furthermore
  satisfies~\ref{cond:X} by construction of $Z$ and source-linearity
  of $G_U$, and~\ref{cond:Y} because removing $\past{\mu}$ cannot
  make any non-final player final.
\end{proof}

\subsection{CCS as a pre-playground}
We now start proving:
\begin{thm}
 $\D$ forms a playground.
\end{thm}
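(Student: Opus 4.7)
The plan is to verify each of the playground axioms listed at the end of Section~\ref{subsec:pseudodouble} by exploiting the combinatorial correctness criterion of Theorem~\ref{thm:completeness} and the factorisation-systems technique sketched in Section~\ref{subsec:overview}. I would dispose of the easy axioms first. Axiom~\axref{discreteness} follows from choosing $\DI$ to contain exactly one representative per isomorphism class of strictly representable presheaves $\C(-,[n])$, so $\DI$ is discrete by Yoneda; that basic seeds have no non-trivial $\DH$-automorphisms and that $\idv_{[n]}$ has no non-trivial endomorphisms are checked case-by-case from the list of basic seeds in Figure~\ref{fig:stringmoves}. Axiom~\axref{individuality} is immediate since each basic seed (left or right fork, input, output, tick, channel creation) has representables $[n]$ as both domain and codomain. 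Axiom~\axref{finiteness} holds by our standing definition of positions as finite presheaves. For the length map, I would define $\length{U}$ as the number of cores of $U$; well-definedness and the existence of decompositions into $\length{U}$ moves both follow from the proof of Theorem~\ref{thm:completeness}, which already picks a maximal core and factors, while the length-$0$ case of~\axref{atomicity} is exactly the base case of that induction.

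The hardest axiom, and the main obstacle, is~\axref{fibration}, asserting that $\cod \colon \DH \to \Dh$ is a Grothendieck fibration. Given a play $Y \xinto{s} U \xotni{t} X$ and a horizontal $l \colon X' \into X$, I would construct the restriction as follows. First, form the sub-presheaf $U_{\mathrm{ns}} \into U$ obtained by discarding every synchronisation core $\tau_{n,a,m,c}$. Unlike a synchronisation, every other core has a canonical \emph{owner}: the forking parent, the inputting or outputting player, the tick performer, or the creator. This yields a projection $q \colon U_{\mathrm{ns}} \to X$ in $\Chatf$ sending each non-synchronisation core to its owner (and players and channels to themselves). Pulling $q$ back along $l \into X$ gives a sub-presheaf $U'_{\mathrm{ns}}$ of $U_{\mathrm{ns}}$, which is the restriction \emph{modulo} synchronisations. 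Synchronisation cores are then reintroduced by a right-class/left-class argument: the inclusion of `sync-free' presheaves into $\Chatf$ fits into an orthogonal factorisation system whose right class reattaches, to any sync-free subpresheaf, exactly those synchronisation cores whose two basic sub-moves (the input and the output) are already present. One checks via the four conditions of Theorem~\ref{thm:completeness} that the resulting cospan $Y' \into U' \otni X'$ is a play, and universality in $\DH$ is then immediate because any competing extension agrees on sync-free parts by the pullback property and on synchronisations by the factorisation system. Axiom~\axref{fibration:continued} then follows by inspection: restricting a move along an individual $[n] \into X$ either leaves that individual playing its role in the move (yielding the same kind of move or its view) or removes it altogether (yielding a length-$0$ play).

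With~\axref{fibration} in hand, the remaining axioms are comparatively routine. For~\axref{ax:views}, given a move $M \colon Y \proto X$ and a player $y \colon d \into Y$, restriction along $y$ produces by~\axref{fibration:continued} either a basic seed with a canonical embedding into $M$ or a length-$0$ play, and these account for all objects of $\DB_0/M$ up to unique iso, giving the required equivalence $\DB_0/M \equi \DI/Y$. For~\axref{leftdecomposition}, I would use Theorem~\ref{thm:completeness}: a double cell from $u \colon A \proto B$ into $(w_2 \vrond w_1)$ picks out a sub-cospan of the composite, and linearity plus acyclicity of the causal graph force a unique horizontal cut between the cores that land in $w_1$ and those that land in $w_2$. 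A similar cut-by-causal-graph argument, restricted to the single basic move $b$, gives~\axref{views:decomp}: either $b$ is in the image of $M$ (right case) or its sole non-trivial view embeds already in the upper play (left case). Finally,~\axref{basic:full} is by direct case analysis: for each individual $d = [n]$, every full move on $d$ (full fork, synchronisation-from-sender, synchronisation-from-receiver, tick, $\nu$) determines its set of basic sub-moves unambiguously, and these sets are pairwise disjoint across the distinct full moves.
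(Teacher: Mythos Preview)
Your overall architecture matches the paper's: dispose of the easy axioms, attack~\axref{fibration} by projecting back to the initial position and pulling back, then reattach synchronisations, and finally handle the decomposition axioms via the causal graph. But your construction of the cartesian lifting contains a type error that breaks the argument.

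You write that discarding synchronisation cores yields a projection $q \colon U_{\mathrm{ns}} \to X$ \emph{in $\Chatf$}. This cannot be: $U_{\mathrm{ns}}$ still contains cores of dimension $2$ and $3$ (forks, inputs, outputs, ticks, $\nu$'s), while $X$ is a position, hence empty above dimension $1$. A natural transformation must respect the grading by objects of $\C$, so there is no presheaf morphism sending a core in $U_{\mathrm{ns}}(\paran)$ to anything in $X$. Consequently you cannot form the pullback $U'_{\mathrm{ns}} = U_{\mathrm{ns}} \times_X X'$ in $\Chatf$ as you propose. The paper's fix is exactly to drop to the level of underlying sets: it introduces the \emph{history} $p_U \colon \El{U} \to \El{Y}$ as a mere set-function (written with special arrows $U \histto Y$), compatible with composition of plays and with double cells (Proposition~\ref{prop:compathist}). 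The candidate restriction is then the \emph{set-theoretic} pullback of $\El{M}$ along $l$, promoted back to a presheaf by inheriting types from $M$, after which the missing synchronisations are reattached not via an abstract factorisation system but concretely by pushing out along the horn inclusions $\hornnimj \into \taunimj$. The universal property is then checked by hand using the pullback and pushout descriptions.

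Two smaller points. In your treatment of~\axref{views:decomp} you have the cases swapped: when the image of $b$ lies inside $M$ one is in the \emph{left}-hand case of the axiom (see Lemma~\ref{lem:decompsym}). And for~\axref{leftdecomposition} your ``horizontal cut by causal graph'' is the right intuition, but the paper's proof (Lemma~\ref{lem:decompleft}) actually runs through adhesivity of $\Chatf$: one takes $u_i = u \times_w w_i$ and invokes the Van Kampen property to recover $u \iso u_1 +_C u_2$, which also gives essential uniqueness for free.
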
\label{thm:playground}
Axioms~\preaxrefrange{discreteness}{atomicity} are easy, as well
as~\preaxref{ax:views}, \preaxref{finiteness}
and~\preaxref{basic:full}. Furthermore, once~\preaxref{fibration} is
clear, \preaxref{fibration:continued} is also easy. This
leaves~\preaxref{fibration} and the decomposition axioms.

For~\preaxref{fibration}, i.e., the fact that $\cod \colon \DH \to
\Dh$ is a fibration, we introduce the notion of `history' for plays.
For a presheaf $U \in \Chatf$, let $\strip{U}$ be its restriction to
dimension 3, i.e., $\strip{U}(\taunimj) = \emptyset$ for all $n,i,m,j$, and
$\strip{U}(c) = U(c)$ on other objects. Further let $\El{U} = \sum_{c \in
  \ob(\C)} \strip{U} (c)$ be the set of elements of $\strip{U}$. We
have a category $\El{\Chatf}$, whose objects are those of $\Chatf$,
and whose morphisms $U \to U'$ are set-functions $\El{U} \to
\El{U'}$. We denote such morphisms with special arrows $U \histto
U'$. There is a forgetful functor $\Elfun \colon \Chatf \to
\El{\Chatf}$, which we implicitly use in casting arrows $U \to U'$ to
arrows $U \histto U'$.
\begin{defi}
  Consider any seed $X \into M \otni Y$ which is not a
  synchronisation, where $Y$ is the initial position and $X$ is the
  final position.  Then $Y$ is a representable position, say $[n]$,
  and we let the \emph{history} of $M$ be the map $p_M \colon \El{M}
  \to \El{Y}$ sending
  \begin{itemize}
  \item all channels in $\El{M} \cap \El{Y}$ to themselves,
  \item all other elements to $\id_{[n]}$.
  \end{itemize}
  The history $p_{M'}$ of a move $M'$ is the map obtained by
  pushout of the history of its generating seed $M$, as in
  \begin{center}
    \Diag(.5,1){%
      \pbk{Y}{Y'}{Z} %
      \pullback{Z}{M'}{M}{draw,-,fore} %
    }{%
      \& |(M)| M \& \& |(M')| M'  \\
      |(I)| I \&\& |(Z)| Z \\
      \& |(Y)| Y \& \& |(Y')| Y'. %
    }{%
      (Z) edge[into] (Y') %
      edge (M') %
      (I) edge[into] (Y) %
      edge node[pos=.8,anchor=south] {$C$} (Z) %
      edge (M) %
      (M) edge[fore] (M') %
      (Y) edge (Y') %
      (M') edge[dashed,history] (Y') %
      (M) edge[fore,history] (Y) %
    }
  \end{center}

\end{defi}
This defines the history of moves. We have:
\begin{prop}
  For any move $X \xinto{s} M \xotni{t} Y$, we have $p_M \rond t = \id$.
\end{prop}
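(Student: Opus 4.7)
The plan is to prove the statement in two stages, mirroring the two-stage definition of moves and their histories: first for a seed, then for a general move obtained by pushout.

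For a basic seed $X \xinto{s} M \xotni{t} Y$ that is not a synchronisation, the initial position $Y$ is representable, say $Y = [n]$, so $\El{Y}$ consists of the unique player $\id_{[n]}$ together with the channels $s_1, \ldots, s_n$. Unwinding the definition of $p_M$, each channel $s_i \in \El{Y}$ is sent by $\El{t}$ to a channel of $M$ lying in $\El{M} \cap \El{Y}$, which $p_M$ fixes; and the player $\id_{[n]}$ is sent by $\El{t}$ to a non-channel element of $\El{M}$ (the seed's core, a source, or a target of dimension $\leq 3$), which by the second clause of the definition of $p_M$ is mapped back to $\id_{[n]}$. Hence $p_M \rond t = \id_{\El{Y}}$.

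For a general move $M'$ obtained by pushout of a non-synchronisation seed $M$ along an interface inclusion $I \into Z$, the initial position $Y'$ is the pushout $Y \cup_I Z$, and the initial leg $t' \colon Y' \to M'$ is the map induced by universal property from the two composites $Y \xto{t} M \into M'$ and $Z \into M'$, which agree on $I$. By the very definition of $p_{M'}$ as a pushout, $p_{M'}$ is induced from $M \xto{p_M} Y \into Y'$ and $\id_Z \colon Z \to Z \into Y'$. Computing $p_{M'} \rond t'$ on the two coproduct components of $Y'$ then gives $p_M \rond t$ on the $Y$-component (which is $\id_Y$ by the seed case) and the identity on the $Z$-component; by uniqueness in the universal property of $Y'$, the composite $p_{M'} \rond t'$ coincides with $\id_{Y'}$.

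The only point requiring verification is the well-definedness of the pushout construction of $p_{M'}$ in $\El{\Chatf}$, namely that $p_M$ restricts to the identity on $\El{I}$ (so that the two candidate maps out of $Y$ and $Z$ agree on the common interface $I$). This is immediate, since $I$ consists only of channels and channels of $Y$ are fixed by $p_M$ by definition. No further difficulty arises.
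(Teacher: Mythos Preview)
Your proof is correct; the paper states this proposition without proof, and your two-stage argument (first for seeds, then for moves via pushout) is the natural way to fill it in. One minor inaccuracy: in the seed case, $t$ sends $\id_{[n]}$ specifically to the target player of $M$ (a dimension-$1$ element), not to ``the seed's core, a source, or a target''---but this does not affect the argument, since any non-channel element of $\El{M}$ is sent to $\id_{[n]}$ by $p_M$.
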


We graphically represent histories by arrows between the presheaves,
as $p$ in
\begin{equation}
  \diag{%
    |(X)| X \& |(U)| U \& |(Y)| Y. %
  }{%
    (X) edge[into,labeld={s}] (U) %
    (Y) edge[linto,bend left=20,labeld={t}] (U) %
    (U) edge[history,bend left=20,labelu={p}] (Y) %
  }\label{eq:vmor}
\end{equation} 

We now define the history of sequences of moves, which we
here call \emph{sequential plays}. We denote
such a sequence $X_n \xto{M_n} X_{n-1} \ldots X_1 \xto{M_1} X_0$ by
$(M_n,\ldots,M_1)$.
\begin{defi}
  Define now the history of a sequential play $X \to (M_n,\ldots,M_1)
  \ot Y$, letting $U = M_1 \vrond \ldots \vrond M_n$ be the
  corresponding play, to be the map $U \histto Y$ defined by induction
  on $n$ as follows:
  \begin{itemize}
  \item if $\length{U} = 0$, then $t$ is an isomorphism, and the
    history is the inverse of the corresponding bijection on
    elements;
  \item if $\length{U} = 1$, then $U$ is a move $M$ and its
    history is that of $M$;
  \item if $\length{U} > 1$, then $U = (U', M)$ for some move
    $M$ and sequential play $U'$; letting $p_{U'}$ be the history of
    $U'$ obtained by induction hypothesis, we let $p_U = p_M \rond q$,
    where $q$ is defined by universal property of pushout in
    \begin{center}
      \Diag{%
        \pbk[.6cm]{U'}{U}{M} %
      }{%
        |(X)| X \& |(U')| U' \& |(Y)| Y \& |(M)| M \& |(Z)| Z \\%
        \& |(Yi)| Y \& |(U)| U \\
        \& \& |(Mi)| M. %
      }{%
        (X) edge[into,labelu={s_{U'}}] (U') %
        (Y) edge[linto,labeld={t_{U'}}] (U') %
        (U') edge[history,bend left=30,labelu={p_{U'}}] (Y) %
        (Y) edge[into,labelu={s_M}] (M) %
        (Z) edge[linto,labeld={t_M}] (M) %
        (M) edge[history,bend left=30,labelu={p_M}] (Z) %
        (U') edge[history,labell={p_{U'}}] (Yi) %
        (Y) edge[identity] (Yi) %
        (M) edge[identity] (Mi) %
        (U') edge[fore] node[pos=.95,anchor=east] {$\scriptstyle s\ $} (U) %
        (M) edge node[pos=.7,anchor=west] {$\scriptstyle t$} (U) %
        (U) edge[dashed,history,labell={q}] (Mi) %
        (Yi) edge[into,labelbl={s_M}] (Mi) %
      }
    \end{center}
  \end{itemize}
\end{defi}

\begin{prop}
  For any sequential plays $U_1, U_2 \colon X \proto Y$ with isomorphic
  compositions, we have $p_{U_1} = p_{U_2}$.
\end{prop}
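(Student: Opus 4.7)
The plan is to give an intrinsic characterization of $p_U$ that depends only on the cospan $X \xinto{s} U \xotni{t} Y$, so that isomorphic compositions automatically yield equal histories. First I would check that for any sequential play $U$, the history $p_U$ is a retraction of $t$, i.e., $p_U \rond t = \id_{\El{Y}}$. This holds for seeds by direct inspection of the definition, is preserved by the pushout extending seeds to moves, and is preserved by the pushout defining histories of compositions; a straightforward induction on the length of the decomposition confirms it.

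Next, I would define intrinsically a candidate retraction $q_U \colon \El{U} \to \El{Y}$ of $t$ by traceback through the causal graph $G_U$: for $x \in \im(t)$, set $q_U(x) = t^{-1}(x)$; otherwise $x$ is either a core $\mu$, an element of $\past{\mu}$, or a channel created by some core $\mu$, and by target-linearity of $G_U$ (Theorem~\ref{thm:completeness}) such a $\mu$ is uniquely determined at each step, so we may set $q_U(x) = q_U(\mu \cdot t)$. Acyclicity of $G_U$ ensures this recursion terminates at an element of $\im(t)$. Crucially, $q_U$ depends only on the cospan $X \to U \ot Y$ together with $G_U$, both of which transport canonically along any isomorphism of compositions.

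The central step is then to show, by induction on the length $n$ of a decomposition $U = (M_n,\ldots,M_1)$, that $p_U = q_U$. The base case ($U$ a seed, or a move obtained by pushout from a seed) is immediate by chasing the pushout and comparing with the intrinsic recursion. For the inductive step, with $U = (U', M)$ and inductive hypothesis $p_{U'} = q_{U'}$, one uses the pushout defining $p_U = p_M \rond q$ together with the intrinsic description of $q_U$: for $x \in \El{U}$ coming from $\El{U'}$, the trace either already lands in the interior of $Y$ (matching $q_{U'}(x)$) or in the target of $M$, in which case one further application of $p_M$ continues the traceback, exactly matching the intrinsic recursion for $q_U$. The main obstacle is precisely this compatibility check: one must confirm that the pushout composing $p_{U'}$ with $p_M$ agrees with the concatenated traceback through $U'$ then $M$, which amounts to unfolding the universal property of the pushout and using target-linearity of $G_U$ to ensure cores of $U$ are in bijection with those appearing in the decomposition. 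Once $p_U = q_U$ is established, the conclusion is immediate: an iso $U_1 \iso U_2$ of compositions transports the cospan structure and causal graph, hence $q_{U_1} = q_{U_2}$, hence $p_{U_1} = p_{U_2}$.
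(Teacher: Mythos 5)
Your overall strategy is the same as the paper's: give an intrinsic, decomposition-free description of the history via the causal structure of $U$, prove it coincides with the pushout-defined $p_U$ by induction on a decomposition into moves, and conclude. (The paper does this with a one-step map $h_U$ — initial players and channels to themselves, other players and channels to the unique core that created them, dimension-2 elements to their image under $t$, higher-dimensional elements to the common image of their dimension-2 faces — whose ultimately idempotent iterate $H_U$ satisfies $p_M = H_M$ for moves and $H_{U \vrond U'} = H_U \rond H^U_{U'}$, so that $p_U = H_U$ for every play.)

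However, the central object of your argument, the traceback map $q_U$, is not well defined as you state it, and this is a genuine gap rather than a cosmetic one. First, your case analysis misses exactly the elements the history is really about: a player created by a move (a source of a core) that is final in $U$ is not in the image of $t$, is not a core, is not a channel, and does not lie in any $\past{\mu}$ — the past of a core consists of its targets and its sub-move elements, never its sources — so $q_U$ is simply undefined on it. Second, for a player $x$ that does lie in some $\past{\mu}$, i.e.\ a target of $\mu$, the clause $q_U(x) = q_U(\mu \cdot t)$ is circular, since $\mu \cdot t$ is $x$ itself (or, for a synchronisation, possibly the \emph{other} target, which would compute the wrong history). The correct recursion goes in the other direction: a non-initial player or channel is sent back through the core that \emph{created} it, and the uniqueness needed there is source-linearity, not the target-linearity you invoke (Theorem~\ref{thm:completeness} gives both, but you are using the wrong half). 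Third, for a synchronisation core ``$\mu \cdot t$'' is genuinely ambiguous (two targets), and the causal graph $G_U$ alone cannot say which target corresponds to which source; one has to use the presheaf structure, tracing each dimension-2 element through its own $t$, so your claim that $q_U$ depends only on $G_U$ and the cospan needs $U$ itself and not merely its causal graph. All of this is repairable — with the creation-directed traceback in place of your $q_U$, your retraction lemma and your induction over the pushout defining $p_U = p_M \rond q$ go through essentially as in the paper — but as written the intrinsic map your proof hinges on is undefined on the newly created players and loops on the consumed ones.
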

\begin{proof}
  For any presheaf $U$ such that $G_U$ is source-linear and acyclic,
  consider the function $h_U \colon \El{U} \to \El{U}$ mapping
  \begin{itemize}
  \item initial players and channels to themselves,
  \item non-initial players and channels to the (unique by
    source-linearity of $G_U$) core that created them,
  \item elements of dimension 2 to their image under $t$,
  \item elements of higher dimensions to the image of one of their
    images in dimension 2 (which all map to the same element by a
    simple case analysis).
  \end{itemize}
  Observe that this map is ultimately idempotent because it is
  strictly increasing w.r.t.\ $G_U$, and let $H_U$ be the
  corresponding idempotent function.

  It is easy to see that if $X \into U \otni Y$ is a move,
  then $\im (H_U) = Y$ and $p_U = H_U$.

  Furthermore, for all composable plays $X \xproto{U'} Y \xproto{U} Z$, we
  have $H_{U \vrond U'} = H_U \rond H^U_{U'}$, where $H^U_{U'} \colon
  \El{U \vrond U'} \to \El{U}$ is the extension of $H_{U'}$ to $\El{U
    \vrond U'}$ which is the identity on $\El{U} \setminus
  \El{U'}$. Because $\im (H_{U'}) = Y$, this indeed goes to $\El{U}$.

  When $U$ is a move, this is actually equivalent to the
  diagrammatic definition of $p_{M \vrond U'}$, which entails by
  induction that for any play $U$, $p_U = H_U$, which does not depend
  on the decomposition of $U$ into moves.
\end{proof}

Just as for moves, the target map is a section of the history:
\begin{prop}
  For any play $X \into U \xotni{t} Y$, we have $p_U \rond t = \id_Y$.
\end{prop}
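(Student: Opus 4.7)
The argument is a straightforward induction on the length of a decomposition of $U$ into moves, mirroring the three-case recursive definition of $p_U$ just above. Since the preceding proposition shows that $p_U$ is independent of the chosen decomposition, it suffices to verify the identity for any one decomposition.

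The base cases are immediate. If $\length{U} = 0$, then $t$ is an isomorphism and $p_U$ is defined to be its inverse on elements, so $p_U \rond t = \id_Y$ holds by construction. If $\length{U} = 1$, then $U$ is a single move and the identity is precisely the proposition stated just above this one.

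For the inductive step, pick a decomposition $U = M \vrond U'$, where $M$ is the move occurring first (on the initial side), $U'$ is a shorter sequential play, and the two are glued along the intermediate position $W$ which is simultaneously the final position of $M$ and the initial position of $U'$. By construction of composition as a pushout in $\Chatf$, the target of $U$ factors as $t_U = \iota_M \rond t_M$, where $\iota_M \colon M \into U$ is the pushout coprojection from $M$ and $t_M \colon Y \into M$ is the target of $M$. By the recursive definition, $p_U = p_M \rond q$, where $q \colon U \histto M$ is obtained from the universal property of the pushout applied to the cocone whose $M$-leg is $\id_M$ and whose $U'$-leg is the composite $s_M \rond p_{U'}$. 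Well-definedness of this cocone requires the compatibility $s_M \rond p_{U'} \rond t_{U'} = s_M$, which, since $s_M$ is monic, reduces to $p_{U'} \rond t_{U'} = \id_W$: this is precisely the induction hypothesis applied to $U'$. The universal property of the pushout then yields $q \rond \iota_M = \id_M$, whence
\[
  p_U \rond t_U \;=\; p_M \rond q \rond \iota_M \rond t_M \;=\; p_M \rond t_M \;=\; \id_Y,
\]
the final equality being the base case for single moves. There is no real obstacle in the argument: the only point requiring care is the observation that the compatibility condition needed to invoke the pushout universal property is exactly the inductive hypothesis, after which the computation is a transparent application of that universal property.
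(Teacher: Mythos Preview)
Your argument is correct and is exactly the natural induction the paper has in mind; the paper itself states the proposition without proof, relying on the earlier move case and the recursive definition of $p_U$. You have in fact made explicit a point the paper glosses over: the cocone defining $q$ is only well-formed because $p_{U'} \rond t_{U'} = \id$, so the definition of history and the present proposition are really established simultaneously by the same induction.
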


\begin{prop}\label{prop:compathist}
  Any double cell $(h,k,l)$ as on the left below
  \begin{mathpar}
          \diag{%
      X \& X' \\
      U \& V \\
      Y \& Y' %
    }{%
      (m-1-1) edge[labelu={h}] (m-1-2) %
      (m-2-1) edge[labelo={k}] (m-2-2) %
      (m-3-1) edge[labeld={l}] (m-3-2) %
      (m-1-1) edge[labell={s}] (m-2-1) %
      (m-1-2) edge[labelr={s'}] (m-2-2) %
      (m-3-1) edge[labell={t}] (m-2-1) 
      (m-3-2) edge[labelr={t'}] (m-2-2) %
    } 
\and
    \diag (1,1) {%
      U \& V \\
      Y \& Y' %
    }{%
      (m-1-1) edge[labelu={k}] (m-1-2) %
      edge[history,labell={p}] (m-2-1) %
      (m-2-1) edge[labeld={l}] (m-2-2) %
      (m-1-2) edge[history,labelr={p'}] (m-2-2) %
    }
  \end{mathpar}
 is compatible with histories $p
  \colon U \histto Y$ and $p' \colon U' \histto Y'$, in the sense that
  the square on the right
  commutes.
\end{prop}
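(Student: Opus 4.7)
The plan is to exploit the characterization $p_U = H_U$ established in the proof of the preceding proposition, where $H_U \colon \El{U} \to \El{U}$ is the idempotent obtained by iterating the one-step trace-back function $h_U$, and analogously for $H_V$. This reduces the proposition to checking the intertwining property $\Elfun(k) \rond H_U = H_V \rond \Elfun(k)$ on elements, since the commutativity of the given double cell forces the restriction of $k$ to $\Elfun(t)(\El{Y})$ to factor through $\Elfun(t')$ via $l$, so that any equality landing in the initial elements of $V$ can be read back as an equality valued in $\El{Y'}$.

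I would establish the intertwining by induction on the least $n$ such that $h_U^n(x)$ is a fixed point of $h_U$, splitting cases on the type of $x \in \El{U}$. For an initial element $x = \Elfun(t)(y)$, we have $H_U(x) = x$ and $\Elfun(k)(x) = \Elfun(t')(l(y))$, which is already a fixed point of $h_V$, so both sides agree. For a non-initial player or channel $x$, source-linearity of $G_U$ provides a unique core $\mu$ of $U$ creating $x$, so $h_U(x) = \mu$; by naturality of the presheaf action, $\Elfun(k)(\mu)$ likewise creates $\Elfun(k)(x)$ in $V$. If $\Elfun(k)(\mu)$ is still a core of $V$, the induction goes through directly. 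Otherwise it is subsumed by a unique larger core $\nu$ of $V$ with $\Elfun(k)(\mu) = \nu \cdot g$ for some non-identity $g$ drawn from $\{l, r, \epsilon, \rho\}$; then $h_V(\Elfun(k)(x)) = \nu$ differs from $\Elfun(k)(h_U(x)) = \Elfun(k)(\mu)$, but a single further application of $h_V$ to $\nu$ versus $h_U^2(x) = \mu \cdot t$ yields compatible elements, thanks to the equations of Figure~\ref{fig:equationsC}. Indeed, those equations assert precisely that $\nu \cdot g \cdot t = \nu \cdot t$ for every generator $g$ relating a sub-cell to its enclosing core, so one gains a step on the $V$-side whenever one loses a step on the $U$-side.

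The principal technical obstacle is this subsumption phenomenon: a core $\mu$ of $U$ need not remain a core once transported along $k$, so the desired intertwining fails at the level of the one-step function $h$ and only survives at the level of its fixed points $H$. What saves the argument is the very specific combinatorics of $\C$, where every sub-cell of a higher-dimensional core projects to the same target player (or channel) as the core itself; this is a finite check on the generating cells (forks $\forkn$ and synchronisations $\tau_{n,a,m,c}$) and their face maps. Once this is verified, the induction concludes and the resulting equality $\Elfun(k)(H_U(x)) = H_V(\Elfun(k)(x))$, read in $\El{Y'}$ via the identifications above, gives exactly the required commutation $l \rond p = p' \rond k$.
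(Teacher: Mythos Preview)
The paper states this proposition without proof, so there is nothing to compare against directly; your approach via the characterisation $p_U = H_U$ from the preceding proposition is the natural one and is essentially sound.

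There is one point where your case analysis overreaches. You list $\{l,r,\epsilon,\rho\}$ as the generators along which a core $\mu$ of $U$ may be subsumed by a larger core $\nu$ in $V$. But recall that $\El{-}$ is computed on $\strip{-}$, which empties out all $\tau_{n,a,m,c}$-elements; so a putative subsuming core reached via $\epsilon$ or $\rho$ simply does not live in $\El{V}$, and your sentence ``$h_V(\Elfun(k)(x)) = \nu$'' would then be ill-typed. In fact no subsumption occurs in this case: an input or output element that happens to sit under a synchronisation in $V$ is still a core of $\strip{V}$, and $h_V$ treats any dimension-$2$ element uniformly by sending it to its image under $t$, independently of core-ness. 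So the intertwining $h_V \rond \Elfun(k) = \Elfun(k) \rond h_U$ holds on the nose for such elements, and the sync case needs no special pleading. Relatedly, your appeal to the equations of the base category is inaccurate for $\epsilon,\rho$: the third diagram there only identifies the shared \emph{channel} $\epsilon \rond t \rond s_c = \rho \rond t \rond s_a$, whereas $\epsilon \rond t$ and $\rho \rond t$ are distinct players (sender and receiver). The genuine subsumption phenomenon is restricted to the fork case $g \in \{l,r\}$, where your argument is exactly right: the middle equation $l \rond t = r \rond t$ guarantees that one extra application of $h_V$ realigns the two sides.

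Once the sync case is dropped (or rather, observed to be trivial), your induction goes through cleanly and yields $\Elfun(k) \rond H_U = H_V \rond \Elfun(k)$, whence $l \rond p = p' \rond k$ as required.
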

The important point for us is:
\begin{prop}
  The vertical codomain functor $\cod \colon \DH \to \Dh$ is a
  fibration.
\end{prop}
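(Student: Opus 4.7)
The strategy is, for any play $Y \xproto{u} X$ (presented as a cospan $X \xinto{s} U \xotni{t} Y$) and horizontal morphism $l \colon X' \into X$, to build a canonical play $Y' \xproto{u'} X'$ and a cartesian double cell $\alpha$ with lower border $l$. The construction will be driven by the history map $p \colon U \histto X$ of Proposition~\ref{prop:compathist}: intuitively, $\El{U'}$ should consist of those elements of $U$ whose history lands in $X'$. The main delicacy is that $p$ is only defined on the 3-truncation $\strip{U}$, so synchronisations must be reintroduced by hand afterwards.

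First I would carry out the construction in two stages. Stage one: define $\El{U'_1} = p^{-1}(\El{X'}) \subseteq \El{\strip{U}}$ and check that this is stable under the action of the morphisms of $\C$ of dimension $\leq 3$, so determines a subpresheaf $U'_1 \subseteq \strip{U}$; stability is verified by a direct case analysis on the generators of $\C$, using that $X'$ is closed under $s_1,\ldots,s_n$ in $X$ because $l$ is a monic morphism of positions. Stage two: add back synchronisations, letting $U' \subseteq U$ agree with $U'_1$ in dimensions $\leq 3$ and contain those $\tau \in U(\taunimj)$ both of whose constituent input $\tau\cdot\rho$ and output $\tau\cdot\epsilon$ already lie in $U'_1$. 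Then define $Y' \subseteq Y$ as the subposition containing all channels of $Y$ together with precisely the players of $Y$ that remain final in $G_{U'}$. Finally one checks the four conditions of Theorem~\ref{thm:completeness} for the cospan $X' \xinto{} U' \xotni{} Y'$: local $1$-injectivity, condition~\ref{cond:X} on $X'$, condition~\ref{cond:Y} on $Y'$, and linearity and acyclicity of $G_{U'}$ — all four follow from the corresponding property for $U$ together with the construction (for linearity one uses that $U'$ is downward-closed for $G_U$ restricted to its vertex set, and for condition~\ref{cond:X} one uses that $l$ is monic and that the initial players of $U$ map under $p$ to themselves).

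Having produced the lifting $\alpha \colon u' \to u$ with horizontal domain $Y' \into Y$ and lower border $l$, I would then establish cartesianness. So consider a double cell $\beta \colon w \to u$ in $\DH$, with $w \colon Y'' \proto X''$, horizontal lower border $h \colon X'' \to X$, horizontal upper border $k \colon Y'' \to Y$, and middle component $f \colon W \to U$; and suppose $h$ factors as $h = l \rond m$ for some $m \colon X'' \to X'$. By compatibility of $\beta$ with histories (Proposition~\ref{prop:compathist}), the image of $f$ in $\strip{U}$ is contained in $p^{-1}(\im(h)) \subseteq p^{-1}(\im(l)) = \El{U'_1}$; and the synchronisations in $\im(f)$ then automatically satisfy the reintroduction criterion because their input/output parts do. Hence $f$ factors uniquely through $U' \into U$, yielding the unique mediating cell $\gamma \colon w \to u'$.

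The main obstacle is precisely the handling of synchronisations. Everything below dimension $4$ is controlled by the history, but $\taunimj$ elements are invisible to $p$, so one must check by hand (i) that adding back the eligible $\tau$'s yields a bona fide subpresheaf of $U$, (ii) that it preserves the correctness conditions of Theorem~\ref{thm:completeness}, and, most importantly, (iii) that this ``ignore-then-reintroduce'' recipe is compatible with the universal property required for a cartesian lifting — in effect, that it is the right half of the factorisation system alluded to in Section~\ref{subsec:overview}. Once these points are settled, the fibration property falls out of the construction.
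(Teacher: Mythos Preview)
Your approach is essentially the paper's: pull back along the history map in dimensions $\leq 3$, reintroduce the synchronisations whose horns survive, and derive cartesianness from compatibility with histories (Proposition~\ref{prop:compathist}). The one structural difference is that the paper first constructs the lifting for a single \emph{move} (checking by a direct case analysis that the result is again a move or an identity) and then handles a general play by choosing a decomposition into moves and composing the individual liftings, whereas you do it in one shot for an arbitrary play and appeal to the correctness criterion (Theorem~\ref{thm:completeness}) to certify that the result is a play. Your route is tidier once that criterion is available; the paper's is more elementary but pays with the ``tedious case analysis'' and the bookkeeping of composing the move-by-move liftings.

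One small slip: your $Y'$ should not contain \emph{all} channels of $Y$, only those lying in $U'$. A channel created by a $\nu$-move whose player's history lands outside $X'$ does not survive the restriction, so the correct definition is $Y' = Y \times_U U'$ (equivalently $Y \cap U'$ inside $U$), which is also how the paper obtains the new final position in the single-move case. With that fix, the final players of $U'$ are exactly $\Pl(Y) \cap U'$ and your verification of conditions~\ref{cond:X}--\ref{cond:Y} goes through.
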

\begin{proof}
  We first consider the restriction of $\cod$ to the full subcategory
  of $\DH$ consisting of moves and isomorphisms.  Given a move $X
  \xinto{s} M \xotni{t} Y$ and a morphism $l \colon Y' \to Y$ in
  $\Dh$, consider the pullback (in sets) and the induced arrow $t'$:
  \begin{center}
    \Diag(.5,1){%
      \pbk{Y'}{U'}{U} %
    }{%
      |(Y'i)| Y' \& \& |(Yi)| Y \\
      \& |(U')| U_0 \& \& |(U)| M \\
      \& \  \\
      \& |(Y')| Y' \& \& |(Y)| Y. %
    }{%
      (Y'i) edge[bend right=20,identity] (Y') %
      edge[labelu={l}] (Yi) %
      (Yi) edge[labelar={t}] (U) %
      edge[identity,bend right=20] (Y) %
      (U') edge[into,fore] node[pos=.3,anchor=south] {$\scriptstyle k_0$} (U) %
      (U') edge[history,labell={p'}] (Y') %
      (Y') edge[into,labeld={l}] (Y) %
      (Y'i) edge[dashed,labelo={t'}] (U')
      (U) edge[history,labelr={p}] (Y) %
    }
  \end{center}
  Now, consider $U_0$ as a presheaf over $\C_3$ by giving each element
  the type of its image under $k_0$, and checking that $U_0$, viewed
  as an $\ob (\C_3)$-indexed family of subsets of $M$, is stable under
  the action of morphisms in $\C_3$. This, in passing, equips $k_0$
  and $t'$ with the structure of maps in $\Chatf$.

  Furthermore, let the $(n,i,m,j)$-\emph{horn} (see, e.g., Joyal and
  Tierney~\cite{JoyalTierney} for the origin of our terminology)
  $\hornnimj$ be the representable presheaf on $\taunimj$, minus the
  element $\id_{\taunimj}$, and consider the family $A$ of commuting
  squares
  \begin{center}
    \diag{%
      \hornnimj \& U_0 \\
      \taunimj \& M, %
    }{%
      \sq{w}{i}{k_0}{w'} %
    }
  \end{center}
  where $i$ is the inclusion. Define then $U$ and $k$ by pushout as in
   \begin{center}
    \Diag{%
      \pbk{taus}{U'}{U_0} %
    }{%
      |(horns)| \sum_{a \in A} \hornof{n_a}{i_a}{m_a}{j_a} \& |(U_0)| U_0 \\
      |(taus)| \sum_{a \in A} \tauof{n_a}{i_a}{m_a}{j_a} \& |(U')| U \\
      \& \& |(U)| M. %
    }{%
      (horns) edge[labelu={[w_a]_{a \in A}}] (U_0) %
      edge[labell={\sum_{a \in A} i_a}] (taus) %
      (taus) edge (U') %
      edge[bend right,labelbl={[w'_a]_{a \in A}}] (U) %
      (U_0) edge (U') %
      edge[bend left,labelar={k_0}] (U) %
      (U') edge[dashed,labelo={k}] (U) %
    }
  \end{center}
  Informally, $U$ is $U_0$, where we add all the $\taunimj$'s that
  exist in $M$ and whose horn is in $U_0$. We have by construction
  $\El{U} = \El{U_0}$, so $p'$ is indeed a left inverse to $t' \colon
  \El{Y'} \to \El{U}$.

  Finally, define $X'$, $h$, and $s'$ by the pullback
  \begin{center}
    \Diag{%
      \pbk{U'}{X'}{X} %
    }{%
      |(X')| X' \& |(X)| X \\
      |(U')| U \& |(U)| M. %
    }{%
      (X') edge[labelu={h}] (X) %
      edge[into,labell={s'}] (U') %
      (U') edge[labeld={k}] (U) %
      (X) edge[into,labelr={s}] (U) %
    }
  \end{center}
  This altogether yields a vertical morphism
  \begin{center}
      \diag{%
    |(X)| X' \& |(U)| U \& |(Y)| Y', %
  }{%
    (X) edge[into,labeld={s'}] (U) %
    (Y) edge[linto,bend left=20,labeld={t'}] (U) %
    (U) edge[history,bend left=20,labelu={p'}] (Y) %
  }
  \end{center}
  in $\Dov$. A tedious case analysis (made less tedious by $l \colon
  Y' \into Y$ being monic) shows that, because $M$ is a move, $U$ is
  either a move or isomorphic to $Y'$.  So it is in $\Dv$. $U$ is our
  candidate cartesian lifting of $M$ along $l$. More generally, for
  any play $X \xinto{s} U \xotni{t} Y$, choose a decomposition into
  moves. We obtain a candidate cartesian lifting $X' \xinto{s'} U'
  \xotni{t'} Y'$ for $U$, with morphism $(h,k,l)$ to $U$, along any $l
  \colon Y' \into Y$ by taking the successive candidates for each move
  in the obvious way, and composing them.

  To show that this indeed yields a cartesian lifting, consider any
  vertical morphism $X'' \xinto{s''} U'' \xotni{t''} Y''$ and diagram
  \begin{center}
    \diag{%
      |(X'')| X'' \& |(X)| X \\
      |(U'')| U'' \& |(U)| U \\
      |(Y'')| Y'' \& |(Y)| Y, %
    }{%
      (X'') edge[into,labelu={h''}] (X) %
      (U'') edge[into,labelo={k''}] (U) %
      (Y'') edge[into,labeld={l''}] (Y) %
      (X'') edge[into,labell={s''}] (U'') %
      (X) edge[into,labelr={s}] (U) %
      (Y'') edge[linto,labell={t''}] (U'') %
      (Y) edge[linto,labelr={t}] (U) %
    }
  \end{center}
  together with a map $l' \colon Y'' \to Y'$ such that $l \rond l' = l''$.
  By Proposition~\ref{prop:compathist}, letting $p''$ be the history
  of $U''$, the diagram
  \begin{center}
    \diag{%
      U'' \& U \\
      Y'' \& Y %
    }{%
      (m-1-1) edge[labelu={k''}] (m-1-2) %
      edge[history,labell={p''}] (m-2-1) %
      (m-2-1) edge[labeld={l''}] (m-2-2) %
      (m-1-2) edge[history,labelr={p}] (m-2-2) %
    }
  \end{center}
  commutes, so by universal property of pullback, we obtain a map
  $k'_0 \colon \El{U''} \to \El{U'}$, such that $k_0 \rond k'_0 =
  k''_0$, where $k''_0$ is the restriction of $k''$ to dimensions $<
  4$. Furthermore,  the expected map $k' \colon U'' \to U'$,
  is given by universal property of pushout in
  \begin{center}
    \Diag{%
      \pbk[.8cm]{tausb}{U''}{U''_0} %
      \pbk[.8cm]{tausa}{U'}{U'_0} %
    }{%
      \& |(hornsa)| \sum_{a \in A} \hornof{n_a}{i_a}{m_a}{j_a} \& \&
      \& |(U'_0)| U'_0 \\ %
      |(hornsb)| \sum_{b \in B} \hornof{n_b}{i_b}{m_b}{j_b} \& \& \&
      |(U''_0)| \strip{U''} \\ %
      \& \\ %
      \& |(tausa)| \sum_{a \in A} \tauof{n_a}{i_a}{m_a}{j_a} \& \& \&
      |(U')| U' \\ %
      |(tausb)| \sum_{b \in B} \tauof{n_b}{i_b}{m_b}{j_b} \& \& \& |(U'')| U'' %
    }{%
      (hornsa) edge[labelu={}] (U'_0) %
      edge[labell={}] (tausa) %
      (tausa) edge[labeld={}] (U') %
      (U'_0) edge[labelr={}] (U') %
      (hornsb) edge[fore,labelu={}] (U''_0) %
      edge[fore,labell={}] (tausb) %
      (tausb) edge[fore,labeld={}] (U'') %
      (U''_0) edge[fore,labelr={}] (U'') %
      (hornsb) edge (hornsa) %
      (U''_0) edge (U'_0) %
      (tausb) edge (tausa) %
      (U'') edge[dashed,labelbr={k'}] (U') %
    }
  \end{center}
  where $B$ is the family of all commuting squares
  \begin{center}
    \diag{%
      \hornnimj \& \strip{U''} \\
      \taunimj \& U''. %
    }{%
      \sq{w}{i}{k_0}{w'} %
    }
  \end{center}

  Finally, the desired map $h' \colon X'' \to X'$ follows from
  universal property of $X'$ as a pullback, and the square
  \begin{center}
    \diag{%
      |(U'')| U'' \& |(U')| U' \\
      |(Y'')| Y'' \& |(Y')| Y' %
    }{%
      (Y'') edge[into,labell={t''}] (U'') %
      edge[into,labeld={l'}] (Y') %
      (U'') edge[into,labelu={k'}] (U') %
      (Y') edge[into,labelr={t'}] (U') %
    }
  \end{center}
  commutes by uniqueness in the universal property of $U'$ as a
  pullback.
\end{proof}

\subsection{Towards CCS as a playground}
In this section, we prove an intermediate result for proving the
decomposition axioms.

Consider a double cell $\alpha$ of the shape
\begin{center}
  \Diag{%
    \twocellbr{B}{A}{X}{\alpha} %
  }{%
    |(A)| A \& |(X)| X \\
    |(B)| B \&  \\
    |(C)| C \& |(Y)| Y, %
  }{%
    (A) edge[labelu={h}] (X) %
    edge[pro,labell={w}] (B) %
    (X) edge[pro,labelr={u}] (Y) %
    (B) edge[pro,labell={v}] (C) %
    (C) edge[labeld={k}] (Y) %
  }
\end{center}
where $v$ is a view. Let now $\Decomp{\alpha}$ denote the category with
\begin{itemize}
\item objects all tuples $T =
  (Z,l,u_1,u_2,\alpha_1,\alpha_2,\alpha_3)$ such that
\begin{center}
    \Diag{%
      \twocellbr{B}{A}{X}{\alpha_2} %
      \twocellbr{C}{B}{Z}{\alpha_1} %
      \twocell{Z}{X}{R}{}{celllr={0}{.07},bend
        right=10,labeld={\alpha_3}} %
    }{%
      |(A)| A \& |(X)| X \\
      |(B)| B \& |(Z)| Z \& |(R)| \\
      |(C)| C \& |(Y)| Y, %
    }{%
      (A) edge[labelu={h}] (X) %
      edge[pro,labell={w}] (B) %
      (X) edge[pro,labell={u_2}] (Z) %
      (Z) edge[pro,labell={u_1}] (Y) %
      (X) edge[pro,bend left=70,labelr={u}] (Y) %
      (B) edge[pro,labell={v}] (C) %
      (C) edge[labeld={k}] (Y) %
      (B) edge[labelu={l}] (Z) %
    }
\end{center} 
equals $\alpha$ and $\alpha_3$ is an isomorphism;
\item with morphisms $T \to T'$ given by tuples
  $(U,f,\beta,\gamma,\delta)$ (where $f$ is vertical) such that
\begin{center}
  \Diag(1,2){%
    \twocell[.5][.5]{C}{Y}{Z'}{}{celllr={0}{0},bend %
      left,labelo={\scriptscriptstyle \alpha'_1}} %
    \twocell[.5]{B}{Z}{Z'}{}{celllr={0}{0},bend left,%
      labelo={\scriptscriptstyle \delta}} %
    \twocell[.4][.45]{B}{Z}{M}{}{celllr={0}{-.1},bend left=10,%
      labelo={\scriptscriptstyle \alpha_2}} %
    \twocell[.2][.4]{B}{Z'}{X}{}{celllr={0}{0.01},bend
      left,labelo={\scriptscriptstyle \alpha'_2}} %
    \twocellll[.3]{C}{Y}{Z}{\scriptscriptstyle \alpha_1} %
    \twocell[.35][.45]{Z}{Y}{Z'}{}{celllr={0}{0},bend
      left,labelo={\scriptscriptstyle \gamma}} %
    \twocell[.45]{Z'}{Y}{R}{}{celllr={0}{.1},bend
      left,labelo={\scriptscriptstyle \alpha'_3}} %
    \twocell[.3]{Z}{Y}{R}{}{celllr={0}{-.14},bend
      left,fore,labelo={\scriptscriptstyle \alpha_3}} %
    \twocell[.45][.3]{M}{Z}{Z'}{}{celllr={0.1}{0},bend
      left=10,fore,labelo={\scriptscriptstyle \beta}} %
    \path[draw,->] %
    (X) edge[pro,fore,bend right,labell={u_2}] (Z) %
    (Z) edge[pro,fore,labell={
    }] (Y) %
    ; %
  }{%
    |(A)| A \& \& |(X)| X \\
    |(B)| B \& |(M)| \& |(Z')| Z' \&  \\
    \& |(Z)| Z \& \& |(R)| \\
    |(C)| C \& \& |(Y)| Y, %
  }{%
    (A) edge[labelu={h}] (X) %
    edge[pro,labell={w}] (B) %
    (X) edge[pro,bend left=70,labelr={u}] (Y) %
    (B) edge[pro,labell={v}] (C) %
    (C) edge[labeld={k}] (Y) %
    (B) edge[labelu={
    }] (Z) %
      (X) edge[pro,labell={
      }] (Z') %
      (Z') edge[pro,labell={
      }] (Y) %
      (B) edge[labelu={
}] (Z') %
      (Z') edge[pro,labelbr={f}] (Z) %
    }
\end{center} 
commutes, i.e., $\gamma \rond (\alpha_1 \vrond \delta) = \alpha'_1$,
$\beta \rond \alpha_2 = \delta \vrond \alpha'_2$, and $\alpha'_3 \rond
(\gamma \vrond u'_2) \rond (u_1 \vrond \beta) = \alpha_3$, and $\beta$
and $\gamma$ are isomorphisms;
\item composition and identities are obvious.
\end{itemize}
So, objects of $\Decomp{\alpha}$ are decompositions of $u$ permitting corresponding decompositions of $\alpha$.
The rest of this section is a proof of:
\begin{lem}\label{lem:decomp}
  $\Decomp{\alpha}$ has a weak initial object, i.e., an object $T$
  such that for any object $T'$ there is a morphism $T \to T'$.
\end{lem}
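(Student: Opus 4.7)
The plan is to produce a weak initial object by splitting $u$ precisely at the top of the image of the view $v$ in $u$, as determined by Lemma~\ref{lem:big}. Concretely, I would fix decompositions $v \iso b_1 \vrond \cdots \vrond b_n$ and $u \iso M_1 \vrond \cdots \vrond M_p$ and apply Lemma~\ref{lem:big} to $\alpha \colon v \vrond w \to u$. This yields a strictly monotone $f \colon n \to p$, cells $\alpha_k \colon \bar{b}_k \to M_k$ for $1 \leq k \leq f(n)$, and $\beta_w \colon w \to M_{f(n)+1} \vrond \cdots \vrond M_p$, whose vertical composite reproduces $\alpha$ modulo the chosen isos. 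I take $T$ to be the splitting at $Z = X_{f(n)}$, i.e., $u_1 = M_1 \vrond \cdots \vrond M_{f(n)}$ and $u_2 = M_{f(n)+1} \vrond \cdots \vrond M_p$, with $\alpha_1$ the vertical composite $\alpha_1 \vrond \cdots \vrond \alpha_{f(n)}$ (collapsing $\bar{b}_k$ to $b_i$ or a vertical identity as in Lemma~\ref{lem:big}), $\alpha_2 = \beta_w$, and $\alpha_3$ the canonical iso from the decomposition of $u$. The defining equation of Lemma~\ref{lem:big} is exactly what makes $T$ an object of $\Decomp{\alpha}$.

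For weak initiality, given any other $T' = (Z', l', u'_1, u'_2, \alpha'_1, \alpha'_2, \alpha'_3)$, I would first use Axiom~\axref{atomicity} and pseudoness coherence to align $u \iso u'_1 \vrond u'_2$ with our fixed $M_1, \ldots, M_p$: up to special iso, $u'_1 \iso M_1 \vrond \cdots \vrond M_{q'}$ and $u'_2 \iso M_{q'+1} \vrond \cdots \vrond M_p$ for some $q'$. Applying Lemma~\ref{lem:big} to $\alpha'_1 \colon v \to u'_1$ and using the uniqueness clause there (both $\alpha'_1 \vrond \alpha'_2$ and $\alpha_1 \vrond \alpha_2$ re-assemble, via $\alpha'_3$ and $\alpha_3$, to the same cell $\alpha$, so they must give the same monotone map), this forces $q' \geq f(n)$. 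I would then set the connecting play to $f^{TT'} := M_{f(n)+1} \vrond \cdots \vrond M_{q'} \colon Z' \proto Z$, and extract $\delta \colon \idv_B \to f^{TT'}$ from Proposition~\ref{prop:views}, using that $l'$ lifts $l$ through $f^{TT'}$ with trivial view (since $f^{TT'}$ lies strictly above $v$'s footprint in $u$). The special isos $\beta$ and $\gamma$ then come from the associator coherences induced by the two decompositions of $u$.

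The main obstacle is the coherence bookkeeping for the three morphism equations $\gamma \rond (\alpha_1 \vrond \delta) = \alpha'_1$, $\beta \rond \alpha_2 = \delta \vrond \alpha'_2$, and $\alpha'_3 \rond (\gamma \vrond u'_2) \rond (u_1 \vrond \beta) = \alpha_3$. My strategy is to derive all three uniformly from the uniqueness clause of Lemma~\ref{lem:big}: both sides of each equation present $\alpha$ (or parts of it) as an assembly of basic-move cells indexed by monotone maps which must agree, forcing the desired equalities modulo associators and unitors. A secondary subtlety, explaining why the lemma asserts weak rather than strict initiality, is that decompositions of $u$ into moves are only essentially unique, so the data $(f^{TT'}, \delta, \beta, \gamma)$ of the morphism $T \to T'$ is not determined on the nose.
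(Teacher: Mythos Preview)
Your argument is circular. Lemma~\ref{lem:decomp} lives in Section~\ref{sec:ccs}, where the goal is to \emph{prove} that $\Dccs$ satisfies the playground axioms; in particular it is the main ingredient in the proof of \axref{views:decomp} (Lemma~\ref{lem:decompsym}). But Lemma~\ref{lem:big}, which you invoke as your main tool, is a result of the abstract playground theory whose proof explicitly uses both \axref{views:decomp} and \axref{leftdecomposition}. So you are assuming the very axioms this lemma is meant to help establish. Proposition~\ref{prop:views}, which you also use, is likewise downstream of the playground axioms, though less fatally so.

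There is a second, independent gap in your weak-initiality argument. You claim that any decomposition $u \iso u'_1 \vrond u'_2$ aligns with the fixed decomposition $M_1 \vrond \cdots \vrond M_p$ as an initial segment, i.e.\ $u'_1 \iso M_1 \vrond \cdots \vrond M_{q'}$ for some $q'$. This is false in general: decompositions of a play into moves are not unique (concurrent moves can be ordered either way), so $u'_1$ need not be an initial segment of \emph{your} chosen ordering. The paper's proof avoids both issues by working directly with the concrete presheaf structure of $\Dccs$: it builds $u_1$ as the smallest subpresheaf of $u$ containing $Y$ and the image of $G_v$, reads off $Z$ from its final players and channels, builds $u_2$ from ${\uparrow} Z$, and checks via the correctness criterion (Theorem~\ref{thm:completeness}) that both pieces are plays. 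Weak initiality then follows from minimality of $u_1$, with no appeal to the abstract playground machinery.
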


We start by extending the assignment $U \mapsto G_U$ to a functor, at
least for source-linear $U$.  Let $\SLin$ denote the full subcategory
of $\Chat$ spanning source linear presheaves.  The assignment $U
\mapsto G_U$ actually extends to a functor $G_{-} \colon \SLin \to
\Gph/L$, as follows.  Let, first, for any move $x \in U$, the
\emph{core associated to $x$}, $\core{x}$, be the unique core
reachable from $x$ in $\elements U$, i.e., the unique core $\mu$ for which
there exists $f$ in $\C$ such that $\mu \cdot f = x$.  Now, for any
$\alpha \colon U \to U'$ in $\Chat$, let $G_\alpha \colon G_U \to
G_{U'}$ map any core $x$ in $G_U$ to $\core{\alpha (x)} \in G_{U'}$,
and any non-core vertex $x \in G_U$ to $\alpha (x) \in G_{U'}$. By
naturality, this indeed defines a unique morphism of simple graphs
over $L$.
 \begin{prop} \label{prop:Gfunctor}
   $G_{-} \colon \SLin \to \Gph / L$ is a functor. 
 \end{prop}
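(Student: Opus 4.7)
The plan is to verify the three things needed: (a) $G_\alpha$ is a well-defined morphism of simple graphs over $L$, (b) $G_{\id_U} = \id_{G_U}$, and (c) $G_{\beta \rond \alpha} = G_\beta \rond G_\alpha$. Well-definedness of the map on vertices hinges on the fact that for $U' \in \SLin$ and any move-dimensional element $y \in U'$, the core $\core{y}$ is unique: by source-linearity of $G_{U'}$, two cores reachable from $y$ via morphisms of $\C$ would have to coincide, since following any chain $y = \mu_0 \cdot f_0$, $\mu_0 = \mu_1 \cdot f_1$, \ldots would produce a sequence of cores connected by non-core vertices $\mu_i \cdot f_i$, contradicting source-linearity unless all the $\mu_i$ coincide. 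Label preservation is immediate: players and channels are mapped by naturality of $\alpha$ to players and channels (morphisms in $\C$ respect dimensions on these), and cores are sent to cores by the very definition of $\core{-}$.

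The meat is edge preservation. First, for edges $x \to x \cdot s_i$ coming from a player $x \in U[n]$ with $i \in n$: naturality of $\alpha$ gives $\alpha(x \cdot s_i) = \alpha(x) \cdot s_i$, so the corresponding edge is present in $G_{U'}$. Second, for source/target edges between a core $\mu$ and a player or channel $x$: such an edge arises from a morphism $x \xto{g} \mu$ in $\elements U$ of shape $f \rond s \rond f'$ or $f \rond t \rond f'$; applying $\alpha$ yields a morphism $\alpha(x) \xto{g} \alpha(\mu)$ in $\elements{U'}$. Since $\core{\alpha(\mu)}$ is obtained from $\alpha(\mu)$ by some chain of morphisms in $\C$, pre-composing gives a morphism of the same shape (up to precomposition by another morphism of $\C$) from $\alpha(x)$ to $\core{\alpha(\mu)}$, producing the required edge in $G_{U'}$. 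The channel-creation edges $a \to \mu$ require a little more care: if $\mu$ creates $a$ (i.e., $a$ is a channel in $\past{\mu}$), then $\alpha(a)$ lies in $\past{\alpha(\mu)}$ by naturality, and this pasts sits inside $\past{\core{\alpha(\mu)}}$, so the edge $G_\alpha(a) \to G_\alpha(\mu) = \core{\alpha(\mu)}$ is present.

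For functoriality, $G_{\id_U}$ is the identity on non-cores by definition, and on a core $\mu \in U$ one has $\core{\id_U(\mu)} = \core{\mu} = \mu$ (a core is its own associated core). For composition, on non-cores both $G_{\beta \rond \alpha}$ and $G_\beta \rond G_\alpha$ act as $\beta \rond \alpha$ by naturality. On a core $\mu \in U$, we must check $\core{\beta(\alpha(\mu))} = \core{\beta(\core{\alpha(\mu)})}$: writing $\core{\alpha(\mu)} = \alpha(\mu) \cdot f$ for some $f$ in $\C$, naturality gives $\beta(\core{\alpha(\mu)}) = \beta(\alpha(\mu)) \cdot f$, so both sides are the unique core reachable from $\beta(\alpha(\mu))$ in $U''$, and equality follows from uniqueness of the associated core (source-linearity of $G_{U''}$).

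The only real obstacle is confirming that cores behave well under morphisms of source-linear presheaves, specifically that $\core{-}$ is truly well-defined and that precomposition with morphisms of $\C$ used to extract associated cores commutes with $\alpha$ in the manner above; both ultimately reduce to source-linearity together with the existence of associated cores in dimensions $\leq 3$, which holds because the only four-dimensional objects in $\C$ are the $\taunimj$'s and any move-dimensional element beneath one reaches it by a morphism of $\C$.
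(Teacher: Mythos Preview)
Your verification is essentially correct and is in fact more detailed than the paper, which gives no proof at all: it merely defines $G_\alpha$ in the paragraph preceding the proposition and asserts ``By naturality, this indeed defines a unique morphism of simple graphs over $L$,'' then states the proposition without further argument. Your explicit check of well-definedness, label and edge preservation, and the functor laws is exactly the kind of unpacking the paper omits.

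One small slip to fix: in your composition argument you write $\core{\alpha(\mu)} = \alpha(\mu) \cdot f$, but the direction is backwards. By the paper's definition, the core sits \emph{above} the element in $\elements{U'}$, so the correct relation is $\alpha(\mu) = \core{\alpha(\mu)} \cdot f$. Naturality then gives $\beta(\alpha(\mu)) = \beta(\core{\alpha(\mu)}) \cdot f$, so $\beta(\alpha(\mu))$ lies below $\beta(\core{\alpha(\mu)})$ in $\elements{U''}$; hence both have the same associated core by uniqueness (source-linearity of $U''$), which is what you want. The rest of your argument goes through as written.
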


We continue with some properties of $\D$.

\begin{defi}
  A \emph{filiform} play is any play $U$ such that the restriction of
  $G_U$ to cores and players is a filiform graph, i.e., a graph of the
  shape $\cdot \to \cdot \to \cdots$
\end{defi}
E.g., all views are filiform.

\begin{lem}\label{lem:descentview}
  Any epimorphic (in $\DH$, hence isomorphic) double cell 
\begin{equation}
  \Diag{%
    \twocellbr{B}{A}{X}{\alpha} %
  }{%
    |(A)| A \& |(X)| X \\
    |(B)| B \&  \\
    |(C)| C \& |(Y)| Y, %
  }{%
    (A) edge[labelu={h}] (X) %
    edge[pro,labell={w}] (B) %
    (X) edge[pro,labelr={u}] (Y) %
    (B) edge[pro,labell={v}] (C) %
    (C) edge[labeld={k}] (Y) %
  }\label{eq:alphaa}
\end{equation}
where $v$ is filiform decomposes as 
\begin{center}
  \Diag{%
    \twocellbr{B}{A}{X}{\alpha_2} %
    \twocellbr{C}{B}{Z}{\alpha_1} %
    \twocell{Z}{X}{R}{}{celllr={0}{0},bend
        right,fore,labeld={\scriptscriptstyle \alpha_3}} %
  }{%
    |(A)| A \& |(X)| X \\
    |(B)| B \&  |(Z)| Z \& |(R)| \\
    |(C)| C \& |(Y)| Y, %
  }{%
    (A) edge[labelu={h}] (X) %
    edge[pro,labell={w}] (B) %
    (X) edge[pro,labell={u_2}] (Z) %
    edge[pro,bend left=70,labelr={u}] (Y) %
    (Z) edge[pro,labell={u_1}] (Y) %
    (B) edge[pro,labell={v}] (C) %
    (C) edge[labeld={k}] (Y) %
    (B) edge (Z) %
  }
\end{center}
with $\alpha_3$ an isomorphism, $\alpha_1$ and $\alpha_2$ epimorphic,
uniquely up to isomorphism. In this case, $u_1$ is filiform.
\end{lem}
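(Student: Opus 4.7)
The plan is to exploit the assumption that $\alpha$ is isomorphic in order to transport the pushout-based definition of $v \vrond w$ onto a corresponding decomposition of $u$. First, observe that since $\alpha$ is epi between plays in $\DH$ and the legs of plays are monic in $\Chatf$, the three components $h, k, l$ of $\alpha$ are all iso in $\Chatf$. Recall from Section~\ref{subsec:plays} that the composite $v \vrond w$ is concretely realised by a pushout $V = v_{\mathrm{tot}} +_B w_{\mathrm{tot}}$ in $\Chatf$, with source $A \into V$ inherited from $w$ and target $C \into V$ inherited from $v$.

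Next, I transport this pushout along the iso $k \colon V \xto{\iso} U_{\mathrm{tot}}$, where $U_{\mathrm{tot}}$ denotes the total space of $u$. Setting $Z$, $U_1$, $U_2$ to be the images under $k$ of the pushout injections of $B$, $v_{\mathrm{tot}}$, $w_{\mathrm{tot}}$ respectively, I obtain a pushout square exhibiting $U_{\mathrm{tot}}$ as $U_1 +_Z U_2$. This data assembles into cospans $u_2 \colon X \proto Z$ with total space $U_2$ and $u_1 \colon Z \proto Y$ with total space $U_1$, whose vertical composite is canonically isomorphic to $u$. By Theorem~\ref{thm:completeness}, both $u_1$ and $u_2$ are plays, since the correctness conditions transport along isomorphisms in $\Chatf$.

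I then define $\alpha_1 \colon v \to u_1$ and $\alpha_2 \colon w \to u_2$ by restricting $(h, k, l)$ to the appropriate sub-cospans; both are iso, hence epi, since their components are restrictions of iso maps in $\Chatf$. Let $\alpha_3 \colon u \to u_1 \vrond u_2$ be the canonical iso arising from the transported pushout. A direct verification, tracking the coherence isomorphisms governing $\vrond$, yields $(\alpha_1 \vrond \alpha_2) \rond \alpha = \alpha_3$ (up to the canonical associator). Filiformity of $u_1$ is immediate from the iso $u_1 \iso v$, since the causal graph functor of Proposition~\ref{prop:Gfunctor} carries isos to graph isos and filiformity is a graph-theoretic property preserved under iso.

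For uniqueness up to iso, suppose $(Z', u'_1, u'_2, \alpha'_1, \alpha'_2, \alpha'_3)$ is another decomposition meeting the conditions. Then $\alpha'_3$ and $\alpha_3$ both being iso, the composites $u_1 \vrond u_2$ and $u'_1 \vrond u'_2$ are isomorphic. Moreover, the four iso double cells $\alpha_i, \alpha'_i$ exhibit both composites as pushouts of cospans each isomorphic (via $k$) to the pushout defining $v \vrond w$, so the universal property of pushout in $\Chatf$ supplies the unique iso $Z \xto{\iso} Z'$ compatible with all the data. The main obstacle will be dealing with the pseudoness of $\vrond$ carefully: since the composite is only defined up to canonical iso, identifying the transported decomposition with $u$ on the nose requires threading associator and unitor cells through every equation; this is routine but notationally heavy.
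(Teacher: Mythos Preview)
Your approach is correct and essentially equivalent to the paper's: both exploit the fact that an epi in $\DH$ is an iso, and both obtain the splitting of $u$ by pushing the decomposition $v \vrond w$ across $\alpha$. The paper phrases this via the causal graph functor $G_{-}$, locating the single player $b' = \alpha(b)$ of $Z$ and writing $G_u$ as a pushout $\im_{G_\alpha}(G_v) +_{b'} \im_{G_\alpha}(G_w)$, whereas you transport the pushout of presheaves directly along the iso $k$; these yield the same subpresheaves of $u$.

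One remark: your more direct transport argument is cleaner for this lemma, but the paper's causal-graph phrasing is chosen to parallel the subsequent Lemma~\ref{lem:decomp}, where $\alpha$ is no longer assumed epi and transport is unavailable, so one genuinely needs to carve out $u_1$ and $u_2$ via images in $G_u$. Your uniqueness argument is slightly under-specified (you should note that $\alpha'_1,\alpha'_2$ epi forces them iso, so $Z'$ is again the image of $B$ and the comparison iso $Z \to Z'$ is determined by the two factorisations of $k$), but this is a minor gap easily filled.
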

\begin{proof}
  $B$ has just one player, say $b$. Let $b' = \alpha(b)$. Because $\alpha$ is epi, 
  $\alpha$ induces a morphism $G_\alpha \colon G_{v \vrond w} \to G_u$ of graphs, 
  which is also epi.
  So, $G_u$ may be decomposed as a pushout
  \begin{center}
    \Diag{%
      \pbk{m-2-1}{m-1-1}{m-1-2} %
    }{%
      b' \& G_1 \\
      G_2 \& G_u %
    }{%
      (m-1-1) edge[into,labelu={}] (m-1-2) %
      edge[into,labell={}] (m-2-1) %
      (m-2-1) edge[into,labeld={}] (m-2-2) %
      (m-1-2) edge[into,labelr={}] (m-2-2) %
    }
  \end{center}
  with $G_1 = \im_{G_\alpha} (G_v)$ and $G_2 = \im_{G_\alpha}
  (G_w)$. From this one deduces a decomposition of $u$ and $\alpha$.
\end{proof}

\begin{lem}\label{lem:reflepi}
  For any vertically composable $\alpha$ and $\beta$, if $\alpha \vrond \beta$ is
  epi, then so are $\alpha$ and $\beta$.
\end{lem}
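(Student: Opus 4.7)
The plan is to transfer epi-ness from the composite $\alpha \vrond \beta$ to each factor via a whiskering argument based on the interchange law. I treat $\alpha$; the case of $\beta$ is symmetric, with whiskering performed on the opposite (lower) side.

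Suppose $\gamma_1, \gamma_2 \colon \codh(\alpha) \to u''$ in $\DH$ satisfy $\gamma_1 \rond \alpha = \gamma_2 \rond \alpha$; the goal is $\gamma_1 = \gamma_2$. The idea is to extend each $\gamma_i$ to a cell $\tilde\gamma_i$ whose horizontal domain matches $\codh(\alpha \vrond \beta) = \codh(\alpha) \vrond \codh(\beta)$, by vertically composing $\gamma_i$ on top with the horizontal identity cell on $\codh(\beta)$. Since this identity is absorbed by horizontal composition with $\beta$, the interchange law of double cells gives
$$\tilde\gamma_i \rond (\alpha \vrond \beta) \;=\; (\gamma_i \rond \alpha) \vrond \beta.$$
The hypothesis then yields $\tilde\gamma_1 \rond (\alpha \vrond \beta) = \tilde\gamma_2 \rond (\alpha \vrond \beta)$, so that epi-ness of $\alpha \vrond \beta$ forces $\tilde\gamma_1 = \tilde\gamma_2$. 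Because the whiskering operation $\gamma_i \mapsto \tilde\gamma_i$ is invertible (the inverse takes a cell to its bottom, $\alpha$-side component, which is canonically determined), we conclude $\gamma_1 = \gamma_2$, establishing that $\alpha$ is epi. The same template, but whiskering $\delta_i \mapsto \tilde\delta_i = \id_{\domh(\alpha)} \vrond \delta_i$ on the bottom, handles epi-ness of $\beta$ from a test equality $\delta_1 \rond \beta = \delta_2 \rond \beta$.

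The main technical obstacle, as in many arguments in Section~\ref{sec:playgrounds}, is handling the pseudoness of $\D$: the horizontal identity cell on $\codh(\beta)$ does not vertically compose with $\gamma_i$ on the nose, since $\domv(\gamma_i)$ need not literally be a horizontal identity, and the relevant absorption $\id \rond \beta = \beta$ holds only up to coherent isomorphism. As the paper explicitly states, pseudoness is treated sloppily throughout; a fully rigorous version of this argument would insert the appropriate coherence cells, which all cancel by the interchange and unit coherence axioms of a pseudo double category, leaving the above computation intact.
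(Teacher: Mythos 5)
Your proof attacks a different statement from the one the paper needs, because it misreads what ``epi'' means here. In this part of the paper (as the companion Lemma~\ref{lem:descentview} makes clear: ``Because $\alpha$ is epi, $\alpha$ induces a morphism $G_\alpha \colon G_{v \vrond w} \to G_u$ of graphs, which is also epi'', and its images then \emph{cover} $G_u$), ``epi'' means that the morphisms of presheaves constituting the double cell of $\Dccs$ are epimorphisms, i.e.\ surjections; since cell components in $\Dccs$ are monic by definition, this is also what makes the parenthetical ``epimorphic (in $\DH$, hence isomorphic)'' sensible. The intended ``easy'' proof is therefore concrete: vertical composition of cells is computed by pushout along monos, so the middle component of $\alpha \vrond \beta$ is a map $U +_Y W \to U' +_{Y'} W'$, where $U'$ and $W'$ embed in the pushout and intersect exactly in the shared position $Y'$. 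If some element of $U'$ were hit only through $W$, it would lie in $Y'$ together with the image of its preimage, so a non-initial element of $W$ (a move, or a player or channel created by a move --- properties preserved by any presheaf map) would be sent to an initial element of $W'$, contradicting condition (ii) of Theorem~\ref{thm:completeness} for the plays involved; hence the $U$-component of $\alpha$ is surjective, and symmetrically (using condition (iii): final positions contain all channels and exactly the final players, and morphisms preserve being a target) so is the $W$-component of $\beta$. Right-cancellability in $\DH$, which is what your argument aims at, is neither the intended reading nor what is used downstream.

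Even on its own terms the whiskering argument has genuine gaps. The composite $\gamma_i \vrond 1_{\codh(\beta)}$ is in general simply not defined: the horizontal boundary of $\gamma_i$ on the side where $\codh(\beta)$ attaches is some non-identity monomorphism $B' \to B''$, whereas the identity cell on $\codh(\beta)$ has boundary $\id_{B'}$. This is a mismatch of cell boundaries, not a pseudoness issue --- coherence isomorphisms only mediate between composites of vertical morphisms and cannot alter the horizontal boundary of a double cell; to repair it you would need to produce a cell out of $\codh(\beta)$ with boundary $B' \to B''$, e.g.\ via the fibration axiom, which is substantive extra structure, not unit coherence. Finally, the step from $\gamma_1 \vrond 1 = \gamma_2 \vrond 1$ to $\gamma_1 = \gamma_2$ is unjustified: a vertical composite of double cells does not determine its factors in a general pseudo double category, so ``the $\alpha$-side component is canonically determined'' needs proof; in $\Dccs$ it can be salvaged using monicity of the cospan legs, but that is again a concrete argument of essentially the same nature and length as the direct surjectivity proof sketched above.
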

\begin{proof}
  Easy.
\end{proof}

\begin{proof}[Proof of Lemma~\ref{lem:decomp}]
  The double cell $\alpha$ induces morphisms of graphs $G_v \to G_u
  \ot G_w$, by Proposition~\ref{prop:Gfunctor}.  Let $$u_1 = \bigcap
  \ens{u' \subseteq u \aalt (Y \subseteq u') \wedge (\im_\alpha (G_v)
    \subseteq G_{u'})}.$$ Thus, $v \to u$ factors as $v \to u_1 \to
  u$.  Let $Z$ be the position containing all channels of $u_1$, and
  all final players of $u_1$.  Further let ${\uparrow} Z$ denote the
  full subgraph of $G_u$ containing all vertices $x$ with a path to
  some vertex of $Z$. Let then
$$u_2 = \bigcap \ens{u'' \subseteq u \aalt G_{u''} \supseteq {{\uparrow} Z}}.$$
 The union $u_1 \cup u_2$
  is $u$, i.e., the square
  \begin{center}
    \Diag{%
      \pbk{m-2-1}{m-1-1}{m-1-2} %
    }{%
      Z \& u_1 \\
      u_2 \& u %
    }{%
      (m-1-1) edge[labelu={}] (m-1-2) %
      edge[labell={}] (m-2-1) %
      (m-2-1) edge[labeld={}] (m-2-2) %
      (m-1-2) edge[labelr={}] (m-2-2) %
    }
  \end{center}
  is a pushout, i.e., $u_2 \vrond u_1 \iso u$ in $\Cospan{\Chatf}$. So
  it only remains to prove that $Z \to u_1 \ot X$ and $Y \to u_2 \ot
  Z$ are plays, for which we use
  Theorem~\ref{thm:completeness}. First, $u_1$ and $u_2$, as
  subpresheaves of $u$, both are locally 1-injective. Furthermore,
  $G_{u_1}$ and $G_{u_2}$, as subgraphs of a linear and acyclic graph,
  are also linear and acyclic. Now, by definition of $Z$, $Z \to u_1$
  contains all channels and the final players of $u_1$. Further, since
  $X \subseteq u_1$, being initial in $u$ implies being initial in
  $u_1$, so $Z \to u_1 \ot X$ indeed is a play. Symmetrically, no
  player of $u_1$ not in $Z$ is final, so $Y \subseteq u_2$, and hence
  $Y \to u_2$ indeed contains all channels and final players.
  Finally, the players and channels of $Z$ are precisely the initial
  players and channels of $u_2$.  

  It remains to show that the induced decomposition of $\alpha$ is
  weakly initial.  But any decomposition, inducing a decomposition
  $u'_1 \vrond u'_2$ of $u$, should satisfy $Y \subseteq u'_1$,
  $\im_\alpha (G_v) \subseteq G_{u'_1}$, and $G_{u'_2} \subseteq
  {{\uparrow} Z}$, so, ignoring isomorphisms for readability, $u_1
  \subseteq u'_1$ and $u'_2 \subseteq u_2$, as desired.
\end{proof}

\subsection{CCS as a playground}
We are now ready to prove the decomposition axioms, which entail
Theorem~\ref{thm:playground}. They are proved in
Lemmas~\ref{lem:decompleft} and~\ref{lem:decompsym} below.

Let us start with the following easy lemma.
\begin{lem}\label{lem:causalsplit}
  If $u = u_2 \vrond u_1$, then, in $G_u$
  \begin{itemize}
  \item no player of $u_1$ is reachable from any core of
    $u_2$;
  \item no core of $u_1$ is reachable from any element of
    $u_2$.
  \end{itemize}
\end{lem}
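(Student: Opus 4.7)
The plan is to leverage the pushout structure $U = U_1 +_Z U_2$ underlying vertical composition, where $Z$ denotes the shared intermediate position between $u_1$ and $u_2$. Since $Z$ is a position (dimension $1$), it contains no cores, and the pushout cleanly splits the higher-dimensional elements of $U$ into those of $U_1$ and those of $U_2$; at dimension $\leq 1$, every element lies in $U_1$, in $U_2$, or on the boundary $Z$.

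The key step is a case analysis on the outgoing edges of $G_U$ emanating from any element of $u_2$, showing that they always lead back into $U_2$. A core $\mu$ of $u_2$ has its targets in the initial position of $\mu$'s move, strictly below $\mu$ in the temporal direction, and hence in $U_2 \setminus Z$ (since $Z$ sits at the top of $u_2$ and $\mu$ lies strictly below $Z$). A player $p \in U_2$ has outgoing edges either to its channels (which remain in $U_2$, as channels accompany their players) or to cores of which $p$ is a source; since sources sit in their core's final position, such a core must lie strictly below $p$, hence in $U_2$. A channel $a \in U_2$ has at most one outgoing edge, to the core that created $a$; this core, if it exists, is in $U_2$, because a channel created strictly inside $u_1$ does not belong to $U_2$. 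By induction on path length, any vertex reachable from an element of $u_2$ is still in $U_2$. Cores of $u_1$ are elements of $U_1 \setminus U_2$, establishing the second claim.

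For the first claim, the same induction carries a stronger invariant when the starting point is restricted to be a core of $u_2$: the first edge lands in $U_2 \setminus Z$ (being a target, strictly below $Z$), and the case analysis above preserves the property that all players visited lie in $U_2 \setminus Z$ (channels may drift into $Z$, but their only outgoing edges lead back into cores of $u_2$). Thus no player of $u_1$, whether living strictly inside $u_1$ or on the shared boundary $Z$, is reachable from any core of $u_2$.

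The main subtlety I expect is the careful treatment of the boundary $Z$, whose players and channels are shared between $u_1$ and $u_2$ and could \emph{a priori} act as a bridge. The resolution is the observation that sources and targets of any core sit in the immediate temporal neighbourhood of that core, preventing edges from $u_2$'s cores from crossing $Z$ upward into $u_1$; this is precisely where linearity and acyclicity of $G_U$, packaged inside the pushout decomposition, do the work.
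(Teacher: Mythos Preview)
Your proof is correct and follows essentially the same approach as the paper's: a case analysis on the outgoing edges of $G_u$ showing that from elements of $u_2$ one stays within $u_2$ (up to initial channels of $u_1$), with the stronger invariant for the first claim that players reached from $u_2$-cores avoid the boundary $Z$. The paper compresses this into two sentences (``cores of $u_2$ only reach initial channels of $u_1$'' and ``channels and players of $u_2$ only reach initial players and channels of $u_1$''), while you spell out the induction and the edge-by-edge case analysis explicitly.

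One minor remark: your closing comment that ``linearity and acyclicity of $G_U$\dots\ do the work'' is slightly off the mark. The actual facts driving the argument are (i) targets of a core are non-final players, hence a $u_2$-core's targets cannot lie in $Z$ (the final position of $u_2$), and (ii) players in $Z$ are initial in $u_1$, hence cannot be sources of $u_1$-cores. These come from the correctness criterion for plays (conditions on initial/final players), not from linearity or acyclicity per se. Your temporal phrasing ``strictly below $p$, hence in $U_2$'' encodes exactly this, but the justification is the initial/final characterisation rather than acyclicity.
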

\begin{proof}
  For the first point, cores of $u_2$ only reach initial channels of $u_1$.

  For the second point, we further observe that channel and players of $u_2$
  only reach initial players and channels of $u_1$, hence no core.
\end{proof}

The easiest decomposition axiom is~\axref{views:decomp}.
\begin{lem}\label{lem:decompsym}
  $\D$ satisfies~\axref{views:decomp}.
\end{lem}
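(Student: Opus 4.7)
The plan is to perform a binary case split on the image of the unique core of $b$ under the underlying presheaf map of $\alpha$, and to construct the two forms of the decomposition separately. Write $k_\alpha \colon b \vrond w \to M \vrond u$ for that underlying map, and $\mu_b$ for the unique core of the basic move $b$. Since $M \vrond u$ arises as a pushout of $M$ and $u$ over their shared interface, which consists only of channels (dimension $0$), every element of $M \vrond u$ of positive dimension lies in the image of exactly one of $M$ and $u$. Consequently $k_\alpha(\mu_b)$ lies either in the image of $M$ (Case~1) or in that of $u$ (Case~2), and these are mutually exclusive.

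For Case~1, I would first use that $b$, being a seed extended by its interface, is generated as a sub-presheaf by $\mu_b$ together with that interface, so $k_\alpha|_b$ factors through $M \hookrightarrow M \vrond u$, yielding a cell $\alpha_2 \colon b \to M$. For $w$, I would appeal to the causal graph $G_{M \vrond u}$, which is linear and acyclic by Theorem~\ref{thm:completeness}: any core in the image of $w$ is a strict causal predecessor of $k_\alpha(\mu_b) \in M$, and since no core of $u$ sits causally after a core of $M$, such images must in fact lie in $u$. A parallel argument for players and channels then shows that $k_\alpha|_w$ factors through $u \hookrightarrow M \vrond u$, giving $\alpha_1 \colon w \to u$. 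The intermediate horizontal border $l \colon B \to Y$ is forced by the factorization, and $\alpha_2 \vrond \alpha_1 = \alpha$ follows from the universal property of the pushout defining vertical composition. This matches the left form.

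For Case~2, set $\mu' = k_\alpha(\mu_b) \in u$. The whole of $b$ maps into $u$ by the analogous sub-presheaf argument, and each core in the image of $w$ is a causal ancestor of $\mu'$, hence also in $u$; players and channels follow similarly. Thus $k_\alpha$ itself factors through $u \hookrightarrow M \vrond u$, producing a cell $\alpha_1 \colon b \vrond w \to u$ with a new horizontal border $c \colon C \to Y$ induced by the factorization. The complementary cell $\alpha_2 \colon \idv_C \to M$ is then fully determined by $c$ and the original horizontal border $k \colon C \to Z$, and its vertical composition with $\alpha_1$ recovers $\alpha$ modulo the coherence isomorphism $\idv_C \vrond (b \vrond w) \iso b \vrond w$. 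This matches the right form.

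Uniqueness within each case will come from the rigidity of seeds and moves: by~\axref{discreteness} they admit no non-trivial endomorphisms, so the intermediate horizontal border ($l$ in Case~1, $c$ in Case~2) is uniquely determined by the requirement that vertical composition reproduce $\alpha$. The hard part will be the rigorous causal argument in Case~2, where I must verify that the entire image of $b \vrond w$, not merely its cores, lies within $u$; this should follow from target-linearity of $G_{M \vrond u}$ together with the explicit pushout construction of $M \vrond u$ and the characterization of plays via initial and final positions from Theorem~\ref{thm:completeness}.
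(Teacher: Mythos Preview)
Your case split on where $k_\alpha(\mu_b)$ lands is exactly the paper's, and your Case~2 argument is essentially the paper's: a core $\mu$ of $w$ landing in $M$ would give, via the path $\mu \to \cdots \to \mu_b$ in $G_{b\vrond w}$, a path from an element of $M$ to a core of $u$ in $G_{M\vrond u}$, contradicting Lemma~\ref{lem:causalsplit}. The paper stops there; your explicit construction of $\alpha_1,\alpha_2$ just spells out what is left implicit.

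Your Case~1 reasoning, however, has the causal orientation reversed and invokes the wrong fact. Edges in $G_U$ run source $\to$ core $\to$ target, i.e.\ from later to earlier; since $b$ is the earlier part and $w$ the later one, cores of $w$ have non-trivial paths \emph{to} $\mu_b$ in $G_{b\vrond w}$, so under $G_\alpha$ their images have non-trivial paths to $\core{k_\alpha(\mu_b)}=\mu_M$. They are thus causal \emph{successors} of $\mu_M$, not predecessors. The statement you invoke (``no core of $u$ sits causally after a core of $M$'', i.e.\ Lemma~\ref{lem:causalsplit}) concerns the opposite direction and does not rule out such an image being $\mu_M$ itself. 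What actually rules it out is \emph{acyclicity} of $G_{M\vrond u}$: $M$ has a unique core $\mu_M$, so an image equal to $\mu_M$ would produce a non-trivial cycle through $\mu_M$. With that correction your direct argument goes through. The paper sidesteps all of this in Case~1 by simply citing Lemma~\ref{lem:decomp} (the weak initial object of $\Decomp{\alpha}$) together with correctness.

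One smaller slip: the interface over which $M$ and $u$ are glued is the position $Y$, which contains players (dimension~$1$) as well as channels, so it is not true that every element of positive dimension lies in exactly one of $M$ and $u$. The disjointness holds only from dimension~$2$ upward, which is all you actually need for $\mu_b$.
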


\begin{proof}
  Although the statement is complicated, this is rather easy: $\alpha$
  restricts to a map of presheaves $f \colon b \to (M \vrond u)$, on which we
  proceed by case analysis.

  If $\im (f) \subseteq M$, then by Lemma~\ref{lem:decomp} and
  correctness we are in the left-hand case.  Otherwise, assume that a
  move $\mu' \in M$ is in the image of $\alpha$, say of a move $\mu
  \in w$. We have a path $\mu \to b$ in $b \vrond w$, hence a path
  $\core{\mu'} \to \alpha(b)$ in $M \vrond u$, contradicting
  Lemma~\ref{lem:causalsplit}.
\end{proof}

Let us now attack the last axiom.
\begin{lem}\label{lem:decompleft}
$\D$ satisfies~\axref{leftdecomposition}.
\end{lem}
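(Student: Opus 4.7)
The proof will parallel the construction in the proof of Lemma~\ref{lem:decomp}, now decomposing the domain $u$ of the cell in accordance with the pre-split codomain $w_2 \vrond w_1$. The key technical input, as signalled in Section~\ref{sec:prelim}, is that $\Chatf$ is adhesive (being a presheaf topos), so the defining pushout $w_2 \vrond w_1 \iso w_1 +_Y w_2$ is van Kampen along monos and subobjects descend through it.

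Concretely, view $\alpha$ as a presheaf morphism $\alpha^\sharp \colon u \to (w_2 \vrond w_1)$ compatible with $h$ and $k$. Pull back along the monic inclusions $w_1, w_2 \hookrightarrow (w_2 \vrond w_1)$ to obtain subpresheaves $u'_1 = (\alpha^\sharp)^{-1}(w_1)$ and $u'_2 = (\alpha^\sharp)^{-1}(w_2)$ of $u$, with intersection $(\alpha^\sharp)^{-1}(Y)$; note that $A \subseteq u'_1$ and $B \subseteq u'_2$ by compatibility of $\alpha$ with $h$ and $k$. Then, mimicking the construction in the proof of Lemma~\ref{lem:decomp}, define $u_1 \supseteq u'_1 \cup A$ as the minimal subpresheaf of $u$ such that $A \hookrightarrow u_1 \hookleftarrow C$ satisfies the correctness criterion of Theorem~\ref{thm:completeness}, where $C$ consists of the channels of $u_1$ together with its final players; define $u_2$ and the horizontal morphism $l \colon C \to Y$ dually, the latter arising as the restriction of $\alpha^\sharp$ to $C \subseteq u'_1 \cap u'_2$. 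Verify correctness using the induced morphism $G_{\alpha^\sharp} \colon G_u \to G_{w_2 \vrond w_1}$ from Proposition~\ref{prop:Gfunctor}: linearity and acyclicity of $G_u$, combined with the fact that the preimage split respects causal order, imply that the induced cut yields two valid plays glued along $C$. Adhesivity then supplies an isomorphism $u_1 +_C u_2 \iso u$, producing $\alpha_3$, while restrictions of $\alpha^\sharp$ yield $\alpha_1 \colon u_1 \to w_1$ and $\alpha_2 \colon u_2 \to w_2$ whose vertical composite agrees with $\alpha$ through $\alpha_3$.

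For essential uniqueness, any competing decomposition $(C', l', u''_1, u''_2, \alpha''_1, \alpha''_2, \alpha''_3)$ must satisfy $u'_1 \cup A \subseteq u''_1$ (since $\alpha''_1$ must restrict $\alpha^\sharp$), hence $u_1 \subseteq u''_1$ by minimality, and dually $u_2 \subseteq u''_2$; both inclusions are then forced to be isomorphisms by $u''_2 \vrond u''_1 \iso u \iso u_2 \vrond u_1$ and pushout cancellation in the adhesive setting. The main obstacle I anticipate lies in the minimal-closure step: one must verify simultaneously that all channels needed for correctness are present, that the frontier players are truly final in $u_1$, and that every causally necessary move is included — all this while $\alpha$ is arbitrary rather than having a view-shaped source as in Lemma~\ref{lem:decomp}. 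This will require a case analysis paralleling that in the proof of Lemma~\ref{lem:decomp}, with particular care around synchronisations, whose two-player structure makes the "one side of the cut" constraint slightly delicate.
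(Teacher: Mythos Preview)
Your core idea --- pull back $u$ along the inclusions $w_1, w_2 \hookrightarrow w$ and invoke the Van Kampen property of the pushout $w = w_1 +_Y w_2$ in the adhesive category $\Chatf$ to recover $u \cong u_1 +_C u_2$ --- is exactly the paper's. But you then take an unnecessary and potentially problematic detour. The paper stops at the pullbacks: $u_1 := u \times_w w_1$, $u_2 := u \times_w w_2$, and $C := u_1 \times_{w_1} Y$ already satisfy the correctness criterion of Theorem~\ref{thm:completeness} with no enlargement needed; the verification is a direct case analysis using Lemmas~\ref{lem:causalsplit}, \ref{lem:causalcompo}, \ref{lem:presfinal}, and~\ref{lem:mapfib}. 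Your minimal-closure step, imported from Lemma~\ref{lem:decomp}, is not only superfluous but in tension with your own next move: the Van Kampen argument requires exactly the pullback squares, so if the closure enlarged $u'_1$ or $u'_2$ nontrivially you would lose the pushout decomposition you rely on. (In fact it does not enlarge them, precisely because the pullbacks already work --- but that is what you should be proving directly.)

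For essential uniqueness, the paper's argument is cleaner than your minimality-plus-cancellation sketch. By Lemma~\ref{lem:morpbk}, in any double cell both the top and bottom squares are pullbacks; applying this to the cells $\alpha''_1$ and $\alpha''_2$ of a competing decomposition forces the back faces of the resulting cube to be pullbacks, whence adhesivity forces the front faces to be pullbacks too, pinning down $u''_1$, $u''_2$, and $C'$ up to canonical isomorphism. Your appeal to ``pushout cancellation in the adhesive setting'' is not a standard fact and would need its own justification.
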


We need a few lemmas.

\begin{lem}\label{lem:causalcompo}
  For any plays $A \xproto{u_1} B \xproto{u_2} C$, for any player or channel
  $x \in u_2$ and core $\mu \in u_1$, there is no edge $x \to \mu$ in
  $u_2 \vrond u_1$.
\end{lem}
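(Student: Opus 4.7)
The plan is to combine source-linearity of $G_{u_2 \vrond u_1}$, which is a consequence of the correctness criterion of Theorem~\ref{thm:completeness}, with a direct analysis of how players and channels come into existence in a play.

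An edge $x \to \mu$ in $G_U$ with $\mu$ a core always witnesses that $x$ was \emph{created} by $\mu$: by the definition of $G_U$, such an edge is either a source edge, in which case $x$ is a player appearing in the post-position of $\mu$'s move but not in its pre-position, or a channel-creation edge, in which case $x$ is a new channel introduced by $\mu$. By source-linearity of $G_{u_2 \vrond u_1}$, each player or channel is created by at most one such core.

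Under the intended reading that ``$x \in u_2$'' means $x$ lies strictly in $u_2$, i.e., $x \in u_2 \setminus B$, the correctness property of $u_2$ (Theorem~\ref{thm:completeness}, condition~\ref{cond:X}) forces $x$ to be created by some core $\mu'$ of $u_2$: indeed, the initial position $B$ of $u_2$ contains exactly the uncreated players and channels of $u_2$, so any $x \in u_2$ outside $B$ must be created within $u_2$. This yields an edge $x \to \mu'$ in $G_{u_2 \vrond u_1}$. Together with the presumed edge $x \to \mu$, source-linearity gives $\mu = \mu'$. However, $\mu \in u_1$ by hypothesis, while $\mu' \in u_2$. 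The pushout identifications in $u_2 \vrond u_1$ glue $u_1$ and $u_2$ precisely along $B$, and since cores have dimension at least $2$ while $B$ is a position of dimension $1$, $B$ contains no cores. Hence $u_1$ and $u_2$ share no cores, and $\mu = \mu'$ is the sought contradiction.

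The only real obstacle is clarifying the interpretation of ``$x \in u_2$''; it must exclude the shared boundary $B$, for otherwise the statement fails (a channel or source player of a final move of $u_1$ would provide a counterexample). Once that is settled, the argument is a small packaging of Theorem~\ref{thm:completeness} together with the definitional bookkeeping of causal edges, with no further subtle step.
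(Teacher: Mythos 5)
Your proposal has a genuine gap, and it stems from the orientation conventions. In $A \xproto{u_1} B \xproto{u_2} C$ we have $u_1 \colon A \proto B$ and $u_2 \colon B \proto C$, and a vertical morphism $Y \proto X$ has $X$ as its \emph{initial} and $Y$ as its \emph{final} position; so $B$ is the final position of $u_2$ and the \emph{initial} position of $u_1$, and in the composite $u_2 \vrond u_1$ the play $u_2$ comes first. You instead treat $B$ as the initial position of $u_2$ (and, implicitly, as the final position of $u_1$). This breaks the argument in two places. First, your claim that the statement fails for $x \in B$ and must be restricted to $x \in u_2 \setminus B$ is wrong: your would-be counterexamples (a channel created by, or a source player of, a last move of $u_1$) are by definition not initial in $u_1$, hence by condition~\ref{cond:X} of Theorem~\ref{thm:completeness} they do not lie in $B$, hence not in $u_2$ at all. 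The lemma is true for every $x \in u_2$, and the case $x \in B$ is the crux of the proof rather than an exception. Second, your appeal to condition~\ref{cond:X} for $u_2$ is misapplied: the initial position of $u_2$ is $C$, not $B$, so the dichotomy ``$x \notin B$ implies $x$ is created by a core of $u_2$'' is false --- for instance a player of $C$ that is the target of a move of $u_2$ lies in $u_2 \setminus B$ yet is initial in $u_2$ and points to no core of $u_2$. Such $x$ are simply not covered by your case analysis, so even after repairing the restriction the proof would be incomplete.

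The intended argument is shorter and needs neither source-linearity nor any restriction of the statement. An edge $x \to \mu$ into a core means that $x$ is a source of $\mu$ or a channel created by $\mu$; in either case $x$ is of the form $\mu \cdot f$, hence lies in $u_1$ (subpresheaves are closed under the action). Since also $x \in u_2$ and the composite is glued along $B$ (a pushout of monos, so the overlap is exactly $B$), we get $x \in B$. But $B$ is the initial position of $u_1$, so by condition~\ref{cond:X} of Theorem~\ref{thm:completeness} $x$ is initial in $u_1$, i.e.\ it is the source of no move of $u_1$ and is created by no move of $u_1$ --- contradicting the very existence of the edge $x \to \mu$.
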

\begin{proof}
  The existence of $e \colon x \to \mu$ implies $x \in B$, hence $x$
  initial in $u_1$, which contradicts the very existence of~$e$.
\end{proof}

\begin{lem}\label{lem:presfinal}
  Morphisms of plays preserve finality.
\end{lem}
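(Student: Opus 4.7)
The plan is to unfold what a morphism of plays amounts to and then show that finality, being determined by membership in the final position, is transported by the horizontal-edge component of the morphism.

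Concretely, a morphism $\alpha \colon u \to u'$ between plays $u \colon X \proto Y$ and $u' \colon X' \proto Y'$ in $\DH$ is, by the definition of $\Dccs \subseteq \Dccso$, a commuting diagram of the shape~\eqref{eq:doublecelll}, i.e., a triple $(h,k,l)$ with $h \colon X \to X'$, $l \colon Y \to Y'$, and $k \colon U \to V$ monic in $\Chatf$, satisfying $k \rond s = s' \rond h$ and $k \rond t = t' \rond l$, where $X \xinto{s} U \xotni{t} Y$ and $X' \xinto{s'} V \xotni{t'} Y'$ are the underlying cospans. A player of $u$ is an element $x$ of $U[n]$ for some $n$, and the action of $\alpha$ on players is given by $k$.

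The characterisation given just above the definition of ``initial'' yields: a player $x$ of $u$ is final iff it has no incoming core-edge in $G_u$, iff (by Theorem~\ref{thm:completeness}~\ref{cond:Y}) it lies in the image of $s \colon X \into U$. So if $x$ is final in $u$, we may pick $\tilde{x} \in X$ with $s(\tilde{x}) = x$; then the commuting square $k \rond s = s' \rond h$ gives
\[
 k(x) \;=\; k(s(\tilde{x})) \;=\; s'(h(\tilde{x})),
\]
so $k(x)$ lies in the image of $s'$ and is therefore a final player of $u'$, again by Theorem~\ref{thm:completeness}~\ref{cond:Y}. This establishes the lemma, and there is no real obstacle: the content is essentially that horizontal morphisms in a double cell must match final positions on the nose.
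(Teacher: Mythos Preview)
Your proof is correct and follows exactly the paper's approach: a final player lies in the final position $X$, hence its image under $h$ lies in $X'$, hence is final in the codomain. The paper's proof is the same one-liner, just without explicitly unfolding the commuting square or citing Theorem~\ref{thm:completeness}.
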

\begin{proof}
  If a player is final in the domain, then it is in the final
  position, hence has an image in the final position of the codomain,
  hence is final there.
\end{proof}

\begin{lem}\label{lem:mapfib}
  For any map $\alpha \colon u \to w$ in $\DH$, for any player $x$ in
  $u$ and edge $e' \colon \mu' \to \alpha (x)$ from a core in $w$,
  there exists a core $\mu \in u$ and an edge $e \colon \mu \to x$ in
  $u$ such that $G_\alpha (e) = e'$.
\end{lem}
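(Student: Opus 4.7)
\medskip

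The plan is to exploit the functoriality of the causal-graph construction (Proposition~\ref{prop:Gfunctor}) together with the target-linearity of $G_w$ and the preservation of finality under morphisms of plays (Lemma~\ref{lem:presfinal}). Concretely, I view $\alpha$ as a double cell carrying an underlying map of presheaves $k\colon u\to w$, so that $G_\alpha$ is just $G_k\colon G_u\to G_w$ by functoriality of $G_{-}$.

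First, I would show that $x$ cannot be final in $u$. Indeed, if $x$ were final in $u$, then Lemma~\ref{lem:presfinal} would imply that $k(x)$ is final in $w$, i.e., by the characterisation just above Definition~\ref{def:cw:successful}, that $k(x)$ has no incoming edge from a core in $G_w$. This contradicts the assumption that $e'\colon\mu'\to k(x)$ exists in $G_w$. Hence $x$ is not final in $u$, so by the same characterisation there exist a core $\mu\in u$ and an edge $e\colon \mu\to x$ in $G_u$.

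It remains to check that $e$ can (or must) be chosen so that $G_\alpha(e)=e'$. Functoriality of $G_{-}$ yields an edge $G_\alpha(e)\colon G_\alpha(\mu)\to k(x)$ in $G_w$ originating at a core. Theorem~\ref{thm:completeness}(iv) tells us that $G_w$ is target-linear, so $\mu'$ is the \emph{unique} core with an edge to $k(x)$; therefore $G_\alpha(\mu)=\mu'$. Since $G_w$ is a simple graph (at most one edge between any two given vertices, as noted right after its definition), the edge $G_\alpha(e)$ from $\mu'$ to $k(x)$ must coincide with $e'$, as required.

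I do not anticipate any serious obstacle: every ingredient is already in place. The only minor care needed is the appeal to Lemma~\ref{lem:presfinal} in the first step, which is why we phrase the dichotomy as ``final vs.\ non-final'' rather than ``initial vs.\ non-initial''; the ``initial'' version would also work symmetrically, but the ``final'' version matches the direction of the edge $e'$ we are lifting.
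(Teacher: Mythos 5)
Your proof is correct and takes essentially the same route as the paper's: rule out finality of $x$ (the paper inlines the argument you delegate to Lemma~\ref{lem:presfinal}), obtain an edge $e \colon \mu \to x$ in $G_u$, and conclude $G_\alpha(\mu) = \mu'$ by target-linearity of $G_w$, whence $G_\alpha(e) = e'$ since the causal graph is simple. One cosmetic slip: the characterisation of final players you invoke is the unlabelled lemma following the definition of finality in Section~\ref{sec:ccs}, not anything near Definition~\ref{def:cw:successful}.
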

\begin{proof}
  Let first $X \to u \ot Y$ and $X' \to w \ot Y'$ be the considered
  morphisms.

  Then, observe that $x$ is not final in $u$, for otherwise it would
  be in $X$, hence $\alpha (x)$ would be in $X'$ and final,
  contradicting the existence of $e'$.

  So there exists $e \colon \mu \to x$ in $u$. But now, by
  target-linearity, $G_\alpha(\mu) = \mu'$, which entails the result.
\end{proof}

\begin{lem}\label{lem:morpbk}
  In any double cell~\eqref{eq:doublecelll}, both squares are pullbacks.
\end{lem}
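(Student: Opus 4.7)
My plan is to leverage the correctness criterion (Theorem~\ref{thm:completeness}) together with the fact that all maps in~\eqref{eq:doublecelll} are monic. Since pullbacks in $\Chatf$ are computed pointwise and pullbacks of monomorphisms are just intersections, the pullback of $X' \into V \otni U$ (resp.\ $Y' \into V \otni U$) is the intersection $X' \cap U$ (resp.\ $Y' \cap U$) as a subpresheaf of $V$. Hence it suffices to prove the two equalities $X = X' \cap U$ and $Y = Y' \cap U$ inside $V$.

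For the top square, Theorem~\ref{thm:completeness} identifies $X$ (the final position of $U$) with all channels of $U$ together with its final players, and similarly for $X'$ in $V$. Since $X'$ contains no element of dimension $\geq 2$, the intersection $X' \cap U$ consists of the channels of $U$ together with those final players of $V$ that happen to lie in $U$; in particular, moves and synchronisations of $U$ are automatically excluded. So the equality reduces to showing that a player of $U$ is final in $U$ iff it is final in $V$. The direction `final in $V$ implies final in $U$' is immediate, since every move of $U$ is also a move of $V$. Conversely, if $p \in U$ is final in $U$ then $p = s(\tilde x)$ for some $\tilde x \in X$; by commutativity of the top square, $k(p) = s'(h(\tilde x))$ lies in $X'$ and is therefore final in $V$.

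The bottom square is handled symmetrically: Theorem~\ref{thm:completeness} identifies $Y$ and $Y'$ with the initial players and initial channels of $U$ and $V$ respectively, and the horizontal morphism $l \colon Y \to Y'$ plays the role that $h$ played above. Here both players \emph{and} channels need attention — a channel is initial iff it is not created by any move — but the argument is the same in both cases: initial-in-$V$ implies initial-in-$U$ because the moves of $U$ form a subset of the moves of $V$; conversely, an initial element of $U$ lies in $Y$, hence its image via $l$ lies in $Y'$ and is initial in $V$, and by commutativity of the bottom square this image agrees with the image of the element in $V$.

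I expect the only minor obstacle to be the dimensional bookkeeping that ensures higher-dimensional elements of $U$ (moves, syncs) are properly excluded from $X' \cap U$ and $Y' \cap U$ — but this is immediate from the fact that $X'$ and $Y'$ contain only elements of dimension $\leq 1$. No further combinatorial argument is needed beyond the characterisation of plays supplied by Theorem~\ref{thm:completeness}.
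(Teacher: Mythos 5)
Your proof is correct and takes essentially the same route as the paper's: both reduce the claim, via the characterisation of final and initial positions in Theorem~\ref{thm:completeness}, to showing that finality (resp.\ initiality) of players and channels is preserved and reflected along the monic $k$, the preservation direction going through the final (resp.\ initial) position exactly as in your commutativity argument. Your explicit reformulation of the pullbacks as intersections of subobjects, with the dimension bookkeeping, is just a spelled-out version of the paper's terser argument.
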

\begin{proof}
  $X$ must consist precisely of all final players and channels of
  $G_U$, which must also be final in $G_V$, so finality in $G_U$
  implies finality in $G_V$. Conversely, any player or channel mapped
  to a final one in $G_V$ has to be final. So $X$ is a pullback of $U$
  and $X'$.  The lower square being a pullback follows from similar
  reasoning.
\end{proof}

\begin{proof}[Proof of Lemma~\ref{lem:decompleft}]
  Consider any $\alpha$, and construct $C, u_1, u_2,$ and the
  morphisms in Figure~\ref{fig:proof:lem:decompleft}, as follows. %
  \begin{figure}[t]
    \centering
      \Diag (.4,.6) {%
        \pbk[.6cm]{w_1}{w}{w_2} %
        \path[->,draw] %
        (C) edge (Y) %
        (Y) edge (w_1) %
        edge (w_2) %
        (w_1) edge (w) %
        (w_2) edge (w) %
        (C) edge[fore] (u_1) %
        edge[fore] (u_2) %
        (u_1) edge[fore] (u) %
        (u_2) edge[fore] (u) %
        (u) edge[fore,labelu={f}] (w) %
        (u_1) edge[fore,labelu={f_1}] (w_1) %
        (u_2) edge[fore,labeld={f_2}] (w_2) %
        ; %
        \pbk[.6cm]{u}{u_1}{w_1} %
        \pbk[.6cm]{u}{u_2}{w_2} %
        \pbk[.6cm]{u_2}{C}{Y} %
        \pbk[.6cm]{u_1}{C}{Y} %
        \pullback[.6cm]{u_1}{u}{u_2}{draw,-,fore} %
        \path[->,draw] %
        (A) edge[dashed] (u_1) %
        edge[bend left,fore] (u) %
        edge[labelu={f_s}] (X) %
        (X) edge (w_1) %
        edge[bend left,fore] (w) %
        (B) edge[dashed] (u_2) %
        edge[bend right,fore] (u) %
        edge[labeld={f_t}] (Z) %
        (Z) edge (w_2) %
        edge[bend right,fore] (w) %
        ; %
      }{%
        \& \& |(A)| A \& \& \& \& \& \& \& |(X)| X \& \\ %
        \& \& \& \& \& \& \& \& \& \& \\ %
        \& \& \& \& \& \& \& \& \& \& \\ %
        \& \& |(u_1)| u_1 \& \& \& \& \& \& \& |(w_1)| w_1 \& \\ %
        |(C)| C \& \& \& \& \& \& \& |(Y)| Y \& \& \& \\ %
        \& \& \& \& \& \& \& \& \& \& \\ %
        \& \& \& |(u)| u \& \& \& \& \& \& \& |(w)| w \\ %
        \& |(u_2)| u_2 \& \& \& \& \& \& \& |(w_2)| w_2 \& \& \\ %
        \& \& \& \& \& \& \& \& \& \& \\ %
        \& \& \& \& \& \& \& \& \& \& \\ %
        \& |(B)| B \& \& \& \& \& \& \& |(Z)| Z, \& %
      }{%
      }
    \caption{Proof of Lemma~\ref{lem:decompleft}}
\label{fig:proof:lem:decompleft}
\end{figure}%
First, let $u_1$ be the pullback $u \times_{w} w_1$, and then $C = u_1
\times_{w_1} Y$. Let then $u_2 = u \times_w w_2$, and the arrow $C \to
u_2$ be induced by universal property of pullback. By the pullback
lemma, $C = u_2 \times_{w_2} Y$. Because presheaf categories are
adhesive~\cite{DBLP:conf/fossacs/LackS04}, $\Chatf$ is, and, $Y \to
w_1$ being monic, we have a Van Kampen square. Thus, by the main axiom
for adhesive categories, $u$ is a pushout $u_1 +_{C} u_2$, i.e., $u
\iso u_2 \vrond u_1$ in $\Cospan{\Chatf}$. Letting $\alpha_i$ be the
arrow $u_i \to w_i$, for $i = 1,2$, this yields the desired
decomposition of
$\alpha$.  \\

We still need to show that $A \to u_1 \ot C$ and $C \to u_2 \ot B$ are
plays, and that the obtained decomposition is unique.  Uniqueness
follows from adhesivity of $\Chat$ and Lemma~\ref{lem:morpbk}. Indeed,
any decomposition looks like Figure~\ref{fig:proof:lem:decompleft},
except that $u_1$, $u_2$, and $C$ are not \emph{a priori} obtained by
pullback. But by Lemma~\ref{lem:morpbk}, both back faces have to be
pullbacks, hence so are the front faces by adhesivity.

Let us finally show that $u_1$ and $u_2$ are plays. It is easy to see
that non-linearity or non-acyclicity of $G_{u_1}$ (resp.\ $G_{u_2}$)
would entail non-linearity or non-acyclicity of $u$ or $w_1$ (resp.\
or $w_2$). Local 1-injectivity
is also easy. \\

Let us now prove the missing conditions for $A \to u_1 \ot C$.

a) Any player $x$ of $u_1$ in the image of $A$ is final, for
otherwise its image in $w_1$ would be in the image of $X$ and
non-final. 

b) Conversely, if a player $x \in u_1$ is final but not in $A$, then
its image in $u$ must be non-final by Theorem~\ref{thm:completeness},
because $u_1 \to u$ is monic. But then there is a core $\mu$ of $u_2$
with a path $\mu \to x$ in $G_u$, whose images in $w$ yield a path
from a core of $w_2$ to a player of $w_1$, contradicting
Lemma~\ref{lem:causalsplit}. So $A$ contains precisely the final
players of $u_1$. 

c) Now, if a channel $x \in u_1$ is not in $A$, then its image in $u$
must be in $A$, hence $u_1 \to u$ cannot be mono, so neither can $w_1
\to w$, so neither can $Y \to w_2$, contradiction.

d) Finally, by construction, $C$ contains precisely the initial
players
and channels of $u_1$.\\

Now, for $C \to u_2 \ot B$.

a) By universal property of pullback, $C$ contains all channels of
$u_2$. 

b) For players, clearly, for any player $x$ in $C$, $x$ is final in
$u_2$. Indeed, otherwise, there would be a path $\mu \to x$ from a
core $\mu$ in $u_2$, yielding a path $f_2 (\mu) \to f_2 (x)$ in
$w_2$. But since $x$ is in $C$, $f_2 (x) \in Y$, which hence contains
a non-final player, contradiction.

c) Conversely, if $x$ is final in $u_2$, then $x' = f_2 (x)$ is final
in $w_2$. Indeed, otherwise, there would be an edge $\mu' \to x'$ from
a core in $w_2$, so, by Lemma~\ref{lem:mapfib}, an edge $\mu \to x$ in
$u$ with $f (\mu) = \mu'$. But then, $\mu \in u_2$, so $x$ cannot be
final. This shows that $x'$ is final in $w_2$. But then $x' \in Y$,
so, because $C = u_2 \times_{w} Y$, $x \in C$.

d) Consider now any player or channel $x$ initial in $u_2$. First, $x$ is also
initial in $u$: otherwise, there would be an edge $x \to \mu$ to a
core in $u$, with $\mu \in u_1$, hence an edge $f (x) \to f (\mu)$ in
$w$ from a channel of $w_2$ to a core of $w_1$, which is impossible by
Lemma~\ref{lem:causalcompo}. So $x$ is initial in $u$, hence $x \in
B$.

e) Now, for any player or channel $x \in B$, $x$ is initial in $u$, hence $x$ is
\emph{a fortiori} initial in $u_2$. 
\end{proof}

\section{Conclusion and perspectives}\label{sec:conc}

\subsection{Conclusion}
We have described a denotational semantics of CCS based on presheaves,
with a strong game-semantical flavour. Some aspects of the approach
look promising to us. 

First, our result is encouraging for potential applications of Kleene
coalgebra to programming language theory, i.e., ascribing a semantics
to the `rule of the game' rather than attempting to organise
operational semantics into some categorical structure.

Second, our use of techniques from categorical combinatorics (e.g.,
defining positions and plays as finite presheaves) provide a
high-level, yet rigorous toolbox for dealing with string diagrams.
(Compare, e.g., with available definitions of linear logic proof nets
or interaction nets.)

Third, our notion of play encompassing both views and closed-world
plays, and its rich notion of morphism yields a convincing interplay
between strategies (presheaves on views) and behaviours (presheaves on
plays). In particular,
\begin{itemize}
\item passing from one to the other
is handled by standard categorical constructions,
\item the general syntax and \lts{} for strategies provides a link to
  syntactic approaches.
\end{itemize}

Other aspects of our model are not as satisfactory.

First of all, the notion of playground is very complicated.  In work
in progress on a similar approach for $\pi$-calculus, we bypass the
intermediate \lts{} $\TTT_\D$ of process terms, because it does not help so
much ---  strategies are already really close to $\pi$-calculus
terms.  This seems to hint that the main result of playground theory
is actually the characterisation of strategies by the syntax of
Section~\ref{subsec:syntax:strats}. The good point is: this result
does not at all need all axioms for playgrounds.

A second negative point is that some proofs may probably be improved.
E.g., our proof that $\theta \colon \ccs \to \TTT_{\Dccs}$ is included
in weak bisimilarity is a bit of a nightmare, with no apparent good
reason.  Similarly, we know already that our constructions for showing
the fibration axiom~\preaxref{fibration} may be improved. Indeed, the
trick we use to restore synchronisations after restriction rests upon
a \emph{factorisation system}~\cite{FK,Joyal:ncatlab:facto}.  In our
current work on $\pi$, we use factorisation systems to prove the
fibration axiom in a much more direct way (which was prompted by the
fact that the method used here does not apply).
 
\subsection{Perspectives}
Beyond these rather technical concerns, we plan to adapt our semantics
to more complicated calculi like $\pi$, the Join and Ambients calculi,
calculi with passivation, functional calculi, possibly with extra
features (e.g., references, data abstraction, encryption), with a view
to eventually generalising it, perhaps to some SOS format.  In
particular, adapting the approach to functional calculi should clarify
the relationship with Hyland-Ong innocence.  In work in progress
mentioned above, we construct a playground for $\pi$, whose proof of
full abstraction remains to be completed. More speculative directions
include
\begin{itemize}
\item designing a general way of constructing playgrounds
  automatically from more elementary data; work in progress reveals
  that this is a very subtle task;
\item defining a notion of morphisms for playgrounds, which should
  induce translation functions between strategies, and find sufficient
  conditions for such morphisms to preserve, resp.\ reflect testing
  equivalences;
\item generalising playgrounds to apply them beyond programming
  language semantics; in particular, preliminary work shows that
  playgrounds easily account for cellular automata; this raises the
  question of how morphisms of playgrounds would compare with various
  notions of simulations between cellular
  automata~\cite{DBLP:journals/tcs/DelormeMOT11};
\item incorporate quantitative aspects from Kleene coalgebra into
  playground theory; this may start by refining fair testing
  equivalence to keep track of the probability of passing each test
  successfully.
\end{itemize}

\bibliographystyle{plainnat}
\bibliography{../common/bib}

\ifx\abbrevbib\undefined \def\abbrev#1#2{#1} \else \def\abbrev#1#2{#2} \fi
\begin{thebibliography}{54}
\providecommand{\natexlab}[1]{#1}
\providecommand{\url}[1]{\texttt{#1}}
\expandafter\ifx\csname urlstyle\endcsname\relax
  \providecommand{\doi}[1]{doi: #1}\else
  \providecommand{\doi}{doi: \begingroup \urlstyle{rm}\Url}\fi

\bibitem[Abramsky and Melli{\`{e}}s(1999)]{DBLP:conf/lics/AbramskyM99}
Samson Abramsky and Paul{-}Andr{\'{e}} Melli{\`{e}}s.
\newblock Concurrent games and full completeness.
\newblock In  \citet{DBLP:conf/lics/1999}, pages 431--442.
\newblock \doi{10.1109/LICS.1999.782638}.

\bibitem[Abramsky et~al.(2000)Abramsky, Jagadeesan, and Malacaria]{ajm}
Samson Abramsky, Radha Jagadeesan, and Pasquale Malacaria.
\newblock Full abstraction for {PCF}.
\newblock \emph{\abbrev{Information and Computation}{Inf.\ Comput.}},
  163\penalty0 (2):\penalty0 409--470, 2000.
\newblock \doi{10.1006/inco.2000.2930}.

\bibitem[Ad{\'a}mek(1994)]{Adamek}
J.~Ad{\'a}mek, J.;~Rosicky.
\newblock \emph{Locally Presentable and Accessible Categories}.
\newblock Cambridge University Press, 1994.
\newblock \doi{10.1017/CBO9780511600579}.

\bibitem[Berry and Boudol(1990)]{DBLP:conf/popl/BerryB90}
G{\'e}rard Berry and G{\'e}rard Boudol.
\newblock The chemical abstract machine.
\newblock In \emph{POPL}, pages 81--94, 1990.
\newblock \doi{10.1145/96709.96717}.

\bibitem[Bonchi et~al.(2009)Bonchi, Bonsangue, Rutten, and
  Silva]{DBLP:conf/concur/BonchiBRS09}
Filippo Bonchi, Marcello~M. Bonsangue, Jan J. M.~M. Rutten, and Alexandra
  Silva.
\newblock Deriving syntax and axioms for quantitative regular behaviours.
\newblock In \emph{CONCUR}, volume 5710 of \emph{LNCS}, pages 146--162.
  Springer Verlag, 2009.
\newblock \doi{10.1007/978-3-642-04081-8_11}.

\bibitem[Bonsangue et~al.(2009)Bonsangue, Rutten, and
  Silva]{DBLP:conf/fossacs/BonsangueRS09}
Marcello~M. Bonsangue, Jan J. M.~M. Rutten, and Alexandra Silva.
\newblock A {K}leene theorem for polynomial coalgebras.
\newblock In \emph{FoSSaCS}, volume 5504 of \emph{LNCS}, pages 122--136.
  Springer Verlag, 2009.
\newblock \doi{10.1007/978-3-642-00596-1_10}.

\bibitem[Cacciagrano et~al.(2009)Cacciagrano, Corradini, and
  Palamidessi]{DBLP:journals/corr/abs-0904-2340}
Diletta Cacciagrano, Flavio Corradini, and Catuscia Palamidessi.
\newblock Explicit fairness in testing semantics.
\newblock \emph{\abbrev{Logical Methods in Computer Science}{LMCS}}, 5\penalty0
  (2), 2009.
\newblock \doi{10.2168/LMCS-5(2:15)2009}.

\bibitem[Castellan et~al.(2014)Castellan, Clairambault, and
  Winskel]{CCWGalop14}
Simon Castellan, Pierre Clairambault, and Glynn Winskel.
\newblock Concurrent {H}yland-{O}ng games.
\newblock GaLoP, 2014.

\bibitem[{De Nicola} and Hennessy(1984)]{DBLP:journals/tcs/NicolaH84}
Rocco {De Nicola} and Matthew Hennessy.
\newblock Testing equivalences for processes.
\newblock \emph{\abbrev{Theoretical Computer Science}{TCS}}, 34:\penalty0
  83--133, 1984.
\newblock \doi{10.1016/0304-3975(84)90113-0}.

\bibitem[Delorme et~al.(2011)Delorme, Mazoyer, Ollinger, and
  Theyssier]{DBLP:journals/tcs/DelormeMOT11}
Marianne Delorme, Jacques Mazoyer, Nicolas Ollinger, and Guillaume Theyssier.
\newblock Bulking {I}: An abstract theory of bulking.
\newblock \emph{\abbrev{Theoretical Computer Science}{TCS}}, 412\penalty0
  (30):\penalty0 3866--3880, 2011.
\newblock \doi{10.1016/j.tcs.2011.02.023}.

\bibitem[Ehresmann(1963)]{Ehresmann:double}
Charles Ehresmann.
\newblock Cat\'egories structur\'ees.
\newblock \emph{Annales scientifiques de l'Ecole Normale Sup\'erieure},
  80\penalty0 (4):\penalty0 349--426, 1963.

\bibitem[Ehresmann(1965)]{Ehresmann:double2}
Charles Ehresmann.
\newblock \emph{Cat\'egories et structures}.
\newblock Dunod, 1965.

\bibitem[Faggian and Piccolo(2009)]{DBLP:conf/tlca/FaggianP09}
Claudia Faggian and Mauro Piccolo.
\newblock Partial orders, event structures and linear strategies.
\newblock In \emph{TLCA}, volume 5608 of \emph{LNCS}, pages 95--111. Springer
  Verlag, 2009.
\newblock \doi{10.1007/978-3-642-02273-9_9}.

\bibitem[Fiore(2000)]{DBLP:conf/ifipTCS/Fiore00}
Marcelo~P. Fiore.
\newblock Fibred models of processes: Discrete, continuous, and hybrid systems.
\newblock In \emph{IFIP TCS}, volume 1872 of \emph{LNCS}, pages 457--473.
  Springer Verlag, 2000.
\newblock \doi{10.1007/3-540-44929-9_32}.

\bibitem[FoSSaCS 2004()]{DBLP:conf/fossacs/2004}
FoSSaCS 2004.
\newblock \emph{FoSSaCS}, volume 2987 of \emph{LNCS}, 2004. Springer Verlag.

\bibitem[Freyd and Kelly(1972)]{FK}
Peter Freyd and G.~M. Kelly.
\newblock Categories of continuous functors, {I}.
\newblock \emph{\abbrev{Journal of Pure and Applied Algebra}{JPAA}},
  2:\penalty0 169--191, 1972.
\newblock \doi{10.1016/0022-4049(72)90001-1}.

\bibitem[Garner(2006)]{GarnerPhD}
{Richard H. G.} Garner.
\newblock \emph{Polycategories}.
\newblock PhD thesis, University of Cambridge, 2006.

\bibitem[Ghica and Murawski(2004)]{DBLP:conf/fossacs/GhicaM04}
Dan~R. Ghica and Andrzej~S. Murawski.
\newblock Angelic semantics of fine-grained concurrency.
\newblock In  \citet{DBLP:conf/fossacs/2004}, pages 211--225.
\newblock \doi{10.1007/978-3-540-24727-2_16}.

\bibitem[Gorla(2010)]{DBLP:journals/iandc/Gorla10}
Daniele Gorla.
\newblock Towards a unified approach to encodability and separation results for
  process calculi.
\newblock \emph{\abbrev{Information and Computation}{Inf.\ Comput.}},
  208\penalty0 (9):\penalty0 1031--1053, 2010.
\newblock \doi{10.1016/j.ic.2010.05.002}.

\bibitem[Grandis and Par{\'e}(1999)]{GrandisPare}
Marco Grandis and Robert Par{\'e}.
\newblock Limits in double categories.
\newblock \emph{Cahiers de Topologie et G\'eom\'etrie Diff\'erentielle
  Cat\'egoriques}, 40\penalty0 (3):\penalty0 162--220, 1999.

\bibitem[Grandis and Par{\'e}(2004)]{GrandisPareAdjoints}
Marco Grandis and Robert Par{\'e}.
\newblock Adjoints for double categories.
\newblock \emph{Cahiers de Topologie et G\'eom\'etrie Diff\'erentielle
  Cat\'egoriques}, 45\penalty0 (3):\penalty0 193--240, 2004.

\bibitem[Harmer and McCusker(1999)]{DBLP:conf/lics/HarmerM99}
Russell Harmer and Guy McCusker.
\newblock A fully abstract game semantics for finite nondeterminism.
\newblock In  \citet{DBLP:conf/lics/1999}, pages 422--430.
\newblock \doi{10.1109/LICS.1999.782637}.

\bibitem[Harmer et~al.(2007)Harmer, Hyland, and
  Melli{\`e}s]{DBLP:conf/lics/HarmerHM07}
Russell Harmer, Martin Hyland, and Paul-Andr{\'e} Melli{\`e}s.
\newblock Categorical combinatorics for innocent strategies.
\newblock In \emph{LICS}, pages 379--388. \abbrev{IEEE Computer Society}{IEEE},
  2007.
\newblock \doi{10.1109/LICS.2007.14}.

\bibitem[Hirschowitz and Pous(2011)]{HP11}
Tom Hirschowitz and Damien Pous.
\newblock Innocent strategies as presheaves and interactive equivalences for
  {CCS}.
\newblock In \emph{ICE}, pages 2--24, 2011.
\newblock \doi{10.4204/EPTCS.59.2}.

\bibitem[Hirschowitz and Pous(2012)]{2011arXiv1109.4356H}
Tom Hirschowitz and Damien Pous.
\newblock Innocent strategies as presheaves and interactive equivalences for
  {CCS}.
\newblock \emph{Scientific Annals of Computer Science}, 22\penalty0
  (1):\penalty0 147--199, 2012.
\newblock \doi{10.7561/SACS.2012.1.147}.
\newblock Selected papers from ICE '11.

\bibitem[Hyland and Ong(2000)]{DBLP:journals/iandc/HylandO00}
J.~M.~E. Hyland and C.-H.~Luke Ong.
\newblock On full abstraction for {PCF}: {I}, {II}, and {III}.
\newblock \emph{Inf. Comput.}, 163\penalty0 (2):\penalty0 285--408, 2000.
\newblock \doi{10.1006/inco.2000.2917}.

\bibitem[Jacobs(1999)]{Jacobs}
Bart Jacobs.
\newblock \emph{Categorical Logic and Type Theory}.
\newblock Number 141 in Studies in Logic and the Foundations of Mathematics.
  North Holland, Amsterdam, 1999.

\bibitem[Joyal()]{Joyal:ncatlab:facto}
Andr{\'e} Joyal.
\newblock Factorisation systems.
\newblock \url{http://ncatlab.org/joyalscatlab}.

\bibitem[Joyal and Tierney(2008)]{JoyalTierney}
Andr\'e Joyal and Myles Tierney.
\newblock Notes on simplicial homotopy theory.
\newblock Course at the CRM, February 2008.

\bibitem[Joyal et~al.(1993)Joyal, Nielsen, and
  Winskel]{DBLP:conf/lics/JoyalNW93}
Andr{\'e} Joyal, Mogens Nielsen, and Glynn Winskel.
\newblock Bisimulation and open maps.
\newblock In \emph{LICS}, pages 418--427. \abbrev{IEEE Computer Society}{IEEE},
  1993.
\newblock \doi{10.1109/LICS.1993.287566}.

\bibitem[Kock(2011)]{Kock01012011}
Joachim Kock.
\newblock Polynomial functors and trees.
\newblock \emph{International Mathematics Research Notices}, 2011\penalty0
  (3):\penalty0 609--673, 2011.
\newblock \doi{10.1093/imrn/rnq068}.

\bibitem[Lack and Sobocinski(2004)]{DBLP:conf/fossacs/LackS04}
Stephen Lack and Pawel Sobocinski.
\newblock Adhesive categories.
\newblock In  \citet{DBLP:conf/fossacs/2004}, pages 273--288.
\newblock \doi{10.1007/978-3-540-24727-2_20}.

\bibitem[Laird(2006)]{DBLP:conf/fsttcs/Laird06}
James Laird.
\newblock Game semantics for higher-order concurrency.
\newblock In \emph{FSTTCS}, volume 4337 of \emph{LNCS}, pages 417--428.
  Springer Verlag, 2006.
\newblock \doi{10.1007/11944836_38}.

\bibitem[Lawvere and Schanuel(1997)]{DBLP:books/daglib/0095291}
F.~William Lawvere and Stephen~H. Schanuel.
\newblock \emph{Conceptual mathematics - a first introduction to categories}.
\newblock Cambridge University Press, 1997.

\bibitem[Leinster(2004)]{LeinsterHC}
Tom Leinster.
\newblock \emph{Higher Operads, Higher Categories}, volume 298 of \emph{London
  Mathematical Society Lecture Notes}.
\newblock Cambridge University Press, Cambridge, 2004.

\bibitem[Leinster(2014)]{LeinsterCats}
Tom Leinster.
\newblock \emph{Basic Category Theory}, volume 143 of \emph{Cambridge Studies
  in Advanced Mathematics}.
\newblock Cambridge University Press, 2014.

\bibitem[LICS 1999()]{DBLP:conf/lics/1999}
LICS 1999.
\newblock \emph{14th Symposium on Logic in Computer Science}, 1999.
  \abbrev{IEEE Computer Society}{IEEE}.

\bibitem[{Mac Lane}(1998)]{MacLane:cwm}
Saunders {Mac Lane}.
\newblock \emph{Categories for the Working Mathematician}.
\newblock Number~5 in Graduate Texts in Mathematics. Springer Verlag, 2nd
  edition, 1998.

\bibitem[MacLane and Moerdijk(1992)]{MM}
Saunders MacLane and Ieke Moerdijk.
\newblock \emph{Sheaves in Geometry and Logic: A First Introduction to Topos
  Theory}.
\newblock Universitext. Springer, 1992.

\bibitem[Melli{\`e}s(2004)]{Mellies04}
Paul-Andr{\'e} Melli{\`e}s.
\newblock Asynchronous games 2: the true concurrency of innocence.
\newblock In \emph{Proc.\ CONCUR~'04}, volume 3170 of \emph{LNCS}, pages
  448--465. Springer Verlag, 2004.
\newblock \doi{10.1007/978-3-540-28644-8_29}.

\bibitem[Melli{\`e}s(2012)]{DBLP:conf/lics/Mellies12}
Paul-Andr{\'e} Melli{\`e}s.
\newblock Game semantics in string diagrams.
\newblock In \emph{LICS}, pages 481--490. IEEE, 2012.
\newblock \doi{.1109/LICS.2012.58}.

\bibitem[Melli{\`e}s and Mimram(2007)]{DBLP:conf/concur/MelliesM07}
Paul-Andr{\'e} Melli{\`e}s and Samuel Mimram.
\newblock Asynchronous games: Innocence without alternation.
\newblock In \emph{CONCUR}, volume 4703 of \emph{LNCS}, pages 395--411.
  Springer Verlag, 2007.
\newblock \doi{10.1007/978-3-540-74407-8_27}.

\bibitem[Milner(1980)]{Milner80}
Robin Milner.
\newblock \emph{A Calculus of Communicating Systems}, volume~92 of \emph{LNCS}.
\newblock Springer, 1980.
\newblock \doi{10.1007/3-540-10235-3}.

\bibitem[Milner(1989)]{Milner89}
Robin Milner.
\newblock \emph{Communication and Concurrency}.
\newblock Prentice-Hall, 1989.

\bibitem[Nickau(1994)]{DBLP:conf/lfcs/Nickau94}
Hanno Nickau.
\newblock Hereditarily sequential functionals.
\newblock In \emph{LFCS}, volume 813 of \emph{LNCS}, pages 253--264. Springer
  Verlag, 1994.
\newblock \doi{10.1007/3-540-58140-5_25}.

\bibitem[Par{\'e}(2011)]{PareYoneda}
Robert Par{\'e}.
\newblock Yoneda theory for double categories.
\newblock \emph{Theory and Applications of Categories}, 25\penalty0
  (17):\penalty0 436--489, 2011.

\bibitem[Rathke and Sobocinski(2008)]{modularLTS}
Julian Rathke and Pawel Sobocinski.
\newblock Deconstructing behavioural theories of mobility.
\newblock In \emph{IFIP TCS}, volume 273 of \emph{IFIP}, pages 507--520.
  Springer Verlag, 2008.
\newblock \doi{10.1007/978-0-387-09680-3_34}.

\bibitem[Rensink and Vogler(2007)]{DBLP:journals/iandc/RensinkV07}
Arend Rensink and Walter Vogler.
\newblock Fair testing.
\newblock \emph{\abbrev{Information and Computation}{Inf.\ Comput.}},
  205\penalty0 (2):\penalty0 125--198, 2007.
\newblock \doi{10.1016/j.ic.2006.06.002}.

\bibitem[Rideau and Winskel(2011)]{RideauW}
Silvain Rideau and Glynn Winskel.
\newblock Concurrent strategies.
\newblock In \emph{LICS}, pages 409--418. {IEEE} Computer Society, 2011.
\newblock \doi{10.1109/LICS.2011.13}.

\bibitem[Sangiorgi(2012)]{Sangio}
Davide Sangiorgi.
\newblock \emph{Introduction to Bisimulation and Coinduction}.
\newblock Cambridge University Press, 2012.

\bibitem[Sangiorgi and Rutten(2011)]{SangioRutten}
Davide Sangiorgi and Jan Rutten, editors.
\newblock \emph{Advanced Topics in Bisimulation and Coinduction}.
\newblock Number~52 in Cambridge Tracts in Theoretical Computer Science.
  Cambridge University Press, 2011.

\bibitem[Sangiorgi and Walker(2001)]{DBLP:books/daglib/0004377}
Davide Sangiorgi and David Walker.
\newblock \emph{The {$\pi$}-calculus - a theory of mobile processes}.
\newblock Cambridge University Press, 2001.

\bibitem[Winskel(2013)]{DBLP:conf/fossacs/Winskel13}
Glynn Winskel.
\newblock Strategies as profunctors.
\newblock In \emph{FoSSaCS}, volume 7794 of \emph{LNCS}, pages 418--433.
  Springer Verlag, 2013.
\newblock \doi{10.1007/978-3-642-37075-5_27}.

\bibitem[Winskel and Nielsen(1995)]{WN}
Glynn Winskel and Mogens Nielsen.
\newblock \emph{Handbook of Logic in Computer Science}, volume~4 of
  \emph{Oxford science publications}, chapter Models for concurrency.
\newblock Clarendon, 1995.

\end{thebibliography}

  \begin{figure}[p]
\vspace*{-2em}    \begin{multicols}{2}
\hspace*{-4em}      \begin{tabular}{lp{.8\linewidth}}
        $\C$ & base category, over which positions and plays are presheaves \\
        \begin{minipage}[t]{.3\linewidth}
          \raggedright $\star$, $[n]$, $\paraln, \pararn, \paran$, $\nun$, $\tickn$,
          $\iotani$, $o_{m,j}$, $\taunimj$, 
        \end{minipage} & objects of $\C$ \\
        $\scriptstyle [m] \paraofij{a_1,\ldots}{c_1,\ldots} [n]$ &
        two players sharing some channels \\
        $\Cospan{-}$ & bicategory of cospans of $-$ \\
        $\Dh$ & category of positions and embeddings \\
        $\Dv$ & bicategory of positions and plays \\
        $\D$ & playground \\
        $\Dccs$
          & playground for CCS \\
        $\E$ & category of plays and extensions \\
        $\BB_X$ & category of behaviours on $X$ \\
        $\EVi$ & category of views and extensions \\
        $\SS_X$ & category of strategies on $X$ \\
        $\Pl(X)$ & players of position $X$ \\
        $v^{x,u}$ & view of $x \colon d \to X$ in $u \colon X \proto Y$ \\
        $x^u \colon d^{x,u} \colon Y$ & initial player of $x$ in $u$ \\
        $\restr{u}{k} \colon D_{k,u} \proto Y$ & 
        restriction of $u \colon X' \proto X$ along $k \colon Y \to X$ \\
        $\Pl_M(X)$ & players of position $X$ whose view in $M \colon X \proto Y$ is non-trivial \\
        $S_x$ & projection of $S \in \SS_X$ to $x \in \Pl(X)$ \\
        $[S,T]$ & copairing of $S$ and $T$ \\
        $S \cdot v$ & residual of $S$ after $v$ \\
        $\restr{S}{\state}$ & restriction of $S$ to antecedents of $\state$ \\
        $\QF$ & graph of full quasi-moves \\
        $\MMMB_d$ & set of isomorphism classes of basic moves over $d$ \\
        $\MMMF_X$ & set of isomorphism classes of full moves over $X$ \\
        $\BsofF[M]$ & set of basic $b$'s s.t.\ $\exists$ $b \to M$ \\
        $\MMMFB_X \subseteq \MMMF_X$ & subset of full moves $M$ such
        that $\BsofF[M]$ is a singleton \\
        $\MMMFplus_X \subseteq \MMMF_X$ & subset of full moves $M$ such
        that $\BsofF[M]$ is not a singleton \\
        $r^u, i^u$ &
        bijection, for all plays $u \colon X' \proto X$, 
        $\sum_{(d,x) \in \Pl(X)} \Pl(D_{x,u}) \to \Pl(X')$ \\

        $d \vdash S$ & strategy term \\
        $d \vdashdefinite D$ & definite strategy term \\
        $d \vdash T$ & process term \\

        $(I,h,S)^\bot$ & set of tests passed by $(I,h,S)$ \\

        $\faireqG$ & fair testing eq.\ in graph w.c.\ $G$ \\
        $\faireqs$ & standard fair testing eq.\ in CCS \\
        $\faireq$ & semantic fair testing eq. \\

      \end{tabular}

      \begin{tabular}{lp{.7\linewidth}}

        $\bot^G$ & pole for fair testing eq.\ in $G$ \\
        $\bbot$ & pole for semantic fair test.\ eq. \\
        $\bot$ & pole for CCS (Def.~\ref{def:ccsfair}) \\
        $\barebot$ & intermediate pole (Lem.~\ref{lem:bbot:bot}) \\

        $\ccs$ & \lts{} for CCS \\
        $\SSS$ & \lts{} for strategies \\
        $\TT$ & set of process terms \\
        $\TTT$ & \lts{} for $\TT$: {$\ob (\TTT) {=} \TT$} \\
        $\Translfun$ & translation $\ccs \to \SSS$ \\
        $\theta$ & translation $\ccs \to \TTT$ \\
        $\translfun$ & translation $\TTT \to \SSS$ \\

        $\ccsW$ & set of closed-world quasi-moves \\
        $\DW \subseteq \Dv$ & subbicat.\ of closed-world plays \\
        $\labelDccs$ &
        labelling of closed-world plays in $\ens{\id,\tick}$:
        $\DW \to \freecat{\Sierp}$ \\

        $A^\W$ & `closed-world' subgraph of a graph with complementarity $A$ \\
        $\compata$ & compatibility relation for $A$:  $A^2 \modto \aW$ \\
        $e \dpara e'$ & notation for the composite $\acoh \into A^2 \modto \aW \to \Sierp$ \\

        $[x,y]$ & choice of `amalgamation' in $G$ \\

        $\chi \colon \LLL \to \QF$ & subgraph of edges with double cell $\idv_I \to M$ \\
        $\xi \colon \LLL \to \A$ & mapping to CCS labels \\
        
        $G$ modular &  $\compatG$ strong bisim over $\Sierp$ \\
        
        $\testable{x}$ & $\ens{y \aalt x \coh y}$ \\
        $\eqtestable{x}{y}$ & $\testable{x} = \testable{y}$ \\

        $\combinea{G}{H}$ & blind composition of $G$ and $H$ over $A$ \\
        \begin{minipage}[t]{.2\linewidth}
        adequacy of $G {\to} A$
        \end{minipage} &
        (essentially) $\bot^{\Gcoh} = \bot^{\combinea{G}{G}}$ \\
        
        $\atrees$ & $A$-trees \\
        $\FFF_a$ & failures over $a \in A$ \\
        $\failoffun$ & failures to $A$-trees: $\FFF \to \atrees$ \\
        nice alphabet & enough ticks, finitely branching, inertly silent 
        (Def.~\ref{def:nice}) \\
        
        core & move element of some presheaf, of maximal dimension \\
        $U$ locally 1-inj. & cores map inj.\ to $U$, except 
        perhaps for channels in the interface  \\
        $G_U$ & causal graph of $U$ \\
        $\El{-}$ & elements $\setminus$ synchronisations \\
        $\histto$ & map between $\El{-}$'s \\
        horn $\hornnimj$ & synchro. minus $\id_{\taunimj}$ \\
        $\cob{f}$ & change of base along $f$
      \end{tabular}
    \end{multicols}
    \caption{Cheat sheet}
\label{fig:cheat}
\end{figure}

\end{document}